\let\epsilon=\varepsilon
\newcommand{\LSM}{\ensuremath{\mathcal{LSM}}}
\newcommand{\ba}{\mathbf{a}}
\newcommand{\bu}{\mathbf{u}}
\newcommand{\bv}{\mathbf{v}}
\newcommand{\bw}{\mathbf{w}}
\newcommand{\bx}{\mathbf{x}}
\newcommand{\by}{\mathbf{y}}
\newcommand{\bz}{\mathbf{z}}
\newcommand{\calA}{\mathcal{A}}
\newcommand{\calB}{\mathcal{B}}
\newcommand{\calC}{\mathcal{C}}
\newcommand{\calE}{\mathcal{E}}
\newcommand{\calF}{\mathcal{F}}
\newcommand{\calG}{\mathcal{G}}
\newcommand{\calL}{\mathcal{L}}
\newcommand{\calM}{\mathcal{M}}
\newcommand{\calS}{\mathcal{S}}
\newcommand{\bzero}{\mathbf{0}}
\newcommand{\bone}{\mathbf{1}}
\newcommand{\Bools}{\{0,1\}}
\newcommand{\PosReals}{\mathbb{R}_{> 0}}
\newcommand{\NonNegReals}{\mathbb{R}_{\geq 0}}
\newcommand{\allReals}{\mathbb{R}}
\newcommand{\nats}{\mathbb{N}}
\newcommand{\fclone}[1]{\langle{#1}\rangle}  
\newcommand{\ofclone}[1]{\langle{#1}\rangle_{\omega}}
\newcommand{\ppso}{$\text{pps}_{\omega}$}
\newcommand{\calP}{\ensuremath{\mathcal{P}}}
\newcommand{\calPN}{\ensuremath{\mathcal{PN}}}
\newcommand{\SD}{\ensuremath{\mathcal{SD}}}
\newcommand{\SDP}{\ensuremath{\mathcal{SDP}}} 
\newcommand{\Parev}{\ensuremath{\mathcal{P\!AR}_{\mathrm{ev}}}}
\newcommand{\Mon}{\ensuremath{\mathcal{MON\!}}}
\newcommand{\wt}[1]{|{#1}|}
\newcommand{\fhat}{\widehat{f}}
\newcommand{\fghat}{\widehat{fg}}
\newcommand{\ghat}{\widehat{g}}
\newcommand{\hhat}{\widehat{h}}
\newcommand{\FerroHyperIsing}{\mathsf{FerroHyperIsing}}
\newcommand{\FerroIsing}{\mathsf{FerroIsing}}
\newcommand{\AntiFerroIsing}{\mathsf{AntiFerroIsing}}
\newcommand{\hIsing}[2]{I_{#1}^{#2}}
\newcommand{\hIsinghat}[2]{\widehat{I}_{#1}^{#2}}
\newcommand{\paritybasis}[1][k]{{\mathsf{Par'}_{#1}}} 
\newcommand{\parityuptokbasis}{{\mathsf{Par'^{ev}}_{\leq k}}}
\newcommand{\parityk}[1][\lambda]{\mathsf{Par}_k^{#1}}
\newcommand{\parity}[2]{\mathsf{Par}_{#1}^{#2}}
\newcommand{\parityhatk}[1][\lambda]{\smash{\widehat{\mathsf{Par}}}_k^{#1}}
\newcommand{\parityhat}[2]{\smash{\widehat{\mathsf{Par}}}_{#1}^{#2}}
\newcommand{\ForceOdd}[1]{{\oplus}_{#1}^{\mathrm{odd}}}
\newcommand{\Hferro}{\mathcal{H}_{\mathrm{ferro}}}
\newcommand{\Iferro}{\mathcal{I}_{\mathrm{ferro}}}
\newcommand{\Ianti}{\mathcal{I}_{\mathrm{anti}}}
\newcommand{\EQ}{\mathrm{EQ}}
\newcommand{\XOR}{\mathrm{XOR}}
\newcommand{\mtwo}[4]{\left(\begin{smallmatrix}{#1} & {#2}\\ {#3} & {#4}\end{smallmatrix}\right)}
\newcommand{\Zivny}{{\v{Z}}ivn{\'y}}
\newcommand{\pref}[1]{(\ref{#1})}
\newtheorem{theorem}{Theorem}
\newtheorem{lemma}[theorem]{Lemma}
\newtheorem{corollary}[theorem]{Corollary}
\theoremstyle{definition}
\newtheorem{definition}[theorem]{Definition}
\title{Functional Clones and Expressibility of Partition Functions\footnote{Part of this work 
was done while the authors were visiting the Simons Institute for the Theory of Computing.}}
\author{Andrei Bulatov\thanks{Simon Fraser University, Burnaby, Canada.  Supported by an NSERC Discovery Grant.} \and
Leslie Ann Goldberg\thanks{University of Oxford, UK.
The research leading to these results has received funding from 
the European Research Council under the European Union's Seventh Framework Programme (FP7/2007--2013) ERC grant agreement no.\ 334828 (Goldberg and Richerby; this grant also supported visits by Bulatov and Jerrum) and Horizon 2020 research and innovation programme (grant
agreement no.~714532, \Zivny). The paper 
reflects only the authors' views and not the views of the ERC or the European Commission. The European Union is not liable for any use that may be made of the information contained therein.}
 \and Mark Jerrum\thanks{Queen Mary, University of London, UK
 Supported by EPSRC grant EP/N004221/1, 
an NSERC Discovery Grant and the SFU Distinguished Lecturers program.}
 \and David Richerby\footnotemark[3]
  \and Stanislav \Zivny\footnotemark[3]\ \thanks{Supported by a Royal Society University Research Fellowship.
}}
\date{28th April, 2017}
\begin{document}
\maketitle{}

\begin{abstract} 
We study functional clones, which are sets of non-negative pseudo-Boolean functions (functions $\{0,1\}^k\to\NonNegReals$) closed under (essentially) multiplication, summation and limits. Functional clones naturally form a lattice under set inclusion and are closely related to counting Constraint Satisfaction Problems (CSPs). We identify a sublattice of interesting functional clones and investigate the relationships and properties of the functional clones in this sublattice.
\end{abstract}

\section{Introduction}

There is a considerable literature on the topic of relational clones,
also called co-clones.  These are sets of relations on a finite domain~$D$
that are closed under certain operations, the most interesting being 
conjunction of two relations and existential quantification over a variable.
(Other closure operations, such as introduction of ``fictitious arguments'',
are technically but not conceptually important.)  In this paper we focus on
the Boolean domain, and presently we will assume that $D=\{0,1\}$.  
It is well known that in the Boolean case, the 
set of relational clones is countably infinite and forms a lattice under set 
inclusion.  The lattice has been explicitly described by Post~\cite{Post}.

It seems natural to widen this study to other algebraic structures.  
Functional clones were introduced formally by  Bulatov, Dyer, Goldberg,
Jerrum and McQuillan~\cite{LSM}, with the motivation of studying the
computational complexity of counting constraint satisfaction problems.   
A functional clone is a set of multivariate 
functions from a finite domain~$D$ to a semiring~$R$ that is closed under
multiplication, summing over a variable and (optionally) taking a limit of
a sequence of functions.  (Other operations are needed for technical completeness.
Formal definitions are given in the following subsection.) 
In this paper, we focus attention on the case $D=\{0,1\}$ and $S=(\NonNegReals,\times,+)$. 
We reconsider functional clones as objects 
of interest in their own right, though the results we prove may yield insights
in other areas. 

There are at least three motivations for the current investigation.

The first, as indicated above, is intrinsic interest.  
Post's lattice of relational clones has a fascinating structure.
There is a Galois connection between sets of relations on~$D$ and 
sets of operations on~$D$ which establishes a beautiful duality between 
relational clones and clones of operations.  Remarkably, the closure operator 
defined by the Galois connection exactly agrees with the one described
earlier in terms of conjunction of relations and existential quantification
over variables~\cite{Geiger}.  

The situation with functional clones 
is not quite so clean.  There is apparently no Galois connection 
between sets of functions on~$D$ and sets of (somehow appropriately generalised)
operations on~$D$ that captures the closure under multiplication 
and summation described above.  Moreover, the lattice of 
functional clones has the cardinality of the continuum (or even larger, 
depending on precise definitions) and there seems to be no hope of providing a
complete description of it.    
Still, it is interesting to map out some of the main features of the lattice,
to identify maximal functional clones, to identify sublattices of functional 
clones satisfying additional properties, to find alternative characterisations 
of certain functional clones in terms of generating sets or
Fourier coefficients, etc.  As a contribution in this direction we identify 
(Figure~\ref{fig:lattice}) a sublattice of what seem to us 
to be interesting functional clones.

The second motivation, hinted at earlier,
is the desire to understand the computational complexity of certain counting problems.
A classical (decision) Constraint Satisfaction Problem (CSP) is a generalised satisfiability 
problem.  Instead of restricting clauses to being disjunctions of literals, as in 
standard satisfiability problems, we allow arbitrary relations between variables 
chosen from a specified set or ``language'' of relations~$\Gamma$.  We are interested
in how the computational complexity of a CSP varies as a function of~$\Gamma$.   
Clearly, extending the language~$\Gamma$ may increase the complexity of the
corresponding CSP{}.  It transpires that the complexity of a CSP depends not
on the fine structure of $\Gamma$, but only on the  
relational clone generated by~$\Gamma$.  This observation makes feasible the 
detailed exploration of the complexity of classical CSPs.

A counting Constraint Satisfaction Problem (\#CSP) asks for the number of 
satisfying assignments to a CSP{}.  In their weighted form, counting CSPs
are general enough to express many partition functions occurring in statistical physics.  
Just as with classical decision CSPs, the complexity of a counting CSP is determined by 
the functional clone generated by the constraint language, which now consists of 
functions taking, say, non-negative real values.  Functional clones were introduced 
in~\cite{LSM} precisely as a tool for studying the complexity of \#CSPs.
Referring to Figure~\ref{fig:lattice}, the equality at the bottom of the lattice
expresses the equivalence between (on the left) the partition function 
of the ferromagnetic Ising model and (on the right) the so-called high-temperature 
expansion in terms of even subgraphs.  Counting CSPs at this level of the lattice 
can be approximated in polynomial time by an algorithm that 
exploits this equivalence~\cite{JS1993:Ising}.  Moving up the lattice, perhaps the 
most intriguing functional clone from the complexity point of view is $\calM$ 
which includes the counting CSPs that would become feasible to approximate if 
we were to discover a polynomial time approximation algorithm for counting matchings in
a general (non-bipartite) graph.

A third motivation for our study is provided by the connection between functional 
clones and topics in statistical physics and machine learning. 
Many models in statistical 
physics are ``spin models'' defined by a graph
or more generally a hypergraph on $n$ vertices.  
To each vertex is associated a variable taking on values 
from a set of ``spins'' which, in our case, is finite.   A configuration of the system 
is an assignment of spins to the $n$~variables.  The edges of the graph or hypergraph
specify local interactions between spins.  These local interactions define a probability 
distribution on the set of all configurations.  Take for example the Ising model,
which is characterised by having just two spins.
An instance of the Ising model is specified by an undirected graph;
in other words, there are just pairwise interactions between spins.
(Refer to Section~\ref{sec:study} for details.)
One question we may ask is:  which $k$-way interactions may be induced 
in such a model?  More precisely, 
what are the possible marginal distributions that may
be observed on some $k$-subset of the vertex variables?  
This question is (modulo the normalising factor
for the probability distribution in question) 
precisely a question about functional clones.  The 
possible marginal distributions are the $k$-ary functions in the clone generated by the 
local pairwise interactions.

In the case of the antiferromagnetic Ising model, where the pairwise interactions 
favour unlike spins, the answer is given by Theorem~\ref{thm:SD-Ising}:  
the possible marginal distributions
are precisely those that are ``self-dual'', i.e.,
invariant under exchange of 0 and~1.  (It is 
clear that invariance under exchange of 0 and~1 is necessary;  
the point is that it is sufficient.)  This result 
has an implication for the expressive power 
of Boltzmann machines in machine learning~\cite{AHS}.  
Specifically, if the bias parameters of the units 
are all zero, then the distributions realisable at the visible units are precisely those that
are self-dual.  Note that this is an expressibility result, in the spirit of Le Roux 
and Bengio~\cite{LeRouxBengio},
and says nothing about the feasibility of learning the distributions in question from examples.

The analogous question in the ferromagnetic case is seemingly harder.  The three-variable 
marginals of a ferromagnetic Ising model can be described:  
they are the (normalised) functions of arity~3 in the functional clone associated with 
the Ising model, and are given in Theorem~\ref{thm:collapse1}.  
Already at arity~4 the elements of the clone become 
hard to describe.   Indeed, it is consistent with our current knowledge that 
membership in this clone is undecidable, even for functions of some fixed arity greater than three. 

Finally, there is a connection between functional clones and the idea of ``universal models''
in statistical physics proposed by De las Cuevas and Cubitt~\cite{DlCC}.  In a sense, 
functional clones formalise De las Cuevas and Cubitt's notion of ``closure''.  A spin model is
``universal'' in their sense if (very roughly) the functional clone generated by the 
model is the one at the top of the clone lattice, namely~$\calB$, that contains all functions.
They identify the planar antiferromagnetic Ising model with external fields
as an example of a universal model.

As we already noted,  the antiferromagnetic Ising model generates the clone~$\SD$
of self-dual functions.
Adding an external field takes us outside of~$\SD$.  Now, according to 
Lemma~\ref{lem:SD-maximal}, the clone $\SD$ is ``maximal'', from which we deduce that
the antiferromagnetic Ising model with fields generates $\calB$, i.e., is universal
in our sense.  Note, however, that our framework does not incorporate the 
notion of planarity, and in any case our closures do not exactly correspond to
those of De las Cuevas and Cubitt.  However, the clone lattice gives a more nuanced 
account of the expressive power of various spin models than simple universality.
For more on the expressive power of spin systems and their computational complexity,
see Goldberg and Jerrum~\cite{PNAS} and 
Chen, Dyer, Goldberg, Jerrum, Lu, McQuillan and Richerby~\cite{ApproxCSP}.
\subsection{Functional Clones}
\label{sec:fclones}

For every non-negative integer~$k$, 
let  $\calB_k$ be the set of all arity-$k$ non-negative pseudo-Boolean functions
(i.e., the set of all
functions $\{0,1\}^k\to \NonNegReals$). 
Let $\calB$ be the set of all 
non-negative pseudo-Boolean functions
(of all arities), given by
 $\calB = \calB_0 \cup \calB_1 \cup \calB_2 \cup \cdots$.
Given a  function $f\in \calB_k$ and a permutation~$\pi$ of~$\{1,\ldots,k\}$, we write
$f^\pi$ for the function that maps $(x_1, \dots, x_k)\in\{0,1\}^k$ to
$f(x_{\pi(1)}, \dots, x_{\pi(k)})$.   
Functional clones are 
subsets of $\calB$ that are closed under certain operations. We start by defining the operations.
Consider a set $\calF \subseteq \calB$. 
\begin{itemize}
\item $\calF$ is \emph{closed under  the
introduction of fictitious arguments}
if, for every  $k\geq 0$ and every
$k$-ary function $f\in \calF$,  
the $(k+1)$-ary function $g$ defined
by $g(x_1,\ldots,x_{k+1})=f(x_1,\ldots,x_k)$ is also in $\calF$.
\item  $\calF$ is 
\emph{closed under permuting arguments} if,
for every
$k\geq 1$, every  $k$-ary function $f\in \calF$ and every
permutation $\pi$ of $\{1,\ldots,k\}$,
the   function $f^{\pi}$  
is also in $\calF$.
\item   $\calF$ is \emph{closed under product} if,
for every $k\geq 0$, every $k$-ary function $f\in \calF$ and every
$k$-ary function $g\in \calF$,
the function $h$ defined by $h(x_1,\ldots,x_k) = f(x_1,\ldots,x_k)\, g(x_1,\ldots,x_k)$
is also in $\calF$.
\item $\calF$ is \emph{closed under summation} if,
for every $k\geq 1$ and every $k$-ary function $f \in \calF$, the
$(k-1)$-ary function $g$ defined by 
$g(x_1,\ldots,x_{k-1}) = \sum_{x_k\in \{0,1\}} f(x_1,\ldots,x_{k})  $
is also in~$\calF$.
\end{itemize}

Functional clones are
defined in 
 \cite[Section 2]{LSM}. 
The definition that we give here is equivalent to the one
in \cite{LSM}, but is  more suited to the setting of this paper.
 Let~$\EQ$ be the binary equality function, which is the function in $\calB_2$
defined by 
$\EQ(0,0)=\EQ(1,1)=1$, and $\EQ(0,1)=\EQ(1,0)=0$.
Suppose that $\calF\subseteq \calB$ is  a  set of functions.
The \emph{functional clone} $\fclone{\calF}$ is defined
to be the closure of $\calF \cup \{\EQ\}$ under 
the introduction of fictitious arguments, permuting arguments, product and summation.

Bulatov et al.~\cite[Proof of Lemma 2.1]{LSM} show\footnote{\label{footone} Technically, the proof of Lemma~2.1 of~\cite{LSM} just shows that  the closure of $\mathcal{A}(\calF)$ 
under product and summation is the same as the closure of $\prod(\calF)$ under summation.
That is, to produce the closure of $\calA(\calF)$ under product and summation
it suffices to first close $\calA(\calF)$ under product and then close 
the resulting set under summation.
However, it is easy to show 
that $\prod(\calF)$ is closed under 
the introduction of fictitious arguments and
permuting arguments,
and so is the closure of $\prod(\calF)$ under summation, so
without loss of generality, the three closures can be done in order: first close $\calF \cup \{\EQ\}$ under 
the introduction of fictitious arguments and permuting arguments, then close under product, then close under summation.} that the
set $\fclone{\calF}$ is  unchanged if the order of closure is restricted in the following way. 
Let $\calA(\calF)$ be the closure of $\calF \cup \{\EQ\}$ under 
the introduction of fictitious arguments and
permuting arguments. Let $\prod(\calF)$ be the closure of $\cal{A}(\calF)$ under product.
Then $\fclone{\calF}$ is the closure of $\prod(\calF)$ under summation.
In the paper, we will use the fact that the order of closure can be restricted in this way.
In particular, the definition of $\calA(\calF)$ will be used.

The reason for defining functional clones is that they are closely connected to
counting Constraint Satisfaction Problems (CSPs).
Every function in $\fclone{\calF}$ can be represented by a 
\emph{pps-formula} (``primitive product summation formula''), which is a summation of
a product of atomic formulas representing functions in $\calA(\calF)$.\footnote{There is one difference 
between pps-formulas as defined here, and pps-formulas as defined in~\cite{LSM}, but it is not important.
Consider an arity-$k$ function $f$.
Clearly, the arity-$(k-1)$ function defined by $g(x_1,\ldots,x_{k-1}) = f(x_1,\ldots,x_{k-1},x_{k-1})$
is in $\fclone{\{f\}}$ since 
$g(x_1,\ldots,x_{k-1}) = \sum_{x_k \in \{0,1\}} f(x_1,\ldots,x_k) \EQ(x_{k-1},x_k)$.
The function $g$ is not in the set $\cal{A}(\calF)$.
Nevertheless, Bulatov et al.~\cite{LSM} view the formula $\phi_g$ that represents~$g$
as an ``atomic formula'' since they allow repeated arguments.
For us, the formula~$\phi_g$ is not atomic, but this makes no difference, since $\EQ$ is in all functional clones, so our functional clones are exactly the same as those of~\cite{LSM}.} 
The pps-formula can be viewed as the input to a counting 
CSP whose output is the value of the function.
For example, consider the function $\XOR\in \calB_2$
defined by
$\XOR(0,0)=\XOR(1,1)=0$ and $\XOR(0,1)=\XOR(1,0)=1$.
Let $h$ be the function in $\calB_3$ defined by
$h(1,1,0) = h(0,0,1)=1$ and $h(x_1,x_2,x_3) =0$ for any $(x_1,x_2,x_3) \not\in \{(1,1,0),(0,0,1)\}$.
Let $\bx$ denote the tuple $(x_1,x_2,x_3,x_4)$.
It is easy to see that $h$ is in $\fclone{\{\XOR\}}$ since the functions
$f_{i,j}( \bx) = \XOR(x_i,x_j)$ are in $\mathcal{A}(\calF)$ for any distinct $i$ and $j$ in $\{1,2,3,4\}$
and the function 
 $g( \bx) = f_{1,4}( \bx) f_{2,4}( \bx)f_{1,3}( \bx)$ is in $\prod(\calF)$.
Finally, 
$h(x_1,x_2,x_3) = \sum_{x_4 \in \{0,1\}} g( \bx)$.
Now, for distinct $i$ and $j$ in $\{1,2,3,4\}$,
let $\phi_{i,j}(v_1,v_2,v_3,v_4)$ be an atomic formula representing the function $f_{i,j}$.
The function $g$ can be represented by the formula
$$\phi_g(v_1,v_2,v_3,v_4) =  \phi_{1,4}(v_1,v_2,v_3,v_4)\, \phi_{2,4}(v_1,v_2,v_3,v_4)\, \phi_{1,3}(v_1,v_2,v_3,v_4).$$
This formula can be viewed as a CSP with variables $\{v_1,v_2,v_3,v_4\}$
and three $\XOR$ constraints.
Finally, the function $h$ can be represented by the pps-formula
$\phi_h(v_1,v_2,v_3) = \sum_{v_4} \phi_g(v_1,v_2,v_3,v_4)$.

In order to study approximate counting CSPs it is necessary to go beyond
functional clones by also allowing closure under limits.
Given functions $f$ and $f'$ in $\calB_k$,
we write $\|f-f'\|_\infty$ for
the $L$-infinity distance between~$f$ and~$f'$, which is given by
$\|f-f'\|_\infty=\max_{\bx\in\Bools^k}|f(\bx)-f'(\bx)| $.
We say  that a $k$-ary function $f$ is a \emph{limit} of
a set $\calF\subseteq \calB$
if there is some finite $S_f\subseteq\calF$ such that, for every $\epsilon>0$, there
is  a $k$-ary  function $f_\epsilon \in \fclone{S_f}$ such that 
$\|f-f_\epsilon\|_\infty < \epsilon$.
We say that $\calF$ is  \emph{closed under limits}
if, for every function $f$ that is a limit of $\calF$,
$f\in \calF$.
The  \emph{$\omega$}-clone $\ofclone{\calF}$ is defined to be the 
closure of $\calF \cup \{\EQ\}$ under the introduction of fictitious arguments, permuting arguments, product, summation, and limits.
In~\cite{LSM},  the set $\ofclone{\calF}$ is referred to as the 
``\ppso{}-definable functional clone generated by~$\calF$''.
Bulatov et al.~\cite[Lemma 2.2]{LSM} show that  this set is unchanged if the order of
closure is restricted so   $\ofclone{\calF}$ is the closure of $\fclone{\calF}$ under limits.\footnote{Technically, the proof of Lemma 2.2 of~\cite{LSM} just shows that the
closure of $\calA(\calF)$ under product, summation and limits is the same as the closure of
$\fclone{\calF}$ under limits. However, it is easy to see that the closure of
$\fclone{\calF}$ under limits is closed under
the introduction of fictitious arguments and permuting arguments.} 

The following lemma is straightforward, given 
that $\fclone{\calF}$ and $\ofclone{\calF}$ are defined by
closing  a set (the set $\calF \cup \{\EQ\}$) using various operations.
Nevertheless, we state the lemma here for future use. The lemma combines
Lemmas 2.1 and 2.2 of~\cite{LSM}. (In that paper, the lemma was non-trivial, since
the order of the closure operators was restricted.)

\begin{lemma}
\label{lem:transitive}
\label{lem:clone-cup}
Suppose $\calF\subseteq \calB$.
If $g\in\fclone\calF$ then $\fclone{\calF \cup \{g\}}=\fclone{\calF}$.  
If $g$ is a limit of $\fclone{\calF}$ and $h$ is a limit of $\fclone{\calF \cup \{g\}}$
then $h$ is a limit of $\fclone{\calF}$. Equivalently,
if $g\in\ofclone\calF$ then $\ofclone{\calF \cup \{g\}}=\ofclone{\calF}$.
\end{lemma}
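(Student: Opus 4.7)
The plan is to tackle the three claims in the order listed. For the first, I would use the universal property of $\fclone{\cdot}$: by definition, $\fclone{\calF}$ is the smallest subset of~$\calB$ that contains $\calF\cup\{\EQ\}$ and is closed under the four operations (introduction of fictitious arguments, permuting arguments, product, summation). Given $g\in\fclone{\calF}$, the set $\fclone{\calF}$ already contains $\calF\cup\{g\}\cup\{\EQ\}$ and is already closed under the operations, so it contains $\fclone{\calF\cup\{g\}}$; the reverse inclusion is trivial. For the $\omega$-clone claim, $\ofclone{\calF}\subseteq\ofclone{\calF\cup\{g\}}$ is immediate, and the other direction follows from the limit claim applied to any $h\in\ofclone{\calF\cup\{g\}}$, using that $g\in\ofclone{\calF}$ is precisely the statement that $g$ is a limit of $\fclone{\calF}$ (with the trivial witness $\{g\}$ covering the sub-case $g\in\fclone{\calF}$).

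For the limit claim itself, I would build a witness for $h$ being a limit of $\fclone{\calF}$ by combining the two given witnesses. Let $S_h\subseteq\fclone{\calF\cup\{g\}}$ and $S_g\subseteq\fclone{\calF}$ be the finite witnesses for $h$ and $g$ respectively. Each $s\in S_h$ is representable by a pps-formula over $\calF\cup\{g\}\cup\{\EQ\}$, so I would collect the finitely many functions from $\calF$ appearing across these representations into a finite set $T\subseteq\calF$. My candidate witness is $S_h':=T\cup S_g$, which is a finite subset of $\fclone{\calF}$. Given $\epsilon>0$, I would first pick $h_\delta\in\fclone{S_h}$ with $\|h-h_\delta\|_\infty<\epsilon/2$, then expand the pps-formula for $h_\delta$ by substituting each $s\in S_h$'s pps-formula, obtaining a single pps-formula $\Phi$ for $h_\delta$ over $T\cup\{g\}\cup\{\EQ\}$. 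Next I would choose $g_\eta\in\fclone{S_g}$ with $\|g-g_\eta\|_\infty<\eta$ and let $\Phi_\eta$ be $\Phi$ with every atom representing $g$ replaced by the corresponding atom for $g_\eta$. By the first claim, $\Phi_\eta$ represents a function in $\fclone{T\cup\{g_\eta\}}\subseteq\fclone{T\cup S_g}=\fclone{S_h'}$; if $\eta$ can be made small enough to force $\|\Phi-\Phi_\eta\|_\infty<\epsilon/2$, the triangle inequality completes the argument.

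The main obstacle is this final continuity step: I need to bound the perturbation caused by substituting $g_\eta$ for $g$ inside the fixed pps-formula $\Phi$ by a constant times $\eta$. This should go through because every atomic function in $\Phi$ is bounded (each $f\in\calB_k$ has finite range, and $g_\eta$ is bounded by $\|g\|_\infty+\eta$), so both product and summation are locally Lipschitz on the relevant ranges of non-negative real values; a routine structural induction on~$\Phi$ then propagates a Lipschitz constant that depends only on $\Phi$ and on the fixed atomic functions other than~$g$. With this bound in hand, $\eta$ can be chosen small enough to drive the perturbation below $\epsilon/2$, and the limit claim, and hence the whole lemma, follows.
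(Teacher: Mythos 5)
Your argument is correct. The paper itself gives no proof of this lemma: it declares it ``straightforward, given that $\fclone{\calF}$ and $\ofclone{\calF}$ are defined by closing a set using various operations'' and defers the substantive content to Lemmas 2.1 and 2.2 of the cited reference. Your first and third claims use exactly the smallest-closed-superset argument the paper is gesturing at, and your treatment of the limit claim --- fixing the finite witness $T\cup S_g$, substituting the approximants $g_\eta$ into a fixed pps-formula, and bounding the resulting perturbation by a Lipschitz constant depending only on that formula (finitely many bounded factors, finitely many summands) --- correctly supplies the one genuinely non-trivial step that the paper leaves implicit.
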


 \subsection{Lattices}
 
A \emph{lattice} is a set $L$ equipped with two
commutative, associative binary operations $\vee$ (\emph{join}) and
$\wedge$ (\emph{meet}) with the absorption property:
$a\vee(a\wedge b)=a$ and $a\wedge(a\vee b)=a$ for all $a,b\in L$.   
The lattice operations $\vee$
and $\wedge$ induce a partial order on $L$ as follows: for $a,b\in L$,
$a\le b$ if and only if $b=a\vee b$ (or, equivalently, $a=a\wedge b$).
It is easy to see that, for any $a,b\in L$, the elements $a\vee b$ and
$a\wedge b$ are the least upper bound and greatest lower bound of $a$
and~$b$, with respect to the order $\le$.  In other words, for any $c$
such that $a\le c$ and $b\le c$ it holds that $a\vee b\le c$, and for
any $d$ such that $d\le a$ and $d\le b$ it holds that
$d\le a\wedge b$. Conversely, if a set~$L$ has a partial order~$\le$
such that any pair of elements has a least upper bound and a greatest
lower bound, then it can be converted into a lattice by defining the
operations of join and meet as the least upper bound and the greatest
lower bound respectively.  
A subset $L'\subseteq L$ is called a \emph{sublattice} if for
all $a,b\in L'$, $a\vee b$ and $a\wedge b$ belong to $L'$. Note that
$\vee$ and $\wedge$ here are the operations of $L$.

\subsection{Lattices of functional clones}\label{sec:lfp}

Let $\calL_f$ and $\calL_\omega$ denote the set of all functional clones and
all $\omega$-clones, respectively, ordered with respect to
set inclusion. Then, for any two functional clones (or $\omega$-clones)
$\calF$ and~$\calG$, the least upper bound and the greatest lower
bound are given by $\fclone{\calF\cup\calG}$ (resp.,
$\ofclone{\calF\cup\calG}$) and $\calF\cap\calG$ (in both
cases). Therefore $\calL_f$ and $\calL_\omega$ can be viewed as
lattices with operations of join and meet
\begin{alignat*}{3}
    \calF\vee_f\calG &= \fclone{\calF\cup\calG}\,,\qquad
        &\calF\wedge_f\calG &= \calF\cap\calG 
        \qquad &&\text{for }\calL_f\,,\\
    \calF\vee_\omega\calG &= \ofclone{\calF\cup\calG}\,,\qquad 
        &\calF\wedge_\omega\calG &= \calF\cap\calG
        \qquad &&\text{for } \calL_\omega\,.
\end{alignat*}
Since we are mostly concerned with $\omega$-clones, we will omit the
subscripts of $\vee_\omega$ and~$\wedge_\omega$.

As we will show in Theorem~\ref{thm:cardinality}, the lattices $\calL_f$ and $\calL_\omega$ are quite
large, having cardinality~$\beth_2=2^{2^{\aleph_0}}$.  Therefore we will focus on the
most interesting and important $\omega$-clones.

\begin{definition}
    An $\omega$-clone $\calF$ is \emph{maximal} in an
    $\omega$-clone~$\calG$ if $\calF\subseteq\calG$ and there is no
    $\omega$-clone~$\calC$ such that $\calF \subset \calC \subset
    \calG$.
\end{definition}

It is easily seen that $\calF$ is maximal in~$\calG$ if and only if,
for any function $g\in\calG\setminus\calF$, $\ofclone{\calF \cup
  \{g\}} = \calG$.

\section{Notation and the clones that we study}
\label{sec:study}

We denote tuples in $\Bools^k$ by boldface letters.  
We use the notation $\wt{\bx}$ to denote 
the Hamming weight of~$\bx$.  
The symbols $\bzero$ and~$\bone$ are used to denote the
all-zeroes and all-ones tuple of arity appropriate to the context.
$\overline{\bx}$ is the bitwise complement of~$\bx$. We define
$[k]=\{1,\ldots,k\}$.

Recall the  function $f^{\pi}$ from Section~\ref{sec:fclones}.
We say that an arity-$k$ function~$f$ is
  is \emph{symmetric} if, for all permutations $\pi$ of~$[k]$,
$f=f^{\pi}$.
We often write symmetric
$k$-ary functions as $f=[f_0, \dots, f_k]$, where $f_i$~is the value
of~$f$ on arguments of Hamming weight~$i$.
Using this notation, the function~$\EQ$ can be written as $\EQ=[1,0,1]$.
We make use of the following unary functions:  $\delta_0 = [1,0]$ and
$\delta_1 = [0,1]$.

\begin{definition}
\label{def:ft}
The \emph{Fourier transform} of a function $f\colon \Bools^k\to\NonNegReals$
is the function $\fhat\colon\Bools^k\to\allReals$ defined by
\begin{equation*}
    \fhat(\bx) = \frac1{2^k}\sum_{\bw\in\Bools^k}
    (-1)^{\wt{\bw\land\bx}} f(\bw)\,.
\end{equation*}
\end{definition}
Note that, although we only consider functions whose range is the
nonnegative reals, the Fourier transform of such a function may have
negative numbers in its range.
Readers who are familiar with the holant framework~\cite{CLX2008:Fib, Val2008:Holant}
will recognise that, if we represent $k$-ary functions as column
vectors of length~$2^k\!$, the Fourier transform is equivalently defined
as $\fhat = H^{\otimes k}f$ where $H =
\tfrac12\mtwo{1}{1}{1}{-1} = \tfrac12H^{-1}\!$.  We will use this fact in the proof of
Theorem~\ref{thm:Ferro-E}.

\begin{definition}
\label{def:HyperIsing}
For a real number $\lambda\geq0$ and integer $k\geq 0$, the \emph{$k$-ary
  hypergraph Ising function} is given by
\begin{equation*}
    \hIsing{k}{\lambda}(\bx)
        = \begin{cases}
              \ 1       & \text{if } \bx\in\{\bzero, \bone\} \\
              \ \lambda & \text{otherwise.}
          \end{cases}
\end{equation*}
\end{definition}
The case $\lambda\leq 1$ is known as \emph{ferromagnetic} and
$\lambda\geq 1$ is \emph{antiferromagnetic}.

\begin{definition}\label{def:MC}
    An arity-$k$ \emph{match-circuit} is given by an undirected weighted
    graph~$G$ with vertex set $\{u_1,\dots,u_k\} \cup
    \{v_1,\dots,v_n\}$ for some $n\geq k$.
    Vertices $u_1,\dots,u_k$ have degree~$1$ and  
are called ``external vertices''.    
The edges adjacent
    to them (called ``terminals'') are labelled $y_1,\dots,y_k$.
Vertices $v_1,\ldots,v_n$ are called ``internal vertices''.    
    Each terminal edge has weight~$1$ and each non-terminal
    edge~$e$ is equipped with a positive weight $w_e$.
    Configurations assign spins $0$ and~$1$ to edges.  A configuration
    is a \emph{perfect matching} if every 
 internal vertex has    
    exactly one spin-$1$ edge adjacent to it.  The match-circuit
    implements the function~$f$, where $f(y_1,\ldots,y_k)$ is the sum,
    over perfect matchings, of the product of the weights of edges
    with spin~$1$, where the empty product has weight~$1$.
\end{definition}

Note that, if $f$
is implemented by a match-circuit then so are all functions $c\cdot f$ where $c$
is a positive real number: just add an isolated edge of weight~$c$ to the
match-circuit implementing~$f$. Also, some authors require the underlying graphs of match-circuits
to be planar, and some authors allow the edge weights to be negative.

\begin{definition} 
An arity-$k$ \emph{even-circuit} is given by an undirected weighted
graph~$G$ with vertex set $\{u_1,\dots,u_k\} \cup
\{v_1,\dots,v_n\}$ for some $n\geq k$.  Vertices $u_1, \dots,
u_k$ have degree~$1$ and are called ``external vertices''. The edges adjacent to them (called
``terminals'') are labelled $y_1,\dots,y_k$.  
Vertices $v_1,\ldots,v_n$ are called ``internal vertices''. 
Each terminal edge
has weight~$1$ and each non-terminal edge~$e$ is
equipped with a weight $w_e\in(0,1]$. Configurations assign spins
$0$ and~$1$ to edges.  A configuration is an even subgraph if
every internal vertex has an even number of
spin-$1$ edge adjacent to it.  The even-circuit implements the
function~$f$, where $f(y_1,\ldots,y_k)$ is the sum, over even
subgraphs, of the product of the weights of the edge with
spin~$1$, where the empty product has weight~$1$.
\end{definition}

Note that, for even-circuits, we require all weights to be in~$(0,1]$
whereas, for match-circuits, we only require that weights be positive.
In fact, match-circuits implement the same class of functions when restricted
to weights in $(0,1]$ as they do with arbitrary positive weights, but we use the less restricted definition
for convenience.

For convenience when discussing match-circuits and even-circuits, we
associate an assignment~$\sigma$ of spins to the edges of a graph~$G$
with the spanning subgraph $H = (V(G), \{e\in E(G)\mid \sigma(e)=1\})$.

\begin{definition}
    Given a weighted graph~$H$, we write $w(H) = \prod_{e\in E(H)}
    w_e$ for the weight of~$H$.
\end{definition}

\begin{definition}
\label{defn:clonelist}
    We define the following subsets of~$\calB$.
    \begin{itemize}
    \item $\SD$: all self-dual functions~$f$, i.e., functions such
        that  $f(\bx) = f(\overline{\bx})$ for all~$\bx$.
    \item $\calP$: all functions~$f$ such that $\fhat(\bx)\geq 0$ for
        all~$\bx$.
    \item $\calPN$: all functions~$f$ such that $\fhat(\bx)\geq 0$
        when $\wt{\bx}$~is even and $\fhat(\bx)\leq 0$ when
        $\wt{\bx}$~is odd.
    \item $\SDP = \SD\cap\calP\cap\calPN$.
    \item $\calE$: all functions~$c\cdot f$, where $c$ is a non-negative real
    number and $\fhat$ is implemented by an even-circuit.
    \item $\calM$: all functions~$f$ such that $\fhat$~is implemented by
        a match-circuit.
    \item $\FerroIsing =  \{\hIsing{2}{\lambda} \mid 0\leq \lambda\leq 1\}$.
The functions in the set~$\FerroIsing$ model edge interactions in the ferromagnetic Ising model.
See Cipra~\cite{Cipra} for an introduction to the Ising model.  
    
    \item $\AntiFerroIsing =  \{\hIsing{2}{\lambda} \mid \lambda\geq 1\}$. The functions in the set~$\AntiFerroIsing$ model edge interactions in the anti-ferromagnetic Ising model.
        
    \item $\FerroHyperIsing = \{\hIsing{k}{\lambda} \mid k\geq 2, \lambda\leq 1\}$. The functions in the set~$\FerroHyperIsing$ model ``many-body interactions'' in a generalisation of the 
    ferromagnetic Ising model which applies to hypergraphs --- see \cite{Grimmett} and \cite[Section 2]{FerroPotts}.  
    \end{itemize}
\end{definition}

We emphasise that the sets we have defined are subsets of~$\calB$,
the class of non-negative pseudo-Boolean functions.  There are, for example,
functions outside~$\calB$ whose Fourier transforms are
in~$\calB$, such as the  symmetric, ternary function
$f=[7,-1,-1,7]$, which has Fourier transform $\fhat=[1,0,2,0]$. Even though
$\fhat$~is nonnegative, $f$~is not in~$\calP$ because it is not
in~$\calB$.  Likewise, $f\notin \calM$, even though $\fhat$~is
implemented by a match-circuit (as shown in the proof of
Theorem~\ref{thm:collapse1}).

Instead of~$\calM$, it may seem more natural to consider the
set~$\calM'$ of functions~$f$ that are implemented by match-circuits.
However, $\calM'$~is not a functional clone: for example, it is not
closed under the introduction of fictitious arguments.  By a parity
argument, any function~$f$ that is implemented by a match-circuit must
have $f(\bx)=0$ for all~$\bx$ with even Hamming weight, or $f(\bx)=0$
for all~$\bx$ with odd Hamming weight.  However, any function that is
not everywhere zero and has a fictitious argument must be non-zero for
inputs with both odd and even Hamming weights, so cannot be
implemented by a match-circuit.

As we have remarked, the Fourier transform corresponds in the holant
framework to a holographic transformation by (the appropriate tensor
power of) the Hadamard matrix $\tfrac12\mtwo{1}{1}{1}{-1}$.  This
corresponds, in a certain sense, to transforming the computation from
using basis vectors
$\left(\begin{smallmatrix}1\\0\end{smallmatrix}\right)$ and
$\left(\begin{smallmatrix}0\\1\end{smallmatrix}\right)$ to using
$\left(\begin{smallmatrix}1\\1\end{smallmatrix}\right)$ and
$\left(\begin{smallmatrix}1\\-1\end{smallmatrix}\right)$.  It has been
shown that the latter is the unique basis in which the equality
function can be expressed using matchgates~\cite{planarCSP} and, thus,
our use of the Fourier transform here is essential.  Cai, Lu and
Xia~\cite{planarCSP} have used the Fourier transform as a holographic
transformation from counting CSPs to counting weighted perfect
matchings, as the key tool to obtain polynomial-time algorithms for a
wide range of weighted planar counting CSPs.

Note that, in the definition of~$\calE$,
we allow scaling
by a constant.  We do this to allow the implementation of functions
that have $f(\bzero)<1$.  This would be impossible without scaling,
since the empty graph is an even subgraph of every even-circuit. It
has weight~$1$ and the weight of the empty graph is one of the terms
of the sum defining~$f(\bzero)$.  In contrast, match-circuits can
already implement functions with~$f(\bzero)<1$ without the need for
scaling, and adding scaling to the definition of~$\calM$ would not, in
fact, change the class of implementable functions.

To avoid issues with scaling of Ising and hypergraph Ising functions, we
work with the following clones rather than with $\ofclone{\FerroIsing}$, etc.

\begin{definition}\label{def:Addunaries}
    \begin{align*}
        \Iferro &= \ofclone{\FerroIsing \cup \calB_0} \\
        \Ianti  &= \ofclone{\AntiFerroIsing \cup \calB_0} \\
        \Hferro &= \ofclone{\FerroHyperIsing \cup \calB_0}\,.
    \end{align*}
\end{definition}

\section{Main theorems}

Let $\calL'=\{\calB,\SD,\calP,\calPN,\SDP,\ofclone{\ofclone{\calM}\cup\Hferro},\ofclone{\calM},\Hferro,\ofclone{\calM}\cap \Hferro,\Iferro\}$.

\begin{figure}[t]
\begin{center}
\begin{tikzpicture}[scale=2.25]

    \node (B)   at (1,5.5)  {$\calB$};

    \node (SD)  at (0.25,4.75) {$\SD=\Ianti$};
    \node (P)   at (1,4.75) {$\calP$};
    \node (PN)  at (1.75,4.75) {$\calPN$};

    \node (SDP) at (1,4)    {$\SDP$};
    \node (MHj) at (1,3)    {$\ofclone{\ofclone{\calM} \cup \Hferro}$};

    \node (M)   at (0,2.5)  {$\ofclone{\calM}$};
    \node (H)   at (2,2.5)  {$\Hferro$};

    \node (MH)  at (1,2)    {$\ofclone{\calM}\cap \Hferro$};

    \node (E)   at (1,1)    {$\Iferro = \ofclone{\calE}$};

    \draw   (B)--(SD); \draw (B)--(P); \draw (B)--(PN);
    \draw  (SD)--(SDP); \draw (P)--(SDP); \draw (PN)--(SDP);
    \draw (MHj)--(M); \draw (MHj)--(H);
    \draw   (M)--(MH); \draw (H)--(MH);
    \draw [dashed] (MH)--(E);  
    \draw [dashed] (MHj)--(SDP);  
\end{tikzpicture}
\end{center}
\caption{The lattice $\calL'$.}
\label{fig:lattice}
\end{figure}

 \newcommand{\statethmmain}{The lattice $\calL'$ shown in
 Figure~\ref{fig:lattice} is a sublattice of $\calL_\omega$.  That is, all
 elements of $\calL'$ are distinct $\omega$-clones, with the possible exceptions of 
 $\SDP$ and $\ofclone{\ofclone{\calM}\cup\Hferro}$, 
 and 
 $\ofclone{\calM}\cap \Hferro$ and $\Iferro$, 
 which might be equal. (This is indicated by the
 dotted lines in Figure~\ref{fig:lattice}.)
    Furthermore, the meets and joins of elements of $\calL'$ are as depicted in Figure~\ref{fig:lattice} and
\begin{enumerate}[(i)]
\item 
$\SD=\Ianti$;
\item $\Iferro = \ofclone{\calE}$;
\item
$\SD$, $\calP$ and~$\calPN$ are maximal in $\calB$; 
\item
$\SDP$ is maximal in $\SD$.
\end{enumerate} }
\begin{theorem}
\label{thm:main}
    \statethmmain
\end{theorem}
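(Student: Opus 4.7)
The plan is to establish the theorem by combining several lemmas and theorems proved elsewhere in the paper with a handful of direct verifications. First, I would verify that each set listed in $\calL'$ is indeed an $\omega$-clone. For the sets of the form $\ofclone{\cdots}$ this is automatic from the definition. For $\SD$, closure under the four clone operations and limits is a routine check (self-duality is preserved by introducing a fictitious argument, by permuting arguments, by products, by summation over a variable, and by pointwise limits, since $\EQ\in\SD$). For $\calP$ and $\calPN$ the key fact is that under the Fourier transform, the clone operations correspond to tractable operations on $\fhat$: introduction of fictitious arguments and permutation act in the obvious way, summation over $x_k$ halves the Fourier transform and projects onto inputs with $x_k=0$, and product corresponds to a convolution-type operation that preserves nonnegativity of $\fhat$ (respectively, the parity-dependent sign pattern defining $\calPN$). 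Limits preserve weak inequalities, so $\calP$, $\calPN$, and their intersection $\SDP$ are all $\omega$-clones.

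Second, I would verify the structural claims in Figure~\ref{fig:lattice}. The meets depicted are set-theoretic intersections, which coincide with $\wedge$ on $\calL_\omega$; the joins depicted are $\ofclone{\cdot\cup\cdot}$ by definition. The asserted equalities for joins (for example $\calB = \SD\vee\calP = \SD\vee\calPN = \calP\vee\calPN$, and the equality of $\SDP$ with each of the meets in the upper diamond) amount to showing that the ambient clone at the top of each diamond contains both children, which follows from the definitions of $\SDP$, $\calP$, $\calPN$, and $\SD$, combined with Lemma~\ref{lem:transitive}. The downward containments for the lower part of the lattice are proved by exhibiting generators of each clone inside the next one up (matchings inside $\SDP$, hypergraph Ising inside $\SDP$, etc.), then invoking Lemma~\ref{lem:transitive}. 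Strict separation along each solid edge is demonstrated by exhibiting a witness: e.g.\ $\delta_0\in\calB\setminus\SD$; a symmetric function whose Fourier transform has a single negative coefficient separates $\calP$ from $\calB$ (and analogously for $\calPN$); a matchgate-realisable function separates $\ofclone{\calM}$ from $\Hferro$; and a hypergraph Ising function of arity $\geq 3$ separates $\Hferro$ from $\ofclone{\calM}$.

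Third, each of items (i)--(iv) is the content of a subsequent named result. Part (i), $\SD=\Ianti$, is Theorem~\ref{thm:SD-Ising}: the inclusion $\Ianti\subseteq\SD$ is immediate (every $\hIsing{2}{\lambda}$ is self-dual, $\calB_0\subseteq\SD$, and $\SD$ is an $\omega$-clone), and the reverse inclusion is obtained by expressing an arbitrary self-dual function as a limit of pps-formulas over $\AntiFerroIsing\cup\calB_0$. Part (ii), $\Iferro=\ofclone{\calE}$, is the classical high-temperature/even-subgraph expansion of the ferromagnetic Ising partition function; I would prove it by writing $\hIsing{2}{\lambda}$ as a positive combination of $1$ and $(-1)^{x_1\oplus x_2}$ to put $\hIsing{2}{\lambda}$ inside $\ofclone{\calE}$, and conversely realising every even-circuit as a limit of a product of $\hIsing{2}{\lambda}$ gadgets tuned to appropriate weights. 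Parts (iii) and (iv) are maximality statements. For each of $\SD$, $\calP$, $\calPN$, the plan is: given $g$ in the ambient clone but outside the putative maximal subclone, locate a specific defect of $g$ (a point where $g(\bx)\neq g(\overline\bx)$; a Fourier coefficient of the wrong sign at an even-weight input; a Fourier coefficient of the wrong sign at an odd-weight input), then use $g$ inside pps-gadgets with generators of the subclone to extract a new function that, together with the subclone, generates $\calB$. The argument for (iv) is analogous with $\SDP$ as the subclone of $\SD$.

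The main obstacles are the maximality arguments (iii)--(iv) and the two inclusions in (ii). The delicate step in the maximality proofs is designing gadgets that convert a single localised Fourier-sign or self-duality defect into a usable generator; this requires combining $g$ with suitable self-dual or Fourier-nonnegative functions already present in the subclone, and taking limits to wash out extraneous contributions. For (ii), the harder direction is the realisation of arbitrary even-subgraph weights as limits of $\hIsing{2}{\lambda}$ products, which requires careful choice of $\lambda$-values (together with unary constants in $\calB_0$) so that the resulting weight on each edge tends to any prescribed target in $(0,1]$. Everything else in the theorem is either immediate from the definitions or follows from Lemma~\ref{lem:transitive}.
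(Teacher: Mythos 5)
Your overall architecture matches the paper's: Theorem~\ref{thm:main} is there proved by assembling the clone-closure results (Theorems~\ref{thm:P-clone}, \ref{thm:PN-clone}, \ref{thm:SDclone}), the maximality results (Lemmas~\ref{lem:SD-maximal} and~\ref{lem:P-max}, Corollary~\ref{cor:PN-max}, Theorem~\ref{thm:SDP-maximal}), the identifications $\SD=\Ianti$ and $\Iferro=\ofclone{\calE}$, and the separation lemmas. Two points in your plan, however, are wrong or missing as written. First, the claim that the join equalities ``amount to showing that the ambient clone at the top of each diamond contains both children'' is not correct: containment of $\SD$ and $\calP$ in $\calB$ only yields $\SD\vee\calP\subseteq\calB$. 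The equality $\SD\vee\calP=\calB$ genuinely requires the maximality of $\SD$ in $\calB$ together with a witness in $\calP\setminus\SD$, and $\calP\vee\calPN=\calB$ requires maximality of $\calP$ plus a function in $\calPN\setminus\calP$ (the paper's Corollary~\ref{cor:PjoinPN}). You do establish maximality in part (iii), so the ingredient is present in your plan, but the join step must explicitly invoke it rather than mere containment.

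Second, the incomparability of $\ofclone{\calM}$ and $\Hferro$ is glossed over, and one direction needs an idea your plan does not contain: to show that a function such as $\hIsing{4}{1/2}$ is \emph{not} in $\ofclone{\calM}$ one needs a necessary condition satisfied by every $4$-ary function whose Fourier transform is implemented by a match-circuit. The paper's Lemma~\ref{lem:matchineq-a} supplies this, namely $\fhat(0011)\fhat(1100)+\fhat(0101)\fhat(1010)+\fhat(0110)\fhat(1001)\geq\fhat(\bzero)\fhat(\bone)$, proved via an injective, weight-preserving path-swap on pairs of perfect matchings; and one must then check that the failure of this inequality is robust under the $\epsilon$-limits permitted in $\ofclone{\calM}$. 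Note also that your proposed witness ``a hypergraph Ising function of arity $\geq 3$'' cannot have arity~$3$: by Theorems~\ref{thm:collapse1} and~\ref{thm:collapse2}, $[\Hferro]_3=[\Iferro]_3\subseteq[\ofclone{\calM}]_3$, so no ternary function separates $\Hferro$ from $\ofclone{\calM}$; arity~$4$ is the first place the separation occurs. The reverse separation ($\ofclone{\calM}\not\subseteq\Hferro$) works as you suggest, via a matchgate-realisable function that fails log-supermodularity, as with the paper's $f=[13,4,1,4,13]$.
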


Theorem~\ref{thm:main} is proven in Section~\ref{sec:mainthm}.

 \begin{theorem}
\label{thm:stretch}
  For any $\lambda>1$, $\ofclone{\hIsing{2}{\lambda} \cup \calB_0} =
    \Ianti$. For any $\lambda\in(0,1)$,
    $\ofclone{\hIsing{2}{\lambda} \cup \calB_0} = \Iferro$.\end{theorem}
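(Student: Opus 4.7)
For the easy direction, $\hIsing{2}{\lambda}$ lies in $\AntiFerroIsing$ when $\lambda>1$ and in $\FerroIsing$ when $\lambda\in(0,1)$, so $\ofclone{\hIsing{2}{\lambda}\cup\calB_0}\subseteq\Ianti$ (respectively $\subseteq\Iferro$) is immediate. The reverse direction rests on two gadget computations. \emph{Parallel}: $\hIsing{2}{\mu_1}(x,y)\,\hIsing{2}{\mu_2}(x,y)=\hIsing{2}{\mu_1\mu_2}(x,y)$. \emph{Series}: $\sum_{y\in\Bools}\hIsing{2}{\mu_1}(x,y)\,\hIsing{2}{\mu_2}(y,z)=(1+\mu_1\mu_2)\,\hIsing{2}{\mu'}(x,z)$ with $\mu'=(\mu_1+\mu_2)/(1+\mu_1\mu_2)$, the scalar prefactor being absorbable by multiplication with the nullary function $(1+\mu_1\mu_2)^{-1}\in\calB_0$. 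Reparameterising $\mu\in(0,\infty)$ by $t=(1-\mu)/(1+\mu)\in(-1,1)$ (with $t>0$ ferromagnetic, $t<0$ antiferromagnetic), series becomes $t\mapsto t_1t_2$ and parallel becomes $t\mapsto(t_1+t_2)/(1+t_1t_2)$; writing $t=\tanh a$, parallel collapses to the simple addition $a_1+a_2$.

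Ferromagnetic case ($\lambda\in(0,1)$, so $t_0=(1-\lambda)/(1+\lambda)\in(0,1)$ and $a_0=\tanh^{-1}(t_0)>0$): series-with-self produces a sequence $a_{k+1}=\tanh^{-1}(\tanh^2 a_k)$, strictly decreasing to $0$ because $\tanh^2 x<\tanh x$ for $x>0$. Parallel composition then realises every $na_k$ for $n\in\nats$. For any target $a^*>0$ and $\epsilon>0$, choose $k$ with $a_k<\epsilon$ and then $n=\lceil a^*/a_k\rceil$; the resulting $a$-value lies within $\epsilon$ of $a^*$, so by continuity the corresponding $\mu$-value approximates $\mu^*=(1-\tanh a^*)/(1+\tanh a^*)$ arbitrarily closely. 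Closure under limits then places $\hIsing{2}{\mu^*}$ in the clone for every $\mu^*\in(0,1)$, and the endpoints $\mu^*=0$ ($=\EQ$) and $\mu^*=1$ (constant $1$, obtainable from $\calB_0$) are in the clone trivially, giving $\Iferro\subseteq\ofclone{\hIsing{2}{\lambda}\cup\calB_0}$. The main technical obstacle is this density step; the key insight is that iterated series composition yields $a$-values tending to $0$, so that parallel's additive structure can subsequently approximate any target.

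Antiferromagnetic case ($\lambda>1$): parallel gives $\hIsing{2}{\lambda^k}$ for every $k\geq1$; scaling by $\lambda^{-k}\in\calB_0$ and taking $k\to\infty$ places $\NEQ=[0,1,0]$ in the $\omega$-closure. A short calculation gives $\sum_{y\in\Bools}\NEQ(x,y)\,\hIsing{2}{\mu}(y,z)=\mu\,\hIsing{2}{1/\mu}(x,z)$, so $\NEQ$ implements the inversion $\mu\leftrightarrow 1/\mu$ (after absorbing the scalar via $\calB_0$). Series-with-self of $\hIsing{2}{\lambda}$ produces $\hIsing{2}{\mu_1}$ with $\mu_1=2\lambda/(1+\lambda^2)\in(0,1)$, a ferromagnetic value; applying the ferromagnetic density argument to $\mu_1$ (in place of $\lambda$) generates all of $\Iferro$ inside the clone, and then $\NEQ$-inversion supplies $\hIsing{2}{\mu}$ for every $\mu\in[1,\infty)$. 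Hence $\Ianti\subseteq\ofclone{\hIsing{2}{\lambda}\cup\calB_0}$, completing the proof.
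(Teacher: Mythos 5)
Your series/parallel calculus is correct, and for the ferromagnetic direction it is really the paper's own ``stretch and thicken'' construction (Lemma~\ref{lem:stretchferro}) rewritten in more transparent coordinates: the paper's gadget $F_{\ell,t}$ is exactly $t$ parallel copies of an $\ell$-fold series chain, the ratio $b_\ell/m_\ell$ of its recurrences equals $(1-t_0^\ell)/(1+t_0^\ell)$ (i.e.\ the series composition is $t\mapsto t_0^\ell$ in your parameterisation), and the paper's choice of ``$\ell$ large, then $t$ largest possible'' is precisely your ``make the unit step $a_k$ small, then add $n$ of them in parallel''. Your antiferromagnetic route, however, is genuinely different: the paper (Lemma~\ref{lem:stretch}) reaches every $\lambda'>1$ directly by using \emph{odd} path lengths, so that the series parameter stays antiferromagnetic, whereas you first extract $\NEQ$ as a limit of $\lambda^{-k}\bigl(\hIsing{2}{\lambda}\bigr)^k$ and use it as the involution $\mu\mapsto 1/\mu$ applied to the ferromagnetic functions already obtained. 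Both are valid; yours gives a cleaner conceptual picture (a semigroup action on the edge parameter plus an inversion), while the paper's avoids introducing $\NEQ$ and treats both signs uniformly.

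The one point you must still address is the finiteness condition in the paper's definition of a limit: $f$ is a limit of a set only if \emph{all} approximants $f_\epsilon$ lie in $\fclone{S_f}$ for a \emph{single finite} set $S_f$. Your approximants absorb a different scalar at each stage (the products of $(1+\mu_1\mu_2)$-type prefactors, or $\lambda^{-k}$ in the $\NEQ$ construction), so the collection of nullary constants you draw from $\calB_0$ is infinite, and as written your argument does not exhibit a finite $S_f$. This is exactly the technicality that Section~\ref{sec:finite} of the paper exists to handle: one fixes the finite set $\calB'_0$, shows that every nonnegative constant is a limit of $\fclone{\calB'_0}$ (Lemma~\ref{lem:finitegen}), and then invokes transitivity of limits (Lemma~\ref{lem:transitive}) to conclude that the whole construction is a limit of $\fclone{\{\hIsing{2}{\lambda}\}\cup\calB'_0}$, which is generated by a finite set. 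Adding that step (or citing Corollaries~\ref{cor:Ianti-fin} and~\ref{cor:Iferro-fin}, which package it) completes your proof.
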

\begin{proof}
    The two parts are Corollaries \ref{cor:Ianti-fin}
    and~\ref{cor:Iferro-fin}, respectively, from Section~\ref{sec:Ising}.
\end{proof}

\begin{theorem}
    $\calE$ and $\calM$ are functional clones, i.e., $\fclone{\calE} =
    \calE$ and $\fclone{\calM} = \calM$.
\end{theorem}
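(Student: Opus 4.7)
The plan is to show directly that each of $\calE$ and $\calM$ contains~$\EQ$ and is closed under the four clone-generating operations: introduction of fictitious arguments, permutation, product and summation. Since membership in these two classes is defined through circuit-implementability of the Fourier transform~$\widehat{f}$ rather than of~$f$ itself, I would first compute the Fourier-side action of each operation from Definition~\ref{def:ft}. A short calculation yields: if $g(x_1,\dots,x_{k+1})=f(x_1,\dots,x_k)$ then $\widehat{g}(x_1,\dots,x_{k+1})=\widehat{f}(x_1,\dots,x_k)\,\delta_0(x_{k+1})$; permutation commutes with the Fourier transform; if $h(\bx)=f(\bx)g(\bx)$ then $\widehat{h}(\by)=\sum_{\bu\in\Bools^k}\widehat{f}(\bu)\widehat{g}(\bu\oplus\by)$, a convolution on~$\Bools^k$; and if $g$~is obtained by summing out the last argument of~$f$, then $\widehat{g}(x_1,\dots,x_{k-1})=2\widehat{f}(x_1,\dots,x_{k-1},0)$.

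Second, I would realise each Fourier-side operation by a gadget in both circuit models, using circuits $G_f$ and~$G_g$ that implement $\widehat{f}$ and $\widehat{g}$ (or, for unary operations, just~$G_f$). Permutation is trivial relabelling. Adding a fictitious argument amounts to attaching a $\delta_0$-gadget: for even-circuits, a single external vertex connected only by a terminal edge to one internal vertex already implements~$\delta_0=[1,0]$, while for match-circuits a terminal edge to a new internal vertex, joined by a unit-weight edge to an auxiliary internal vertex, implements~$\delta_0$. Summation corresponds to pinning $x_k=0$, which is effected by deleting the external vertex~$u_k$ and its terminal edge; the internal neighbour of~$u_k$ remains, and the evenness (respectively perfect-matching) constraint there now enforces the pinning. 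The factor of~$2$ is absorbed by the scalar~$c$ in the definition of~$\calE$, and is realised for match-circuits by adjoining two fresh internal vertices joined by a weight-$2$ edge, which is forced into every perfect matching. Finally, $\widehat{\EQ}=[\tfrac12,0,\tfrac12]$ by direct calculation, so $\EQ\in\calE$ follows by writing $\EQ=\tfrac12\cdot 2\EQ$ and implementing $\widehat{2\EQ}=[1,0,1]$ by an even-circuit whose two external vertices share a single internal neighbour, and $\EQ\in\calM$ follows by implementing $\widehat{\EQ}$ with a unit-weight edge between the two internal neighbours of the external vertices together with a disjoint weight-$\tfrac12$ edge between two fresh internal vertices.

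The main obstacle is closure under product, which requires a convolution gadget at each terminal position~$i$. For even-circuits the gadget is straightforward: introduce one new internal vertex~$q_i$ joined by unit-weight edges to a fresh external vertex~$w_i$ (via a terminal) and to the former internal neighbours of the $i$-th external vertices of~$G_f$ and~$G_g$; the evenness constraint at~$q_i$ is precisely the XOR relation $y_i=u_i\oplus v_i$ required by convolution, and every edge in the gadget has weight~$1$, so each satisfying assignment contributes weight~$1$. For match-circuits the analogous one-vertex construction fails because the perfect-matching condition has odd local parity whereas XOR has even parity. Instead, I would use a four-vertex gadget: internal vertices $q_1,q_2,q_3,q_4$ with unit-weight edges $w_iq_1$ (terminal), $q_1q_2$, $q_2q_3$, $q_2q_4$, $q_3q_4$, and also $q_3a_i^{(f)}$ and $q_4a_i^{(g)}$, where $a_i^{(f)}$ and~$a_i^{(g)}$ were the internal neighbours of the $i$-th external vertices of~$G_f$ and~$G_g$. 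A case analysis of the perfect-matching conditions at each~$q_j$ shows that the perfect matchings of the combined circuit are in bijection with the four $(y_i,u_i,v_i)$-assignments satisfying $y_i=u_i\oplus v_i$, each contributing weight~$1$; hence the combined circuit implements the convolution. Constructing this match-gadget, and verifying that every XOR-satisfying case contributes the same weight, is the main technical step.
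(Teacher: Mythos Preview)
Your approach is essentially the same as the paper's: compute how each clone operation acts on the Fourier side (these identities are the paper's Lemma~\ref{lem:fops} and equation~\eqref{eq:convolution}), then realise each operation by a circuit gadget. Your gadgets differ in detail from the paper's but all work; in particular your four-vertex match-circuit XOR gadget is a valid alternative to the paper's construction in Figure~\ref{fig:M-clone}(b), and the case analysis you describe goes through. One small technical slip: your even-circuit for $\widehat{2\EQ}$ has two external vertices sharing a single internal neighbour, so $n=1<k=2$, violating the definition of even-circuit; the paper instead uses a three-edge path $u_1\,v_1\,v_2\,u_2$, and you should do the same (or add an isolated internal vertex).
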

\begin{proof}
    $\fclone{\calE} = \calE$ is Theorem~\ref{thm:E-clone} and
    $\fclone{\calM} = \calM$ is Theorem~\ref{thm:M-clone}.
\end{proof}

\newcommand{\statethmcardinality}{$|\calL_f|=|\calL_\omega|=\beth_2$.}
\begin{theorem}\label{thm:cardinality}
\statethmcardinality
\end{theorem}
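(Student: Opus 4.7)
The plan is to establish $|\calL_f|=|\calL_\omega|=\beth_2$ by sandwiching both cardinalities between $\beth_2$ above and below. For the upper bound, first $|\calB|=\beth_1$: each $\calB_k$ is in bijection with $\NonNegReals^{2^k}$ and so has cardinality $\beth_1=2^{\aleph_0}$, and $\calB=\bigcup_{k\geq 0}\calB_k$ is a countable union of such sets. Since every functional clone and every $\omega$-clone is a subset of $\calB$, we obtain $|\calL_f|\leq 2^{|\calB|}=\beth_2$. Because every $\omega$-clone is in particular a functional clone, $\calL_\omega\subseteq\calL_f$ as collections of subsets of $\calB$, giving $|\calL_\omega|\leq|\calL_f|\leq\beth_2$. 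It therefore suffices to exhibit $\beth_2$ pairwise distinct $\omega$-clones.

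For the lower bound, I would take $\Lambda$ to be the open interval $(1/4,1/2)$ of reals and associate to each $S\subseteq\Lambda$ the $\omega$-clone $\calC_S=\ofclone{\{u_\lambda:\lambda\in S\}}$, where $u_\lambda=[1,\lambda]$. An induction on the closure definition---using that each closure operation, including the limit operation (whose definition explicitly requires a \emph{finite} set $S_f$), takes only finitely many inputs---shows $\ofclone{\calF}=\bigcup\{\ofclone{\calF'}:\calF'\subseteq\calF\text{ finite}\}$. Consequently, distinctness of the $\calC_S$ as $S$ varies reduces to the \emph{independence property}: for every $\mu\in\Lambda$ and every finite $T\subseteq\Lambda\setminus\{\mu\}$, $u_\mu\notin\ofclone{\{u_\nu:\nu\in T\}}$. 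A direct analysis of pps-formulas shows that the unary members of $\fclone{\{u_{\nu_1},\ldots,u_{\nu_n}\}}$ are exactly the functions $[c,c\prod_i\nu_i^{m_i}]$ with $m_i\in\nats$ and $c=\prod_j\bigl(1+\prod_i\nu_i^{n_{ij}}\bigr)$, $n_{ij}\in\nats$, so $u_\mu\in\ofclone{\{u_{\nu_1},\ldots,u_{\nu_n}\}}$ if and only if $\mu$ lies in the topological closure of the multiplicative sub-monoid $M(T)$ of $\PosReals$ generated by $T=\{\nu_1,\ldots,\nu_n\}$.

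The crux is the elementary geometric observation that, for any finite $T\subseteq\Lambda=(1/4,1/2)$, every product of at least two generators (counted with multiplicity) is strictly less than $(1/2)^2=1/4$ and so lies in $(0,1/4)$, while the empty product equals $1$ and products of exactly one generator recover the $\nu_i$ themselves. Hence $M(T)\subseteq\{1\}\cup T\cup(0,1/4)$, so $\overline{M(T)}\subseteq\{1\}\cup T\cup[0,1/4]$, and in particular $\overline{M(T)}\cap(1/4,1/2)=T$. Therefore, for $\mu\in\Lambda\setminus T$ one has $\mu\notin\overline{M(T)}$; the independence property holds, and the map $S\mapsto\calC_S$ is an injection $2^\Lambda\hookrightarrow\calL_\omega$. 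Since $|\Lambda|=\beth_1$, this yields $|\calL_\omega|\geq 2^{\beth_1}=\beth_2$, completing the proof.
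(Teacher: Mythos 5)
Your proposal is correct, and the lower bound is obtained by a genuinely different construction from the paper's. Both arguments share the same skeleton: the upper bound is the trivial count $|\calL_f|,|\calL_\omega|\le 2^{|\calB|}=\beth_2$, and the lower bound comes from a $\beth_1$-sized ``independent'' family $\calF$ with $\ofclone{\calG}\cap\calF=\calG$ for all $\calG\subseteq\calF$, after reducing membership in an $\omega$-clone to membership in $\ofclone{\calF'}$ for a finite $\calF'$ (the paper does this reduction via its Lemma~\ref{lem:transitive}/pps-formula normal form rather than your closure induction, but the content is the same). Where you diverge is the independent family and the key lemma. The paper takes the \emph{binary} functions $f_\alpha$ with $f_\alpha(0,0)=1$, $f_\alpha(0,1)=f_\alpha(1,0)=2$, $f_\alpha(1,1)=2\alpha$ for $\alpha>3$, and proves independence through Lemma~\ref{lem:alpha-generation}, a case analysis of pps-formulas exploiting $2$-monotonicity. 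You instead take unary functions $u_\lambda=[1,\lambda]$ with $\lambda\in(1/4,1/2)$; here the unary fragment of $\fclone{\{u_{\nu_1},\dots,u_{\nu_n}\}}$ can be written down exactly (your $[c,c\rho]$ description is correct: the EQ-components not containing the free variable each contribute a factor $1+\prod_i\nu_i^{n_{ij}}\ge 1$ to $c$, and the component containing it contributes $\rho\in M(T)$), and since $c\ge 1$ always, $c\to 1$ forces $\rho\to\mu$, so the limit condition collapses to $\mu\in\overline{M(T)}$; the gap $(1/4,1/2)$ versus products of two or more generators finishes it. Your route is shorter and more elementary, avoiding the monotonicity machinery entirely. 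What the paper's choice buys in exchange is that its family lies inside $\Mon_2$ (each $f_\alpha$ is $2$-monotone), so its proof shows the stronger fact that already the clones contained in $\Mon_2$ have cardinality $\beth_2$, tying the cardinality result to the monotone clones of Section~\ref{sec:monotone}; your unary functions $[1,\lambda]$ with $\lambda<1$ are not monotone, so your construction does not yield that refinement, though it fully proves the theorem as stated.
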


Theorem~\ref{thm:cardinality} is proven in
Section~\ref{sec:cardinality}. 

Theorem~\ref{thm:monotone-clone}, proved in Section~\ref{sec:monotone}, 
shows that the set of monotone functions is an $\omega$-clone and gives
examples of $\omega$-clones that generalise  this clone. 

\subsection{Ternary functions}

Given $n\geq 0$ and a set of functions~$\calF\subseteq\calB$, we write
$[\calF]_n = \calF \cap \calB_n$.  
Note that
$[\calB]_n = \calB_n$. Although $[\calF]_n$~is
a set of $n$-ary functions, it essentially includes all functions of
smaller arity.  In particular, if $\calF$~is a clone then it is closed under
the introduction of fictitious arguments, as discussed in
Section~\ref{sec:fclones}, and this allows functions of
smaller arity to be ``padded'' to arity~$n$.  However, $[\calF]_n$ is
not, itself, a clone.

We now focus on the ternary parts of the clones from $\calL'$, in which case
certain distinctions, which were present in $\calL'$, disappear.

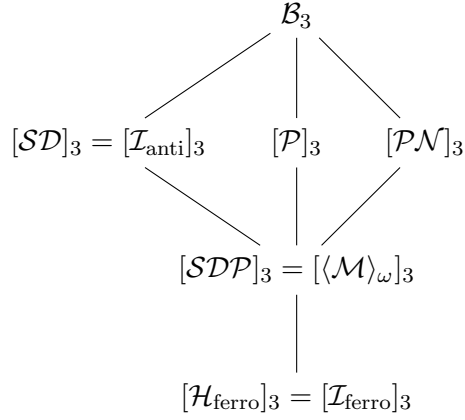
\begin{figure}[t]
\begin{center}
\begin{tikzpicture}[scale=2.25]

    \node (B)   at (1,4.5)  {$\calB_3$};

    \node (SD)  at (-0.1,3.75) {$[\SD]_3=[\Ianti]_3$};
    \node (P)   at (1,3.75) {$[\calP]_3$};
    \node (PN)  at (1.75,3.75) {$[\calPN]_3$};

    \node (SDP) at (1,3)    {$[\SDP]_3 = [\ofclone{\calM}]_3$};

    \node (H)   at (1,2.25)  {$[\Hferro]_3 = [\Iferro]_3$};

    \draw (B)--(SD); \draw (B)--(P); \draw (B)--(PN);
    \draw (SD)--(SDP); \draw (P)--(SDP); \draw (PN)--(SDP);
    \draw (SDP)--(H);
\end{tikzpicture}
\end{center}
\caption{Ternary parts of the clones in $\calL'$.   }
\label{fig:lattice3}
\end{figure}

Let $\calS_3=\{\calB_3, [\SD]_3, [\calP]_3, [\calPN]_3, [\SDP]_3,$ $[\ofclone{M}]_3, [\Hferro]_3, [\Iferro]_3\}$.  

\newcommand{\statethmTernary}{$[\SDP]_3 = [\ofclone{M}]_3$, $[\Hferro]_3 =
[\Iferro]_3$, and any other two elements of $\calS_3$ are distinct.}
\begin{theorem}
\label{thm:ternary}
\statethmTernary
\end{theorem}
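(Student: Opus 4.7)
The plan is to deduce the bulk of Theorem~\ref{thm:ternary} from Theorem~\ref{thm:main}, which supplies all relevant containments among the full clones in $\calL'$; these transfer immediately to the ternary parts. Beyond that, I need to prove the two claimed collapses, $[\SDP]_3 = [\ofclone{\calM}]_3$ and $[\Hferro]_3 = [\Iferro]_3$, and to exhibit explicit arity-$3$ witnesses for all remaining separations.

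For $[\SDP]_3 = [\ofclone{\calM}]_3$, the inclusion $\supseteq$ follows from $\ofclone{\calM}\subseteq\SDP$ in Theorem~\ref{thm:main}. For $\subseteq$ I would first record the elementary calculation, obtained by pairing each $\bw$ with $\overline{\bw}$ in Definition~\ref{def:ft}, that every self-dual function has $\fhat(\bx) = 0$ at every $\bx$ of odd Hamming weight. Hence at arity~$3$ a function $f\in[\SDP]_3$ is determined by its Fourier transform on the five even-weight inputs, and the $\SDP$ conditions there reduce to simple non-negativity. I would then construct, for each admissible Fourier profile, an arity-$3$ match-circuit realising it: a star of terminal edges meeting at a common internal vertex generates the weight-$2$ Fourier coefficients, adjustable internal edges tune them independently, and multiplication by $\calB_0$-constants absorbs the overall scale.

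For $[\Hferro]_3 = [\Iferro]_3$, the inclusion $\supseteq$ is immediate. For $\subseteq$ it suffices to realise each generator $\hIsing{k}{\lambda}\in\FerroHyperIsing$ inside $\Iferro$. The basic gadget is the star sum $\sum_y \prod_{i=1}^{k} \hIsing{2}{\mu}(x_i, y)$; at $k=3$ a direct calculation yields the symmetric function $[1+\mu^3,\ \mu+\mu^2,\ \mu+\mu^2,\ 1+\mu^3]$, which after scaling by the $\calB_0$-constant $1/(1+\mu^3)$ equals $\hIsing{3}{\mu/(1-\mu+\mu^2)}$. Since $\mu \mapsto \mu/(1-\mu+\mu^2)$ is continuous and monotone on $[0,1]$ with image $[0,1]$, every $\lambda\in[0,1]$ is attained, and the analogous $k$-leaf star handles arbitrary $k\geq 2$.

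For the remaining separations, three simple ternary witnesses suffice. The function $[1,0,0,0]$ has Fourier transform identically $1/8$, so lies in $[\calP]_3\setminus([\SD]_3\cup[\calPN]_3)$; symmetrically $[0,0,0,1]$ has $\fhat(\bx) = (-1)^{|\bx|}/8$, so lies in $[\calPN]_3\setminus([\SD]_3\cup[\calP]_3)$; and the self-dual $[0,1,1,0]$ has $\fhat(\bzero) = 3/4$, $\fhat = 0$ on odd weights, and $\fhat = -1/4$ on weight-$2$ inputs, placing it in $[\SD]_3\setminus([\calP]_3\cup[\calPN]_3)$. Together these establish all pairwise separations among $\calB_3$, $[\SD]_3$, $[\calP]_3$, $[\calPN]_3$, and $[\SDP]_3$. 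For the only remaining case, $[\SDP]_3 \neq [\Hferro]_3$, I would invoke the explicit characterisation of ternary ferromagnetic Ising marginals given by Theorem~\ref{thm:collapse1} and pick a symmetric self-dual $[a,b,b,a]$ with $a\geq b\geq 0$ lying outside the image of that parametric family. The main obstacles are the match-circuit construction in the proof of $[\SDP]_3 \subseteq [\ofclone{\calM}]_3$, where each admissible Fourier profile must be realised with non-negative weights, and the $[\SDP]_3\setminus[\Hferro]_3$ separation, which is the only step that genuinely depends on the explicit parametric form provided by Theorem~\ref{thm:collapse1}.
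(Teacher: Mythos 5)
Your witnesses for the pairwise separations of $[\SD]_3$, $[\calP]_3$, $[\calPN]_3$ are correct, and your reduction of $[\SDP]_3\subseteq[\ofclone{\calM}]_3$ to realising the even-weight Fourier profile by a match-circuit is the right strategy (the paper does exactly this, though its gadget is a triangle on three internal vertices whose edge weights independently set the three weight-$2$ coefficients; your sketch of terminals meeting a \emph{single} internal vertex cannot work, since a degree-$3$ internal vertex forces odd parity, whereas $\fhat$ is supported on even-weight inputs). However, there are two genuine gaps.

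First, your route to $[\Hferro]_3\subseteq[\Iferro]_3$ --- ``realise each generator $\hIsing{k}{\lambda}$ inside $\Iferro$'' --- is impossible. If it succeeded it would give $\Hferro=\Iferro$, but the paper shows $\hIsing{4}{1/2}\in\Hferro\setminus\ofclone{\calM}$ (Lemma~\ref{lem:MHferro-incomp}, via the match-circuit inequality of Lemma~\ref{lem:matchineq-a}) while $\Iferro\subseteq\ofclone{\calM}$ (Theorem~\ref{thm:Iferro-MHferro}); so $\hIsing{4}{1/2}\notin\Iferro$. Concretely, your $k$-leaf star at $k=4$ yields $[1+\mu^4,\ \mu+\mu^3,\ 2\mu^2,\ \mu+\mu^3,\ 1+\mu^4]$, which is a hypergraph Ising function only for $\mu\in\{0,1\}$. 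Moreover, even realising the $k\le 3$ generators would not suffice, since $[\Hferro]_3$ contains ternary functions built from higher-arity generators. The paper instead proves the stronger containment $[\SDP\cap\LSM]_3\subseteq[\Iferro]_3$: writing a permissive ternary self-dual $f$ as $\lambda=f(\bzero)$, $a=f(0,0,1)$, $b=f(0,1,0)$, $c=f(1,0,0)$, log-supermodularity gives $\lambda c\ge ab$, $\lambda b\ge ac$, $\lambda a\ge bc$, whence (after normalising $\lambda=1$) $f$ factors as $\hIsing{2}{\sqrt{bc/a}}(x,y)\,\hIsing{2}{\sqrt{ac/b}}(x,z)\,\hIsing{2}{\sqrt{ab/c}}(y,z)$ with all parameters in $[0,1]$.

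Second, your plan to separate $[\SDP]_3$ from $[\Hferro]_3$ by a \emph{symmetric} self-dual witness $[a,b,b,a]$ cannot succeed. Such a function lies in $\SDP$ precisely when $a\ge b\ge 0$ (its weight-$2$ Fourier coefficients are $(a-b)/4$), and then, after scaling, it \emph{is} the generator $\hIsing{3}{b/a}\in\Hferro$. There is no symmetric function in $[\SDP]_3\setminus[\Hferro]_3$; the witness must be asymmetric. The paper uses $f(0,0,0)=6$, $f(0,0,1)=4$, $f(0,1,0)=f(1,0,0)=5$ (extended self-dually), which is in $\calP$ but fails log-supermodularity on the $2$-pinning $g(x,y)=f(x,y,1)$ since $4\cdot 6<5\cdot 5$, and hence lies outside $\LSM\supseteq\Hferro$.
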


Theorem~\ref{thm:ternary} is proved in Section~\ref{sec:ternary} and illustrated
in Figure~\ref{fig:lattice3}, where solid lines indicate \emph{strict} set
inclusions. The (non-strict) inclusions indicated in Figure~\ref{fig:lattice3}
follow trivially from Theorem~\ref{thm:main}. The point of
Theorem~\ref{thm:ternary}, in addition to the two collapses, is that all
inclusions are strict. We note that, however, unlike in
Figure~\ref{fig:lattice}, Figure~\ref{fig:lattice3} does not indicate any
lattice order of $\calS_3$ with respect to $\wedge$ and $\vee$.
  
\section{Finite generation}
\label{sec:finite}

When we defined $\omega$-clones in Section~\ref{sec:fclones}, we 
defined the limit of a set $\calF \subseteq \calB$ 
to be a function~$f$ which is approximated by a sequence of functions~$f_\epsilon$
that are all in the functional clone of some finite subset $S_f$ of $\calF$.
The finiteness restriction was present in the definitions of~\cite{LSM}
and it is retained in this paper because it strengthens our results.
Nevertheless, it causes slight technical problems, and to avoid these problems,
we start the paper by defining a finite subset ${\calB}'_0$ of $\calB_0$
and showing that $\calB_0 \subseteq \ofclone{{\calB}'_0}$.
In the following definition, ``$e$'' is the base of the natural logarithm.
The actual definition of ${\calB}'_0$ is not very constrained, in the sense that
we could have made other choices, but it is important to include an irrational number,
and to include a number that it is smaller than~$1$ and one that is larger than~$1$.
We  use a set of size four to simplify the argument.

\begin{definition}\label{def:B0hat}
${\cal B}'_0 = \{ 1/e,1/2, 2,e\}$
\end{definition}

\begin{lemma}
\label{lem:finitegen}
$\calB_0 \subseteq  \ofclone{{\calB}'_0}$.
\end{lemma}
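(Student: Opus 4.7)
The plan is to identify arity-$0$ functions with the non-negative real numbers they return, and then to exploit multiplicative closure together with a density argument. Since $\fclone{{\calB}'_0}$ is closed under product (in particular in the arity-$0$ case), multiplying elements of ${\calB}'_0$ in all possible ways shows that every real of the form $2^{a} e^{b} (1/2)^{c} (1/e)^{d} = 2^{a-c} e^{b-d}$ with $a,b,c,d \in \{0,1,2,\ldots\}$ lies in $\fclone{{\calB}'_0}$. Equivalently, $D := \{2^m e^n : m, n \in \mathbb{Z}\} \subseteq \fclone{{\calB}'_0}$ (the case $m=n=0$, giving $1$, being realised e.g.\ by $2\cdot(1/2)$).

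The key observation is that $D$ is dense in $\PosReals$: taking the natural logarithm, this reduces to the density of $\{m \ln 2 + n : m, n \in \mathbb{Z}\}$ in $\allReals$, which follows from the irrationality of $\ln 2$ by a standard pigeonhole (Kronecker) argument. Given any $r \in \PosReals$ and any $\epsilon > 0$, density yields some $r_\epsilon \in D \subseteq \fclone{{\calB}'_0}$ with $|r - r_\epsilon| < \epsilon$; taking the finite witness set required by the definition of limit to be ${\calB}'_0$ itself, this shows $r$ is a limit of $\fclone{{\calB}'_0}$, and hence $r \in \ofclone{{\calB}'_0}$. For $r = 0$, the sequence $(1/2)^k \in \fclone{\{1/2\}}$ tends to $0$, so $0 \in \ofclone{{\calB}'_0}$ as well. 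Combined, this yields $\calB_0 \subseteq \ofclone{{\calB}'_0}$.

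The main (and essentially only non-trivial) step is the density of $D$; beyond that everything is immediate from closure under product and under limits. The specific choice ${\calB}'_0 = \{1/e, 1/2, 2, e\}$ is precisely what supports the argument: including both the irrational $e$ and the rational $2$ is what forces $\{m \ln 2 + n\}$ to be a dense subgroup of $\allReals$, while including the reciprocals $1/2$ and $1/e$ allows the multiplicative semigroup generated by ${\calB}'_0$ to realise negative integer exponents, so that $D$ is an actual subgroup of $\PosReals$ rather than a sparse subset.
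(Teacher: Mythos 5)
Your proof is correct and follows essentially the same route as the paper's: realise $\{2^m e^n : m,n\in\mathbb{Z}\}$ inside $\fclone{\calB'_0}$ by multiplicative closure, then use the irrationality of $\ln 2$ to show $\{m\ln 2 + n\}$ is dense in $\allReals$ (the paper derives this from Dirichlet's approximation theorem, which is the same pigeonhole fact you invoke). Your explicit treatment of the value $0$ via $(1/2)^k$ is a small point the paper glosses over, but otherwise the arguments coincide.
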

\begin{proof}
We will show that every nullary function in~$\calB_0$ is a limit of the closure of~$\calB'_0$ under product.
Let $\alpha = \ln 2$.
For any   integers~$a$ and $b$, the quantity
$e^{a + b \alpha}$ 
(viewed as a nullary function) is in $\fclone{{\calB}'_0}$.
So it suffices to show that, for every  real number~$z$
(where $e^z$ is viewed as a nullary function in $\calB_0$) and any $\epsilon>0$,
there are   integers~$a$ and~$b$ such that
$| e^{a + b \alpha} - e^z| < \epsilon$.
Given the universal quantification on $\epsilon$, we can work instead with additive approximation ---
it suffices to show that for every real number~$z$ and every $\delta>0$,
there are  integers~$a$ and~$b$ such that
$|a+b \alpha - z| < \delta$. (To see this, suppose that we are given some $z$ and $\epsilon$. 
Let $\epsilon' = \min(\epsilon, 2 e^z)$
and let  
$\delta = \epsilon' e^{-z}/2$. Then since $\delta \leq 1$, we have
$e^{\delta}-1 \leq 2 \delta$
so $e^{z+\delta} - e^z = e^z(e^\delta - 1) \leq 2 \delta e^z=\epsilon'\leq \epsilon$.
Similarly, $e^z - e^{z-\delta} = e^z (1-e^{-\delta}) \leq 2 \delta e^z \leq \epsilon$.)

Now consider a real number $\delta>0$. 
By Dirichlet's approximation theorem,   there are integers~$p$ and~$q$ 
such that $1 \leq q$ and $|p - q \alpha  |<\delta$.
Since  $\alpha$   is positive and $q\geq 1$,  it is clear 
that $p$ is also positive if    $\delta < \alpha$. 
Also,  since $\alpha$ is irrational, $p-q\alpha$ is non-zero.

Consider any real number~$z$. 
Let $n$ be the integer such that 
$n \times |p - q \alpha| \leq z < (n+1) \times |p-q \alpha|$.
Then $|z - n\times |p - q \alpha| | < |p - q \alpha| < \delta$. 
If $p>q \alpha$ then
$a=p n$ and $b=-q n$ suffices.
Otherwise, 
$a=-p n$ and $b = qn$ suffices.\end{proof}

The proof of Lemma~\ref{lem:finitegen} 
is useful for one more technical finite generation result, so we 
state that here. For this, we need to define a class of parity functions.
\begin{definition}
\label{defn:parity}\label{def:parity}
    For each $k\in\nats$ and $\lambda\in\NonNegReals$, we define the $k$-ary
    function
    \begin{equation*}
        \parityk(\bx) = \begin{cases}
                        \ 1       &\text{if $\wt{\bx}$ is even} \\
                        \ \lambda &\text{otherwise.}
                        \end{cases}
    \end{equation*}
\end{definition}

By analogy to $\calB_0$, we also define a finite version.

\begin{definition}
\label{def:parityfinite}
$\paritybasis = \{\parityk[1/e],\parityk[1/2],\parityk[2],\parityk[e]\}$.
\end{definition}

\begin{lemma}
\label{lem:finitegenparity}
For any even positive integer~$k$
and any $\lambda \in \NonNegReals$, $\parityk \in   \ofclone{\paritybasis}$.
\end{lemma}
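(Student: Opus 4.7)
The plan is to exploit the fact that $\parityk$-functions of a fixed arity $k$ multiply very nicely: for any $\lambda,\mu\geq 0$, the pointwise product $\parityk[\lambda]\cdot\parityk[\mu]$ equals $\parityk[\lambda\mu]$, since both functions take value $1$ on even-weight inputs and the product of their odd-weight values is $\lambda\mu$. So the whole problem reduces to showing that every positive real $\lambda$ can be approximated arbitrarily well by numbers of the form $2^{b}e^{c}$ with $b,c\in\mathbb{Z}$, which is essentially the content of Lemma~\ref{lem:finitegen}.

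Concretely, I would proceed as follows. First, observe that closing $\paritybasis = \{\parityk[1/e],\parityk[1/2],\parityk[2],\parityk[e]\}$ under product produces exactly the functions $\parityk[\mu]$ with $\mu = e^{a+b\alpha}$ for $a,b\in\mathbb{Z}$, where $\alpha=\ln 2$; the four basis elements give us both positive and negative powers of $2$ and of $e$. In particular, each such $\parityk[\mu]$ lies in $\fclone{\paritybasis}$. Next, note that $\|\parityk[\lambda]-\parityk[\mu]\|_\infty = |\lambda-\mu|$, so any $L_\infty$-approximation of $\parityk[\lambda]$ by functions of this same shape reduces to a numerical approximation of $\lambda$ by $\mu$.

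For $\lambda>0$, the proof of Lemma~\ref{lem:finitegen} already does the work: Dirichlet's approximation theorem gives integers $a,b$ with $|a+b\alpha-\ln\lambda|$ arbitrarily small, and the short error-analysis argument there then converts this into an approximation $|\mu-\lambda|<\epsilon$ for $\mu=e^{a+b\alpha}$. For $\lambda=0$, the function $\parityk[0]$ is approximated directly by $\parityk[\mu]$ for any sufficiently small $\mu>0$ produced by the above construction, since $\|\parityk[0]-\parityk[\mu]\|_\infty=\mu$. In either case, for every $\epsilon>0$ there is a function $\parityk[\mu]\in\fclone{\paritybasis}$ within $L_\infty$-distance $\epsilon$ of $\parityk[\lambda]$, so $\parityk[\lambda]$ is a limit of $\fclone{\paritybasis}$ and therefore lies in $\ofclone{\paritybasis}$.

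There is no real obstacle: the multiplicative behavior of parity functions is essentially by inspection, and the Diophantine density statement has already been established in Lemma~\ref{lem:finitegen} and can simply be invoked. The only thing to check carefully is the (trivial) expression $\|\parityk[\lambda]-\parityk[\mu]\|_\infty=|\lambda-\mu|$ and the boundary case $\lambda=0$. Note that the hypothesis that $k$ is even is not actually used in this argument; it is presumably relevant only to how this lemma is later combined with others.
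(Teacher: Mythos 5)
Your proposal is correct and follows essentially the same route as the paper: products of copies of the four basis functions yield $\parityk[\mu]$ for $\mu=e^{a+b\ln 2}$, and the Dirichlet-approximation argument from Lemma~\ref{lem:finitegen} then gives $\parityk[\lambda]$ as a limit. Your explicit treatment of $\lambda=0$ and your observation that the evenness of $k$ is not needed here are both accurate refinements of the paper's terser argument.
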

 
\begin{proof}
Consider the $k$-ary  function consisting of 
the product of
$a$ copies of $\parityk[e]$ and $b$ copies of $\parityk[2]$.
If the input has even parity, then the output is~$1$.
Otherwise, the output is $e^{a+b \alpha}$.
Combinations of other functions in $\paritybasis$ are similar.
So the proof is essentially the same as the proof of Lemma~\ref{lem:finitegen}.
\end{proof}

\section{The Ising model}
\label{sec:Ising}

Recall the definition of $\hIsing{2}{\lambda}$ from Definition~\ref{def:HyperIsing}, the definition of
$\FerroIsing$ and $\AntiFerroIsing$ from Definition~\ref{defn:clonelist} and the definition of $\Ianti$ from
Definition~\ref{def:Addunaries}.
The following lemma is well known. We include it (with its standard proof) for completeness.

\begin{lemma}\label{lem:I-ferroI}
    $\FerroIsing \subseteq \fclone{\AntiFerroIsing,\calB_0}$.
\end{lemma}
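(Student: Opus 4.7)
The plan is to exhibit, for each $\lambda \in [0,1]$, an explicit pps-formula over $\AntiFerroIsing \cup \calB_0$ that computes $\hIsing{2}{\lambda}$. The classical construction is a ``series combination'' of two antiferromagnetic edges through an intermediate summed-out vertex, followed by a nullary rescaling to normalise the equal-input values to $1$.

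Concretely, for each $\mu \geq 1$, define the binary function
\[
g_\mu(x,y) \;=\; \sum_{z\in\{0,1\}} \hIsing{2}{\mu}(x,z)\,\hIsing{2}{\mu}(z,y).
\]
A direct case check gives $g_\mu(x,y) = 1+\mu^2$ when $x = y$ and $g_\mu(x,y) = 2\mu$ when $x\neq y$. Since $\mu \geq 1$ implies $(1-\mu)^2 \geq 0$, i.e., $2\mu \leq 1+\mu^2$, the function $g_\mu$ equals $(1+\mu^2)\,\hIsing{2}{\lambda}$ for $\lambda = 2\mu/(1+\mu^2) \in (0,1]$. The formula for $g_\mu$ is visibly a pps-formula over $\AntiFerroIsing$, and multiplying pointwise by the nullary function $1/(1+\mu^2) \in \calB_0$ extracts $\hIsing{2}{\lambda}$ itself.

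To cover all the required values of $\lambda$, observe that $\mu \mapsto 2\mu/(1+\mu^2)$ is a continuous bijection from $[1,\infty)$ onto $(0,1]$; given $\lambda \in (0,1]$, solving the quadratic yields $\mu = (1 + \sqrt{1-\lambda^2})/\lambda \geq 1$, which lies in $\AntiFerroIsing$. The boundary case $\lambda = 0$ is handled separately by noting that $\hIsing{2}{0} = \EQ$, which is present in every functional clone by definition. Together these three cases give $\FerroIsing \subseteq \fclone{\AntiFerroIsing \cup \calB_0}$.

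There is no real obstacle here: the argument is purely an algebraic identity, and neither limits nor the extra irrational scalars in $\calB'_0$ are needed, so the proof sits inside $\fclone{\cdot}$ rather than $\ofclone{\cdot}$. The only thing worth being explicit about is the range calculation for $2\mu/(1+\mu^2)$, which justifies that a single series construction suffices to reach every ferromagnetic parameter in $(0,1]$.
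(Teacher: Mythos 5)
Your proof is correct and is essentially the paper's own argument: the same two-edge series construction with a summed-out middle vertex and a nullary rescaling by $1/(1+\mu^2)$, where your $\mu = (1+\sqrt{1-\lambda^2})/\lambda$ is just the rationalised form of the paper's $\lambda' = \lambda/(1-\sqrt{1-\lambda^2})$.
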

\begin{proof}
    We must show that $\hIsing{2}{\lambda}\in\fclone{\AntiFerroIsing,\calB_0}$
    for all $\lambda\in[0,1]$.  For $\lambda=0$, $\hIsing{2}{\lambda}
    = \EQ$, which is in every functional clone by definition.  For
    $\lambda=1$, $\hIsing{2}{\lambda} \in \AntiFerroIsing$ by
    definition.  Any other function in $\FerroIsing$ is of the form
    $\hIsing{2}{\lambda}$ for some $\lambda\in (0,1)$.  Let
    $\lambda' = \lambda/(1 - \sqrt{1-\lambda^2})$.  Note that
    $\lambda'$ is decreasing as $\lambda$ increases, and that
    $\lambda'>1$ so $\hIsing{2}{\lambda'}\in\AntiFerroIsing$.  Then
    note that $\hIsing{2}{\lambda}(x,y) = \tfrac1{1+\lambda'^2}\sum_w
    \hIsing{2}{\lambda'}(x,w) \hIsing{2}{\lambda'}(w,y)$
    since the weight is~$1$ if $x=y$ and 
    $(2 \lambda')/(1+ \lambda'^2) = \lambda$, otherwise.
\end{proof}

The construction in the proof of the following lemma is based on one
from the proof of \cite[Lemma 3.3]{FerroIsing}.  There are more
efficient constructions, for example \cite[Lemma 3.26]{planartutte}
but we don't need them here.

 \begin{lemma} \label{lem:stretch} Consider $\hIsing{2}{\lambda}$ and
$\hIsing{2}{\lambda'}$ in $\AntiFerroIsing$ with
$\lambda>1$.
Then $\hIsing{2}{\lambda'}\in \ofclone{\{\hIsing{2}{\lambda}\} \cup  {\calB'_0}}$.
\end{lemma}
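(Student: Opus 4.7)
The plan is to realise $\hIsing{2}{\lambda'}$ as a limit (in the sense of Section~\ref{sec:fclones}) of functions drawn from $\fclone{\{\hIsing{2}{\lambda}\} \cup {\calB'_0}}$. The two building blocks are \emph{series composition} (summing out a shared intermediate variable) and \emph{parallel composition} (pointwise product on the same pair of arguments); a final approximate scaling by a nullary constant from $\fclone{{\calB'_0}}$, justified by Lemma~\ref{lem:finitegen}, converts the result into an approximation of $\hIsing{2}{\lambda'}$.

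For each $n \geq 1$, let $f_n(x,y) = \sum_{w_1,\ldots,w_{n-1}\in\{0,1\}} \hIsing{2}{\lambda}(x,w_1)\hIsing{2}{\lambda}(w_1,w_2)\cdots\hIsing{2}{\lambda}(w_{n-1},y)$; by construction $f_n \in \fclone{\hIsing{2}{\lambda}}$. Viewing $\hIsing{2}{\lambda}$ as the matrix $M = \left(\begin{smallmatrix} 1 & \lambda \\ \lambda & 1 \end{smallmatrix}\right)$, we have $f_n \leftrightarrow M^n$, and diagonalising $M$ gives $(M^n)_{00} = \tfrac12[(1+\lambda)^n + (1-\lambda)^n]$ and $(M^n)_{01} = \tfrac12[(1+\lambda)^n - (1-\lambda)^n]$. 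Because $\lambda>1$, for \emph{odd} $n$ both entries are strictly positive, and setting $t = (\lambda-1)/(\lambda+1)\in(0,1)$ we get $f_n = c_n\cdot\hIsing{2}{\lambda_n}$ with $c_n = (M^n)_{00}$ and effective parameter $\lambda_n = (1+t^n)/(1-t^n) > 1$. As $n$ runs through the odd integers, $\lambda_n \searrow 1$.

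Parallel composition amplifies this: the pointwise product $f_n^{m}$ again lies in $\fclone{\hIsing{2}{\lambda}}$ and equals $c_n^{m}\cdot\hIsing{2}{\lambda_n^{m}}$. Given the target $\lambda' \geq 1$, for each sufficiently large odd $n$ set $m_n = \lceil(\log\lambda')/(\log\lambda_n)\rceil$; then $\lambda' \leq \lambda_n^{m_n} < \lambda'\lambda_n$, so $\lambda_n^{m_n}\to\lambda'$ and hence $\hIsing{2}{\lambda_n^{m_n}}\to\hIsing{2}{\lambda'}$ uniformly on $\{0,1\}^2$. (The degenerate case $\lambda'=1$ is handled by taking $m_n=0$, yielding the constant-$1$ function, which is in every functional clone.) By Lemma~\ref{lem:finitegen}, the positive real $c_n^{-m_n}$ lies in $\ofclone{{\calB'_0}}$, so for any $\delta>0$ there exists $\kappa\in\fclone{{\calB'_0}}$ with $|\kappa - c_n^{-m_n}| < \delta$; then $\kappa\cdot f_n^{m_n} \in \fclone{\{\hIsing{2}{\lambda}\}\cup{\calB'_0}}$ lies within $\delta\, c_n^{m_n}\lambda_n^{m_n}$ of $\hIsing{2}{\lambda_n^{m_n}}$ in $\|\cdot\|_\infty$.

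The main obstacle is coordinating two approximation scales so that a \emph{single} element of $\fclone{\{\hIsing{2}{\lambda}\}\cup{\calB'_0}}$ (not a nested limit) witnesses each $\epsilon$. Given $\epsilon>0$, one first picks an odd $n$ (with its $m_n$) so that $\|\hIsing{2}{\lambda_n^{m_n}} - \hIsing{2}{\lambda'}\|_\infty < \epsilon/2$, and then chooses $\delta$ small enough that $\delta\, c_n^{m_n}\lambda_n^{m_n} < \epsilon/2$ before invoking Lemma~\ref{lem:finitegen} to obtain the nullary $\kappa\in\fclone{{\calB'_0}}$. The resulting $f_\epsilon = \kappa\cdot f_n^{m_n}$ lies in $\fclone{\{\hIsing{2}{\lambda}\}\cup{\calB'_0}}$ and approximates $\hIsing{2}{\lambda'}$ within $\epsilon$, with the finite generating set $\{\hIsing{2}{\lambda}\}\cup{\calB'_0}$ used throughout, which is exactly what the definition of $\omega$-clone demands to conclude $\hIsing{2}{\lambda'} \in \ofclone{\{\hIsing{2}{\lambda}\}\cup{\calB'_0}}$.
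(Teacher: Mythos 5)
Your proof is correct and follows essentially the same route as the paper's: a series composition (path) of odd length to produce an antiferromagnetic parameter arbitrarily close to $1$, followed by parallel composition (thickening) to overshoot $\lambda'$ by a controllably small amount, and a nullary scaling constant supplied via Lemma~\ref{lem:finitegen}. The only differences are presentational — you work with powers of the unnormalised matrix and coordinate the two approximation errors explicitly at the end, whereas the paper normalises by $y^{1/2}$ up front and chains the limits via Lemma~\ref{lem:transitive}.
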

\begin{proof} By the definition of $\AntiFerroIsing$, $\lambda' \geq 1$.
If $\lambda'=1$ then $\hIsing{2}{\lambda'}$ is the arity-$2$ constant function (with output~$1$).
This can be obtained from the constant~$1$ by introducing two fictitious arguments, so it is in
$\ofclone{\{\hIsing{2}{\lambda}\} \cup \calB_0}$. 

So suppose $\lambda'>1$.
Let $y= 1/\lambda$ and let $f$ be the symmetric arity-$2$ function  
$[y^{1/2},y^{-1/2},y^{1/2}]$, using the symmetric function notation   from Section~\ref{sec:study}. 
For every positive integer~$t$, let $F_{1,t}(x_1,x_2) = f(x_1,x_2)^t$.
For every integer $\ell>1$, 
let  $X_\ell$ be the tuple of variables in $\{x_{i,j} \mid  1\leq i \leq t, 1 \leq j \leq \ell-1 \}$
and let
$$ F_{\ell,t}(x_1,x_2) = \sum_{X_\ell} \prod_{i=1}^t 
\left( f(x_1,x_{i,1}) 
\left(\prod_{j=1}^{\ell-2} f(x_{i,j},x_{i,j+1})\right) f(x_{i,\ell-1},x_2)
 \right).$$
 Note that the quantity~$y^{1/2} $ can be viewed as a nullary function,
 so by Lemma~\ref{lem:finitegen}, 
 $y^{1/2}$ is a limit of   $  \fclone{ \calB'_0}$.
 Since 
 $f = {y}^{1/2} \hIsing{2}{\lambda}$,
 Lemma~\ref{lem:transitive} shows that
 $f$ is a limit of      $\fclone{ \{\hIsing{2}{\lambda} \} \cup  \calB'_0}$.
 Finally, since $F_{\ell,t}$ is 
 formed by summing products of functions in $\calA({\{f\}})$,
 Lemma~\ref{lem:transitive} shows that
 $F_{\ell,t}$
 is also a limit of 
 $\fclone{ \{\hIsing{2}{\lambda} \} \cup  \calB'_0}$.
  
  We wish to show that   $\hIsing{2}{\lambda'}$
  is a limit of 
   $\fclone{ \{\hIsing{2}{\lambda} \} \cup  \calB'_0}$. 
   To do this, we will show that, for every $0<\epsilon<1$,
there are positive integers $t$ and~$\ell$ and a  non-negative constant~$c$ (viewed as a 
limit of  $\fclone{ {\calB'_0}}$)
such that
$$\max_{(x_1,x_2)\in\{0,1\}^2}|\hIsing{2}{\lambda'}(x_1,x_2) - c F_{{\ell,t}}(x_1,x_2)| <\epsilon\,.$$

To see this, consider the following mutual recurrences.
\begin{align*}
m_\ell &= \begin{cases}
\ y^{1/2}, & \text{if }\ell=1, \\
\ y^{1/2} m_{\ell-1} + y^{-1/2} b_{\ell-1}, & \text{if }\ell>1.
\end{cases}\\
b_\ell &= \begin{cases}
\ y^{-1/2},& \text{if }\ell=1, \\
\ y^{-1/2} m_{\ell-1} + y^{1/2} b_{\ell-1}, & \text{if }\ell>1.
\end{cases}
\end{align*}
First, consider $t=1$. 
Renaming the variables $\{x_{1,1},\ldots,x_{1,\ell-1}\}$ to $\{x_3,\ldots,x_{\ell+1}\}$,
the definition of $F_{\ell,t}$ (for $\ell>1$) can be written as
 $$ F_{\ell,1}(x_1,x_2) = \sum_{(x_3,\ldots,x_{\ell+1})}  
  f(x_1,x_{3}) 
\left(\prod_{j=3}^{\ell} f(x_{j},x_{j+1})\right) f(x_{\ell+1},x_2).$$  From the recurrences, it is easy to see that
$F_{{\ell,1}}(0,0) = F_{{\ell,1}}(1,1) = m_\ell$ (``$m$'' stands for ``monochromatic'')
and $F_{{\ell,1}}(0,1) = F_{{\ell,1}}(1,0) = b_\ell$ (``$b$'' stands for ``bichromatic'').
Thus, for general~$t$,
$F_{{\ell,t}}(0,0) = F_{{\ell,t}}(1,1) = m_\ell^t$  
and $F_{{\ell,t}}(0,1) = F_{{\ell,t}}(1,0) = b_\ell^t$.

Now the solution to the recurrences is 
\begin{align*}
m_\ell &= y^{-\ell/2} ((y+1)^\ell + (y-1)^\ell)/2\\
b_\ell &= y^{-\ell/2} ((y+1)^\ell - (y-1)^\ell)/2\,.
\end{align*}
Thus,  since $0<y<1$, for odd~$\ell$ we have
$$\frac{b_\ell}{m_\ell} = 1+\frac{2}{{\left(\frac{1+y}{1-y}\right)}^\ell-1}\,.$$

So finally, given $0<\epsilon<1$, let $\ell$ be the smallest  odd integer so that  
$${ \left(\frac{1+y}{1-y}\right)}^\ell > 1+\frac{2 \lambda'}{\epsilon}\,.$$
Let $t$ be the  smallest integer so that
$$\left(1+\frac{2}{{\left(\frac{1+y}{1-y}\right)}^\ell-1}\right)^t > \lambda'\,.$$
Let $c=m_\ell^{-t}$.
Then $c  F_{{\ell,t}}(0,0) = c F_{{\ell,t}}(1,1) =  1$.
Also  $ c F_{{\ell,t}}(0,1) =  c F_{{\ell,t}}(1,0) = (b_\ell/m_\ell)^t$ so
\begin{align*}
\lambda' < c F_{{\ell,t}}(0,1) =  c F_{{\ell,t}}(1,0) 
&= {\left(
1+\frac{2}{{\left(\frac{1+y}{1-y}\right)}^\ell-1}
\right)}^t \\
&= 
 {\left(
1+\frac{2}{{\left(\frac{1+y}{1-y}\right)}^\ell-1}
\right)}^{t-1}
 {\left(
1+\frac{2}{{\left(\frac{1+y}{1-y}\right)}^\ell-1}
\right)} \\
&\leq
  \lambda'
 {\left(
1+\frac{\epsilon}{\lambda'}
\right)} < \lambda' + \epsilon\,,
\end{align*}
as required.
\end{proof}

\begin{corollary}
\label{cor:Ianti-fin}
    For any $\lambda > 1$, $\ofclone{\hIsing{2}{\lambda} \cup \calB'_0}
    = \Ianti$.
\end{corollary}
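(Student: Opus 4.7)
The plan is to prove the two inclusions separately. For the easy direction, $\ofclone{\hIsing{2}{\lambda} \cup \calB'_0} \subseteq \Ianti$ follows immediately from the definitions: $\hIsing{2}{\lambda} \in \AntiFerroIsing$ since $\lambda > 1$, and $\calB'_0 \subseteq \calB_0$, so $\hIsing{2}{\lambda} \cup \calB'_0 \subseteq \AntiFerroIsing \cup \calB_0$, and the $\omega$-clone operator is monotone with respect to set inclusion.

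For the harder direction, $\Ianti \subseteq \ofclone{\hIsing{2}{\lambda} \cup \calB'_0}$, it suffices (by the definition of $\Ianti$ and the monotonicity of $\ofclone{\cdot}$) to show that every generator of $\Ianti$ lies in $\ofclone{\hIsing{2}{\lambda} \cup \calB'_0}$. That is, I need to show
\begin{enumerate}[(a)]
\item every nullary function in $\calB_0$ lies in $\ofclone{\hIsing{2}{\lambda} \cup \calB'_0}$; and
\item every $\hIsing{2}{\lambda'} \in \AntiFerroIsing$ (i.e., every $\lambda' \geq 1$) lies in $\ofclone{\hIsing{2}{\lambda} \cup \calB'_0}$.
\end{enumerate}
Part (a) is immediate from Lemma~\ref{lem:finitegen}, which gives $\calB_0 \subseteq \ofclone{\calB'_0} \subseteq \ofclone{\hIsing{2}{\lambda} \cup \calB'_0}$. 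Part (b) is exactly the content of Lemma~\ref{lem:stretch}: given $\lambda > 1$ and any $\lambda' \geq 1$, that lemma produces $\hIsing{2}{\lambda'}$ as a limit of functions in $\fclone{\{\hIsing{2}{\lambda}\} \cup \calB'_0}$, hence as an element of $\ofclone{\hIsing{2}{\lambda} \cup \calB'_0}$.

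Combining (a) and (b) with Lemma~\ref{lem:transitive} (transitivity of $\omega$-generation) gives $\AntiFerroIsing \cup \calB_0 \subseteq \ofclone{\hIsing{2}{\lambda} \cup \calB'_0}$, and therefore $\Ianti = \ofclone{\AntiFerroIsing \cup \calB_0} \subseteq \ofclone{\hIsing{2}{\lambda} \cup \calB'_0}$, completing the proof. The main technical work has already been isolated in Lemma~\ref{lem:stretch} (which required a careful series construction together with an approximation argument), so at the level of this corollary there is no remaining obstacle---the proof is a short assembly of previously established results.
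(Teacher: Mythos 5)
Your proof is correct and follows essentially the same route as the paper's: the easy inclusion by monotonicity of $\ofclone{\cdot}$, and the hard inclusion by reducing to the generators of $\Ianti$ and invoking Lemma~\ref{lem:stretch}. In fact you are slightly more careful than the paper, which cites only Lemma~\ref{lem:stretch} and leaves implicit the need for $\calB_0\subseteq\ofclone{\calB'_0}$ (your step~(a), via Lemma~\ref{lem:finitegen}).
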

\begin{proof}
Recall from definition~\ref{def:Addunaries} 
that
$\Ianti  = \ofclone{\AntiFerroIsing \cup \calB_0}$
and from Definition~\ref{defn:clonelist}
that for any $\lambda >1$, $\hIsing{2}{\lambda}\in \AntiFerroIsing$. 
This shows 
$\ofclone{\hIsing{2}{\lambda} \cup \calB'_0}    \subseteq \Ianti$.
To see that 
$\Ianti \subseteq \ofclone{\hIsing{2}{\lambda} \cup \calB'_0}$
we only need to show that for any $\hIsing{2}{\lambda} \in \Ianti$,
$\hIsing{2}{\lambda} \in \ofclone{\hIsing{2}{\lambda} \cup \calB'_0} $, and this is Lemma~\ref{lem:stretch}.
\end{proof}
 
\begin{lemma} \label{lem:stretchferro}
Consider $\hIsing{2}{\lambda}$ and
$\hIsing{2}{\lambda'}$ in $\FerroIsing$ with
$0<\lambda<1$.
Then $\hIsing{2}{\lambda'}\in \ofclone{\{\hIsing{2}{\lambda}\} \cup \calB'_0}$.
\end{lemma}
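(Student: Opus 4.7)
The plan is to closely mimic the construction of Lemma~\ref{lem:stretch}, which handled the antiferromagnetic case. Let $y = 1/\lambda$; since $0<\lambda<1$, we now have $y > 1$ (whereas in Lemma~\ref{lem:stretch}, $y$ was less than $1$). Define $f = [y^{1/2}, y^{-1/2}, y^{1/2}]$ exactly as before; one checks, just as in that proof, that $f = y^{1/2}\, \hIsing{2}{\lambda}$, so $f$ is a limit of $\fclone{\{\hIsing{2}{\lambda}\}\cup \calB'_0}$ by Lemma~\ref{lem:finitegen} and Lemma~\ref{lem:transitive}. Form $F_{\ell,t}$ identically, by joining $t$ length-$\ell$ paths of $f$-edges in parallel between the two external vertices and summing over all internal vertices. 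The same mutual recurrences yield the same closed forms for $m_\ell = F_{\ell,1}(0,0)$ and $b_\ell = F_{\ell,1}(0,1)$, and $F_{\ell,t}(0,0) = F_{\ell,t}(1,1) = m_\ell^t$, $F_{\ell,t}(0,1) = F_{\ell,t}(1,0) = b_\ell^t$.

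The analysis diverges from the earlier one in two places. First, because $y>1$, we have $(y-1)/(y+1) \in (0,1)$, so both $m_\ell$ and $b_\ell$ are positive for every $\ell \geq 1$ and no parity restriction on $\ell$ is required. Second, writing $r = (y-1)/(y+1)$, the ratio $b_\ell/m_\ell = (1 - r^\ell)/(1 + r^\ell)$ now lies strictly between $0$ and $1$, is monotone increasing in $\ell$, equals $\lambda$ at $\ell = 1$, and tends to $1$ as $\ell \to \infty$. Consequently $(b_\ell/m_\ell)^t$, as $\ell$ and $t$ vary, fills out a dense subset of $(0,1)$.

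The one non-routine step is the density/approximation. Given a target $\lambda' \in (0,1)$ and tolerance $\epsilon > 0$, I would let $\delta_\ell = -\ln(b_\ell/m_\ell) > 0$, noting that $\delta_\ell \to 0$; pick $\ell$ large enough that $\delta_\ell$ is much smaller than $\epsilon$; and then choose $t$ of order $(-\ln \lambda')/\delta_\ell$ so that $|(b_\ell/m_\ell)^t - \lambda'| < \epsilon$. Finally, set $c = m_\ell^{-t}$, which (as a positive nullary constant) is a limit of $\fclone{\calB'_0}$ by Lemma~\ref{lem:finitegen}; then $c\, F_{\ell,t}$ equals $\hIsing{2}{\lambda'}$ exactly on the diagonal entries and approximates it within $\epsilon$ on the off-diagonal ones. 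This shows $\hIsing{2}{\lambda'}$ is a limit of $\fclone{\{\hIsing{2}{\lambda}\}\cup \calB'_0}$, and hence lies in $\ofclone{\{\hIsing{2}{\lambda}\}\cup \calB'_0}$.

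The only real obstacle is the density argument, which is the mirror image of the one at the end of Lemma~\ref{lem:stretch}: there, $b_\ell/m_\ell$ is driven slightly above $1$ and the parallel count $t$ amplifies the ratio to reach a target in $(1,\infty)$; here, $b_\ell/m_\ell$ is driven slightly below $1$ and $t$ is used to push $(b_\ell/m_\ell)^t$ down to any target in $(0,1)$. Otherwise the proof is essentially the same construction with the same normalisation, and the parity constraint on $\ell$ simply disappears.
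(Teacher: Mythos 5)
Your proposal is correct and follows essentially the same route as the paper's proof: the same gadget $F_{\ell,t}$, the same recurrences for $m_\ell$ and $b_\ell$, and the mirror-image amplification argument in which $b_\ell/m_\ell$ is now slightly below $1$ and $t$ pushes $(b_\ell/m_\ell)^t$ down to $\lambda'$; the only differences are cosmetic (you normalise by $c=m_\ell^{-t}$ so the diagonal entries are exact, whereas the paper takes $c=\lambda' b_\ell^{-t}$ so the off-diagonal entries are exact). The one thing you omit is the trivial boundary cases $\lambda'\in\{0,1\}$ (equality and the constant function), which the paper dispatches in a sentence.
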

\begin{proof} 
As in the proof of Lemma~\ref{lem:stretch}, the proof is straightforward if
$\lambda'\in\{0,1\}$,
so assume $0<\lambda'<1$.  Define $y$,
$f$   and~$F_{{\ell,t}}$ as in the proof of Lemma~\ref{lem:stretch}.
Note that $y>1$, so
$$\frac{m_\ell}{b_\ell} = 1 +\frac{2}
{{ \left( \frac{y+1}{y-1} \right) }^\ell-1}\,.$$
Given $0<\epsilon<1$, let $\ell$ be the smallest positive  integer so that  
$${ \left(\frac{y+1}{y-1}\right)}^\ell > 1+\frac{2}{\epsilon} .$$
Let $t$ be the   largest integer so that
$$\left(1+\frac{2}{{\left(\frac{y+1}{y-1}\right)}^\ell-1}\right)^{t-1} \leq \frac{1}{\lambda'}\,.$$
Let $c=\lambda' b_\ell^{-t}$.
Then $c  F_{{\ell,t}}(0,1) = c F_{{\ell,t}}(1,0) =  \lambda'$.
Also $c F_{{\ell,t}}(0,0) =  c F_{{\ell,t}}(1,1) 
= \lambda' m_\ell^t/b_\ell^t$
and $m_\ell^t/b_\ell^t> 1/\lambda'$, 
so
\begin{align*}  1 < c F_{{\ell,t}}(0,0) =  c F_{{\ell,t}}(1,1) 
&= \lambda' {\left(
1+\frac{2}{{\left(\frac{y+1}{y-1}\right)}^\ell-1}
\right)}^t \\
&= \lambda'
 {\left(
1+\frac{2}{{\left(\frac{y+1}{y-1}\right)}^\ell-1}
\right)}^{t-1}
 {\left(
1+\frac{2}{{\left(\frac{y+1}{y-1}\right)}^\ell-1}
\right)} \\
&<
  { 
1+ \epsilon  
 }\,,
\end{align*}
as required.
\end{proof}

The   proof of the following corollary is straightforward and is essentially identical to the proof
of Corollary~\ref{cor:Ianti-fin}.
 
\begin{corollary}
\label{cor:Iferro-fin}
    For any $\lambda\in(0,1)$,
    $\ofclone{\hIsing{2}{\lambda}\cup\calB'_0} = \Iferro$.
\end{corollary}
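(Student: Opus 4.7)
The plan is to mirror the proof of Corollary~\ref{cor:Ianti-fin} exactly, exploiting the parallel structure of Lemmas \ref{lem:stretch} and~\ref{lem:stretchferro}. The inclusion $\ofclone{\hIsing{2}{\lambda}\cup\calB'_0}\subseteq\Iferro$ is the easy direction: by Definition~\ref{defn:clonelist}, $\hIsing{2}{\lambda}\in\FerroIsing$ for $\lambda\in(0,1)$, and $\calB'_0\subseteq\calB_0$ by Definition~\ref{def:B0hat}, so both generators lie in $\FerroIsing\cup\calB_0$, from which the inclusion follows by monotonicity of $\ofclone{\cdot}$ together with Definition~\ref{def:Addunaries}.

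For the reverse inclusion $\Iferro\subseteq\ofclone{\hIsing{2}{\lambda}\cup\calB'_0}$, I would first note that Lemma~\ref{lem:finitegen} gives $\calB_0\subseteq\ofclone{\calB'_0}\subseteq\ofclone{\hIsing{2}{\lambda}\cup\calB'_0}$, so every nullary function is already available. It then suffices to show that every element of $\FerroIsing$ lies in $\ofclone{\hIsing{2}{\lambda}\cup\calB'_0}$. Writing an arbitrary element of $\FerroIsing$ as $\hIsing{2}{\lambda'}$ with $\lambda'\in[0,1]$, I would split into cases: if $\lambda'=0$ the function is $\EQ$, which belongs to every functional clone by definition; if $\lambda'=1$ the function is the arity-$2$ constant~$1$, obtainable from the nullary constant~$1$ (which is in $\calB_0$, hence available) by introducing two fictitious arguments; and if $\lambda'\in(0,1)$ the containment is precisely the content of Lemma~\ref{lem:stretchferro}.

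Combining these three cases gives $\FerroIsing\cup\calB_0\subseteq\ofclone{\hIsing{2}{\lambda}\cup\calB'_0}$, and then Lemma~\ref{lem:transitive} (or, equivalently, an iterated application of monotonicity and idempotence of $\ofclone{\cdot}$) yields $\ofclone{\FerroIsing\cup\calB_0}\subseteq\ofclone{\hIsing{2}{\lambda}\cup\calB'_0}$, i.e., $\Iferro\subseteq\ofclone{\hIsing{2}{\lambda}\cup\calB'_0}$, as required. There is no real obstacle here since all the technical work was absorbed into Lemma~\ref{lem:stretchferro}; the only thing to be careful about is remembering to handle the two boundary cases $\lambda'\in\{0,1\}$ separately, exactly as in the proof of Corollary~\ref{cor:Ianti-fin}.
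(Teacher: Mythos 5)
Your proposal is correct and follows essentially the same route as the paper, which simply observes that the argument is identical to that of Corollary~\ref{cor:Ianti-fin}: the forward inclusion by monotonicity of $\ofclone{\cdot}$, and the reverse inclusion from Lemma~\ref{lem:finitegen} together with Lemma~\ref{lem:stretchferro}. The only cosmetic difference is that you handle the boundary cases $\lambda'\in\{0,1\}$ at the corollary level, whereas the paper absorbs them into the proof of Lemma~\ref{lem:stretchferro}; either way the content is the same.
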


\section{$\boldsymbol{\omega}$-clones defined by Fourier coefficients}
\label{sec:Fourier}

\subsection{Properties of Fourier coefficients}

The proofs of the following three lemmas are routine calculations and we
defer them to Appendix~\ref{app:Fourier}.   

\newcommand{\statelemfops}{  Let $f$ and $g$ be functions in $\calB_k$.
    \begin{enumerate}[(i)]
    \item \label{op-perm}
        For any permutation~$\pi$ of~$[k]$, $\widehat{f^{\pi}}(\bx) =
        \fhat(\pi(\bx))$.
    \item \label{op-fict}
        If $h(\bx z) = f(\bx)$, then $\hhat(\bx0) = \fhat(\bx)$ and
        $\hhat(\bx1) = 0$.
    \item \label{op-sum}
        If $h(\bx) = f(\bx0)+f(\bx1)$, then $\hhat(\bx) =
        2\fhat(\bx0)$.
    \item \label{op-comp}
        If $h(\bx) = f(\overline{\bx})$, then $\hhat(\bx) =
        (-1)^{\wt{\bx}}\fhat(\bx)$.
    \item \label{op-lim}
        If $\|g-f\|_\infty < \epsilon$, then $\|\ghat-\fhat\|_\infty
        < \epsilon$.
        \item \label{op-nullary}
        If $k=0$ then $\fhat = f$.
    \end{enumerate}
}
\begin{lemma}
\label{lem:fops}
\statelemfops
\end{lemma}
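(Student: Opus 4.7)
Each part is a direct calculation from Definition~\ref{def:ft}. The plan is to handle them in turn, relying only on the definition of the Fourier transform and elementary manipulations of $\pm1$ sums.

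For (\ref{op-perm}), substitute into the definition and change variables in the sum via $\bw \mapsto \pi^{-1}(\bw)$; since $|\pi^{-1}(\bw)\wedge \bx|=|\bw\wedge \pi(\bx)|$, the claim falls out. For (\ref{op-fict}), write
\[
\hhat(\bx z) \;=\; \frac{1}{2^{k+1}}\sum_{\bw\in\{0,1\}^k}\sum_{u\in\{0,1\}} (-1)^{|\bw\wedge\bx|+uz} f(\bw),
\]
and observe that the inner sum over $u$ equals $2$ if $z=0$ and $0$ if $z=1$. For (\ref{op-sum}), a similar splitting of the sum over the last coordinate gives
\[
\hhat(\bx) \;=\; \frac{1}{2^k}\sum_{\bw,u}(-1)^{|\bw\wedge\bx|}f(\bw u) \;=\; 2\cdot\frac{1}{2^{k+1}}\sum_{\bw,u}(-1)^{|\bw\wedge\bx|+u\cdot 0}f(\bw u) \;=\; 2\fhat(\bx0).
\]
For (\ref{op-comp}), substitute $\bw \mapsto \overline{\bw}$ in the definition and use $|\overline{\bw}\wedge\bx| = |\bx|-|\bw\wedge\bx|$, so that $(-1)^{|\overline{\bw}\wedge\bx|}=(-1)^{|\bx|}(-1)^{|\bw\wedge\bx|}$, and factor out the sign.

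For (\ref{op-lim}), linearity of the Fourier transform gives $\ghat - \fhat = \widehat{g-f}$, and then the triangle inequality applied to the defining sum yields
\[
|\ghat(\bx)-\fhat(\bx)| \;\le\; \frac{1}{2^k}\sum_{\bw\in\{0,1\}^k} |g(\bw)-f(\bw)| \;<\; \epsilon.
\]
Finally, (\ref{op-nullary}) is immediate because when $k=0$ the sum has the single term $\bw=()$ with coefficient $(-1)^0/2^0=1$. None of these steps is a real obstacle; the only mild care required is the index bookkeeping in (\ref{op-perm}) and (\ref{op-fict}), which is why the authors relegate the lemma to an appendix.
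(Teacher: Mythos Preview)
Your proof is correct and follows essentially the same route as the paper's appendix proof: direct calculation from the definition, with the same substitutions and splittings for parts (\ref{op-perm})--(\ref{op-lim}). The only minor difference is in (\ref{op-nullary}), where you apply the definition directly at $k=0$ (the sum has a single term), whereas the paper takes a slightly roundabout path via part (\ref{op-sum}) applied to an auxiliary unary function; your approach is simpler and equally valid.
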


It is also well-known (see, e.g.,~\cite{deWolf08:brief,ODonnell14:book}) that, if $f,g\in\calB_k$, and $h$~is
defined by $h(\bx) = f(\bx)\,g(\bx)$, then $\hhat$~is given by the
convolution
\begin{equation}\label{eq:convolution}
    \hhat(\bx)\ \; = \!\!\sum_{\bw\in\Bools^k}\!\! \fhat(\bw)\,\ghat(\bw\oplus \bx)\,.
\end{equation}

We will later need to know the Fourier coefficients of hypergraph
Ising functions and of  the parity functions defined in Definition~\ref{defn:parity}.

\newcommand{\statelemfthIsing}{For any $k$ and~$\lambda$,
    \begin{equation*}
        \hIsinghat{k}{\lambda}(\bx) = \begin{cases}
            \ \lambda + (1-\lambda)/2^{k-1} & \text{if }\bx = \bzero \\
            \ (1-\lambda)/2^{k-1} &\text{if $\wt{\bx}$ is even
              and positive} \\
            \ 0 & \text{if $\wt{\bx}$ is odd.}
        \end{cases}
    \end{equation*}}
\begin{lemma}
\label{lem:ft-hIsing}
  \statelemfthIsing
\end{lemma}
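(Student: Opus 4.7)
The plan is to decompose $\hIsing{k}{\lambda}$ into pieces whose Fourier transforms are easy to compute, and then use linearity of the Fourier transform (which is immediate from Definition~\ref{def:ft}). Specifically, I would write
\begin{equation*}
\hIsing{k}{\lambda}(\bw) = \lambda + (1-\lambda)\bigl([\bw = \bzero] + [\bw = \bone]\bigr),
\end{equation*}
where $[\,\cdot\,]$ denotes the Iverson bracket, since the function equals $1$ on $\{\bzero,\bone\}$ and $\lambda$ elsewhere.

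Next, I would compute the Fourier transform of each piece separately from the definition. For the constant $\lambda$, the transform evaluated at $\bx$ is $\tfrac{\lambda}{2^k}\sum_{\bw}(-1)^{\wt{\bw\wedge\bx}}$; by the standard character sum identity this equals $\lambda$ when $\bx=\bzero$ and $0$ otherwise. For the indicator of $\{\bzero,\bone\}$, the transform at $\bx$ is $\tfrac{1}{2^k}\bigl((-1)^{\wt{\bzero\wedge\bx}} + (-1)^{\wt{\bone\wedge\bx}}\bigr) = \tfrac{1}{2^k}\bigl(1+(-1)^{\wt{\bx}}\bigr)$, which is $1/2^{k-1}$ when $\wt{\bx}$ is even and $0$ when $\wt{\bx}$ is odd.

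Adding the two contributions with weights $1$ and $(1-\lambda)$ respectively gives the three claimed cases: at $\bx=\bzero$ one gets $\lambda + (1-\lambda)/2^{k-1}$; at $\bx\neq\bzero$ with even weight one gets $0 + (1-\lambda)/2^{k-1}$; and at $\bx$ of odd weight both pieces vanish, yielding $0$. There is no genuine obstacle here — this is a direct computation — so the only thing to be careful about is verifying the two elementary identities $\sum_{\bw}(-1)^{\wt{\bw\wedge\bx}} = 2^k\,[\bx=\bzero]$ and $(-1)^{\wt{\bone\wedge\bx}}=(-1)^{\wt{\bx}}$, both of which are immediate from the definition of Hamming weight.
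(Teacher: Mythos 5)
Your proposal is correct and is essentially the same computation as the paper's: both decompose $\hIsing{k}{\lambda}$ into the constant $\lambda$ plus $(1-\lambda)$ times the indicator of $\{\bzero,\bone\}$, evaluate the character sum $\sum_{\bw}(-1)^{\wt{\bx\wedge\bw}}=2^k\,[\bx=\bzero]$, and use $(-1)^{\wt{\bone\wedge\bx}}=(-1)^{\wt{\bx}}$. The paper merely carries out the character-sum evaluation explicitly by a factoring argument rather than citing it as a standard identity.
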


\newcommand{\statelemftparity}{For any $k$ and~$\lambda$,
    $\parityhatk(\bzero) = \tfrac12(1+\lambda)$,
    $\parityhatk(\bone) = \tfrac12(1-\lambda)$ and
    $\parityhatk(\bx) = 0$ for any $\bx\notin\{\bzero,
    \bone\}$.
}
\begin{lemma}
\label{lem:ft-parity}
    \statelemftparity{}
\end{lemma}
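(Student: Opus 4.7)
The plan is to proceed by direct evaluation of the Fourier transform using the decomposition of $\parityk$ into constant and ``parity-of-Hamming-weight'' parts. First I would write
\[
\parityk(\bw) \;=\; \tfrac12(1+\lambda) \;+\; \tfrac12(1-\lambda)(-1)^{\wt{\bw}},
\]
which is easily verified by checking the two cases $\wt{\bw}$ even and $\wt{\bw}$ odd. Plugging this into Definition~\ref{def:ft} and using linearity reduces the problem to evaluating the two character sums
\[
S_1(\bx) = \frac{1}{2^k}\sum_{\bw\in\Bools^k}(-1)^{\wt{\bw\land \bx}}
\qquad\text{and}\qquad
S_2(\bx) = \frac{1}{2^k}\sum_{\bw\in\Bools^k}(-1)^{\wt{\bw\land \bx}+\wt{\bw}}.
\]

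For $S_1$, I would factor the sum over coordinates as $\prod_{i=1}^k\tfrac12\sum_{w_i\in\{0,1\}}(-1)^{w_i x_i}$. Each factor equals $1$ when $x_i=0$ and $0$ when $x_i=1$, so $S_1(\bx) = [\bx=\bzero]$. For $S_2$, I would use the identity $\wt{\bw\land \bx}+\wt{\bw}\equiv \wt{\bw\land\overline{\bx}}\pmod 2$ (which follows from $\wt{\bw\land\bx}+\wt{\bw\land\overline{\bx}}=\wt{\bw}$, so $\wt{\bw\land\bx}+\wt{\bw}\equiv\wt{\bw\land\overline{\bx}}\pmod 2$), reducing $S_2(\bx)$ to $S_1(\overline{\bx})$, which equals $1$ if $\bx=\bone$ and $0$ otherwise.

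Combining these via linearity gives
\[
\parityhatk(\bx) = \tfrac12(1+\lambda)\,S_1(\bx) + \tfrac12(1-\lambda)\,S_2(\bx),
\]
which immediately yields $\parityhatk(\bzero) = \tfrac12(1+\lambda)$, $\parityhatk(\bone) = \tfrac12(1-\lambda)$, and $\parityhatk(\bx) = 0$ for $\bx \notin \{\bzero,\bone\}$, as desired.

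There is no real obstacle here; the only minor subtlety is handling the mixed exponent in $S_2$, which is why I would introduce the substitution $\wt{\bw\land\bx}+\wt{\bw}\equiv \wt{\bw\land\overline{\bx}} \pmod 2$ to reduce $S_2$ to the same shape as $S_1$. The whole argument is a short computation, so the proof should be at most a handful of lines — it is plausibly relegated to the appendix alongside the proof of Lemma~\ref{lem:fops} and the proof of Lemma~\ref{lem:ft-hIsing}, which uses essentially the same technique.
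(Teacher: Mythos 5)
Your proof is correct. It takes a mildly different route from the paper's: you decompose $\parityk(\bw)=\tfrac12(1+\lambda)+\tfrac12(1-\lambda)(-1)^{\wt{\bw}}$ and then evaluate the two resulting character sums by factoring them coordinate-by-coordinate (orthogonality), which handles all three cases of the lemma uniformly. The paper instead dismisses the values at $\bzero$ and $\bone$ as immediate, splits the remaining sum according to the parity of $\wt{\bw}$, and kills each piece with a pairing argument: picking coordinates with $x_1=0$, $x_2=1$ and flipping both preserves the parity of $\wt{\bw}$ but flips the sign $(-1)^{\wt{\bw\wedge\bx}}$, so the terms cancel. Your decomposition is essentially the same device the paper uses in the adjacent Lemma~\ref{lem:ft-hIsing}, and arguably gives a cleaner, single computation; the paper's involution argument is more combinatorial but requires treating the endpoints separately. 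Both are a few lines and both belong in the appendix, as you surmise. (One negligible caveat, which the paper also ignores: for $k=0$ the tuples $\bzero$ and $\bone$ coincide, so the statement implicitly assumes $k\geq 1$.)
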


\subsection{$\boldsymbol{\calP}$ and $\boldsymbol{\calPN}$}

Recall from Definition~\ref{defn:clonelist} that $\calP$ is the class
of functions~$f$ such that $\fhat(\bx)\geq 0$ for all~$\bx$, and that
$\calPN$ is the class of functions~$f$ such that $\fhat(\bx)\geq 0$ if
$\wt{\bx}$ is even and $\fhat(\bx)\leq 0$ if $\wt{\bx}$ is odd.
We first show that $\calP$ and $\calPN$ are  $\omega$-clones, and that they contain $\calB_0$.

\begin{theorem}
\label{thm:P-clone}
    $\ofclone{\calP} = \calP{}$ and $\calB_0 \subseteq \calP$.
\end{theorem}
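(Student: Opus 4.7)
The plan is to verify, one clone operation at a time, that $\calP$ is closed under each of the operations used to build an $\omega$-clone, using the ingredients from Lemma~\ref{lem:fops} and the convolution formula~\eqref{eq:convolution}. Once closure is established, and once we check that $\EQ\in\calP$, it follows that $\ofclone{\calP}\subseteq\calP$, and the reverse inclusion is trivial.

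First I would handle the easy auxiliary claims. The inclusion $\calB_0\subseteq\calP$ is immediate from part~\eqref{op-nullary} of Lemma~\ref{lem:fops}: every $f\in\calB_0$ is a non-negative real number and $\fhat=f$. For $\EQ\in\calP$, I would compute directly that $\widehat{\EQ}(x_1,x_2)=\tfrac14\bigl(1+(-1)^{x_1+x_2}\bigr)$, which equals $\tfrac12$ when $x_1=x_2$ and $0$ otherwise; in either case it is non-negative.

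Next I would check closure of $\calP$ under the five operations, treating each as a one-line consequence of the preceding material. Permutation of arguments is immediate from Lemma~\ref{lem:fops}\eqref{op-perm}, since $\widehat{f^\pi}(\bx)=\fhat(\pi(\bx))$ preserves non-negativity. Introduction of a fictitious argument is handled by Lemma~\ref{lem:fops}\eqref{op-fict}: the new Fourier coefficients are either equal to old coefficients (still non-negative) or equal to $0$. Summation is handled by Lemma~\ref{lem:fops}\eqref{op-sum}: if $h(\bx)=\sum_{z}f(\bx z)$ then $\hhat(\bx)=2\fhat(\bx 0)\geq 0$. Product is handled by the convolution identity~\eqref{eq:convolution}: if $\fhat,\ghat\geq 0$ pointwise, then
\begin{equation*}
\hhat(\bx)=\sum_{\bw\in\Bools^k}\fhat(\bw)\,\ghat(\bw\oplus\bx)\geq 0.
\end{equation*}
Finally, limits are handled by Lemma~\ref{lem:fops}\eqref{op-lim}: if $f$ is a limit of functions $f_\epsilon\in\fclone{S_f}\subseteq\calP$, then $\widehat{f_\epsilon}\to\fhat$ in $\|\cdot\|_\infty$, so non-negativity of each $\widehat{f_\epsilon}$ passes to $\fhat$.

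Combining these facts, $\calP$ contains $\EQ$ and is closed under all the operations used in forming $\ofclone{\cdot}$, which gives $\ofclone{\calP}\subseteq\calP$; the reverse inclusion is definitional. There is no real obstacle here: the work has been packaged into Lemma~\ref{lem:fops} and the convolution identity, and the proof is essentially a checklist. The only small point to keep in mind is that Lemma~\ref{lem:fops}\eqref{op-sum} refers specifically to summation over the last coordinate, so to handle summation over an arbitrary coordinate one first applies the permutation case already handled.
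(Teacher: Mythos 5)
Your proof is correct and follows essentially the same route as the paper: verify $\EQ\in\calP$, then check closure under each operation using Lemma~\ref{lem:fops} and the convolution identity, and handle $\calB_0$ via the nullary Fourier transform (the paper does this by summing out a unary function $g_c(0)=g_c(1)=c/2$, which is exactly the content of Lemma~\ref{lem:fops}\pref{op-nullary} that you invoke). No gaps.
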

\begin{proof}
By the definition of $\omega$-clones,  
the fact that $\calP$ is an $\omega$-clone follows 
from the fact that it contains $\EQ$ and that it is closed under the various operations.
    \begin{itemize}
    \item It is easily verified (for example, apply Lemma~\ref{lem:ft-hIsing} with $\lambda=0$) that $\widehat{\EQ} = \tfrac12\EQ$,
        which is a non-negative function.  Therefore, $\EQ\in\calP$.
    \item For closure under permuting arguments, suppose that
        $f\in\calP$ and let $h = f^{\pi}$ for some permutation~$\pi$.
        By Lemma~\ref{lem:fops}\pref{op-perm}, $\fhat$ and~$\hhat$
        have the same range, so $\hhat$~is a nonnegative function, so
        $h\in\calP$.
    \item For closure under introducing fictitious arguments, let
        $f\in\calP$ and define $h(\bx y) = f(\bx)$.  Then $h\in\calP$
        because, by Lemma~\ref{lem:fops}\pref{op-fict}, every Fourier
        coefficient of~$h$ is either zero or a Fourier coefficient
        of~$f$.
    \item For closure under summation, let $f\in\calP$ and define
        $h(\bx)=f(\bx0)+f(\bx1)$.  By Lemma~\ref{lem:fops}\pref{op-sum},
        $\hhat(\bx)=2\fhat(\bx0)\geq 0$ for any~$\bx$, so $h\in\calP$.
    \item For closure under products, let $f,g\in\calP$.
        $\fghat(\bx) = \sum_{\bw}\fhat(\bw)\,\ghat(\bw\oplus \bx) \geq
        0$, since every term of the sum is nonnegative, so $fg\in\calP$.
    \item For closure under limits, let $f$ be a function and suppose
        that, for every $\epsilon>0$, there is some $f_\epsilon\in\calP$
        with $\|f_\epsilon - f\|_\infty < \epsilon$ (this is a weaker
        condition than requiring all such $f_\epsilon$ to be in
        $\fclone{\calG}$ for some finite $\calG\subseteq\calP$).  Then, by
        Lemma~\ref{lem:fops}\pref{op-lim},
        $\|\fhat_\epsilon - \fhat\|_\infty < \epsilon$.  In
        particular, $\fhat_\epsilon(\bx)\geq 0$ for all~$\bx$ so, for
        all~$\bx$ and all $\epsilon>0$, $\fhat(\bx) > -\epsilon$.
        Therefore, $\fhat(\bx)\geq 0$ and $f\in\calP$. 
    \end{itemize}
We now show that $\calB_0 \subseteq \calP$.
Consider any $c\in \NonNegReals$
and let $f_c$ be the nullary function in~$\calB_0$ with range $\{c\}$. 
Let $g_c$ be the unary function defined by $g_c(0)=g_c(1)=c/2$.
$\widehat{g}_c(0) = c$ and $\widehat{g}_c(1) = 0$,
so $g_c \in \calP$.
But $f_c(x_1) = \sum_{x_1} g_c(x_1)$ and $\omega$-clones are closed under summation, so 
$f_c \in \calP$.  \end{proof}

\begin{definition}Consider a function $f\in \calB_k$.
We define the \emph{complement} $\overline{f}$ of~$f$
by $\overline{f}(\bx) = f(\overline{\bx})$.
\end{definition}

\begin{theorem}
\label{thm:PN-clone}
    $\ofclone{\calPN} = \calPN$
    and $\calB_0 \subseteq \calPN$.    
\end{theorem}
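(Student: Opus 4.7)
The plan is to mimic the proof of Theorem~\ref{thm:P-clone} almost verbatim, checking that $\EQ \in \calPN$ and that $\calPN$ is closed under each of the clone operations, with the only nontrivial step being closure under products. For the $\calB_0 \subseteq \calPN$ part, we reuse the same gadget as in the proof of Theorem~\ref{thm:P-clone}: the unary function $g_c$ with $g_c(0)=g_c(1)=c/2$ satisfies $\widehat{g}_c(0)=c\geq 0$ and $\widehat{g}_c(1)=0\leq 0$, hence lies in $\calPN$, and summing out its argument yields an arbitrary nullary constant $c\in\NonNegReals$.

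For $\EQ$, using $\widehat{\EQ}=\tfrac12\EQ$, the nonzero Fourier coefficients occur at $\bzero$ and $\bone$ (both of even weight in arity~$2$) and are positive, while the odd-weight coefficients are zero, so $\EQ\in\calPN$. For closure under permutations, introducing fictitious arguments, and summation, I would just read off Lemma~\ref{lem:fops}\pref{op-perm}, \pref{op-fict}, \pref{op-sum}: each of these operations either preserves the multiset of Fourier coefficients indexed by Hamming weight, inserts zeros, or doubles a coefficient at weight $\wt{\bx}$, so the sign condition defining $\calPN$ is preserved. Closure under limits uses Lemma~\ref{lem:fops}\pref{op-lim} exactly as before: a uniform limit of functions in $\calPN$ has Fourier coefficients that are uniform limits of coefficients of the correct sign, and a limit of nonnegative (resp.\ nonpositive) reals is nonnegative (resp.\ nonpositive).

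The key step will be closure under products, which is where the sign condition interacts nontrivially with the convolution formula~(\ref{eq:convolution}). Given $f,g\in\calPN$, I need to show that every term $\fhat(\bw)\,\ghat(\bw\oplus\bx)$ in $\widehat{fg}(\bx)$ has sign $(-1)^{\wt{\bx}}$ (or is zero). By the definition of $\calPN$, each nonzero $\fhat(\bw)$ has sign $(-1)^{\wt{\bw}}$ and each nonzero $\ghat(\bw\oplus\bx)$ has sign $(-1)^{\wt{\bw\oplus\bx}}$, so the product of any two such terms has sign $(-1)^{\wt{\bw}+\wt{\bw\oplus\bx}}$. The identity $\wt{\bw\oplus\bx}\equiv\wt{\bw}+\wt{\bx}\pmod 2$ (which follows from $\wt{\bu\oplus\bv}=\wt{\bu}+\wt{\bv}-2\wt{\bu\land\bv}$) gives $\wt{\bw}+\wt{\bw\oplus\bx}\equiv\wt{\bx}\pmod 2$, so every nonzero term in the convolution has the uniform sign $(-1)^{\wt{\bx}}$, and hence so does the sum. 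This is the main (and only interesting) obstacle; once it is dispatched, the rest of the proof is immediate.
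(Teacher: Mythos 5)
Your proposal is correct. The overall skeleton is the same as the paper's (check $\EQ\in\calPN$, then verify closure under each operation, then get $\calB_0$ via the unary gadget $g_c$ and summation — that last part is identical to the paper's), but you handle the central step differently. The paper first observes, via Lemma~\ref{lem:fops}\pref{op-comp}, that $\calPN=\{f\mid\overline{f}\in\calP\}$, and then derives closure under fictitious arguments, summation and products by complementing and piggybacking on the already-established closure properties of $\calP$ (Theorem~\ref{thm:P-clone}); only the permutation and limit cases are argued directly on Fourier coefficients. You instead argue every case directly on Fourier coefficients, and in particular you prove product closure by a sign analysis of the convolution~\eqref{eq:convolution}: each nonzero term $\fhat(\bw)\,\ghat(\bw\oplus\bx)$ has sign $(-1)^{\wt{\bw}+\wt{\bw\oplus\bx}}=(-1)^{\wt{\bx}}$ by the identity $\wt{\bw\oplus\bx}\equiv\wt{\bw}+\wt{\bx}\pmod 2$, so the whole sum has the sign required by the definition of $\calPN$. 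This computation is correct and makes the proof self-contained (it does not lean on $\calP$'s product closure), at the cost of redoing a convolution argument that the paper's complementation trick lets it avoid; conversely, the paper's route is shorter but less transparent about \emph{why} the sign pattern of $\calPN$ is multiplicative. Either proof is acceptable.
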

\begin{proof}
    By Lemma~\ref{lem:fops}\pref{op-comp}, $\calPN = \{f\mid
    \overline{f}\in\calP\}$.   We first show that $\calPN$ is an $\omega$-clone.
    \begin{itemize}
    \item Since $\overline{\EQ} = \EQ$ 
    and $\calP$ contains $\EQ$, 
    $\calPN$ also contains $\EQ$.

    \item For closure under permuting arguments, let $f$ be a $k$-ary
        function in~$\calPN$ and let $\pi$ be a permutation of~$[k]$.
        By Lemma~\ref{lem:fops}\pref{op-perm},
        $\widehat{f^{\pi}}(\bx) = \fhat(\pi(\bx))$ and, since
        $\wt{\bx}=\wt{\pi(\bx)}$, we have $f^{\pi}\in\calPN$.

    \item For closure under introducing fictitious arguments, let
        $f\in\calPN$ and define $h(\bx y) = f(\bx)$.  Then
        $\overline{h}(\bx y) = h(\overline{\bx y}) = f(\overline{\bx})
        = \overline{f}(\bx)\in \calP$, so $h\in\calPN$.

    \item For closure under summation, let $f\in\calPN$, so
        $\overline{f}\in\calP$.  Define $h(\bx) = f(\bx0) + f(\bx1)$.
        Then
        $\overline{h}(\bx) = h(\overline{\bx}) = f(\overline{\bx}0) +
        f(\overline{\bx}1) = \overline{f}(\bx1) +
        \overline{f}(\bx0)\in \calP$, so $h\in\calPN$.

    \item For closure under products, suppose $f,g\in\calPN$ and let
        $h(\bx)=f(\bx)\,g(\bx)$.  Then
        $\overline{h}(\bx) = h(\overline{\bx}) = f(\overline{\bx})\,
        g(\overline{\bx}) = \overline{f}(\bx)\,\overline{g}(\bx)\in
        \calP$, so $h\in\calPN$.

    \item For closure under limits, let $f$ be a function and suppose
        that, for all $\epsilon>0$, there is some $f_\epsilon\in\calPN$
        such that $\|f_\epsilon - f\|_\infty < \epsilon$.  We must
        show that $f\in\calPN$. 

        By Lemma~\ref{lem:fops}\pref{op-lim},
        $\|\fhat_\epsilon - \fhat\|_\infty < \epsilon$.  In
        particular, $\fhat_\epsilon(\bx)\geq 0$ for all
        even-weight~$\bx$, and $\fhat_\epsilon(\bx)\leq 0$ for all
        odd-weight~$\bx$.  Therefore, for all even-weight~$\bx$, and
        all $\epsilon>0$, $\fhat(\bx) > -\epsilon$, so
        $\fhat(\bx)\geq 0$.  Similarly, $\fhat(\bx)\leq 0$ for all
        odd-weight~$\bx$, so $f\in\calPN$.
    \end{itemize}
    The proof that $\calB_0 \subseteq \calPN$ is the same as the proof that $\calB_0 \subseteq \calP$
    (see the proof of Theorem~\ref{thm:P-clone}). \end{proof}

 We now
investigate  the position of~$\calP$ and~$\calPN$ in the lattice~$\calL'$ from Theorem~\ref{thm:main}.  To do this, we
use two technical lemmas, which we will also use in
Section~\ref{sec:SD-max}.

\begin{lemma}
\label{lem:neg-fc}
    Let $f\in\calB_n$.  If $\fhat(\ba)<0$ for some $\ba\in\Bools^n\!$,
    then there is a function $g\in\fclone{\{f\}}$ such that
    $\ghat(\bone)<0$.
\end{lemma}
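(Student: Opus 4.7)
The plan is to move the negative Fourier coefficient at position $\ba$ to the all-ones position $\bone$ by permuting coordinates and then summing out the coordinates on which $\ba$ is zero. Note that $\fhat(\bzero) = \tfrac1{2^n}\sum_{\bw} f(\bw) \geq 0$ since $f$ is pointwise nonnegative, so the hypothesis $\fhat(\ba)<0$ forces $\ba \neq \bzero$; let $S = \{i\in[n] : a_i = 1\}$, which is nonempty.

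First I would choose a permutation $\pi$ of $[n]$ mapping $S$ bijectively onto $\{1,\dots,|S|\}$, so that $\pi(\ba) = \bone_{|S|}\bzero_{n-|S|}$. By Lemma~\ref{lem:fops}\pref{op-perm}, $\widehat{f^{\pi}}(\pi(\ba)) = \fhat(\ba) < 0$, and $f^{\pi} \in \fclone{\{f\}}$ since functional clones are closed under permuting arguments.

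Next, I would apply the summation operation $n-|S|$ times to $f^{\pi}$, always summing out its last argument, to produce a function $g$ of arity $|S|$. By repeated application of Lemma~\ref{lem:fops}\pref{op-sum},
\[
\ghat(\by) \;=\; 2^{n-|S|}\,\widehat{f^{\pi}}(\by\,\bzero_{n-|S|})
\qquad\text{for all }\by\in\{0,1\}^{|S|}.
\]
Specialising to $\by = \bone$ gives $\ghat(\bone) = 2^{n-|S|}\fhat(\ba) < 0$, and $g \in \fclone{\{f\}}$ since it is obtained from $f$ by permutation and summation.

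I expect no real obstacle here: the proof is essentially a bookkeeping exercise reading off the effect of clone operations on Fourier coefficients from Lemma~\ref{lem:fops}. The only minor point worth flagging is that the reduction terminates correctly at arity $|S| \geq 1$, and the appeal to Lemma~\ref{lem:fops}\pref{op-sum} is iterated (this can be verified in one line by induction on $n - |S|$).
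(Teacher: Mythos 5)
Your proof is correct and follows essentially the same route as the paper's: permute the arguments so the ones of $\ba$ come first, sum out the remaining coordinates, and read off $\ghat(\bone)=2^{n-|S|}\fhat(\ba)<0$ from Lemma~\ref{lem:fops}\pref{op-sum}. The only (immaterial) difference is how you rule out $\ba=\bzero$: you observe directly that $\fhat(\bzero)\geq 0$ by nonnegativity of $f$, whereas the paper notes $f$ is not identically zero and hence $\fhat(\bzero)>0$; both suffice.
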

\begin{proof}
Since $\fhat(\ba)\neq 0$, $f$~cannot be the constant zero
function.  Therefore, $f(\bx)>0$
for some $\bx\in\Bools^n\!$, which means that $\fhat(\bzero)>0$,
so $\ba\neq\bzero$.  
Since functional clones are closed under permuting arguments,
and (by    Lemma~\ref{lem:fops}\pref{op-perm}), permuting arguments just permutes
Fourier coefficients,  we may
    assume that, for some $k\in[n]$, $a_1 = \dots = a_k = 1$ and
    $a_{k+1} = \dots = a_n = 0$.  Let
    \begin{equation*}
        g(x_1, \dots, x_k)\ \ = \!\!\!\sum_{x_{k+1}, \dots, x_n}\!\!\! f(x_1, \dots, x_n)\,.
    \end{equation*}
    By Lemma~\ref{lem:fops}\pref{op-sum},
    $\ghat(\bone) = 2^{n-k}\,\fhat(\ba)<0$.
\end{proof} 
\begin{definition}\label{def:permissive}
A function $f\colon\Bools^k\to\NonNegReals$ is \emph{permissive} if its range is~$\PosReals$. 
\end{definition}
\begin{lemma}
\label{lem:tech}
    Let $f\in\calB_n$ with $\fhat(\bone)<0$.  Then, for every $k>0$,
    there is a $k$-ary permissive function
    $h\in\fclone{\{f, \parity{k+n}{1/2}\}}$ such that
    $\hhat(\bone)<0$.
\end{lemma}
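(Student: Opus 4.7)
The plan is to build $h$ as a summation of the product of a padded copy of $f$ with the parity function. Set $p := \parity{k+n}{1/2}$, and let $\tilde f$ be the arity-$(n+k)$ extension of $f$ obtained by introducing $k$ fictitious arguments, so that $\tilde f(x_1,\dots,x_n,y_1,\dots,y_k) = f(x_1,\dots,x_n)$. Then define $h_0 = \tilde f \cdot p$ and
\[
    h(y_1,\dots,y_k) \;=\; \sum_{x_1,\dots,x_n} h_0(x_1,\dots,x_n,y_1,\dots,y_k).
\]
Since $\fclone{\{f,p\}}$ is closed under introducing fictitious arguments, product, and summation, we have $h \in \fclone{\{f,p\}}$.

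To compute the required Fourier coefficient, I would combine Lemma~\ref{lem:fops}, the convolution formula~\eqref{eq:convolution}, and Lemma~\ref{lem:ft-parity}. By Lemma~\ref{lem:fops}\pref{op-fict}, $\hat{\tilde f}(\bx,\by) = \fhat(\bx)$ when $\by = \bzero$ and is zero otherwise; by Lemma~\ref{lem:ft-parity}, $\hat p$ is supported on $\{\bzero,\bone\}$ with $\hat p(\bzero)=3/4$ and $\hat p(\bone)=1/4$. A direct convolution then yields $\hat{h_0}(\bu,\bzero) = \tfrac34\fhat(\bu)$, $\hat{h_0}(\bu,\bone) = \tfrac14\fhat(\overline{\bu})$, and $\hat{h_0}(\bu,\bv) = 0$ for all other $\bv$. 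Applying Lemma~\ref{lem:fops}\pref{op-sum} $n$ times to marginalise out the $\bx$ variables gives $\hhat(\by) = 2^n\,\hat{h_0}(\bzero,\by)$, so in particular
\[
    \hhat(\bone) \;=\; \tfrac{2^n}{4}\,\fhat(\overline{\bzero}) \;=\; \tfrac{2^n}{4}\,\fhat(\bone) \;<\; 0,
\]
as required.

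For permissiveness, note that $p(\bx,\by) \geq 1/2$ pointwise, and $\fhat(\bone) \neq 0$ forces $f$ to be nonzero somewhere, so $\sum_\bx f(\bx) > 0$. Consequently, for every $\by \in \Bools^k$, $h(\by) \geq \tfrac12\sum_\bx f(\bx) > 0$, so $h$ is permissive in the sense of Definition~\ref{def:permissive}.

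The only step that requires any care is the convolution: because $\hat p$ is supported on the two antipodal points $\bzero$ and $\bone$, the Fourier coefficient $\hat{h_0}(\bu,\bone)$ equals $\tfrac14\fhat(\overline{\bu})$ rather than $\tfrac14\fhat(\bu)$, and this antipodal flip is exactly what lets the final marginalisation at $\bu = \bzero$ pick out $\fhat(\bone)$. The specific value $\lambda = 1/2$ plays no real role beyond ensuring both a non-vanishing Fourier mass at $\bone$ and the strict positivity of $p$ needed for permissiveness.
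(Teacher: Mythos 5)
Your proof is correct and follows essentially the same route as the paper: pad $f$ with fictitious arguments, multiply by $\parity{k+n}{1/2}$, sum out the original variables, and use the convolution formula together with the fact that $\parityhat{k+n}{1/2}$ is supported on $\{\bzero,\bone\}$ to extract $\fhat(\bone)$ up to a positive constant. The only differences are cosmetic (ordering of the variable blocks, and computing the full transform of the product before marginalising rather than after).
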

\begin{proof}
    Let $\bx = (x_1, \dots, x_k)$ and $\by = (y_1,\dots, y_n)$.  Define the
    functions
    \begin{align*}
        f'(\bx,\by) &= f(\by) \\
        g(\bx,\by)  &= \parity{k+n}{1/2}(\bx,\by)\,f'(\bx,\by) \\
        h(\bx)      &= \!\!\sum_{\by\in\Bools^n}\!\! g(\bx,\by)\,.
    \end{align*}
    Thus, $h\in\fclone{\{f,\parity{k+n}{1/2}\}}$.  Further, since
    $\parity{k+n}{1/2}$    
    is permissive and $f$~is not the constant zero function
    (because $\fhat(\bone)\neq 0$), $h$~is also permissive.

    For the claim that $\hhat(\bone)<0$, we have the
    following.  The first equality is by
    Lemma~\ref{lem:fops}\pref{op-sum} and the second
    by Equation~\eqref{eq:convolution} from the beginning of Section~\ref{sec:Fourier}.  The third equality is because the
    first $k$~arguments of~$f'$ are fictitious so, by
    Lemma~\ref{lem:fops}\pref{op-fict},
    $\widehat{f'}(\overline{\bv},\bw)$ is $\fhat(\bw)$ when
    $\bv=\bone$ and is zero, otherwise.  The final equality is
    because, by Lemma~\ref{lem:ft-parity},
    $\parityhat{k+n}{1/2}(\bone,\bw)$ is~$\tfrac14$ when $\bw=\bone$ and is
    zero, otherwise.
    \begin{align*}
        \hhat(\bone)
            &= 2^n\, \ghat(\bone, \bzero) \\
            &= 2^n \!\!\!\!\sum_{\bv\bw\in\Bools^{k+n}}\!\!\!\!
                       \parityhat{k+n}{1/2}(\bv,\bw)\,
                           \widehat{f'}(\overline{\bv},\bw) \\
            &= 2^n \sum_{\bw\in\Bools^n}\parityhat{k+n}{1/2}(\bone,\bw)\,
                           \fhat(\bw) \\
            &= 2^{n-2}\,\fhat(\bone) < 0\,.\qedhere
    \end{align*}
\end{proof}

\begin{lemma}
\label{lem:P-max}
    $\calP$ is maximal in~$\calB$.
\end{lemma}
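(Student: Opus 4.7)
The plan is to take an arbitrary $g \in \calB \setminus \calP$ and show that $\ofclone{\calP \cup \{g\}} = \calB$. Since $g \notin \calP$, some Fourier coefficient $\ghat(\ba)$ is negative, so Lemma~\ref{lem:neg-fc} applied to $g$ yields a function $g' \in \fclone{\{g\}}$ with $\ghat'(\bone) < 0$. We observe next that $\parity{k+n}{1/2} \in \calP$ for every $k$: by Lemma~\ref{lem:ft-parity}, its only nonzero Fourier coefficients are $\tfrac{3}{4}$ at $\bzero$ and $\tfrac{1}{4}$ at $\bone$, both nonnegative. Thus Lemma~\ref{lem:tech} applies and, for any $k \geq 1$, yields a permissive $k$-ary function $h_k \in \fclone{\{g', \parity{k+n}{1/2}\}} \subseteq \fclone{\calP \cup \{g\}}$ with $\hhat_k(\bone) < 0$.

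Specialising to $k = 1$, we obtain a permissive unary $h_1 = [a, b]$ with $b > a > 0$ (from $\hhat_1(1) = (a-b)/2 < 0$). Since $1/b \in \calB_0 \subseteq \calP$, the sequence $h_1^n \cdot b^{-n} = [(a/b)^n, 1]$ lies in $\fclone{\{h_1\} \cup \calB_0}$, and as $a/b \in (0, 1)$ it converges in $L^\infty$ norm to $\delta_1 = [0, 1]$. Taking $S = \{h_1, 1/b\}$ as the finite generating subset required in the definition of limit, we conclude that $\delta_1 \in \ofclone{\calP \cup \{g\}}$.

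The main obstacle that remains is the expressibility claim $\ofclone{\calP \cup \{\delta_1\}} = \calB$. The strategy is as follows: $\delta_0 \in \calP$ together with $\delta_1$ yields every point indicator $\chi_\ba = \prod_i \delta_{a_i}(x_i)$ inside $\ofclone{\calP \cup \{\delta_1\}}$, and combined with $\EQ \in \calP$, the constants in $\calB_0$, and the rich family of functions in $\calP$, these should suffice to realise every $f \in \calB$ by a pps-formula (possibly through limits). One natural way to close the argument is to replay the construction above with $k = 2$ in Lemma~\ref{lem:tech} to also extract $\NEQ \in \ofclone{\calP \cup \{g\}}$, obtained as the $L^\infty$-limit of suitably normalised powers of a binary permissive $h_2$ with $\hhat_2(\bone) < 0$. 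Because $\NEQ$ permits XOR-flipping of any individual argument via $\sum_{y} f(y, x_2, \dots, x_k)\,\NEQ(x_1, y) = f(\overline{x_1}, x_2, \dots, x_k)$, every $\calP$-function can be transported across each of the XOR-shifted Fourier sign patterns, and pps-gadgets built from such shifted $\calP$-functions, combined with auxiliary summations to emulate pointwise addition of the simple building blocks $\chi_\ba$, realise arbitrary nonnegative targets. Turning the required pointwise sums into genuine pps-formulas is the main technical challenge.
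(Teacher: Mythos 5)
Your first half is sound and matches the paper's route: reducing to $\ghat'(\bone)<0$ via Lemma~\ref{lem:neg-fc}, noting $\parity{k+n}{1/2}\in\calP$ by Lemma~\ref{lem:ft-parity}, and extracting a permissive unary $h_1=[a,b]$ with $b>a>0$ via Lemma~\ref{lem:tech} are exactly the paper's steps, and your limit argument giving $\delta_1\in\ofclone{\calP\cup\{g\}}$ is correct. But the proof does not close. The entire content of maximality is the claim you defer --- that $\calP$ together with what you have extracted generates all of~$\calB$ --- and your sketch of it does not work as stated. In the pps framework the only available ``sum'' is summation over a Boolean variable, which computes $f(\bx,0)+f(\bx,1)$ for a \emph{single} already-expressed function $f$; it does not let you form pointwise sums of separately expressed building blocks such as $\sum_\ba c_\ba\chi_\ba$. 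You acknowledge this yourself (``turning the required pointwise sums into genuine pps-formulas is the main technical challenge''), but that challenge is precisely the lemma being proved, so the argument has a genuine gap rather than a routine omission.

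For comparison, the paper closes the gap differently: after normalising $h$ (by taking powers) so that $h(1)=1$ and $h(0)<\tfrac12$, it observes that $g=[h(0)^{-2},h(0)^{-1},0]$ lies in $\calP$ and that $g(x,y)\,h(x)\,h(y)=\mathrm{NAND}=[1,1,0]$, so $\mathrm{NAND}\in\fclone{\calP\cup\{f\}}$. It then invokes black-box results from~\cite{LSM}: Corollary~13.2(ii) (an $\omega$-clone containing $\mathrm{NAND}$, a unary $h$ with $h(1)>h(0)>0$, and the nullary $\tfrac12$ contains all of $\calB_1$) and Lemmas~7.1 and~8.1 ($\mathrm{NAND}$ together with $\calB_1$ generates $\calB$). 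If you want to complete your version, you either need to reconstruct an argument of comparable strength (e.g.\ produce $\mathrm{NAND}$ or $\NEQ$ concretely and then prove, not assert, that it generates $\calB$ over $\calP$), or cite these results of~\cite{LSM} as the paper does.
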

\begin{proof}
    Consider any $n$-ary $f\in\calB\setminus\calP$.  By definition,
    $\fhat(\ba)<0$ for some~$\ba\in\Bools^n$ and, by
    Lemma~\ref{lem:neg-fc}, we may assume that $\fhat(\bone)<0$.  By
    Lemma~\ref{lem:tech}, there is a permissive unary function
    $h\in\fclone{\{f,\parity{n+1}{1/2}\}}$ such that
    $\hhat(1)<0$.  By Lemma~\ref{lem:ft-parity},
    $\parity{n+1}{1/2}\in\calP$ for every~$n$, so we
    have $h\in\fclone{\calP\cup\{f\}}$.

    Since $h$~is permissive and $\hhat(1) = \tfrac12(h(0)-h(1))$, we
    have $h(1)>h(0)>0$.  We may further assume that $h(0)<\tfrac12$
    and $h(1)=1$: if this is not the case, replace~$h$ with the function
    $h'(x)=(h(x)/h(1))^j$ for any sufficiently large integer~$j$.
The function~$h'$ is in $\fclone{\calP \cup \{f\}}$ since $h\in \fclone{\calP \cup \{f\}}$
and nullary functions such as $1/h(1)$ are in~$\calP$ by  Theorem~\ref{thm:P-clone}.

    Now, consider the symmetric binary function
    $g = [h(0)^{-2}, h(0)^{-1}, 0]$, which is in~$\calP$ by the
    assumption on~$h$.  We have
    \begin{equation*}
        \mathrm{NAND}(x,y) = [1,1,0]
            = g(x,y)\,h(x)\,h(y)\in\fclone{\calP\cup\{f\}}\,.
    \end{equation*}
    By \cite[Corollary~13.2(ii)]{LSM}, any $\omega$-clone that contains
    $\mathrm{NAND}$, a unary function~$h$ such that $h(1)>h(0)>0$
    and the nullary function~$1/2$    
    also contains all of~$\calB_1$.  Therefore,
    $\calB_1\cup\{\mathrm{NAND}\}\subseteq \ofclone{\calP\cup\{f\}}$.
    By Lemmas 7.1 and~8.1 of~\cite{LSM}, this implies that that
    $\ofclone{\calP\cup\{f\}} = \calB$, so $\calP$~is maximal
    in~$\calB$.
\end{proof}

\begin{corollary}
\label{cor:PN-max}
    $\calPN$ is maximal in~$\calB$.
\end{corollary}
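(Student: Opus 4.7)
The plan is to derive Corollary~\ref{cor:PN-max} from Lemma~\ref{lem:P-max} via a symmetry argument rather than repeating the whole construction. Define the involution $\phi\colon\calB\to\calB$ by $\phi(f) = \overline{f}$, so $\phi(f)(\bx) = f(\overline{\bx})$. By Lemma~\ref{lem:fops}\pref{op-comp}, $\widehat{\phi(f)}(\bx) = (-1)^{\wt{\bx}}\fhat(\bx)$, so the sign pattern required for $\calPN$ on the Fourier side of $\phi(f)$ is exactly the nonnegativity of the Fourier coefficients of~$f$. This immediately gives $\phi(\calP)=\calPN$ and $\phi(\calPN)=\calP$. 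Moreover $\phi$ is an involutive bijection on~$\calB$ with $\phi(\calB)=\calB$ and $\phi(\EQ)=\EQ$.

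The main step is to show that $\phi$ extends to a lattice automorphism of $\calL_\omega$, i.e., $\phi(\ofclone{\calF}) = \ofclone{\phi(\calF)}$ for every $\calF\subseteq\calB$. I would do this by verifying that $\phi$ commutes with each of the operations used to build $\omega$-clones. Introduction of a fictitious argument commutes with $\phi$ since complementing a fictitious argument leaves it fictitious; permutation of arguments is compatible by inspection; products distribute, $\overline{fg} = \overline{f}\cdot\overline{g}$; summation commutes via the substitution $x_k\mapsto\overline{x_k}$ inside the sum, which merely reindexes the two summands; and limits are preserved because $\|\overline{f}-\overline{g}\|_\infty = \|f-g\|_\infty$, so $\phi$ sends convergent sequences to convergent sequences with the same limit (up to~$\phi$). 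Combined with $\phi(\EQ)=\EQ$, these compatibilities show that $\phi$ maps $\omega$-clones to $\omega$-clones and preserves inclusion, hence is an automorphism of the lattice $\calL_\omega$.

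Once the automorphism is in hand, the corollary is immediate: if there were an $\omega$-clone~$\calC$ with $\calPN\subset\calC\subset\calB$, then $\phi(\calC)$ would be an $\omega$-clone with $\calP = \phi(\calPN) \subset \phi(\calC) \subset \phi(\calB) = \calB$, contradicting the maximality of $\calP$ granted by Lemma~\ref{lem:P-max}. Equivalently, given any $f\in\calB\setminus\calPN$, we have $\overline{f}\in\calB\setminus\calP$, so $\ofclone{\calP\cup\{\overline{f}\}}=\calB$ by Lemma~\ref{lem:P-max}, and applying~$\phi$ yields $\ofclone{\calPN\cup\{f\}}=\calB$. There is no real obstacle here---the only mildly delicate check is that summation commutes with~$\phi$, and this is a one-line change of variable---so the whole proof is essentially a corollary by symmetry.
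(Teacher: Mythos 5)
Your proposal is correct and takes essentially the same route as the paper: the paper also reduces to the maximality of $\calP$ by complementation, showing $\fclone{\calPN\cup\{f\}}=\{\overline{g}\mid g\in\fclone{\calP\cup\{\overline{f}\}}\}$ (i.e., that $f\mapsto\overline{f}$ commutes with the clone operations) and then closing under limits. You merely package the same observation more systematically as a lattice automorphism of $\calL_\omega$, with the operation-by-operation checks made explicit.
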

\begin{proof}
    Let $f\in\calB\setminus\calPN$.  By
    Lemma~\ref{lem:fops}\pref{op-comp}, we have
    $\overline{f}\notin \calP$.
We will now show that     
 $ \fclone{\calPN\cup \{f\}}
            = \{\overline{g}\mid g\in\fclone{\calP\cup\{\overline{f}\}}\}$.    
To see this, suppose that $g \in \fclone{\calPN\cup \{f\}}$.
Then $g$ is defined by a summation of a product of
functions in $\calA(\calPN\cup \{f\})$.
Complementing all of the functions in $\calA(\calPN\cup \{f\})$
exchanges the roles of $0$'s and $1$'s so the 
summing the product of the complements defines $\overline{g}$.
Since the complements of the functions in $\calA(\calPN\cup \{f\})$
are in $\calA(\calP \cup \{\overline{f}\})$, this shows that $\overline{g}$
is in $\fclone{\calP\cup\{\overline{f}}\}$. 
A similar argument gives  the other direction.

Closing under limits, we get
    \begin{align*}
        \ofclone{\calPN\cup \{f\}}
            &= \{\overline{g}\mid g\in\ofclone{\calP\cup\{\overline{f}\}}\}\\ 
            &= \{\overline{g}\mid g\in\calB\}\\
            &= \calB\,.\qedhere
    \end{align*}
\end{proof}

\begin{corollary}
\label{cor:PjoinPN}
    $\ofclone{\calP\cup\calPN} = \calB$
\end{corollary}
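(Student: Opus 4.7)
The plan is to leverage the maximality results that have just been established. By Lemma~\ref{lem:P-max}, $\calP$ is maximal in~$\calB$; this means that if we can exhibit a single function $f \in \calPN \setminus \calP$, then $\calP \cup \{f\} \subseteq \calP \cup \calPN$ forces $\ofclone{\calP \cup \calPN} \supseteq \ofclone{\calP \cup \{f\}} = \calB$, and the reverse inclusion is trivial.

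So the whole task reduces to finding one witness $f \in \calPN \setminus \calP$. The simplest candidate is the unary function $\delta_1 = [0,1]$. A one-line computation from Definition~\ref{def:ft} gives $\widehat{\delta_1}(0) = \tfrac12$ and $\widehat{\delta_1}(1) = -\tfrac12$. The Fourier coefficient at the even-weight input is non-negative and the coefficient at the odd-weight input is non-positive, so $\delta_1 \in \calPN$; but $\widehat{\delta_1}(1) < 0$, so $\delta_1 \notin \calP$.

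There is no real obstacle here: once the maximality of $\calP$ is in hand, the argument is almost a one-liner, and the only ``work'' is checking the Fourier transform of a single two-valued function. The proof can be written as two short sentences invoking Lemma~\ref{lem:P-max} together with the observation above.
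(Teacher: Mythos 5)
Your proof is correct and follows essentially the same route as the paper: exhibit a witness in $\calPN\setminus\calP$ and invoke the maximality of $\calP$ in $\calB$ (Lemma~\ref{lem:P-max}). The only difference is the choice of witness --- you use $\delta_1=[0,1]$ with $\widehat{\delta_1}=[\tfrac12,-\tfrac12]$, while the paper uses $f=[0,1,2]$ with $\fhat=[1,-\tfrac12,0]$ --- and both computations check out.
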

\begin{proof}
    Consider the symmetric function $f=[0,1,2]$.  We have
    $\fhat=[1,-\tfrac12,0]$, so $f\in\calPN\setminus\calP$ and the
    result is immediate from maximality of~$\calP$ in~$\calB$
    (Lemma~\ref{lem:P-max}).
\end{proof}

\section{Self-dual functions}
\label{sec:self-dual}

Recall from Definition~\ref{defn:clonelist} that $\SD$ is the class of
self-dual functions, i.e., functions for which $f(\bx) =
f(\overline{\bx})$ for all~$\bx$ of appropriate arity.

\begin{theorem}\label{thm:SDclone}
    $\ofclone{\SD} = \SD$ and $\calB_0 \subseteq \SD$.\end{theorem}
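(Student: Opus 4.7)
The plan is a straightforward verification: $\SD$ visibly contains $\EQ$ and is preserved by each of the clone-forming operations, so $\ofclone{\SD}=\SD$, and every nullary function is trivially self-dual, giving $\calB_0\subseteq\SD$.

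First I would note that $\EQ=[1,0,1]$ satisfies $\EQ(0,0)=\EQ(1,1)$ and $\EQ(0,1)=\EQ(1,0)$, so $\EQ\in\SD$. Then I would run through the five closure operations for an arbitrary $f,g\in\SD$ of appropriate arity. For permutation of arguments, $(f^{\pi})(\overline{\bx})=f(\overline{x_{\pi(1)}},\ldots,\overline{x_{\pi(k)}})=f(x_{\pi(1)},\ldots,x_{\pi(k)})=f^{\pi}(\bx)$. For introduction of a fictitious argument, if $h(\bx,z)=f(\bx)$, then $h(\overline{\bx},\overline{z})=f(\overline{\bx})=f(\bx)=h(\bx,z)$. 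For products, $(fg)(\overline{\bx})=f(\overline{\bx})g(\overline{\bx})=f(\bx)g(\bx)$. For summation, if $h(\bx)=\sum_{y\in\{0,1\}}f(\bx,y)$, then
\begin{equation*}
    h(\overline{\bx})=\sum_{y\in\{0,1\}}f(\overline{\bx},y)=\sum_{y'\in\{0,1\}}f(\overline{\bx},\overline{y'})=\sum_{y'\in\{0,1\}}f(\bx,y')=h(\bx),
\end{equation*}
where the second equality reindexes by $y'=\overline{y}$ and the third uses self-duality of~$f$.

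For closure under limits, suppose $f$ is a limit of $\SD$, witnessed by a finite $S_f\subseteq\SD$ and functions $f_\epsilon\in\fclone{S_f}$ with $\|f-f_\epsilon\|_\infty<\epsilon$. Since the first four closures have already been established, $\fclone{S_f}\subseteq\SD$, so each $f_\epsilon$ satisfies $f_\epsilon(\bx)=f_\epsilon(\overline{\bx})$. For any fixed~$\bx$ we then have $|f(\bx)-f(\overline{\bx})|\le|f(\bx)-f_\epsilon(\bx)|+|f_\epsilon(\overline{\bx})-f(\overline{\bx})|<2\epsilon$ for every $\epsilon>0$, so $f(\bx)=f(\overline{\bx})$ and $f\in\SD$. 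Combined with $\EQ\in\SD$, this yields $\ofclone{\SD}=\SD$.

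Finally, for $\calB_0\subseteq\SD$: a nullary function has a unique input, namely the empty tuple, whose bitwise complement is itself, so the self-duality condition holds vacuously. There is no real obstacle here; the whole argument is a direct check once one observes that bitwise complementation commutes with every clone operation.
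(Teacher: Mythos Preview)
Your proof is correct and follows essentially the same approach as the paper: a direct verification that $\EQ\in\SD$ and that $\SD$ is closed under each of the five operations. The only minor difference is in the $\calB_0\subseteq\SD$ step, where you observe directly that the empty tuple is its own complement, whereas the paper reuses the construction from its proof that $\calB_0\subseteq\calP$ (expressing a nullary constant as a sum of a self-dual unary function); your argument is more direct and equally valid.
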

\begin{proof}
By the definition of $\omega$-clones,  
the fact that $\SD$ is an $\omega$-clone follows 
from the fact that it contains $\EQ$ and that it is closed under the various operations.
    \begin{itemize}
    \item The equality function  is clearly
        self-dual  so it is in $\SD$.

    \item For closure under permuting arguments, let $f\in\SD$ be a
        $k$-ary function and let $\pi$ be a permutation of~$[k]$.
        Then $f^{\pi}\in\SD$, since
        $f^{\pi}(\overline\bx) = f(\pi(\overline{\bx})) =
        f(\overline{\pi(\bx)}) = f(\pi(\bx)) = f^{\pi}(\bx)$.

    \item For closure under introducing fictitious arguments, let
        $f\in\SD$ and define $h(\bx y) = f(\bx)$.  Then
        $h(\overline{\bx y}) = f(\overline{\bx}) = f(\bx) = h(\bx y)$,
        so $h$~is self-dual.

    \item For closure under summation, let $f\in\SD$ and define
        $h(\bx) = f(\bx0) + f(\bx1)$. Then
        $h(\overline{\bx}) = f(\overline{\bx}0) + f(\overline{\bx}1) =
        f(\bx1) + f(\bx0) = h(\bx)$, so $h\in\SD$.

    \item For closure under products, let $f,g\in\SD$ and consider
        $h(\bx)=f(\bx)\,g(\bx)$.  We have
        $h(\overline{\bx}) = f(\overline{\bx})\,g(\overline{\bx}) =
        f(\bx)\,g(\bx) = h(\bx)$, so $h\in\SD$.

    \item For closure under limits, let $f\in\calB$ and suppose that,
        for all $\epsilon>0$, there is some $f_\epsilon\in\SD$ such that
        $\|f-f_\epsilon\|_\infty < \epsilon$.  We must show that $f\in\SD$.

        For any~$\bx$, $|f(\bx) - f_\epsilon(\bx)| < \epsilon$ and
        $|f(\overline{\bx}) - f_\epsilon(\overline{\bx})| <
        \epsilon$.  But, since $f_\epsilon$ is self-dual, this gives
        $|f(\overline{\bx}) - f_\epsilon(\bx)| < \epsilon$.
        It follows that $|f(\bx) - f(\overline{\bx})| < 2\epsilon$ for
        all $\epsilon>0$, so $f(\bx) = f(\overline{\bx})$, so
        $f\in\SD$.
    \end{itemize}
    The proof that $\calB_0 \subseteq \SD$ is the same as the proof that $\calB_0\subseteq \calP$
    in the proof of Theorem~\ref{thm:P-clone}.\qedhere
\end{proof}

It turns out that the functions in $\SD$ also have a natural
characterisation in terms of their Fourier transforms.  This allows us
to study the relationship between $\SD$ and the $\omega$-clones from
Section~\ref{sec:Fourier}.

\begin{lemma} 
\label{lem:oddSD}
A $k$-ary function~$f$ is in \SD{} if and only if $\fhat(\bx) = 0$
for all $\bx$ with odd Hamming weight.
\end{lemma}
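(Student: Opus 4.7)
The plan is to derive the statement as a quick corollary of Lemma~\ref{lem:fops}(\ref{op-comp}), which says that for $h(\bx)=f(\overline{\bx})$ we have $\hhat(\bx)=(-1)^{\wt{\bx}}\fhat(\bx)$. Observe that $f\in\SD$ is exactly the statement that $f=\overline{f}$ in the notation of Section~\ref{sec:Fourier} (the complement in the sense defined just before Theorem~\ref{thm:PN-clone}).

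For the forward direction, if $f\in\SD$ then $f=\overline{f}$, so applying Lemma~\ref{lem:fops}(\ref{op-comp}) with $h=\overline{f}=f$ gives $\fhat(\bx)=(-1)^{\wt{\bx}}\fhat(\bx)$ for every $\bx$. When $\wt{\bx}$ is odd this forces $\fhat(\bx)=-\fhat(\bx)$, hence $\fhat(\bx)=0$.

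For the converse, suppose $\fhat(\bx)=0$ for every $\bx$ of odd Hamming weight. Then the identity $\fhat(\bx)=(-1)^{\wt{\bx}}\fhat(\bx)$ holds trivially for all $\bx$ (both sides are $\fhat(\bx)$ when $\wt{\bx}$ is even, and both sides are $0$ when $\wt{\bx}$ is odd). By Lemma~\ref{lem:fops}(\ref{op-comp}), this common value equals $\widehat{\overline{f}}(\bx)$, so $\widehat{\overline{f}}=\fhat$. Since the Fourier transform is given by tensoring the invertible matrix $H=\tfrac12\mtwo{1}{1}{1}{-1}$ (as remarked just before Definition~\ref{def:HyperIsing}), it is a bijection on functions $\Bools^k\to\allReals$. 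Therefore $\overline{f}=f$, i.e., $f\in\SD$.

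There is no real obstacle; the only thing one might worry about is the invertibility of the Fourier transform, but this is immediate from the factorisation $\widehat{\,\cdot\,}=H^{\otimes k}$ mentioned in the paper. The whole argument is two lines once Lemma~\ref{lem:fops}(\ref{op-comp}) is in hand.
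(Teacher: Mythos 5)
Your proof is correct and takes essentially the same route as the paper: the forward direction is identical (apply Lemma~\ref{lem:fops}(\ref{op-comp}) with $h=f$ and conclude $\fhat(\bx)=-\fhat(\bx)$ for odd-weight~$\bx$). For the converse the paper writes out the Fourier inversion sum explicitly and checks it is unchanged under $\bx\mapsto\overline{\bx}$ when restricted to even-weight~$\bw$, whereas you package the same inversion fact as injectivity of $H^{\otimes k}$ applied to the identity $\widehat{\overline{f}}=\fhat$; both are valid and rest on the same underlying fact.
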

\begin{proof} Suppose $f\in\SD$.  We have $f(\bx)=f(\overline{\bx})$
  so, by Lemma~\ref{lem:fops}\pref{op-comp}, $\fhat(\bx) =
  (-1)^{|\bx|}\fhat(\bx)$.  When $|\bx|$~is odd, this
  implies that $\fhat(\bx)=0$.

  Conversely, if $\fhat(\bx)=0$ for all~$\bx$ with $|\bx|$~odd, then
\begin{align*}
    2^k f(\bx)\ \ &= \!\sum_{\bw\in\Bools^k}\!\! (-1)^{|\bw\wedge\bx|} \fhat(\bw) \\
      &= \!\sum_{\substack{\bw\in\Bools^k,\\|\bw|\text{ even}}} \!\!(-1)^{|\bw\wedge\bx|} \fhat(\bw) \\
      &= \!\sum_{\substack{\bw\in\Bools^k,\\|\bw|\text{ even}}} \!\!(-1)^{|\bw\wedge\overline{\bx}|} \fhat(\bw)
      = 2^k f(\overline{\bx})\,,
\end{align*}
so $f\in\SD$.
\end{proof}

\begin{theorem}
\label{thm:SDP}
    $\calP \cap \SD = \calPN \cap \SD = \calP\cap\calPN$.
\end{theorem}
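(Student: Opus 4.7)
The plan is to reduce all three intersections to the same condition on Fourier coefficients, using Lemma~\ref{lem:oddSD} for the self-duality part and the definitions of $\calP$ and $\calPN$ directly. Since $\omega$-clones here are subsets of $\calB$, every function under consideration is already non-negative; the only real content is comparing the sign conditions on the Fourier coefficients.

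First I would recall the three Fourier characterizations: $f \in \calP$ iff $\fhat(\bx) \geq 0$ for all $\bx$; $f \in \calPN$ iff $\fhat(\bx) \geq 0$ when $|\bx|$ is even and $\fhat(\bx) \leq 0$ when $|\bx|$ is odd; and, by Lemma~\ref{lem:oddSD}, $f \in \SD$ iff $\fhat(\bx) = 0$ for every $\bx$ of odd Hamming weight. I would then intersect these conditions pairwise.

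For $\calP \cap \SD$, the constraint is $\fhat(\bx) \geq 0$ for all $\bx$ together with $\fhat(\bx)=0$ for $|\bx|$ odd, which is equivalent to $\fhat(\bx) \geq 0$ for $|\bx|$ even and $\fhat(\bx) = 0$ for $|\bx|$ odd. For $\calPN \cap \SD$, the $\SD$ condition already forces $\fhat(\bx) = 0$ for $|\bx|$ odd, which absorbs the $\calPN$ sign constraint on odd weights, leaving exactly the same characterization. For $\calP \cap \calPN$, an odd-weight coefficient must be simultaneously $\geq 0$ (from $\calP$) and $\leq 0$ (from $\calPN$), hence equal to zero, while even-weight coefficients need only be $\geq 0$; so once again we recover the same characterization.

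Since all three intersections are characterized by the identical Fourier condition, they are equal. I do not expect any obstacles here, as this is a routine unpacking of definitions; the only small point to flag for the reader is that in $\calP \cap \calPN$ the self-duality of the functions involved is not assumed but emerges automatically from the vanishing of the odd-weight Fourier coefficients, which is precisely the content of Lemma~\ref{lem:oddSD}.
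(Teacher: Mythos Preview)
Your proposal is correct and follows essentially the same approach as the paper: both reduce each intersection to the common Fourier characterization ``$\fhat(\bx)\geq 0$ for even-weight $\bx$ and $\fhat(\bx)=0$ for odd-weight $\bx$'' using the definitions of $\calP$, $\calPN$ and Lemma~\ref{lem:oddSD}. The paper's proof is simply a one-sentence version of your argument.
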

\begin{proof}
    It is immediate from the definitions and Lemma~\ref{lem:oddSD}
    that each of these is the class of functions~$f$ such that
    $\fhat(\bx)\geq 0$ if $\wt{\bx}$~is even and $\fhat(\bx)=0$
    if $\wt{\bx}$ is odd.
\end{proof}
 
Recall from Section~\ref{sec:lfp} that the intersection of two $\omega$-clones is an $\omega$-clone. 
In the light of Theorem~\ref{thm:SDP}, we
make the following definition.
\begin{definition}
Let $\SDP$ be the $\omega$-clone $\SD \cap \calP \cap \calPN$.
\end{definition} 
 Theorem~\ref{thm:SDP} makes
it clear that $\SDP = \calP \cap \SD = \calPN \cap \SD = \calP\cap\calPN$.

\begin{lemma}
\label{lem:SDPN-incomp}
    $\SD$, $\calP$ and $\calPN$ are pairwise incomparable
    under  subset inclusion. 
\end{lemma}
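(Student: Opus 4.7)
The plan is to exhibit explicit witnesses for each of the six non-inclusions required by pairwise incomparability, using small symmetric and unary functions whose Fourier coefficients are easy to compute directly from Definition~\ref{def:ft} and Lemma~\ref{lem:oddSD}. Essentially everything reduces to a short calculation; there is no real obstacle.

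First I would observe a useful reduction: if $f\in\SD$, then by Lemma~\ref{lem:oddSD} every odd-weight Fourier coefficient of~$f$ is zero, so the odd-weight conditions defining~$\calPN$ are automatically satisfied. Hence $f\in\SD\cap\calP$ if and only if $f\in\SD\cap\calPN$ (both being $\SDP$ by Theorem~\ref{thm:SDP}), so a single self-dual function with some negative even-weight Fourier coefficient simultaneously witnesses $\SD\not\subseteq\calP$ and $\SD\not\subseteq\calPN$.

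Next I would exhibit three witnesses. Take the symmetric binary function $f=[3,4,3]$: it is self-dual, and a direct calculation from Definition~\ref{def:ft} gives $\fhat=[7/2,0,-1/2]$ (with the value $-1/2$ at the even-weight input $11$), so $f\in\SD\setminus(\calP\cup\calPN)$. Next take $\delta_0=[1,0]$: then $\widehat{\delta_0}(0)=\widehat{\delta_0}(1)=1/2$, so $\delta_0\in\calP$ but the positive value at the odd-weight input~$1$ means $\delta_0\notin\calPN$; also $\delta_0(0)\neq\delta_0(1)$, so $\delta_0\notin\SD$. Finally take $\delta_1=[0,1]$: then $\widehat{\delta_1}(0)=1/2\geq 0$ and $\widehat{\delta_1}(1)=-1/2\leq 0$, so $\delta_1\in\calPN$, but $\delta_1\notin\calP$ and $\delta_1\notin\SD$.

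Tabulating the witnesses: $f$ shows $\SD\not\subseteq\calP$ and $\SD\not\subseteq\calPN$; $\delta_0$ shows $\calP\not\subseteq\SD$ and $\calP\not\subseteq\calPN$; $\delta_1$ shows $\calPN\not\subseteq\SD$ and $\calPN\not\subseteq\calP$. Together these cover all six non-inclusions, establishing pairwise incomparability. The only ``work'' is the three routine Fourier computations, which can be presented in one or two displayed equations each.
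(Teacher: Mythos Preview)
Your proposal is correct and follows essentially the same approach as the paper: exhibit one witness in each of $\SD$, $\calP$, $\calPN$ that lies in neither of the other two. The paper uses the binary functions $[0,1,0]$, $[2,1,0]$, $[0,1,2]$ in place of your $[3,4,3]$, $\delta_0$, $\delta_1$; your choice of unary $\delta_0$, $\delta_1$ makes the $\calP$ and $\calPN$ computations marginally simpler, but the argument is otherwise identical.
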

\begin{proof}
    Consider the functions $f = [0,1,0]$ (the binary disequality
    function), $g = [2,1,0]$ and $h = [0,1,2]$.  We have $\fhat =
   [\tfrac12, 0, -\tfrac12]$, 
   $\ghat = [1,\tfrac12,0]$ and $\hhat =
    [1,-\tfrac12,0]$.   
    Lemma~\ref{lem:oddSD} and the definitions of $\calP$ and $\calPN$ imply that, 
    among the three
    $\omega$-clones in the statement, $f$~is only in~$\SD$, $g$~is
    only in~$\calP$ and $h$~is only in~$\calPN$.
\end{proof}

For the relationship between $\SDP$ and $\Hferro$, we use the concept
of \emph{log-supermodular} functions.  A function
$f\colon\Bools^k\to\NonNegReals$ is log-supermodular if
$f(\bx\lor\by)\,f(\bx\land\by)\geq f(\bx)\,f(\by)$ for all
$\bx,\by\in\Bools^k$, where $\lor$ and $\land$ are applied bitwise.

\begin{definition}
    Let $\LSM$ be the set of all log-supermodular functions.
\end{definition}

$\LSM$ is
an $\omega$-clone \cite[Lemma~4.2]{LSM}.  Note that all 
unary functions are trivially  log-supermodular. Nullary functions are also log-supermodular,
for example, using the proof that $\calB_0 \subseteq \calP$ (proof of Theorem~\ref{thm:P-clone}).

The following characterisation of permissive log-supermodular
functions of arity at least~$2$ is due essentially to
Topkis~\cite{Top1978:submod} (see also \cite[Lemma~5.1]{LSM}).  It provides a
simple way to check that a permissive function is log-supermodular.  A
$2$-pinning of a $k$-ary function~$f$ (with $k\geq 2$) is any binary function $g(x,y) = f(z_1,
\dots, z_k)$ where each $z_i\in\{0,1,x,y\}$ ($i\in[k]$), such that $x$ and~$y$
each appear exactly once in the sequence $z_1, \dots, z_k$. It is immediate from the definition that every $2$-pinning of a log-supermodular function~$f$ is also log-supermodular. The
following lemma states that, for permissive functions, this condition is also
sufficient.

\begin{lemma}[\cite{Top1978:submod}]\label{lem:topkis}
A permissive $k$-ary function is log-supermodular if, and only if, every $2$-pinning of~$f$ is log-supermodular.
\end{lemma}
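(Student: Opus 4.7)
The $(\Rightarrow)$ direction is already observed in the text immediately preceding the lemma, so the plan is to concentrate on the $(\Leftarrow)$ direction. Because $f$ is permissive, I can pass to $g := \log f$, which is a well-defined real-valued function on $\{0,1\}^k$. Then log-supermodularity of $f$ is equivalent to supermodularity of $g$,
\[
g(\bx \vee \by) + g(\bx \wedge \by) \;\geq\; g(\bx) + g(\by) \quad\text{for all } \bx,\by\in\{0,1\}^k,
\]
and the hypothesis that every $2$-pinning of $f$ is log-supermodular translates, after taking logs, into the ``local'' supermodular inequality: for every $\bw \in \{0,1\}^k$ and every pair of distinct coordinates $i,j$ with $w_i = w_j = 0$,
\[
g(\bw^{+ij}) + g(\bw) \;\geq\; g(\bw^{+i}) + g(\bw^{+j}),
\]
where $\bw^{+S}$ denotes $\bw$ with the coordinates in $S$ flipped to $1$. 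The task reduces to deriving the global inequality from this local one.

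The global inequality is trivial when $\bx$ and $\by$ are comparable, so I would fix an incomparable pair, put $\bz := \bx \wedge \by$, and enumerate $\bx \setminus \by = \{a_1,\dots,a_p\}$. Defining the chain
\[
\bx_i = \bz \cup \{a_1,\dots,a_i\}, \qquad \by_i = \by \cup \{a_1,\dots,a_i\} \quad (0 \le i \le p),
\]
I get $\bx_0 = \bx\wedge\by$, $\bx_p = \bx$, $\by_0 = \by$, $\by_p = \bx\vee\by$, and the target inequality telescopes to
\[
\sum_{i=1}^{p} \bigl[ g(\by_i) - g(\by_{i-1}) \bigr] \;\geq\; \sum_{i=1}^{p} \bigl[ g(\bx_i) - g(\bx_{i-1}) \bigr].
\]
It therefore suffices to prove, for each $i$, the per-step monotonicity of marginal gains: $g(\by_i) - g(\by_{i-1}) \geq g(\bx_i) - g(\bx_{i-1})$, which intuitively says that adding $a_i$ to the larger base $\by_{i-1}$ yields at least as much gain as adding it to the smaller base $\bx_{i-1}$.

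To prove this per-step inequality, I enumerate $\by \setminus \bx = \{b_1,\dots,b_q\}$ and build a second chain $\bx_{i-1} = \bu_0 \subset \bu_1 \subset \cdots \subset \bu_q = \by_{i-1}$ with $\bu_j = \bx_{i-1} \cup \{b_1,\dots,b_j\}$. Applying the local supermodular inequality with base point $\bu_{j-1}$ and coordinate pair $\{a_i, b_j\}$ (both still set to $0$ in $\bu_{j-1}$ because $a_i \notin \by \supseteq \bu_{j-1}$'s added bits and the $b_j$ are distinct elements of $\by \setminus \bx$) gives
\[
g(\bu_j \cup \{a_i\}) - g(\bu_j) \;\geq\; g(\bu_{j-1} \cup \{a_i\}) - g(\bu_{j-1}),
\]
and telescoping these $q$ inequalities yields exactly $g(\by_{i-1}\cup\{a_i\}) - g(\by_{i-1}) \geq g(\bx_{i-1}\cup\{a_i\}) - g(\bx_{i-1})$, which is the required per-step inequality. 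The main obstacle is purely bookkeeping: keeping track of which coordinates are $0$ versus $1$ at each chain step so that each application of the hypothesis invokes the $2$-pinning at a genuinely ``free'' pair of coordinates; once this is laid out carefully, the argument is the standard ``local implies global'' reduction for supermodular functions on the Boolean lattice.
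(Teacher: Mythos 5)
Your proof is correct: the reduction to supermodularity of $\log f$ (valid precisely because $f$ is permissive), the double-chain construction, and the telescoping of the local two-coordinate inequalities all check out, including the bookkeeping that $a_i$ and $b_j$ are genuinely $0$ at each base point $\bu_{j-1}$. The paper itself gives no proof---it simply cites Topkis (and Lemma~5.1 of the expressibility paper)---and your argument is exactly the standard ``local implies global'' proof underlying that cited result, so there is nothing to add.
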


\begin{theorem}
\label{thm:Hferro-SDP}
    $\Hferro\subset\SDP$.
\end{theorem}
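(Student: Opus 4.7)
The plan is to prove the inclusion $\Hferro\subseteq\SDP$ and then exhibit a witness showing that it is strict.

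For the inclusion, I would use the fact that $\SDP$ is an $\omega$-clone (as the intersection of the $\omega$-clones given by Theorems~\ref{thm:SDclone}, \ref{thm:P-clone} and~\ref{thm:PN-clone}), so it suffices to check that every generator of $\Hferro$ lies in $\SDP$. The containment $\calB_0\subseteq\SDP$ is already supplied by those three theorems. For each $\hIsing{k}{\lambda}$ with $k\geq 2$ and $\lambda\in[0,1]$, self-duality is immediate, since the value depends only on whether the input lies in $\{\bzero,\bone\}$, a condition preserved by complementation. Moreover, Lemma~\ref{lem:ft-hIsing} shows that the Fourier coefficients are $\lambda+(1-\lambda)/2^{k-1}$ at $\bzero$, $(1-\lambda)/2^{k-1}$ at positive even weights, and $0$ at odd weights; all of these are non-negative when $\lambda\leq 1$. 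Hence $\hIsing{k}{\lambda}\in\SD\cap\calP\cap\calPN=\SDP$.

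For strictness, my plan is to exploit log-supermodularity. First I would establish that $\Hferro\subseteq\LSM$, using that $\LSM$ is an $\omega$-clone \cite[Lemma~4.2]{LSM}, that nullary functions are trivially log-supermodular, and that each $\hIsing{k}{\lambda}$ with $\lambda\in[0,1]$ is log-supermodular. For $\lambda>0$ this reduces to submodularity of $\bz\mapsto -\log\hIsing{k}{\lambda}(\bz)$, which takes the value $0$ on $\{\bzero,\bone\}$ and the non-negative constant $-\log\lambda$ elsewhere; submodularity then follows from a short case split on how many of $\bx,\by$ lie in $\{\bzero,\bone\}$. The case $\lambda=0$ is handled directly.

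Then I would exhibit as witness the symmetric self-dual quaternary function $f=[2,0,1,0,2]$. A routine computation using Definition~\ref{def:ft} (combined with Lemma~\ref{lem:oddSD}, which already forces odd-weight coefficients to vanish) gives $\fhat=[5/8,0,1/8,0,5/8]$, so that $f\in\calP\cap\calPN\cap\SD=\SDP$. However, taking $\bx=(1,1,0,0)$ and $\by=(1,0,1,0)$ yields $f(\bx\vee\by)\,f(\bx\wedge\by)=f(1110)\,f(1000)=0<1=f(\bx)\,f(\by)$, so $f\notin\LSM$ and therefore $f\notin\Hferro$. The only step that requires any real thought is guessing the witness: one wants a self-dual function whose Fourier coefficients at even weights are all non-negative but whose value on some tuple of intermediate Hamming weight is strictly larger than on a comparable extreme tuple, and a quaternary symmetric ansatz $[a,b,c,b,a]$ with $b=0$ and $c>0$ makes both constraints easy to satisfy simultaneously. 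The remaining verifications are mechanical.
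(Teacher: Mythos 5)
Your proposal is correct and follows essentially the same route as the paper: the inclusion is read off from Lemma~\ref{lem:ft-hIsing} (plus Theorem~\ref{thm:SDP}), and strictness is obtained by showing $\Hferro\subseteq\LSM$ and exhibiting a function in $\SDP\setminus\LSM$. The only difference is the witness: the paper uses the permissive function $[13,4,1,4,13]$ and certifies non-log-supermodularity via a $2$-pinning (Lemma~\ref{lem:topkis}), whereas your non-permissive $[2,0,1,0,2]$, whose Fourier transform $[\tfrac58,0,\tfrac18,0,\tfrac58]$ and log-supermodularity violation at $\bx=(1,1,0,0)$, $\by=(1,0,1,0)$ both check out, violates the defining inequality directly.
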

\begin{proof}
    Inspection of Lemma~\ref{lem:ft-hIsing} shows that
    $\Hferro\subseteq \calP\cap \calPN$ and, by Theorem~\ref{thm:SDP},
    $\calP\cap\calPN = \SDP$.  It remains to show that the inclusion
    is strict.

    It is easy to check that every function in
    $\FerroHyperIsing\cup\calB_0$ is log-supermodular.  It follows that
    $\Hferro$~is a subset of the $\omega$-clone of all log-supermodular
    functions so, in particular, every function in $\Hferro$ is
    log-supermodular.

    Consider the $4$-ary function~$f = [13,4,1,4,13]$.  This
    function is not log-supermodular by Lemma~\ref{lem:topkis}, since the
    pinning $g(x,y) =
    f(x,y,0,0)$ has $g(1,1)\,g(0,0) = 13 < g(0,1)\,g(1,0) = 16$.
    Therefore, $f\notin\Hferro$.  However, $f\in\SD$ and we have
    $\fhat = [4,0,\tfrac32,0,0]$ (the odd-weight coefficients are zero
    by Lemma~\ref{lem:oddSD}) so $f\in\calP\cap\calPN\cap\SD = \SDP$.
\end{proof}

\subsection{Self-dual functions and Ising}

In this section, we prove that $\SD = \Ianti$ (Theorem~\ref{thm:SD-Ising}).  To do this, we
introduce a  functional clone, $\Parev$, of weighted, even-arity parity
functions.  Recall from Definition~\ref{defn:parity} that, for $k\in\nats$
and $\lambda\in\NonNegReals$, $\parityk(\bx)=1$ if $\wt{\bx}$~is even,
and $\parityk(\bx)=\lambda$, otherwise.
Note that, when $k$~is even, $\parityk$~is self-dual. Note also that
$\parity{2}{\lambda} = \hIsing{2}{\lambda}$.  Our new clone is
\begin{equation*}
    \Parev = \fclone{\{\parityk\mid k\text{ is even},
                                     \lambda\in\NonNegReals\}}\,.
\end{equation*}

\begin{lemma}
\label{lem:SD-Parev} 
$\SD \subseteq \ofclone{\Parev \cup\calB'_0}$.
\end{lemma}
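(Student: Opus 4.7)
The plan is to handle the permissive case of $f \in \SD$ explicitly via the Fourier expansion of $\log f$, and then to reduce the general case to this one by approximating $f$ by $f + \delta$.

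Suppose first that $f \colon \Bools^k \to \PosReals$ is permissive and self-dual. Then $\log f$ is real-valued and self-dual, so by Lemma~\ref{lem:oddSD} all of its Fourier coefficients $a_\bz := \widehat{\log f}(\bz)$ with $\wt{\bz}$ odd vanish, and Fourier inversion gives
$$\log f(\bx) = \sum_{\bz \in \Bools^k : \wt{\bz} \text{ even}} a_\bz\, (-1)^{\wt{\bx \wedge \bz}}.$$
The key algebraic identity is that for every $a \in \allReals$ and every $\bz$ with $\wt{\bz} = m \geq 1$,
$$\exp\bigl(a\,(-1)^{\wt{\bx \wedge \bz}}\bigr) = e^{a} \cdot \parity{m}{e^{-2a}}(\bx_\bz),$$
where $\bx_\bz$ denotes the subtuple of $\bx$ consisting of those coordinates~$x_i$ with $z_i = 1$; this is checked by inspecting the two cases of the sign. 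Exponentiating the Fourier sum and applying this identity term by term (the $\bz = \bzero$ term contributes a bare nullary constant $e^{a_\bzero}$) expresses $f$ as a product of at most $2^{k-1}$ factors, each of which is either a nullary constant in $\calB_0$ or a parity function in $\Parev$ applied to a subset of the variables via introduction of fictitious arguments and permutation. Since $\calB_0 \subseteq \ofclone{\calB'_0}$ by Lemma~\ref{lem:finitegen}, this exhibits $f \in \ofclone{\Parev \cup \calB'_0}$.

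For general $f \in \SD$ of arity~$k$, the perturbation $f_\delta(\bx) := f(\bx) + \delta$ is permissive and self-dual for every $\delta > 0$. Fix the finite set
$$S := \calB'_0 \cup \bigcup_{\substack{2 \leq m \leq k \\ m \text{ even}}} \paritybasis[m] \subseteq \Parev \cup \calB'_0.$$
Given $\epsilon > 0$, take $\delta = \epsilon/2$. The permissive construction writes $f_\delta$ as a product of at most $2^{k-1}$ factors, and Lemmas~\ref{lem:finitegen} and~\ref{lem:finitegenparity} allow each nullary constant and each parity factor to be approximated to arbitrary precision by functions in $\fclone{S}$. Since the factors are uniformly bounded for fixed $\delta$ and a product of bounded factors is continuous in its arguments, a sufficiently tight simultaneous approximation yields some $\hat f \in \fclone{S}$ with $\|f_\delta - \hat f\|_\infty \leq \epsilon/2$, whence $\|f - \hat f\|_\infty < \epsilon$. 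Thus $f$ is a limit of $\fclone{S}$ in the sense of Section~\ref{sec:fclones}, establishing $f \in \ofclone{\Parev \cup \calB'_0}$.

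The genuine obstacle is the last step: when $f$ has zeros, the Fourier coefficients $a_\bz$ of $\log f_\delta$ and hence the parity parameters $e^{-2 a_\bz}$ blow up as $\delta \to 0^+$, so one cannot simply pass the limit through the factorisation. What saves the argument is that Lemmas~\ref{lem:finitegen} and~\ref{lem:finitegenparity} supply approximations of \emph{every} nullary constant and of $\parity{m}{\lambda}$ for \emph{every} $\lambda \in \NonNegReals$ out of the single fixed finite basis; each prescribed $\epsilon$ fixes a finite~$\delta$ and hence finite parameters, after which the error bookkeeping through the finite product is routine.
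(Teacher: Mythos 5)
Your proposal is correct and follows essentially the same route as the paper: the permissive case via exponentiating the Fourier expansion of $\log f$ (with odd-weight coefficients vanishing by self-duality) into a product of constants and even-arity parity functions, and the general case by the additive perturbation $f+\delta$, which the paper realises as $H_j = F + 2^{-j}$. Your closing remark about fixing the finite basis $S$ independently of $\epsilon$ is exactly the point the paper handles via Lemmas~\ref{lem:finitegen}, \ref{lem:finitegenparity} and the transitivity of limits (Lemma~\ref{lem:transitive}).
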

\begin{proof}
Recall from Definition~\ref{def:parityfinite}
that 
$\paritybasis = \{\parityk[1/e],\parityk[1/2],\parityk[2],\parityk[e]\}$.
Let $\parityuptokbasis = \bigcup_{1\leq j \leq \lfloor k/2 \rfloor} \paritybasis[2j]$.
Recall from Definition~\ref{def:permissive} that
a function $F\colon\Bools^k\to\NonNegReals$ is \emph{permissive} if its range is~$\PosReals$. 
The proof splits into two parts.
First, we show that every $k$-ary permissive function in $\SD$ is 
a limit of  
$\fclone{\calB'_0 \cup  \parityuptokbasis}$.
Then we show the same for every other function in $\SD$.

{\bf Part One:} 
Consider any permissive $k$-ary  function~$F \in \SD$.
Let $f(\by) = \log F(\by)$.  (Note that $f$~is not necessarily in~$\calB$, since its range may include negative numbers; this is not a problem.) By the definition of the Fourier transform,
    \begin{equation*}
        f(\mathbf{y})
            = 2^{-k} \sum_{\bw\in \Bools^k} (-1)^{\wt{\bw\wedge\by}} \fhat(\mathbf{w}).
    \end{equation*}
      Exponentiating gives
    \begin{equation*}
        F(\by) = \prod_{\bw\in \Bools^k} \exp\big(
                     2^{-k} (-1)^{\wt{\bw \wedge \by}}\,\fhat(\bw)\big)\,.
    \end{equation*}
    Since $f$ is self-dual, we may restrict the  product to $\bw$~with even
    Hamming weight, since the odd-weight terms vanish by
    Lemma~\ref{lem:oddSD}.
Let     
    \begin{equation*}
        G_{\bw}(\by) = \exp\big(2^{-k} (-1)^{\wt{\bw\wedge\by}}\,\fhat(\bw)\big).
    \end{equation*}    
Then $F(\by) = \prod_{\bw\in \Bools^k: \text{$\wt{\bw}$ is even}} G_{\bw}(\by)$, so to
finish (using Lemma~\ref{lem:transitive}) we just have to show that, for any $\bw\in \Bools^k$ with even
Hamming weight,  $G_{\bw}$   is a limit of  
$\fclone{\calB'_0 \cup  \parityuptokbasis}$.

Consider any such $\bw$.
Let $j= \wt{\bw}/2$.
Given any $\by$, let $\bz$ be the 
arity-$2j$ tuple obtained from~$\by$ by
    deleting all positions that are $0$ in~$\bw$.  
Let
    \begin{equation*}
        G'_{\bw}(\bz) = \exp\big(2^{-k} (-1)^{\wt{\bz}}\,\fhat(\bw)\big).
    \end{equation*}    
Note that the arity-$k$ function $G_{\bw}$ is constructed from 
the arity-$2j$ function $G'_{\bw}$ by
adding fictitous arguments.    
We will show that  every arity-$2j$ function $G'_{\bw}$
is a limit of a function in  $\fclone{\calB_0' \cup \paritybasis[2j]}$.
This is all that we need since, by the closure of $\omega$-clones under
the addition of fictitious arguments, it also implies that 
$G_{\bw}$ is  a limit of  
$\fclone{\calB'_0 \cup  \parityuptokbasis}$.

 Now $G'_{\bw}(\bz)$
    is $\exp(2^{-k}\fhat(\bw))$ if $\wt{\bz}$ is even and it is
    $\exp(-2^{-k} \fhat(\bw))$ otherwise.  Therefore, $G'_{\bw}$ is 
$\lambda \times \parity{2j}{1/\lambda^2}$, where   $\lambda = \exp(2^{-k}\fhat(\bw))$.  
To finish note that $\lambda$ is a limit of   $\fclone{\calB'_0}$ (by Lemma~\ref{lem:finitegen})
and  $\parity{2j}{1/\lambda^2}$ is a limit of  $  \fclone{\paritybasis[2j]}$
(by Lemma~\ref{lem:finitegenparity}). Finally, their product is
a limit of $\fclone{\calB'_0 \cup \paritybasis[2j]}$ by Lemma~\ref{lem:transitive}.

{\bf Part Two:}
Consider any non-permissive $k$-ary function $F\in \SD$.
Let $G$ be the $(k+2)$-ary function
    \begin{equation*}
        G(\bx yz) = \begin{cases}
                    \ F(\bx) & \text{if }y\neq z\\
                    \ 1      & \text{if }y=z.
                    \end{cases}
    \end{equation*}
For every positive integer~$j$, let $H_j$ be the $k$-ary function
    \begin{equation*}
        H_j(\bx)
            = 2^{-(j+1)}\sum_{y,z\in\Bools} G(\bx yz)\,\big(\hIsing{2}{2}(yz)\big)^j.
                \end{equation*}
Note that $H_j$ is a good approximation for $F$ in the sense that
$H_j(\bx) = F(\bx) + 2^{-j}$.  Since $F$~is self-dual, so is~$H_j$; since $H_j(\bx) > F(\bx)\geq 0$, $H_j$ is permissive.
By Part One, $H_j$ is a limit of 
$\fclone{\calB'_0 \cup  \parityuptokbasis}$.

For any $\epsilon>0$, there is $j$ such that $2^{-j}<\epsilon$ and thus
$\|F-H_j\|_\infty<\epsilon$. By transitivity of limits (Lemma~\ref{lem:transitive}),
$F$ itself is a limit of $\fclone{\calB'_0 \cup  \parityuptokbasis}$.    
\end{proof}

We define the family of $k$-ary functions
\begin{equation*}
    \ForceOdd{k}(\bx) = \begin{cases}
                        \ 1 &\text{if $\wt{\bx}$ is odd} \\
                        \ 0 &\text{otherwise.}
                        \end{cases}
\end{equation*}
The function $\ForceOdd{4}$ turns out to be particularly useful.

 \begin{lemma}
\label{lem:Parev-Ianti} 
$\Parev \subseteq \ofclone{\AntiFerroIsing\cup\calB'_0\cup\{\ForceOdd{4}\}}$.
\end{lemma}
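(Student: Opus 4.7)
The plan is to show that each generator $\parity{k}{\lambda}$ of $\Parev$ (for $k$ even and $\lambda \in \NonNegReals$) lies in the $\omega$-clone $\ofclone{\AntiFerroIsing \cup \calB'_0 \cup \{\ForceOdd{4}\}}$; since this $\omega$-clone is in particular a functional clone, the containment $\Parev \subseteq \ofclone{\AntiFerroIsing \cup \calB'_0 \cup \{\ForceOdd{4}\}}$ follows.  Throughout, write $E_k$ for the even-parity indicator on $k$ bits, i.e., $E_k(\bx) = 1$ if $\wt{\bx}$ is even and $E_k(\bx) = 0$ otherwise.

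For the base case $k=2$, observe that $\parity{2}{\lambda} = \hIsing{2}{\lambda}$.  When $\lambda \geq 1$ this is in $\AntiFerroIsing$ directly, and when $\lambda \in [0,1]$ Lemma~\ref{lem:I-ferroI} places it in $\fclone{\AntiFerroIsing \cup \calB_0}$, which is absorbed into the target clone because $\calB_0 \subseteq \ofclone{\calB'_0}$ by Lemma~\ref{lem:finitegen}.  For even $k \geq 4$, the key reduction is
\[
  \parity{k}{\lambda}(\bx) \;=\; \sum_y E_k(x_2, \ldots, x_k, y)\,\parity{2}{\lambda}(x_1, y).
\]
This is a routine verification: the $E_k$ factor pins $y = x_2 \oplus \cdots \oplus x_k$, after which $\parity{2}{\lambda}(x_1, y)$ evaluates to $1$ or $\lambda$ according to the parity of $\wt{\bx}$.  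Thus the whole problem reduces to exhibiting $E_k$ in the target $\omega$-clone for every even $k \geq 4$.

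To produce the parity indicators, I would first extract $\NEQ = \ForceOdd{2}$ from $\ForceOdd{4}$ and $\EQ$ via
\[
  \NEQ(x_1, x_2) \;=\; \tfrac{1}{2}\sum_{y,z} \ForceOdd{4}(x_1, x_2, y, z)\,\EQ(y, z),
\]
using the constant $\tfrac{1}{2} \in \calB'_0$.  Then induct on even $k \geq 4$.  Given $\ForceOdd{k}$, the gadget $\sum_y \ForceOdd{k}(x_1, \ldots, x_{k-1}, y)\,\ForceOdd{4}(y, x_k, x_{k+1}, x_{k+2})$ evaluates to $1$ precisely when $x_1 \oplus \cdots \oplus x_{k+2} = 0$ (two odd-parity constraints chained through a common variable force even total parity), so it equals $E_{k+2}$.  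Appending a $\NEQ$-gadget to one external argument of $E_{k+2}$ toggles the parity and yields $\ForceOdd{k+2}$, allowing the induction to continue; the base case $E_4$ comes from applying the same $\NEQ$-toggle to $\ForceOdd{4}$ itself.

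The main obstacle is organisational rather than conceptual: each $\ForceOdd{4}$-gluing flips the parity of the indicator, so it is essential to carry \emph{both} $\ForceOdd{k}$ and $E_k$ forward at each induction level, which is precisely what the $\NEQ$-toggle provides.  Once the parity-arithmetic-mod-$2$ identities have been checked, the four steps above compose to express every generator of $\Parev$ inside $\ofclone{\AntiFerroIsing \cup \calB'_0 \cup \{\ForceOdd{4}\}}$, and the lemma follows.
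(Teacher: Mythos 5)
Your proposal is correct and follows essentially the same route as the paper: decompose $\parityk$ as a parity indicator chained with a binary Ising function (handling $\lambda\le 1$ versus $\lambda>1$ via Lemma~\ref{lem:I-ferroI} and the nullary constants), and build the arity-$k$ parity indicators by gluing copies of $\ForceOdd{4}$ through summed-out variables. The only cosmetic difference is that you carry the even-parity indicator $E_k$ and restore odd parity with a $\NEQ$-toggle, whereas the paper works with $\ForceOdd{k}$ throughout and absorbs the parity flip into the factor $\lambda\cdot\hIsing{2}{1/\lambda}(y,x_k)$.
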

\begin{proof}
     For any $k>1$, we have
    \begin{equation*}
        \parityk(x_1, \dots, x_k)
            = \lambda \sum_{y\in\Bools} \ForceOdd{k}(x_1, \dots, x_{k-1}, y)\,
                                      \hIsing{2}{1/\lambda}(y,x_k).
    \end{equation*}
If $\lambda \leq 1$ then $\hIsing{2}{1/\lambda}$ is in 
$\AntiFerroIsing$.
Otherwise, by Lemma~\ref{lem:I-ferroI}, there is a $c\in \calB_0$
and a $\hIsing{2}{\lambda'} \in \AntiFerroIsing$ such that
 $\hIsing{2}{1/\lambda} \in \fclone{\{c, \hIsing{2}{\lambda'}\}}$.
    The nullary functions~$\lambda$ and~$c$ are
     limits of $\ofclone{\calB_0'}$ by Lemma~\ref{lem:finitegen}.

    We now show that, for any even $k>0$, $\ForceOdd{k}\in
    \fclone{\ForceOdd{4}}$.  For $k=2$, we have 
    $$\ForceOdd{2}(x,y) =
 \sum_{w,z} \ForceOdd{4}(x,y,w,z)\,\EQ(x,w)\,\EQ(x,z)\,,$$
    
     and the case $k=4$ is trivial.  For $k\geq
    6$,
    \begin{equation*}
        \ForceOdd{k}(x_1,\dots, x_k) = \sum_{y,z\in\Bools}
            \ForceOdd{4}(x_1, x_2, x_3, y)\,
            \ForceOdd{2}(y,z)\,
            \ForceOdd{k-2}(z,x_4,\dots,x_k)\,,
    \end{equation*}
    and the claim follows by induction on even~$k$.
    The lemma follows by Lemma~\ref{lem:transitive}.
 \end{proof}

\begin{lemma} 
\label{lem:FOdd-Ising}  
There exist $\lambda, \lambda'>1$ such that 
$\ForceOdd{4} \in \ofclone{\{\hIsing{2}{\lambda},\hIsing{2}{\lambda'}\}\cup \calB'_0}$.
\end{lemma}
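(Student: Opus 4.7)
The plan is to construct, for any $\mu>1$, a weighted ``$K_5$'' Ising gadget whose maximum-weight configurations are exactly those in which $\bx=(x_1,\dots,x_4)$ has odd Hamming weight, and then normalize to extract $\ForceOdd{4}$ as a limit. Set $\lambda=\mu$, $\lambda'=\mu^2$, and for each positive integer $k$ define
\[
f_k(\bx)\;=\;\sum_{y\in\{0,1\}}\;\prod_{i=1}^{4}\hIsing{2}{\lambda'}(x_i,y)^{k}\prod_{1\le i<j\le 4}\hIsing{2}{\lambda}(x_i,x_j)^{k},
\]
i.e.\ place $k$ copies of $\hIsing{2}{\lambda}$ on every $(x_i,x_j)$-edge and $k$ copies of $\hIsing{2}{\lambda'}$ on every $(x_i,y)$-edge of $K_5$, then sum out the internal vertex $y$. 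By construction $f_k\in\fclone{\{\hIsing{2}{\lambda},\hIsing{2}{\lambda'}\}}$.

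Because $\hIsing{2}{\mu}(x,y)=\mu^{[x\ne y]}$ is symmetric in its arguments and invariant under $(x,y)\mapsto(1-x,1-y)$, $f_k$ is both symmetric and self-dual, so it is determined by its values at $\wt{\bx}\in\{0,1,2\}$. A direct computation (the $y$-edges contribute $\mu^{2k\wt{\bx}}+\mu^{2k(4-\wt{\bx})}$ once $y$ is summed out, and the $x$-$x$ edges contribute the factor $\mu^{k\wt{\bx}(4-\wt{\bx})}$) yields, in symmetric notation,
\[
f_k\;=\;\bigl[\,1+\mu^{8k},\;\mu^{5k}+\mu^{9k},\;2\mu^{8k},\;\mu^{5k}+\mu^{9k},\;1+\mu^{8k}\,\bigr].
\]
The critical arithmetic point is the exponent separation: the quantity $\wt{\bx}(4-\wt{\bx})+2\max(\wt{\bx},4-\wt{\bx})$ equals $9$ exactly when $\wt{\bx}\in\{1,3\}$ and equals $8$ when $\wt{\bx}\in\{0,2,4\}$. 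Hence, setting $g_k=\mu^{-9k}f_k$,
\[
g_k\;=\;\bigl[\,\mu^{-9k}+\mu^{-k},\;1+\mu^{-4k},\;2\mu^{-k},\;1+\mu^{-4k},\;\mu^{-9k}+\mu^{-k}\,\bigr],
\]
and one checks $\|g_k-\ForceOdd{4}\|_{\infty}\le 2\mu^{-k}\to 0$ as $k\to\infty$.

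To finish, I invoke Lemma~\ref{lem:finitegen} to place the positive nullary constant $c_0=\mu^{-9}$ in $\ofclone{\calB'_0}\subseteq\ofclone{\{\hIsing{2}{\lambda},\hIsing{2}{\lambda'}\}\cup\calB'_0}$; by Lemma~\ref{lem:transitive}, adjoining $c_0$ to the generating set does not enlarge the $\omega$-clone. Since $g_k=c_0^{\,k}\cdot f_k\in\fclone{\{\hIsing{2}{\lambda},\hIsing{2}{\lambda'},c_0\}}$ for every $k$ and $g_k\to\ForceOdd{4}$, the finite set $\{\hIsing{2}{\lambda},\hIsing{2}{\lambda'},c_0\}$ witnesses $\ForceOdd{4}$ as a limit, so $\ForceOdd{4}\in\ofclone{\{\hIsing{2}{\lambda},\hIsing{2}{\lambda'}\}\cup\calB'_0}$, as required. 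The main conceptual hurdle is designing the gadget: antiferromagnetic Ising at large parameter is essentially weighted max-cut, and one must find a weighted graph on $\bx$ plus auxiliaries whose strict max cuts correspond to odd-weight $\bx$. The ``$2$-to-$1$'' ratio $\lambda'=\lambda^2$ on $K_5$ achieves this, because among the partitions of $\{y,x_1,\dots,x_4\}$ the maximum weighted cut is attained precisely by those of the form $\{y,x_i\}\mid\{x_j,x_k,x_\ell\}$, which force $\wt{\bx}\in\{1,3\}$; everything after that is the routine exponent count recorded above.
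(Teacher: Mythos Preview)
Your argument is correct. Both you and the paper use the same $K_5$-shaped gadget---four visible vertices $x_1,\dots,x_4$ all connected to each other and to one hidden vertex~$y$---and both obtain $\ForceOdd{4}$ as a limit of normalised versions of this gadget. The route to the limit differs, though. The paper fixes a single gadget by choosing $\lambda'=2$ and solving $\lambda=\sqrt{\lambda'^4+\sqrt{\lambda'^8-1}}$ so that the two even-weight values $f(\bzero)$ and $f(0,0,1,1)$ coincide exactly and lie strictly below the odd-weight value; it then normalises and takes the $j$-th power of the resulting function. You instead take the simpler parameter ratio $\lambda'=\lambda^2$, replicate every edge $k$ times inside the gadget (so the power is taken before summing out~$y$ rather than after), and observe via an explicit exponent count---equivalently, a weighted max-cut argument on $K_5$---that the leading term $\mu^{9k}$ is attained only at odd Hamming weight, with an $\mu^{k}$ gap to the next term. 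Your version avoids the algebraic tuning of $\lambda$ and the max-cut interpretation makes the mechanism transparent; the paper's version has the minor advantage that the even-weight values are exactly equal at every stage, not just asymptotically. The handling of the normalising constant via Lemmas~\ref{lem:finitegen} and~\ref{lem:transitive} is the same in both.
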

\begin{proof}
Let $\lambda'=2$. (Any value that is greater than one would do, but we take $\lambda'=2$ for concreteness.)
Let $\lambda = \sqrt{ {\lambda'}^4 + \sqrt{{\lambda'}^8-1}}$. Note that $\lambda>1$.
Consider 
the (self-dual, symmetric)  function
$$
        f(x_1,x_2,x_3,x_4)
            = \sum_{y\in \{0,1\}} \prod_{i\in[4]}
                  \hIsing{2}{\lambda}(x_i,y)
                  \prod_{i\in[4],j\neq i\in[4]}
                      \hIsing{2}{\lambda'}(x_i,x_j)\,.
$$
Note that 
$f(0,0,0,0) = (\lambda^4+1) =
2 \lambda^2 {\lambda'}^4 =
f(0,0,1,1)=1023$.
Also,
$f(0,0,0,1) =    (\lambda+\lambda^3){\lambda'}^3\approx 1491.$
Now, define the function $g(\bx) = f(\bx)/f(0,0,0,1)$.  
Note that 
$f(\bx) \in \fclone{\{\hIsing{2}{\lambda},\hIsing{2}{\lambda'}\} }$
and $1/f(0,0,0,1) \in \ofclone{\calB'_0}$ by Lemma~\ref{lem:finitegen}. So,
by Lemma~\ref{lem:transitive}, for every positive integer~$j$,
$g^j \in  
\ofclone{\{\hIsing{2}{\lambda},\hIsing{2}{\lambda'}\}\cup \calB'_0}$.

We have
    $g(\bx)=1$ if $|\bx|$~is odd, and $g(\bx)<1$, otherwise.  This gives
    $\|g-\ForceOdd{4}\|_\infty = g(0,0,0,0) < 1$ so,
    for any $\epsilon>0$, we can choose an integer~$j$ such that
    $\|g^j-\ForceOdd{4}\|_\infty < \epsilon$.   
\end{proof}

\begin{theorem} \label{thm:SD-Ising} 
    $\SD = \Ianti$. 
\end{theorem}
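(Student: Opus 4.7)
The plan is to prove equality by showing both inclusions, with the forward inclusion $\Ianti \subseteq \SD$ essentially immediate and the reverse inclusion $\SD \subseteq \Ianti$ assembled by chaining the three preceding lemmas.

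For $\Ianti \subseteq \SD$: Every $\hIsing{2}{\lambda}$ in $\AntiFerroIsing$ is self-dual, as it depends only on whether $x_1 = x_2$. Nullary functions are trivially self-dual since the empty tuple equals its own complement, so $\calB_0 \subseteq \SD$. Because $\SD$ is an $\omega$-clone by Theorem~\ref{thm:SDclone}, it contains $\ofclone{\AntiFerroIsing \cup \calB_0} = \Ianti$.

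For $\SD \subseteq \Ianti$: First, Lemma~\ref{lem:SD-Parev} gives $\SD \subseteq \ofclone{\Parev \cup \calB'_0}$. Next, Lemma~\ref{lem:Parev-Ianti} yields $\Parev \subseteq \ofclone{\AntiFerroIsing \cup \calB'_0 \cup \{\ForceOdd{4}\}}$. Finally, Lemma~\ref{lem:FOdd-Ising} shows $\ForceOdd{4} \in \ofclone{\{\hIsing{2}{\lambda},\hIsing{2}{\lambda'}\}\cup \calB'_0} \subseteq \ofclone{\AntiFerroIsing \cup \calB'_0}$. Combining these via Lemma~\ref{lem:transitive} (transitivity of the $\omega$-closure under adding elements of the clone itself), we obtain $\SD \subseteq \ofclone{\AntiFerroIsing \cup \calB'_0}$. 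Since $\calB'_0 \subseteq \calB_0 \subseteq \Ianti$ (using that $\Ianti$ contains $\calB_0$ by Definition~\ref{def:Addunaries}), Lemma~\ref{lem:transitive} again gives $\ofclone{\AntiFerroIsing \cup \calB'_0} \subseteq \Ianti$, which completes the proof.

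The proof is thus essentially a bookkeeping argument that threads together the three main technical lemmas of this section. No step requires new creative work; the real content was already expended in establishing (a) that every permissive self-dual function admits a Fourier-based factorisation into weighted even-parity functions, (b) that even-parity functions reduce to the antiferromagnetic Ising interaction together with the gadget $\ForceOdd{4}$, and (c) that $\ForceOdd{4}$ can itself be approximated by a product of antiferromagnetic Ising interactions. The main potential obstacle — keeping careful track of which constants live in $\calB'_0$ versus $\calB_0$ and ensuring that the finite-subset condition in the definition of limits is respected — is already handled by the use of the finite set $\calB'_0$ and Lemma~\ref{lem:finitegen} in the preceding lemmas, so the theorem follows with only a few lines of exposition.
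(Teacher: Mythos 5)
Your proposal is correct and follows essentially the same route as the paper: the forward inclusion via self-duality of $\AntiFerroIsing\cup\calB_0$ and Theorem~\ref{thm:SDclone}, and the reverse inclusion by chaining Lemmas \ref{lem:SD-Parev}, \ref{lem:Parev-Ianti} and \ref{lem:FOdd-Ising} through Lemma~\ref{lem:transitive}. The only difference is that you spell out the final step $\ofclone{\AntiFerroIsing\cup\calB'_0}\subseteq\Ianti$, which the paper leaves implicit.
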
 
\begin{proof}
    Since every function in $\AntiFerroIsing\cup\calB_0$ is self-dual by
    definition and $\SD$
    is an $\omega$-clone by Theorem~\ref{thm:SDclone}, we have $\Ianti\subseteq\SD$.

We now show $\SD\subseteq\Ianti$. 
Lemmas~\ref{lem:SD-Parev} and~\ref{lem:Parev-Ianti} 
(together with Lemma~\ref{lem:transitive}) imply
$\SD \subseteq \ofclone{\AntiFerroIsing\cup\calB'_0\cup\{\ForceOdd{4}\}}$.
From this and Lemma~\ref{lem:FOdd-Ising} (together with Lemma~\ref{lem:transitive})
we have
$\SD \subseteq \ofclone{\AntiFerroIsing\cup\calB'_0}$.
\end{proof}

\subsection{Maximality}
\label{sec:SD-max}

\begin{lemma}\label{lem:SD-maximal}
$\SD$ is a maximal $\omega$-clone in $\calB$; i.e., for any function $f\not\in\SD$, $\ofclone{\SD\cup\{f\}}=\calB$.
\end{lemma}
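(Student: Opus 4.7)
The plan is to mirror the proof of Lemma~\ref{lem:P-max}: produce $\mathrm{NAND}$, a unary function $h'$ with $h'(1)>h'(0)>0$, and the nullary constant $1/2$ in $\ofclone{\SD\cup\{f\}}$, and then invoke \cite[Corollary~13.2(ii)]{LSM} to obtain $\calB_1\subseteq\ofclone{\SD\cup\{f\}}$, followed by \cite[Lemmas~7.1 and~8.1]{LSM} to conclude $\ofclone{\SD\cup\{f\}}=\calB$. The constant $1/2$ is automatic because $\calB_0\subseteq\SD$ by Theorem~\ref{thm:SDclone}.

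The first step is to extract a non-self-dual unary function from $f$. Since $f\notin\SD$ there is some $\ba\in\Bools^n$ with $f(\ba)\neq f(\overline{\ba})$. Because $\EQ$ and $\NEQ$ are both self-dual, the function
\begin{equation*}
h(x)\;=\;\sum_{\by\in\Bools^n}f(\by)\prod_{i\,:\,a_i=1}\EQ(x,y_i)\prod_{i\,:\,a_i=0}\NEQ(x,y_i)
\end{equation*}
lies in $\fclone{\SD\cup\{f\}}$ and satisfies $h(1)=f(\ba)$ and $h(0)=f(\overline{\ba})$; in particular $h(0)\neq h(1)$ with at least one of the two strictly positive. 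Swapping inputs using $\NEQ$ if necessary, I may assume $h(1)>h(0)\geq 0$.

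Next I would extract $\delta_1$ in $\ofclone{\SD\cup\{f\}}$. If $h(0)=0$ then $h=h(1)\cdot\delta_1$, and dividing by the nullary function~$h(1)\in\calB_0\subseteq\SD$ gives $\delta_1$ directly. Otherwise $h(1)>h(0)>0$, and the sequence $h_n(x)=h(x)^n/h(1)^n$ lies in $\fclone{\SD\cup\{f\}}$ and converges uniformly to $\delta_1$, so $\delta_1\in\ofclone{\SD\cup\{f\}}$ by closure under limits. With $\delta_1$ in hand, the ternary self-dual function $[0,1,1,0]$ gives
\begin{equation*}
\sum_{z\in\Bools}[0,1,1,0](x,y,z)\,\delta_1(z)\;=\;[0,1,1,0](x,y,1)\;=\;\mathrm{NAND}(x,y),
\end{equation*}
and the binary self-dual function $[1,2,1]$ gives $\sum_y[1,2,1](x,y)\,\delta_1(y)=[2,1]$; applying $\NEQ$ produces $h'=[1,2]$ with $h'(1)>h'(0)>0$. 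The results cited in the first paragraph then yield $\ofclone{\SD\cup\{f\}}=\calB$.

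The main obstacle is the construction of $\mathrm{NAND}$: because $\mathrm{NAND}$ is not self-dual it cannot be obtained from $\SD$ alone, so it has to be manufactured by combining $\delta_1$ (the only piece of non-self-dual information we have extracted from~$f$) with a carefully chosen self-dual template. The template $[0,1,1,0]$ works precisely because pinning one of its arguments with $\delta_1$ breaks the symmetry between $\bone$ and $\bzero$ in exactly the right way to leave $[1,1,0]=\mathrm{NAND}$.
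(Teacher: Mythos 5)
Your proof is correct, but the second half takes a genuinely different route from the paper's. Both arguments begin by extracting a pinning function: you use $\NEQ$ (which is self-dual, hence in $\SD$) to build a unary $h$ with $h(1)=f(\ba)$, $h(0)=f(\overline{\ba})$ in one stroke, then power-and-normalise to obtain $\delta_1$ in the limit; the paper reaches the same point by a case analysis on whether $f(\bone)=f(\bzero)$, contracting variables with $\EQ$ and summing to get an asymmetric unary function before powering. Your $\NEQ$ trick is arguably cleaner here. Where the proofs really diverge is in how $\delta_1$ is exploited. The paper gives a short, self-contained finish: for an arbitrary $k$-ary $g\in\calB$ it defines the self-dual $(k{+}1)$-ary function $h$ with $h(\bx 0)=h(\overline{\bx}1)=g(\bx)$ and recovers $g(\bx)=\sum_y h(\bx y)\,\delta_0(y)$ (or with $\delta_1$), so every function of $\calB$ drops out directly. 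You instead manufacture $\mathrm{NAND}$ by pinning the self-dual template $[0,1,1,0]$ with $\delta_1$, produce a suitable unary function, and then invoke \cite[Corollary~13.2(ii)]{LSM} and \cite[Lemmas~7.1 and~8.1]{LSM}, exactly as in the proof of Lemma~\ref{lem:P-max}. Your route buys uniformity with the $\calP$-maximality argument (the same external machinery closes both proofs), at the cost of leaning on those imported results; the paper's route is more elementary and avoids the citations entirely. All the individual steps in your argument check out: $[0,1,1,0]$ and $[1,2,1]$ are indeed self-dual, the pinnings yield $[1,1,0]=\mathrm{NAND}$ and $[2,1]$ as claimed, and the nullary constants you divide by lie in $\calB_0\subseteq\SD$ by Theorem~\ref{thm:SDclone}.
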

\begin{proof}
Let $k$ be the arity of $f$.
First, we show that  $\ofclone{\SD \cup \{f\}}$ 
contains $\delta_0=[1,0]$ or~$\delta_1=[0,1]$.
If $f(\bone)>f(\bzero)$, we have $f(\bone)>0$ so the  nullary
function $f_1=1/f(\bone)$ is well-defined, and it is in
$\ofclone{\SD \cup \{f\}}$ since $\calB_0 \subseteq \SD$ by Theorem~\ref{thm:SDclone}.
In this case, $$\delta_1(x) = \lim_{n\to\infty} \sum_{x_2,\ldots,x_k} 
f(x_1,x_2,\ldots,x_k)^n \left(\prod_{i=1}^{k-1} \EQ(x_i,x_{i+1})\right) {\left(\frac{1}{f(\bone)}\right)}^n,$$
so $\delta_1(x) \in \ofclone{\SD \cup \{f\}}$.
If $f(\bone)<f(\bzero)$, we  similarly show $\delta_0 
\in \ofclone{\SD \cup \{f\}}$.

If $f(\bzero)=f(\bone)$ there is some $\ba\in\Bools^k$ such that $f(\ba)\ne f(\overline{\ba})$ so $k\geq 2$.
Because $\omega$-clones are closed under permuting arguments, we may assume without loss of generality that $a_1=\dots=a_\ell=0$ and $a_{\ell+1}=
\dots=a_k=1$.  Let $$g(x_1,x_k) = \sum_{x_2,\ldots,x_{k-1}} 
f(x_1, \ldots,x_k)  \left(\prod_{i=1}^{\ell-1} \EQ(x_i,x_{i+1})\right) \left(\prod_{i=\ell+1}^{k-1} \EQ(x_i,x_{i+1})\right).$$
Clearly, $g \in \fclone{f}$.
 The function~$g$ satisfies
$g(0,0)=g(1,1)$ and $g(0,1)\ne g(1,0)$. Set $h(x)=g(x,0)+g(x,1)$. We have
$h(0)=g(0,0)+g(0,1)\ne g(1,0)+g(1,1)=h(1)$ and $h\in \fclone{f}$. 
An argument similar to the one in the first paragraph of this proof shows
that $\delta_0$ or $\delta_1$ is in $\ofclone{\SD \cup \{h\}}$.
By Lemma~\ref{lem:transitive}, 
$\delta_0$ or $\delta_1$ is in $\ofclone{\SD\cup \{f\}}$.

For the rest of the proof, suppose that $\delta_0\in\ofclone {\SD \cup \{f\}}$; the case where
$\delta_1\in \ofclone{\SD\cup \{f\}}$ is very similar. Take any $g\in\calB$,
say of arity~$k$. Let   $h$ be the $(k+1)$-ary function defined as follows:
for any $\bx\in\Bools^k$, $h(\bx0)=h(\overline{\bx}1)=g(\bx)$. As is easily seen,
$h\in\SD$. It is also easy to see that $g(\bx)=\sum_y h(\bx y)\,\delta_0(y)$.
Thus, by Lemma~\ref{lem:transitive}, $g\in \ofclone{\SD \cup \{f\}}$.
\end{proof}

Next, we prove that $\SDP$ is a maximal $\omega$-clone in $\SD$.
Recall the functions $\parityk$ from Definition~\ref{defn:parity}.

\begin{theorem}\label{thm:SDP-maximal}
$\SDP$ is a maximal $\omega$-clone in $\SD$; i.e., for any $n$-ary self-dual
function $f\in\SD\setminus\SDP$, $\ofclone{\SDP\cup\{f\}}=\SD$.
\end{theorem}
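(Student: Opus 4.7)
My plan is to leverage Theorem~\ref{thm:SD-Ising} ($\SD=\Ianti$) together with Corollary~\ref{cor:Ianti-fin}: it suffices to show that $\ofclone{\SDP\cup\{f\}}$ contains a single antiferromagnetic Ising function $\hIsing{2}{\lambda}$ with $\lambda>1$, because then, since $\calB_0'\subseteq\calB_0\subseteq\SDP$ (by Theorem~\ref{thm:SDclone} combined with Theorems~\ref{thm:P-clone} and~\ref{thm:PN-clone}), we immediately get $\Ianti=\ofclone{\hIsing{2}{\lambda}\cup\calB_0'}\subseteq\ofclone{\SDP\cup\{f\}}$. The reverse inclusion $\ofclone{\SDP\cup\{f\}}\subseteq\SD$ is trivial: $\SDP\subseteq\SD$ by definition, $f\in\SD$, and $\SD$ is an $\omega$-clone. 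So the whole task reduces to producing one binary function $[a,b,a]$ in $\ofclone{\SDP\cup\{f\}}$ with $0<a<b$.

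To extract such a binary function, I exploit that $f$ is self-dual but violates $\calP\cap\calPN$. By Theorem~\ref{thm:SDP} we have $\SDP=\SD\cap\calP$, so $f\notin\calP$, hence $\fhat(\ba)<0$ for some $\ba$. Since $f\in\SD$, Lemma~\ref{lem:oddSD} forces $\wt{\ba}$ to be \emph{even} (and positive, since $\fhat(\bzero)\geq 0$ always). Applying Lemma~\ref{lem:neg-fc} to $f$, I obtain a function $g\in\fclone{\{f\}}$ of arity $k_0=\wt{\ba}$ with $\ghat(\bone)<0$; crucially, $k_0$ is even, and $g$ is still self-dual by the closure of $\SD$ under summation and permutation (Theorem~\ref{thm:SDclone}). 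I then feed $g$ into Lemma~\ref{lem:tech} with $k=2$: this produces a binary permissive $h\in\fclone{\{g,\parity{2+k_0}{1/2}\}}$ with $\hhat(\bone)<0$. The pivotal parity choice, $\parity{2+k_0}{1/2}$, lies in $\SDP$ because $2+k_0$ is even (making it self-dual) and its only nonzero Fourier coefficients $\parityhat{2+k_0}{1/2}(\bzero)=3/4$, $\parityhat{2+k_0}{1/2}(\bone)=1/4$ sit at even-weight tuples and are non-negative. Hence $h\in\ofclone{\SDP\cup\{f\}}$.

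Finally I verify that $h$ has the desired antiferromagnetic form. Tracing the construction in Lemma~\ref{lem:tech}: the fictitious extension of $g$, the product with a self-dual parity, and the summation all preserve self-duality, so $h$ is self-dual, i.e., $h=[a,b,a]$ with $a=h(\bzero)$ and $b=h(0,1)$. Permissivity gives $a,b>0$, while a direct calculation shows $\hhat(\bone)=\tfrac{a-b}{2}$, so $\hhat(\bone)<0$ forces $b>a$. Setting $\lambda=b/a>1$ gives $h=a\cdot\hIsing{2}{\lambda}$; since $a>0$ is a nullary function in $\calB_0\subseteq\SDP$, we have $1/a\in\SDP$ and therefore $\hIsing{2}{\lambda}\in\ofclone{\SDP\cup\{f\}}$, completing the argument as indicated in the first paragraph.

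The only subtle point — and the step where self-duality of $f$ is essential — is ensuring that the negative Fourier coefficient I start with lives at an \emph{even-weight} argument, so that the arity $k_0$ of $g$ is even and the auxiliary parity function used by Lemma~\ref{lem:tech} genuinely lies in $\SDP$. Without that parity guarantee one could not claim $h\in\ofclone{\SDP\cup\{f\}}$, and the whole reduction to $\Ianti$ would break down.
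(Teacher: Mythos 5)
Your proof is correct and follows essentially the same route as the paper's: reduce to exhibiting $\hIsing{2}{\lambda}$ with $\lambda>1$ via Theorem~\ref{thm:SD-Ising} and Corollary~\ref{cor:Ianti-fin}, then combine Lemma~\ref{lem:neg-fc} and Lemma~\ref{lem:tech} with the even-arity function $\parity{\cdot}{1/2}$ (which lies in $\SDP$ precisely because the relevant arity is even) to produce a permissive self-dual binary $h$ with $\hhat(1,1)<0$, which after scaling is the desired antiferromagnetic Ising function. Your explicit tracking of the even Hamming weight of the negative Fourier coefficient is exactly the point the paper makes when it notes that $n$ must be even by Lemma~\ref{lem:oddSD}.
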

\begin{proof}
Since $\calB_0\subseteq \SDP$
by Theorems~\ref{thm:SDclone} and~\ref{thm:P-clone}, it suffices by Theorem~\ref{thm:SD-Ising},
Corollary~\ref{cor:Ianti-fin} and Lemma~\ref{lem:transitive},
to show that $\hIsing{2}{\lambda}\in\ofclone{\SDP\cup\{f\}}$ for some
$\lambda>1$. Since $f\in\SD\setminus\SDP$, there is
some $\ba\in\{0,1\}^n$ such that $\fhat(\ba)<0$.
By Lemma~\ref{lem:neg-fc}, we may assume that $\ba=\bone$.
Then, by Lemma~\ref{lem:tech}, there is a permissive binary function
$h\in\fclone{\{f,\parity{n+2}{1/2}\}}$ such that $\hhat(1,1)<0$.

Because the $n$-ary function $f$~is self-dual and 
$\fhat(\bone)\neq 0$,
$n$~must be even by Lemma~\ref{lem:oddSD}.  For all even~$n$,
$\parity{n+2}{1/2}$ is self-dual.
Also, $\parity{n+2}{1/2}\in \calP$ by Lemma~\ref{lem:ft-parity},
so it is in $\SDP$ and, by Lemma~\ref{lem:transitive}, $h\in \SDP$.

Since $\hhat(1,1)<0$ and $h$~is permissive, there are constants
$c>b>0$ such that $h(0,0)=h(1,1)=b$ and $h(0,1)=h(1,0)=c$.  Therefore,
the function $(1/b)h$ is $\hIsing{2}{c/b}$, with $c/b>1$, and this function belongs
to $\ofclone{\SDP\cup\{f\}}$.
\end{proof}

\section{Match-circuits and even-circuits}
\label{sec:match-even}

\subsection{Match-circuits}

We first show that $\calM$ and $\calE$ are functional clones.   It is  still open 
whether they are $\omega$-clones.  As far as we know, there may be a function
in $\ofclone{\calM}$ that is the limit of a sequence  of functions~$f_\epsilon$ where
each $f_\epsilon$ is  implemented by 
a match-circuit with its own underlying graph. It  is not clear in this case
whether $f$ itself can be implemented by a match-circuit.
A similar comment applies to~$\calE$.

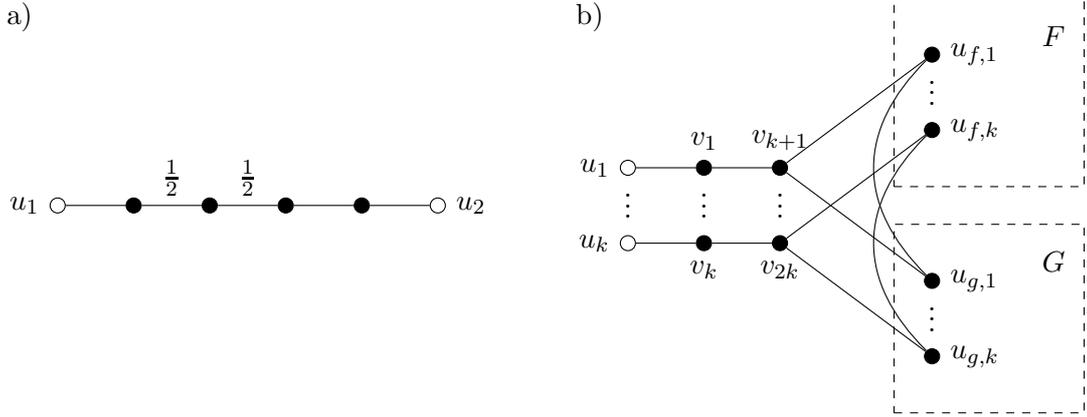
\begin{figure}
    \begin{center}
    \begin{tikzpicture}[scale=1,node distance = 1.5cm]
    \tikzstyle{vertex}=[fill=black, draw=black, circle, inner sep=2pt]
    \tikzstyle{dist}  =[fill=white, draw=black, circle, inner sep=2pt]

    \begin{scope}[shift={(-1.5,-0.5)}] 
        \node[dist]   (u1b) at (0,0) [label=180:$u_1$] {};
        \node[vertex] (v1b) at (1,0)                   {};
        \node[vertex] (v2b) at (2,0)                   {};
        \node[vertex] (v3b) at (3,0)                   {};
        \node[vertex] (v4b) at (4,0)                   {};
        \node[dist]   (u2b) at (5,0) [label=0:  $u_2$] {};

        \draw (u1b)-- (v1b) -- node [above] {$\tfrac12$} (v2b)
                   -- node [above] {$\tfrac12$} (v3b) -- (v4b) -- (u2b);
        \node at (-0.5,2.5) {a)};
    \end{scope}

    \begin{scope}[shift={(6,-0.5)}] 
        \node[dist]   (u1c) at (0, 0.5) [label=180:$u_1$] {};
        \node               at (0, 0.1)                   {$\vdots$};
        \node[dist]   (ukc) at (0,-0.5) [label=180:$u_k$] {};

        \node[vertex] (v1c) at (1, 0.5) [label=90:$v_1$] {};
        \node               at (1, 0.1)                   {$\vdots$};
        \node[vertex] (vkc) at (1,-0.5) [label=270:$v_k$] {};

        \node[vertex] (w1c) at (2, 0.5) [label=90:$v_{k+1}$] {};
        \node               at (2, 0.1)                     {$\vdots$};
        \node[vertex] (wkc) at (2,-0.5) [label=270:$v_{2k}$] {};

        \node[vertex] (f1c) at (4, 2) [label=0:$u_{f,1}$] {};
        \node               at (4, 1.6)                   {$\vdots$};
        \node[vertex] (fkc) at (4, 1) [label=0:$u_{f,k}$] {};
        \node[vertex] (g1c) at (4,-1) [label=0:$u_{g,1}$] {};
        \node               at (4,-1.4)                   {$\vdots$};
        \node[vertex] (gkc) at (4,-2) [label=0:$u_{g,k}$] {};

        \draw[dashed] (3.5, 2.75) rectangle (6, 0.25);
        \draw[dashed] (3.5,-0.25) rectangle (6,-2.75);
        \node at (5.6, 2.25) {$F$};
        \node at (5.6,-0.75) {$G$};

        \draw (u1c)--(v1c)--(w1c);  \draw (f1c)--(w1c)--(g1c);
        \draw (ukc)--(vkc)--(wkc);  \draw (fkc)--(wkc)--(gkc);

        \draw (f1c) .. controls (3,1) and (3,0) .. (g1c);
        \draw (fkc) .. controls (3,0) and (3,-1) .. (gkc);

        \node at (-0.5,2.5) {b)};

    \end{scope}
    \end{tikzpicture}
    \end{center}
    \caption{Match-circuits used in the proof of
      Theorem~\ref{thm:M-clone}. Every edge has weight~$1$ unless
      otherwise indicated.  
      a)~The equality function.  b)~The product of functions
      implemented by the match-circuits $F$ and~$G$.}
    \label{fig:M-clone}
\end{figure}

\begin{theorem}
\label{thm:M-clone}
$\fclone{\calM} = \calM$ and $\calB_0 \subseteq \calM$.
\end{theorem}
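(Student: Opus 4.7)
The plan is to verify that $\calM$ contains $\EQ$ and every nullary function in $\calB_0$, and that $\calM$ is closed under the four operations used to form functional clones: permuting arguments, introducing fictitious arguments, products, and summation. Together with the trivial inclusion $\calM\subseteq\fclone{\calM}$, this will yield $\fclone{\calM}=\calM$. Since $f\in\calM$ by definition iff $\fhat$ is implemented by some match-circuit, each of the verifications reduces to producing, from a given match-circuit for $\fhat$ (and $\ghat$ in the product case), a new match-circuit implementing the Fourier transform of the target function, where the target Fourier transform is identified by Lemma~\ref{lem:fops}.

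The easy items come first. The length-$5$ weighted path drawn in Figure~\ref{fig:M-clone}(a) implements $\widehat{\EQ}=\tfrac12[1,0,1]$, so $\EQ\in\calM$. A single edge of weight $c>0$ between two internal vertices implements the nullary constant $c$ (which equals its own Fourier transform by Lemma~\ref{lem:fops}\pref{op-nullary}), while a graph consisting of a single isolated internal vertex (which admits no perfect matching) implements the nullary constant $0$, so $\calB_0\subseteq\calM$. Closure under permuting arguments is immediate from Lemma~\ref{lem:fops}\pref{op-perm} by relabelling terminals. For closure under the introduction of fictitious arguments, I adjoin to the given match-circuit a fresh external vertex $u_{k+1}$ with terminal edge $(u_{k+1},v')$ together with a fresh pendant internal vertex $v''$ joined to $v'$ by an edge of weight $1$; a short case analysis shows that $y_{k+1}=0$ forces $(v',v'')$ to have spin $1$ (so the output is $\fhat(\bx)$), while $y_{k+1}=1$ leaves $v''$ with no matching partner (so the output is $0$), which is exactly what Lemma~\ref{lem:fops}\pref{op-fict} prescribes. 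For closure under summation, I reclassify the $k$-th external vertex of the match-circuit for $\fhat$ as an internal vertex and attach to it a new pendant internal vertex via an edge of weight $2$; the pendant forces the former terminal edge to spin $0$ and contributes the factor $2$, so the new circuit outputs $2\fhat(\bx 0)=\hhat(\bx)$, as required by Lemma~\ref{lem:fops}\pref{op-sum}.

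The main obstacle is closure under products, where the convolution identity from equation~\eqref{eq:convolution}, namely $\fghat(\bx)=\sum_{\bw}\fhat(\bw)\,\ghat(\bw\oplus \bx)$, must be realised by a single match-circuit built out of match-circuits $F$ and $G$ for $\fhat$ and $\ghat$. For this I adopt the construction depicted in Figure~\ref{fig:M-clone}(b): retain the internal structure of $F$ and $G$, reclassify their former external vertices $u_{f,i}$ and $u_{g,i}$ as internal, introduce fresh internal vertices $v_i$ and $v_{k+i}$, and attach the edges $(u_i,v_i)$ (the new $i$-th terminal edge), $(v_i,v_{k+i})$, $(v_{k+i},u_{f,i})$, $(v_{k+i},u_{g,i})$, and $(u_{f,i},u_{g,i})$, all of weight $1$. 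The key step is a coordinate-by-coordinate analysis of the perfect-matching constraints at $v_i$, $v_{k+i}$, $u_{f,i}$, $u_{g,i}$ as a function of the triple $(y_i,w_i^F,w_i^G)$, where $w_i^F$ and $w_i^G$ denote the spins of the former $i$-th terminal edges of $F$ and $G$; a finite enumeration shows that the coupler admits exactly one valid configuration (of weight $1$) when $w_i^F\oplus w_i^G=y_i$, and none otherwise. Summing over all configurations of the combined circuit therefore gives
\begin{equation*}
    \sum_{\bw^F}\fhat(\bw^F)\,\ghat(\bw^F\oplus\by)\;=\;\fghat(\by),
\end{equation*}
as required. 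Assembling the six pieces completes the proof of the theorem.
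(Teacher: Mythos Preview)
Your proof is correct and follows essentially the same approach as the paper: it verifies $\EQ\in\calM$ and $\calB_0\subseteq\calM$ directly, handles permutation, fictitious arguments, and summation via the obvious local gadgets combined with Lemma~\ref{lem:fops}, and establishes closure under products using the coupler of Figure~\ref{fig:M-clone}(b) together with the convolution identity~\eqref{eq:convolution}. The only differences are cosmetic---you realise the nullary zero function with an isolated internal vertex rather than a triangle, and you implement summation by attaching a weight-$2$ pendant to the reclassified external vertex rather than deleting it and adding a disjoint weight-$2$ edge---and both variants are equally valid.
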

\begin{proof}
We show that $\calM$ contains   the
equality function and has all the closure properties required by
 the definition of functional clone.
\begin{itemize}
\item For the equality function, we have $\widehat{\EQ} =
\tfrac12\EQ$.  This function is implemented by the graph shown in
Figure~\ref{fig:M-clone}(a).

\item Permuting arguments corresponds directly to renaming the
terminals of the circuit, so it is clear that $\calM$~is closed under
this operation.

\item For closure under the introduction of fictitious arguments, let
$g(\bx z) = f(\bx)$ for some $k$-ary $f\in\mathcal{M}$.  By
Lemma~\ref{lem:fops}\pref{op-fict}, $\ghat(\bx0) = \fhat(\bx)$ and
$\ghat(\bx1)=0$.  The match-circuit for~$\ghat$ is the disjoint union
of the match-circuit~$F$ for~$\fhat$ and a weight-$1$ path on new
vertices $u_{k+1}$, $v$ and~$v'$ (in that order).  If $y_{k+1}$~is
assigned~$0$, then any perfect matching is the union of the edge
$(v,v')$ and a perfect matching of~$F$, so has weight $\fhat(y_1,\dots,
y_k)$; if $y_{k+1}$~is assigned~$1$, there is no perfect matching, so
the assignment has weight~$0$, as required.

\item For closure under summation, let $g(\bx)= \sum_z f(\bx z)$
for some $(k+1)$-ary function $f\in \mathcal{M}$.  By
Lemma~\ref{lem:fops}\pref{op-sum}, $\ghat(\bx) = 2\fhat(\bx0)$, so we obtain a
match-circuit for~$\ghat$ from the circuit~$F$ for~$\fhat$ by deleting
the vertex~$u_{k+1}$ (which is equivalent to forcing its adjacent edge
in~$F$ to be spin-$0$) 
and adding a new weight-$2$ edge between two new vertices
(which doubles the weight of any perfect matching).

\item For closure under products, let $h(\bx)=f(\bx)\,g(\bx)$ for $k$-ary
functions $f,g\in\mathcal{M}$.  Let $\fhat$ and~$\ghat$ be implemented
by match-circuits $F$ and~$G$, with terminal vertices $u_{f,1}, \dots,
u_{f,k}$ and $u_{g,1}, \dots, u_{g,k}$, respectively.  For each
$i\in[k]$, let $y_{f,i}$ be the unique edge adjacent to~$u_{f,i}$
in~$F$ and define~$y_{g,i}$ similarly in~$G$.  Recall that
    $\hhat(\by) = \sum_{\bw\in\Bools^k}
    \fhat(\bw)\,\ghat(\bw\oplus\by)$.

Let $\sigma$ be any assignment of spins $0$ and~$1$ to the edges of
the match-circuit~$H$ shown in Figure~\ref{fig:M-clone}(b).  We claim
that, if $\sigma$~is a perfect matching then, for all $i\in[k]$,
$\sigma(y_{f,i}) = \sigma(y_{g,i}) \oplus \sigma(y_i)$.

If $\sigma(y_i) = 0$ then we must have $\sigma(v_i,v_{k+i})=1$.  We
may have $\sigma(y_{f,i}) = 0$ or $\sigma(y_{f,i})=1$ but, in either
case, $\sigma(y_{f,i}) = \sigma(y_{g,i}) = \sigma(y_{g,i}) \oplus0$.

Otherwise, $\sigma(y_i) = 1$ and we must have $\sigma(v_i,v_{k+i})=0$.
Now there are two cases.  If $\sigma(v_{k+i},v_{f,i})=1$, then
$\sigma(y_{f,i}) = 0 = \sigma(y_{g,i})\oplus 1$; if
$\sigma(v_{k+i},v_{g,i})=1$, then $\sigma(y_{f,i}) = 1 =
\sigma(y_{g,i})\oplus 1$.  This completes the proof of the claim.

For any choice of spins $x_1, \dots, x_k$ for the edges $y_1, \dots,
y_k$ we can choose any spins $w_1, \dots, w_k$ for the edges $y_{f,1},
\dots, y_{f,k}$.  Doing so forces us to assign the spin $w_i\oplus
x_i$ to~$y_{g,i}$.  Therefore, the value computed by the match-circuit
is $\sum_{\bw}\fhat(\bw)\,\ghat(\bw\oplus\bx)$, as required. 
\end{itemize}
The fact that $\calB_0 \subseteq \calM$ comes from the definition of match-circuit.
Any positive $c\in \calB_0$ can be implemented by a match-circuit with no terminals containing one
edge with weight~$c$. The constant~$0$ is implemented by a match-circuit with no terminals
whose three edges form a $3$-cycle.
\end{proof}

\begin{theorem}
\label{thm:Iferro-MHferro}
    $\Iferro \subseteq \ofclone{M}\cap \Hferro$.
\end{theorem}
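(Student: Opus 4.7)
The plan is to establish the two containments $\Iferro \subseteq \Hferro$ and $\Iferro \subseteq \ofclone{\calM}$ separately. The first is immediate from the definitions: since $\FerroIsing \subseteq \FerroHyperIsing$, we have $\FerroIsing \cup \calB_0 \subseteq \FerroHyperIsing \cup \calB_0$, and the operator $\ofclone{\cdot}$ is monotone under set inclusion, giving $\Iferro \subseteq \Hferro$ at once.

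For the nontrivial containment $\Iferro \subseteq \ofclone{\calM}$, I would invoke the ``finite generation'' result in Corollary~\ref{cor:Iferro-fin}, which asserts that $\ofclone{\{\hIsing{2}{\lambda}\} \cup \calB'_0} = \Iferro$ for every $\lambda \in (0,1)$. Since $\calB'_0 \subseteq \calB_0 \subseteq \calM$ by Theorem~\ref{thm:M-clone}, it suffices to exhibit one $\lambda_0 \in (0,1)$ for which $\hIsing{2}{\lambda_0} \in \calM$---equivalently, for which the Fourier transform $\hIsinghat{2}{\lambda_0}$ is implementable by a match-circuit.

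By Lemma~\ref{lem:ft-hIsing}, $\hIsinghat{2}{\lambda}$ is the symmetric binary function $[(1+\lambda)/2,\, 0,\, (1-\lambda)/2]$, both of whose nonzero values are strictly positive for $\lambda \in (0,1)$. I would take the match-circuit on four vertices with external vertices $u_1, u_2$ joined to internal vertices $v_1, v_2$ by terminal edges of weight~$1$, together with a single internal edge $(v_1, v_2)$ of positive weight~$w$. A brief case analysis on the four terminal assignments shows that a perfect matching of the internal vertices exists exactly when the two terminal spins agree: if $y_1=y_2=0$, the internal edge must be spin~$1$, contributing~$w$; if $y_1=y_2=1$, the internal edge is spin~$0$, contributing~$1$; and if $y_1 \neq y_2$ the matching conditions at $v_1$ and $v_2$ are inconsistent, contributing~$0$. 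Hence this circuit implements $[w,\, 0,\, 1]$. Choosing $w = (1+\lambda_0)/(1-\lambda_0) > 0$ and scaling by the positive constant $(1-\lambda_0)/2$ (realisable inside $\calM$ by adjoining a disjoint weighted edge, exactly as in the proof of Theorem~\ref{thm:M-clone}) then produces $\hIsinghat{2}{\lambda_0}$ on the nose.

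The only real content of the argument is this four-case verification for the match-circuit; the rest is a packaging step via Corollary~\ref{cor:Iferro-fin} together with $\calB'_0 \subseteq \calM$. I do not anticipate a genuine obstacle.
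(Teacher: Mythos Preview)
Your proposal is correct and follows essentially the same approach as the paper: reduce via Corollary~\ref{cor:Iferro-fin} to exhibiting a single $\hIsing{2}{\lambda_0}\in\calM$, compute its Fourier transform via Lemma~\ref{lem:ft-hIsing}, and implement that transform by an explicit match-circuit. Your four-vertex path circuit implementing $[w,0,1]$ is in fact simpler than the paper's six-vertex path (which directly realises $[\tfrac34,0,\tfrac14]$ without a separate scaling edge), but the logic is otherwise identical.
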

\begin{proof}
    It is trivial that $\Iferro\subseteq \Hferro$, so it remains to
    prove that $\Iferro \subseteq \ofclone{\calM}$.

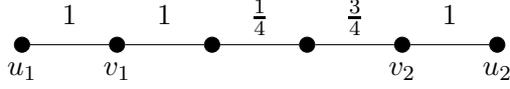
\begin{figure}
    \begin{center}
    \begin{tikzpicture}[scale=1.25,node distance=1.5cm]
    \tikzstyle{vertex}=[fill=black, draw=black, circle, inner sep=2pt]

    \node[vertex] (u1) at (0,0) [label=270:$u_1$] {};
    \node[vertex] (v1) at (1,0) [label=270:$v_1$] {};
    \node[vertex] (w1) at (2,0)                   {};
    \node[vertex] (w2) at (3,0)                   {};
    \node[vertex] (v2) at (4,0) [label=270:$v_2$] {};
    \node[vertex] (u2) at (5,0) [label=270:$u_2$] {};

    \draw (u1) -- node [above] {$\vphantom{\tfrac34}1$} (v1)
               -- node [above] {$\vphantom{\tfrac34}1$} (w1)
               -- node [above] {$\tfrac14$}             (w2)
               -- node [above] {$\tfrac34$}             (v2)
               -- node [above] {$\vphantom{\tfrac34}1$} (u2);
    \end{tikzpicture}
    \end{center}
    \caption{The match-circuit with terminals $(u_1,v_1)$ and $(u_2,v_2)$, used in the proof of Theorem~\ref{thm:Iferro-MHferro}.}
    \label{fig:IFerro-MHferro}
\end{figure}

    By Corollary~\ref{cor:Iferro-fin}, it suffices
    to show that $f = \hIsing{2}{1/2}\in\calM$.  By
    Lemma~\ref{lem:ft-hIsing}, we have $\fhat = [\tfrac34, 0,
    \tfrac14]$.  This is implemented by the match-circuit shown in Figure~\ref{fig:IFerro-MHferro}.
\end{proof}

We do not know whether the inclusion in the statement of
Theorem~\ref{thm:Iferro-MHferro} is strict.  This corresponds to the
dotted line in Figure~\ref{fig:lattice}.

\subsection{Even-circuits}

The proof that $\calE$ is a functional clone is similar to Theorem~\ref{thm:M-clone}. 
\begin{theorem}
\label{thm:E-clone}
$\fclone{\calE} = \calE$ and $\calB_0 \subseteq \calE$.
\end{theorem}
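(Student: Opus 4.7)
The plan is to mirror the proof of Theorem~\ref{thm:M-clone} case by case, substituting even-circuit gadgets for match-circuit gadgets and carrying the external scaling constant $c$ along throughout. Since every $f\in\calE$ has the form $f=c\cdot f'$ where $\hat{f'}$ is implemented by an even-circuit, each closure operation on $f,g\in\calE$ reduces to producing an even-circuit for the Fourier transform of the result, together with an updated scaling constant. As a base case, I would verify $\EQ\in\calE$ by observing that $\widehat{\EQ}=[\tfrac12,0,\tfrac12]$, taking $c=\tfrac12$, and implementing $[1,0,1]$ by the path $u_1-v-u_2$ with the single internal vertex $v$: the even-degree constraint at $v$ forces the two terminal spins to agree.

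For closure under permuting arguments I would just relabel terminals. For introducing a fictitious argument, Lemma~\ref{lem:fops}\pref{op-fict} says $\hat{g}(\bx0)=\hat{f}(\bx)$ and $\hat{g}(\bx1)=0$, which I would realise by attaching a dangling weight-$1$ terminal edge $(u_{k+1},v)$ to a new internal vertex $v$ of degree one; the even-degree constraint at $v$ forces $\sigma(y_{k+1})=0$, giving a zero contribution otherwise. For closure under summation, Lemma~\ref{lem:fops}\pref{op-sum} gives $\hat{h}(\bx)=2\hat{f}(\bx0)$, so with $f=c_f f'$ implemented by even-circuit $F$, I would delete $u_{k+1}$ and the terminal edge $y_{k+1}$ from $F$, leaving the previously-adjacent internal vertex with only its remaining edges (still required to sum to even weight); this exactly sums configurations of $F$ with $\sigma(y_{k+1})=0$, and the multiplicative factor~$2$ is absorbed by replacing $c_f$ with $2c_f$.

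The main obstacle is the product case, where I have to realise the convolution formula \eqref{eq:convolution} by an even-circuit, which differs substantially from the perfect-matching gadget used in Theorem~\ref{thm:M-clone}. Given $f=c_f f'$ and $g=c_g g'$ with even-circuits $F,G$ for $\hat{f'},\hat{g'}$, for each $i\in[k]$ I would introduce a new external vertex $u_i$ with a weight-$1$ terminal edge $y_i$ to a new internal vertex $v_i$, then connect $v_i$ by weight-$1$ edges to the internal endpoints of the former terminal edges $y_{f,i}$ and $y_{g,i}$ of $F$ and $G$ (removing the old external vertices $u_{f,i},u_{g,i}$). The even-degree constraint at $v_i$ enforces $\sigma(y_{f,i})\oplus\sigma(y_{g,i})\oplus\sigma(y_i)=0$, so writing $\bw$ for the spins at the old $F$-terminals, valid configurations correspond bijectively to choices of $\bw$, contributing weight $\hat{f'}(\bw)\,\hat{g'}(\bw\oplus\bx)$. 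Summing gives $\widehat{f'g'}(\bx)$, and scaling by $c_fc_g$ yields $fg$.

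Finally, $\calB_0\subseteq\calE$ follows because the nullary constant~$1$ is implemented by the empty even-circuit (the only configuration is empty, of weight~$1$, and coincides with its Fourier transform by Lemma~\ref{lem:fops}\pref{op-nullary}), and every other nonnegative constant is then $c\cdot 1\in\calE$. Throughout, I need to check that the introduced non-terminal edge weights lie in $(0,1]$; in every gadget above they are exactly~$1$, and any factor falling outside $(0,1]$ is absorbed by the external scaling constant~$c$ rather than by edge weights, which is precisely why the definition of~$\calE$ admits the scaling~$c$ in the first place.
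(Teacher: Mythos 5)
Your proposal is correct and takes essentially the same route as the paper: the paper also mirrors Theorem~\ref{thm:M-clone} gadget by gadget, uses the same dangling-edge and vertex-deletion constructions for fictitious arguments and summation, and realises the convolution via the same parity constraint $\sigma(y_{f,i})\oplus\sigma(y_{g,i})\oplus\sigma(y_i)=0$ (it reuses the circuit of Figure~\ref{fig:M-clone}(b) with the edges $(u_{f,j},u_{g,j})$ deleted, which your contracted gadget reproduces). The only cosmetic point is that the definition of an even-circuit requires at least $k$ internal vertices, so your two-edge equality path should be padded with an isolated internal vertex (the paper instead uses a three-edge path).
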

\begin{proof}
We show that $\calE$ contains   the
equality function and has all the closure properties required by
 the definition of functional clone.
\begin{itemize}

\item For the equality function, we have $\widehat{\EQ} =
\tfrac12\EQ$.   The definition of $\calE$ accounts for the multiplication by~$1/2$, so we need
only show that $\EQ$ is implemented by an even-circuit. Indeed, it is implemented by a three-edge
path between two terminals (where all edges have weight~$1$).

\item Permuting arguments corresponds directly to renaming the
terminals of the circuit, so it is clear that $\calE$~is closed under
this operation.

\item For closure under the introduction of fictitious arguments, let
$g(\bx z) = f(\bx)$ for some $k$-ary $f\in\mathcal{E}$.  By
Lemma~\ref{lem:fops}\pref{op-fict}, $\ghat(\bx0) = \fhat(\bx)$ and
$\ghat(\bx1)=0$.  The even-circuit for~$\ghat$ is the disjoint union
of the even-circuit~$F$ for~$\fhat$ and a weight-$1$ edge on new
vertices $u_{k+1}$  and~$v_{k+1}$.  
Even subgraphs with $y_{k+1}=0$ correspond to even subgraphs of~$F$.
There are no even subgraphs with $y_{k+1}=1$.

\item For closure under summation, let $g(\bx)= \sum_z f(\bx z)$
for some $(k+1)$-ary function $f\in \mathcal{E}$.  By
Lemma~\ref{lem:fops}\pref{op-sum}, $\ghat(\bx) = 2\fhat(\bx0)$, so we obtain an
even-circuit for~$\ghat/2$ from the circuit~$F$ for~$\fhat$ by deleting
the vertex~$u_{k+1}$ (which is equivalent to forcing its adjacent edge
in~$F$ to be spin-$0$).

\item For closure under products, let $h(\bx)=f(\bx)\,g(\bx)$ for $k$-ary
functions $f,g\in\mathcal{E}$.  Let $\fhat$ and~$\ghat$ be implemented
by even-circuits $F$ and~$G$, with terminal vertices $u_{f,1}, \dots,
u_{f,k}$ and $u_{g,1}, \dots, u_{g,k}$, respectively.  For each
$i\in[k]$, let $y_{f,i}$ be the unique edge adjacent to~$u_{f,i}$
in~$F$ and define~$y_{g,i}$ similarly in~$G$.  Recall that,
by~\eqref{eq:convolution},
    $\hhat(\by) = \sum_{\bw\in\Bools^k} 
    \fhat(\bw)\,\ghat(\bw\oplus\by)$.

 Let $H$ be the even-circuit that is the same as the one shown
in Figure~\ref{fig:M-clone}(b)
except that the edges $(u_{f,j}, u_{g,j})$ are deleted.
Let $\sigma$ be any assignment of spins $0$ and~$1$ to the edges of~$H$.  We claim
that, if $\sigma$~is an even subgraph then, for all $i\in[k]$,
$\sigma(y_{f,i}) = \sigma(y_{g,i}) \oplus \sigma(y_i)$.  

If $\sigma(y_i) = 1$ then we must have $\sigma(v_i,v_{k+i})=1$.  
Thus, exactly one of the edges $(v_{k+i},u_{f,i})$ and $(v_{k+i},u_{g,i})$ has
spin one. So $\sigma(y_{f,i})$ and $\sigma(y_{g,i})$ differ.

Otherwise, $\sigma(y_i) = 0$ so $\sigma(v_i,v_{i+i})=0$ 
so $\sigma(y_{f_i})$ and $\sigma(y_{g,i})$ agree. This completes the proof of the claim.

For any choice of spins $x_1, \dots, x_k$ for the edges $y_1, \dots,
y_k$ we can choose any spins $w_1, \dots, w_k$ for the edges $y_{f,1},
\dots, y_{f,k}$.  Doing so forces us to assign the spin $w_i\oplus
x_i$ to~$y_{g,i}$.  Therefore, the value computed by the  even-circuit
is $\sum_{\bw}\fhat(\bw)\,\ghat(\bw\oplus\bx)$, as required. 
\end{itemize}
The fact that $\calB_0 \subseteq \calE$ comes from the definition of  even-circuit.
The  nullary zero function $f=0$ is in $\calE$, since $f=0\cdot g$ for any
function~$g$ implemented by an even-circuit. Any non-zero  nullary function
$f=c$ is in $\calE$, since $f=c\cdot 1$, where $1$ is the constant one function
implemented by the empty graph, whose unique even subgraph is the empty graph,
which has weight 1.  
\end{proof}

The next theorem shows that the functional clone~$\calE$ is
the same as the clone generated by nullary functions and ferromagnetic Ising model interactions.
Something very close to this equivalence is seen in the ``high-temperature
expansion'' of the Ising model, first elucidated by Van der Waerden~\cite{VdWaerden}. 
In our proof, we employ the framework of holants and holographic transformations.
See Cai, Lu and Xia~\cite{planarCSP} for the wider context, 
particularly the introduction to that paper and Theorem~IV.1.

\begin{theorem}
\label{thm:Ferro-E}
    $\fclone{\FerroIsing\cup \calB_0} = \calE$.
\end{theorem}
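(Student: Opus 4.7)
The proof has two directions. For $\fclone{\FerroIsing\cup\calB_0}\subseteq\calE$, since $\calE$ is a functional clone containing $\calB_0$ (Theorem~\ref{thm:E-clone}), it suffices to show $\hIsing{2}{\lambda}\in\calE$ for every $\lambda\in[0,1]$. Lemma~\ref{lem:ft-hIsing} gives $\hIsinghat{2}{\lambda}=[\tfrac{1+\lambda}{2},0,\tfrac{1-\lambda}{2}]$. For $\lambda\in[0,1)$, this is $\tfrac{1+\lambda}{2}$ times $[1,0,w]$ with $w=(1-\lambda)/(1+\lambda)\in(0,1]$, and $[1,0,w]$ is implemented by the even-circuit with two terminals joined through their internal endpoints by a single internal edge of weight~$w$; the case $\lambda=1$ gives the constant-$1$ binary function, which comes from $\calB_0$ by introducing fictitious arguments.

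For the reverse inclusion $\calE\subseteq\fclone{\FerroIsing\cup\calB_0}$, let $f=c\cdot g\in\calE$ with $\widehat{g}$ implemented by an even-circuit $C$; since $c\in\calB_0$, it suffices to express $g$ in the target clone. I would first preprocess $C$ so that the map $\pi$ sending each terminal to its internal endpoint is injective: for any internal vertex $v$ incident to two or more terminals, retain one of them at $v$ and, for each other terminal $y_j$ attached to $v$, introduce a fresh internal vertex $v'_j$, move $y_j$ to $v'_j$, and add a weight-$1$ internal edge $(v,v'_j)$. The even-subgraph condition at the degree-$2$ vertex $v'_j$ forces the new edge to carry spin~$y_j$, so the original constraint at~$v$ is exactly reproduced and the new even-circuit computes the same function.

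Next, let $V_0$ denote the internal vertices of the preprocessed $C$ that are not attached to any terminal, and for each internal edge $e$ set $\lambda_e=(1-w_e)/(1+w_e)\in[0,1)$. Consider the ferromagnetic Ising pps-formula
\begin{equation*}
Z(\by)=\sum_{s\colon V_0\to\{0,1\}}\prod_{e=(u,v)\in E_{\mathrm{int}}}\hIsing{2}{\lambda_e}(\tilde{s}_u,\tilde{s}_v),
\end{equation*}
where $\tilde{s}_v=s_v$ if $v\in V_0$ and $\tilde{s}_v=y_{\pi^{-1}(v)}$ otherwise. The core of the proof is to show $g=\kappa\cdot Z$ with positive nullary constant $\kappa=\bigl(2^{|V_0|}\prod_e\tfrac{1+\lambda_e}{2}\bigr)^{-1}$. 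On one side, Fourier inversion $g(\by)=\sum_{\bx}(-1)^{|\bx\wedge\by|}\widehat{g}(\bx)$, combined with expanding $\widehat{g}(\bx)$ as a sum over internal-edge assignments~$\sigma$, swapping the two sums, and using injectivity of $\pi$ (which for each $\sigma$ even at $V_0$ pins down a unique~$\bx$ via $x_i\equiv d_\sigma(\pi(i))\pmod 2$), yields
\begin{equation*}
g(\by)=\sum_{\sigma\text{ even at }V_0}(-1)^{\sum_i y_i\,d_\sigma(\pi(i))}\prod_{e}w_e^{\sigma(e)}.
\end{equation*}
On the other side, Van der Waerden's high-temperature expansion of $Z$, obtained by writing $\hIsing{2}{\lambda_e}(a,b)=\tfrac{1+\lambda_e}{2}\bigl(1+w_e(-1)^{a+b}\bigr)$, expanding the product over edges, and summing out $s$, produces the very same alternating sum times $\kappa^{-1}$. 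Hence $g=\kappa Z\in\fclone{\FerroIsing\cup\calB_0}$.

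The main obstacle is the non-injective case: without the subdivision step, each $\sigma$ corresponds to several $\bx$ and the resulting sign cancellations do not line up with a single Ising partition function. The weight-$1$ subdivision is the clean fix that reduces the general case to the injective one, after which the Fourier-inversion side and the Van der Waerden side match term by term.
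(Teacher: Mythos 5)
Your proof is correct, and while the easy direction $\fclone{\FerroIsing\cup\calB_0}\subseteq\calE$ coincides with the paper's (the same three-edge-path even-circuit realising $[1,0,(1-\lambda)/(1+\lambda)]$ up to the scaling constant $\tfrac{1+\lambda}{2}$), your argument for the reverse inclusion takes a genuinely different route. The paper works inside the holant framework: it two-stretches the internal edges, inserts Hadamard vertices $H=\tfrac12[1,1,-1]$ along every edge (using $H=\tfrac12H^{-1}$), and then locally contracts parity-plus-Hadamard gadgets into equalities and $[1,0,w_e]$-plus-Hadamard gadgets into $\tfrac14(1+w_e)[1,\lambda_e,1]$, so that the transformed holant instance is literally a pps-formula over $\FerroIsing\cup\calB_0$. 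You instead write down the candidate Ising pps-formula $Z(\by)$ directly on the (preprocessed) circuit graph and verify the identity $g=\kappa Z$ by an explicit double computation: Fourier inversion plus the even-subgraph expansion of $\ghat$ on one side, and Van der Waerden's high-temperature expansion $\hIsing{2}{\lambda_e}(a,b)=\tfrac{1+\lambda_e}{2}(1+w_e(-1)^{a+b})$ on the other; both collapse to the same signed sum over edge-subsets even at $V_0$. Your weight-$1$ subdivision ensuring that the terminal-to-internal-vertex map is injective is the right fix for the only place the term-by-term matching could fail (a vertex carrying two terminals would make several $\bx$ compatible with one $\sigma$), and it plays the role of the paper's stretching steps. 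The paper's version buys a modular argument that slots into the existing holographic-transformation machinery of Cai--Lu--Xia; yours is more elementary and self-contained, makes the connection to the classical high-temperature expansion (which the paper only mentions in passing) completely explicit, and exhibits the Ising instance realising $g$ as the circuit graph itself rather than as the output of a chain of local gadget replacements. Both are valid proofs of the theorem.
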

\begin{proof}
Theorem~\ref{thm:E-clone} shows that $\calB_0 \subseteq \calE$.
It is also true that $\hIsing{2}{1}\in \calE$ since it can be constructed from~$\calB_0$ by
introducing fictitious arguments.
To see that $\FerroIsing \subseteq \calE$ 
consider any function 
$\hIsing{2}{\lambda}$ with $0 \leq \lambda < 1$.
Let $f(x_1,x_2) = (2/(1+\lambda)) \hIsing{2}{\lambda}$,
 By Lemma~\ref{lem:ft-hIsing}, $\fhat(0,0)=1$, $\fhat(0,1)=\fhat(1,0)=0$
and $\fhat(1,1) = (1-\lambda)/(1+\lambda)$.
But $\fhat$ can be implemented by an even-circuit consisting of a three-edge
path between two terminals in which the middle edge has weight $(1-\lambda)/(1+\lambda)$.
Thus, $\hIsing{2}{\lambda} \in \fclone{\calE}$.  
By Lemma~\ref{lem:transitive},     $\fclone{\FerroIsing\cup \calB_0} \subseteq \fclone{\calE}$, so by Theorem~\ref{thm:E-clone},
$\fclone{\FerroIsing\cup \calB_0} \subseteq  \calE$.

We now  show
that  $\calE \subseteq \fclone{\FerroIsing\cup \calB_0}$.
It is obvious that any nullary function is in $\fclone{\FerroIsing\cup \calB_0}$
so consider any function $g\in \calE$ with arity $k\geq 1$.
From the definition of $\calE$, 
$g = c \cdot f$ for some non-negative real number~$c$
and $\fhat$ is implemented by an even-circuit~$G$.

We  will show that $f$ is in $\fclone{\FerroIsing\cup \calB_0}$, which implies that $g$ is 
also in  
$\fclone{\FerroIsing\cup \calB_0}$.
To do this, we start by viewing the even-circuit~$G$
as an instance of a holant problem. A holant problem \cite{planarCSP}
consists of a graph in which, for all~$d$, every degree-$d$ vertex~$v$ is equipped with
a function  $f_v \in \calB_d$.
A configuration assigns spins $0$ and $1$ to the edges, and
the weight of a configuration is the product, over all vertices~$v$, of $f_v(\bx)$, where $\bx$ is
the string of spins of edges around~$v$ (in some appropriate order). The partition function is the sum of the weights of the configurations.

To represent the relevant holant problem cleanly,
we first construct $G'$ from $G$
by two-stretching the internal edges of~$G$ (turning them all into two-edge paths).
That is, if $G$ contains an edge $e=(v_i,v_j)$, 
we add a new vertex $v_{e}$ to $G'$.
We view $G'$ has a holant problem, so configurations assign spins $0$ and $1$ to the edges of $G'$.
At each vertex $v_i$ of $G'$ we 
add a function $f_{v_i}$ which is $1$ if an even number of its arguments have spin-$1$
and is $0$ otherwise.
At each new vertex $v_e$ of $G'$ we add 
a function $f_{v_e}$ which is the symmetric arity-$2$ function $[1,0,w_e]$.
This ensures that, in configurations with non-zero weight, 
the two edges adjacent to the new vertex $v_e$ get the same spin
(so non-zero configurations of  $G'$ correspond to even subgraphs of $G$). It also ensures
also that the weight $w_e$ of the edge $e$ of the even-circuit~$G$ is accounted for.
It is easy to see that the partition function
of the holant problem $G'$ is the same as the function implemented by the even-circuit~$G$,
which is $\fhat$.

Now we apply a standard trick from the holant literature.
Let $H = \tfrac12[1,1,-1]$ be the symmetric arity-$2$ Hadamard/FFT function.
Construct  a new holant instance $G''$ from $G'$ by three-stretching every edge of $G'$ 
and equipping every new vertex with the function~$H$. Since $H = \tfrac12 H^{-1}$, 
the partition function of $G''$ is $2^{-|E(G')|}$ times the partition function of~$G'\!$, so it is, up to a constant factor, $\fhat$.

Now construct a new holant problem $G'''$ from $G''$ by considering 
all of the original vertices of $G'$.
\begin{itemize}
 \item
For any arity-$d$ vertex $v_i$ (which is an original internal vertex of $G$),
replace the subgraph consisting of $v_i$ 
(with its ``arity-$d$ even parity'' function)
and all of its neighbours 
(which have $H$ functions) with an equivalent arity-$d$ vertex
equipped with the arity-$d$ equality function. 
This leaves the partition function unchanged.

\item For any vertex $v_e$ (one of the new nodes with $[1,0,w_e]$ functions added in the construction of $G'$)
 let $\lambda_e = (1-w_e)/(1+w_e)$ and
 replace $v_e$ together with its two neighbours (which have $H$ functions)
 with the equivalent degree-$2$ vertex whose function is   $\tfrac14(1+w_e) [1,\lambda_e,1]$.
Again, this does not change the partition function.
\item The only remaining 
vertices with
$H$  functions are adjacent to the external vertices of~$G$.
Replacing these functions with arity-$2$ equality, we obtain a holant problem $G'''$ whose partition
function is the Fourier transform of that of $G''$. Thus, its partition function is~$f$.
\end{itemize}
 
 We now have a holant problem $G'''$ implementing $f$, up to a constant factor.
 All  of the functions at the vertices of $G'''$ are   equality (of any arity) or 
 $\tfrac14(1+w_e) [1,\lambda_e,1]$   for some $0\leq \lambda_e < 1$ so they are all in $\fclone{\FerroIsing \cup \calB_0}$.
Moreover, the equality constraints correspond to the original internal vertices $v_i$ of $G$
and the new Ising constraints correspond to edges between internal vertices of $G$.
Thus, $G'''$ implements a sum (over the spins of the internal vertices of $G$)
of a product (over the spins of the internal edges of $G$) of  
constraints in $\fclone{\FerroIsing \cup \calB_0}$. This shows that $f$ is in the closure of 
$\fclone{\FerroIsing \cup \calB_0}$ under product and summation, so $f$ itself is in 
 $\fclone{\FerroIsing \cup \calB_0}$. 
\end{proof}

Recall that $\Iferro = \ofclone{\FerroIsing \cup \calB_0}$. 
The following corollary follows immediately from
Theorem~\ref{thm:Ferro-E} and Theorem~\ref{thm:E-clone} using the definition of an $\omega$-clone.

\begin{corollary}
\label{cor:Iferro-E}
    $\Iferro = \ofclone{\calE}$.
\end{corollary}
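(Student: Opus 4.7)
The plan is to unwind both sides of the asserted equality $\Iferro = \ofclone{\calE}$ down to a common expression, namely the closure of $\calE$ under limits, and then observe they must coincide. First I would recall from Definition~\ref{def:Addunaries} that $\Iferro = \ofclone{\FerroIsing \cup \calB_0}$, and from the discussion following the definition of $\omega$-clones in Section~\ref{sec:fclones} that, for any $\calF \subseteq \calB$, the set $\ofclone{\calF}$ equals the closure of $\fclone{\calF}$ under limits (this is the restatement of~\cite[Lemma~2.2]{LSM} noted in the excerpt).

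Next I would apply Theorem~\ref{thm:Ferro-E}, which identifies $\fclone{\FerroIsing \cup \calB_0}$ with $\calE$. Consequently
\[
    \Iferro \;=\; \ofclone{\FerroIsing \cup \calB_0} \;=\; \text{closure of } \fclone{\FerroIsing \cup \calB_0} \text{ under limits} \;=\; \text{closure of } \calE \text{ under limits}.
\]
Then, on the other side, I would apply Theorem~\ref{thm:E-clone}, which says $\fclone{\calE} = \calE$, to conclude
\[
    \ofclone{\calE} \;=\; \text{closure of } \fclone{\calE} \text{ under limits} \;=\; \text{closure of } \calE \text{ under limits}.
\]
Comparing the two displays gives $\Iferro = \ofclone{\calE}$, which is exactly the claim.

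There is essentially no obstacle here: the entire content has been absorbed into the two preceding theorems, and this corollary is just the bookkeeping step that expresses the $\omega$-clone generated by the ferromagnetic Ising functions (with nullary constants) as the $\omega$-clone generated by the even-circuit functional clone. The only thing to be careful about is to invoke the correct formulation of $\ofclone{\cdot}$ as closure of $\fclone{\cdot}$ under limits, rather than the original unrestricted-order definition, so that the identities $\fclone{\FerroIsing \cup \calB_0} = \calE$ and $\fclone{\calE} = \calE$ can be substituted directly.
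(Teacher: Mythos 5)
Your proof is correct and matches the paper's own argument, which likewise derives the corollary immediately from Theorem~\ref{thm:Ferro-E} and Theorem~\ref{thm:E-clone} via the characterisation of $\ofclone{\cdot}$ as the limit-closure of $\fclone{\cdot}$. You have simply spelled out the bookkeeping that the paper leaves implicit.
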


\subsection{Relationship of $\boldsymbol{\ofclone{\calM}}$ with other clones}

In this section, we give, in Lemma~\ref{lem:matchineq-a}, a necessary condition
for a $4$-ary function to be in~$\calM$. Moreover, we show, in
Lemma~\ref{lem:matchineq-b}, that for symmetric functions this condition is also
sufficient. We then use these results to study the relationship between
$\ofclone{\calM}$ and the clones around it in the lattice~$\calL'\!$.

\begin{lemma}\label{lem:matchineq-a}
For every $4$-ary function $f\in\calM$,
\begin{equation}
\label{eq:matchineq}
\fhat(0011)\,\fhat(1100) + \fhat(0101)\,\fhat(1010) + \fhat(0110)\,\fhat(1001) \geq \fhat(0000)\,\fhat(1111)\,.
\end{equation}
\end{lemma}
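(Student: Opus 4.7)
The plan is to use the combinatorial interpretation of $\fhat$ as a sum over weighted perfect matchings of the match-circuit $G$ that implements it. For any $\by \in \Bools^4$, $\fhat(\by) = \sum_M w(M)$, where $M$ ranges over perfect matchings of the internal vertices of $G$ with the terminal edges assigned spins according to $\by$. Consequently, each product appearing in \pref{eq:matchineq} expands as a sum over ordered pairs of such matchings (with the two specified boundaries), weighted by the product of the two matching weights.

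I first analyse the right-hand side $\fhat(\bzero)\fhat(\bone)$. For any pair $(M_0, M_1)$ contributing to this product, every external vertex has degree one in the symmetric difference $M_0 \triangle M_1$ (it appears only in $M_1$), while every internal vertex has degree zero or two (it has degree one in each of $M_0$ and $M_1$). Hence $M_0 \triangle M_1$ decomposes into edge-disjoint paths joining external vertices, together with cycles through internal vertices. The four external vertices are the endpoints of exactly two paths, which induce one of the three pairings of $\{u_1,u_2,u_3,u_4\}$ into unordered pairs; write these as $P_{12|34}$, $P_{13|24}$ and $P_{14|23}$. This partitions the sum as $\fhat(\bzero)\fhat(\bone) = A_{12|34} + A_{13|24} + A_{14|23}$, where $A_{ij|kl}$ collects the contribution from pairs inducing the pairing $P_{ij|kl}$.

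For each pairing $P_{ij|kl}$, I will exhibit a weight-preserving injection from the pairs counted by $A_{ij|kl}$ into the pairs contributing to $\fhat(\by_{ij})\fhat(\by_{kl})$, where $\by_{ij} \in \Bools^4$ denotes the tuple with $1$s exactly at positions $i$ and~$j$. Given $(M_0, M_1)$ contributing to $A_{ij|kl}$, let $\pi$ be the unique $u_i$-to-$u_j$ path in $M_0 \triangle M_1$, and send $(M_0, M_1)$ to $(M_0 \triangle \pi,\, M_1 \triangle \pi)$. A local check at each vertex of $\pi$ (using the fact that $\pi$ alternates between $M_0$-edges and $M_1$-edges) shows that both resulting edge sets are perfect matchings of the internal vertices: at interior vertices of $\pi$ exactly one incident edge is swapped between the two matchings, and at the endpoints $u_i, u_j$ the terminal edges move from $M_1$ into $M_0$, so the new boundaries are $\by_{ij}$ and $\by_{kl}$, respectively. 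Crucially, the pair $(M_0\cap M_1,\, M_0 \cup M_1)$ is invariant under this simultaneous XOR, so
\begin{equation*}
    w(M_0)\,w(M_1) \;=\; \prod_{e \in M_0 \cap M_1} w_e^2 \;\cdot \prod_{e \in M_0 \triangle M_1} w_e
\end{equation*}
is preserved. Since the image consists of valid pairs of matchings with boundaries $\by_{ij}$ and $\by_{kl}$, we have $A_{ij|kl} \le \fhat(\by_{ij})\fhat(\by_{kl})$; summing over the three pairings yields \pref{eq:matchineq}. The main obstacle is the bookkeeping underlying this injection, particularly the verification that the XOR map really produces perfect matchings with the stated boundaries, which hinges on the alternating structure of $\pi$ in $M_0$ and $M_1$.
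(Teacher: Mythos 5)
Your proof is correct and follows essentially the same route as the paper's: decompose $M_0\triangle M_1$ into cycles plus two paths pairing the external vertices, and XOR with the path through $u_i,u_j$ to obtain a weight-preserving injection into the pairs counted by the three cross terms. The only presentational differences are that you partition the left-hand product by induced pairing rather than mapping into a disjoint union, and you justify weight preservation via the invariance of $(M_0\cap M_1,\,M_0\cup M_1)$ (which also immediately yields the injectivity you assert, since the symmetric difference, and hence $\pi$, is recoverable from the image pair).
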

\begin{proof} 
Consider an arity-$4$ match-circuit~$G$ that implements $\fhat$ as described in Definition~\ref{def:MC}.
Let $S = \{u_1,u_2,u_3,u_4\}$ be the set of external vertices of~$G$.
For $A\subseteq S$,  let $M_A$ denote the set of perfect matchings which include terminals adjacent to~$A$
(by assigning them spin~$1$) and exclude terminals adjacent to $S\setminus A$ (by assigning them spin~$0$).
  We exhibit an injective map 
\begin{equation*}
\nu\colon M_\emptyset\times M_S\to M_{\{u_1,u_2\}}\times M_{\{u_3,u_4\}}
\cup M_{\{u_1,u_3\}}\times M_{\{u_2,u_4\}}\cup M_{\{u_1,u_4\}}\times M_{\{u_2,u_3\}}
\end{equation*}
which is weight-preserving in the sense that, for matchings $m_1,
\dots, m_4$ with $\nu(m_1, m_2) = (m_3,m_4)$, we have $w(m_1)\,w(m_2) =
w(m_3)\,w(m_4)$.  The existence of~$\nu$ implies~\eqref{eq:matchineq}.

Given $(m_1,m_2)\in M_\emptyset\times M_S$, consider $m_1\oplus m_2$
and note that this is a collection of cycles together with two paths
$\pi$ and $\pi'\!$.  Let $\pi_1$ be the path connecting  vertex~$u_1$
to one of the other external vertices; the other path  connects the remaining external vertices.
If $\pi$ joins $u_1$ to~$u_2$, then $m_3:=m_1\oplus\pi\in M_{\{u_1,u_2\}}$ and
$m_4:=m_2\oplus\pi\in M_{\{u_3,u_4\}}$, with similar claims for $\pi$ joining
$u_1$ to $u_3$ or~$u_4$.  The construction is invertible, since $m_3\oplus
m_4 = m_1\oplus m_2$, from which we can recover~$\pi$ and, hence, $m_1$
and~$m_2$.  Therefore, $\nu\colon (m_1,m_2)\mapsto(m_3,m_4)$ is an
injection as claimed.

To see that $\nu$~is weight-preserving, observe that the edges
of~$\pi$ each appear in exactly one of $m_1$ and~$m_2$ and in exactly
one of $m_3$ and~$m_4$ and that, for $i\in\{1,2\}$, $m_i\setminus\pi =
m_{i+2}\setminus\pi$.
\begin{equation*}
    w(m_1)\,w(m_2)
        = \prod_{e\in m_1\setminus\pi}  w_e
           \prod_{e\in m_2\setminus\pi} w_e
           \prod_{e\in \pi} w_e
        = \prod_{e\in m_3\setminus\pi}  w_e
           \prod_{e\in m_4\setminus\pi} w_e
           \prod_{e\in \pi} w_e
        = w(m_3)\,w(m_4)\,.\qedhere
\end{equation*}
\end{proof}

We give the converse of Lemma~\ref{lem:matchineq-a} for symmetric functions.

\begin{lemma}
\label{lem:matchineq-b}
    If $f$ is a symmetric, arity-$4$, self-dual function such that
    \begin{equation}
    \label{eq:matchineq-sym}
        3 \fhat(0011)^2 \geq \fhat(0000)\,\fhat(1111)\,,
    \end{equation}
    then $f\in\calM$.
\end{lemma}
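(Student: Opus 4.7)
The plan is to exploit the self-duality of $f$, which by Lemma~\ref{lem:oddSD} forces $\fhat(\bx)=0$ whenever $|\bx|$ is odd. Combined with the symmetry of $f$ (which passes to $\fhat$), this means $\fhat$ is determined by three numbers: $a=\fhat(\bzero)$, $b=\fhat(0011)$, and $c=\fhat(\bone)$. Any $f\in\calM$ has $\fhat\ge 0$ since match-circuits sum products of positive weights, so I proceed under $a,b,c\ge 0$ and aim to exhibit, for each admissible triple, an explicit match-circuit whose signature is exactly $[a,0,b,0,c]$.

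The core construction is a three-parameter family of gadgets $G_{w,s,t}$ ($w,s,t>0$): external vertices $u_1,\dots,u_4$ each joined by a terminal to an internal vertex $v_i$; every edge of $K_4$ on $\{v_1,\dots,v_4\}$ given weight $w$; two extra internal vertices $v_5,v_6$ joined to every $v_i$ ($i\in[4]$) by an edge of weight $s$, and joined to each other by an edge of weight $t$. A short case analysis of perfect matchings (splitting on whether $(v_5,v_6)$ is spin-$1$, and in the spin-$0$ case on which two of $v_1,\dots,v_4$ absorb $v_5$ and $v_6$) gives
\begin{equation*}
\fhat(\bzero)=3tw^2+12s^2w,\qquad \fhat(\by)=tw+2s^2\text{ for }|\by|=2,\qquad \fhat(\bone)=t,
\end{equation*}
with $\fhat(\by)=0$ on odd $|\by|$ by parity (six internal vertices must be matched, so $|\by|$ must be even). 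A direct algebraic cancellation then produces the key identity
\begin{equation*}
3\fhat(0011)^2-\fhat(\bzero)\,\fhat(\bone)=12s^4,
\end{equation*}
so every $G_{w,s,t}$ meets the lemma's inequality with controllable slack $12s^4$.

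Given any target $(a,b,c)$ with $c>0$ and $3b^2>ac$, I set $t=c$, take $s=\bigl((3b^2-ac)/12\bigr)^{1/4}$, and solve $3cw^2-6bw+a=0$ for its smaller positive root $w=(b-\sqrt{b^2-ac/3})/c$; the hypothesis $3b^2\ge ac$ makes both $s$ and $w$ real and non-negative, and the two displayed identities above then force the signature of $G_{w,s,t}$ to be exactly $[a,0,b,0,c]$. Overall positive scaling is free, because $\calM$ is closed under multiplication by elements of $\calB_0$ (noted in the definition of match-circuit and used in Theorem~\ref{thm:M-clone}). Boundary and degenerate cases reduce to variants of $G_{w,s,t}$: on the boundary $3b^2=ac$ delete $v_5$ and $v_6$ and use the pure $K_4$ gadget with signature $(3w^2,w,1)$; when $c=0$ delete the edge $(v_5,v_6)$ so that $\by=\bone$ admits no matching; when $a=0$ delete the $K_4$ edges; and when $f$ is a non-negative constant, combine $\calB_0\subseteq\calM$ with the introduction of fictitious arguments (Theorem~\ref{thm:M-clone}).

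The main obstacle is producing a single family of gadgets rich enough to cover the whole feasible region $\{(a,b,c):a,b,c\ge 0,\ 3b^2\ge ac\}$: the $K_4$ gadget on its own realises only the boundary curve $3b^2=ac$, so the two extra vertices $v_5,v_6$ (acting as a controllable ``thickening'' of the circuit) are precisely what is needed to reach the strict interior. The bulk of the technical work is the enumeration of perfect matchings in $G_{w,s,t}$ and the cancellation that collapses $3b^2-ac$ to $12s^4$; once this identity is in hand, the lemma follows by choosing parameters.
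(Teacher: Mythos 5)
Your proof is correct and follows essentially the same route as the paper's: both realise the target signature $[a,0,b,0,c]$ by attaching four terminals to a weighted $K_6$ and enumerating its perfect matchings according to how many clique vertices the terminals consume, with the same edge-deletion tricks for the degenerate cases; you merely place the tunable weights on the clique edges themselves ($w,s,t$, solved exactly via the identity $3b^2-ac=12s^4$) where the paper uses a single clique weight $\lambda$ plus pendant weights $\mu$ and a range argument for $z(\lambda)$. One small remark: like the paper, you implicitly assume $\fhat\ge 0$ (you justify this from $f\in\calM$, which is the conclusion rather than a hypothesis), but this is harmless since the lemma is only invoked for functions in $\SDP$.
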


Note that~\eqref{eq:matchineq-sym} is just~\eqref{eq:matchineq}
specialised to symmetric functions.

\begin{figure}
    \begin{center}
    \begin{tikzpicture}[scale=1,node distance = 1.5cm]
    \tikzstyle{vertex}=[fill=black, draw=black, circle, inner sep=2pt]
    \tikzstyle{dist}  =[fill=white, draw=black, circle, inner sep=2pt]

    \begin{scope}[shift={(-6,4)}] 
        \node at (-0.75,-1) {a)};

        \foreach \i in {1,2,3,4} {
            \node[dist]   (au\i) at (0.0,-\i) [label=270:$u_\i$] {};
            \node[vertex] (av\i) at (1.5,-\i) [label=270:$v_\i$] {};
            \node[vertex] (aw\i) at (3.0,-\i)                    {};
            \draw (au\i) -- (av\i) -- (aw\i);
        }
        \node at (2.25,-0.7) {$C_0$};
    \end{scope}

    \begin{scope}[shift={(1.5,0)}]
        \node at (-2.75,3) {b)};
        \node[vertex] (bx1) at (4,3)                [label= 90:$x_1$] {};
        \node[vertex] (bx2) at ($(bx1)+(210:1.75)$) [label= 90:$x_2$] {};
        \node[vertex] (bx3) at ($(bx2)+(270:1.75)$) [label=270:$x_3$] {};
        \node[vertex] (bx4) at ($(bx3)+(330:1.75)$) [label=270:$x_4$] {};
        \node[vertex] (bx5) at ($(bx4)+( 30:1.75)$) [label=330:$x_5$] {};
        \node[vertex] (bx6) at ($(bx5)+( 90:1.75)$) [label= 30:$x_6$] {};

        \node (temp1) at ($(bx2)+(-1.5,0)$) {};
        \node[vertex] (bw1) at (temp1 |- bx1) [label=270:$w_1$] {};
        \node[vertex] (bw2) at (temp1 |- bx2) [label=270:$w_2$] {};
        \node[vertex] (bw3) at (temp1 |- bx3) [label=270:$w_3$] {};
        \node[vertex] (bw4) at (temp1 |- bx4) [label=270:$w_4$] {};

        \node (temp1) at ($(bx2)+(-3.0,0)$) {};
        \node[vertex] (bv1) at (temp1 |- bx1) [label=270:$v_1$] {};
        \node[vertex] (bv2) at (temp1 |- bx2) [label=270:$v_2$] {};
        \node[vertex] (bv3) at (temp1 |- bx3) [label=270:$v_3$] {};
        \node[vertex] (bv4) at (temp1 |- bx4) [label=270:$v_4$] {};

        \node (temp2) at ($(bx2)+(-4.5,0)$) {};
        \node[dist]   (bu1) at (temp2 |- bx1) [label=270:$u_1$] {};
        \node[dist]   (bu2) at (temp2 |- bx2) [label=270:$u_2$] {};
        \node[dist]   (bu3) at (temp2 |- bx3) [label=270:$u_3$] {};
        \node[dist]   (bu4) at (temp2 |- bx4) [label=270:$u_4$] {};

        \foreach \i in {1, ..., 5} {
            \foreach \j in {\i, ..., 6} {
                \draw (bx\i) -- (bx\j);
            }
        }
        \foreach \i in {1, 2, 3, 4} {
            \draw (bu\i) -- (bv\i) -- (bw\i) -- (bx\i);
        }
 
        \node at ($(bx5)!0.5!(bx6) + (0.3,0)$) {$\lambda$};
        \foreach \i in {1, 2, 3} {
            \node at ($(bw\i)!0.5!(bx\i) + (0,0.3)$) {$\mu$};
        }
        \node at ($(bw4)!0.5!(bx4) - (0,0.3)$) {$\mu$};
    \end{scope}

    \end{tikzpicture}
    \end{center}
    \caption{Match-circuits used in the proof of
      Lemma~\ref{lem:matchineq-b}. Every edge has weight~$1$ unless
      otherwise indicated.  
      a)~The case $C_0>0$, $C_2=C_4=0$.  b)~The case $C_0,C_2>0$,
      $C_4\geq 0$.}
    \label{fig:matchineq-b}
\end{figure}
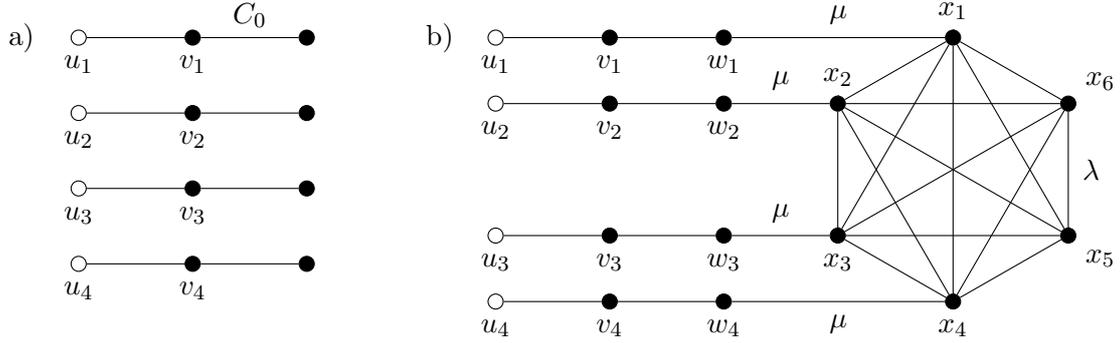

\begin{proof} 
For ease of notation, let $C_0 = \fhat(0,0,0,0)$, $C_2 =
\fhat(0,0,1,1)$ and $C_4 = \fhat(1,1,1,1)$.  Since $C_0 =
\tfrac1{16}\sum_{\bz\in\{0,1\}^4}f(\bz)$ is a sum of nonnegative
terms, if $C_0=0$, then $f$~is the constant zero function, which is
in~$\calM$ by Theorem~\ref{thm:M-clone}.  For the rest of the proof,
we assume that $C_0>0$.

If $C_2=C_4=0$, then $\fhat$~is implemented by the match-circuit shown
in Figure~\ref{fig:matchineq-b}(a).  If at most one of $C_2$ and~$C_4$
is zero, then \eqref{eq:matchineq-sym} implies that $C_2>0$.

We will construct a match-circuit $G$ for~$\fhat$ (see
Figure~\ref{fig:matchineq-b}(b)).
In addition to the terminal edges $y_i = (u_i,v_i)$ for $i\in [4]$,
 $G$ will have edges $(v_i,w_i)$ and $(w_i,x_i)$.
It will also contain a clique on the six vertices $x_1,x_2,x_3,x_4,x_5,x_6$.
The edge $(x_5,x_6)$ has weight~$\lambda$ and the
edges $(w_i,x_i)$ have weight~$\mu$. All other edges have weight~$1$.

Let $S =\{u_1,u_2,u_3,u_4\}$.
Following the proof of Lemma~\ref{lem:matchineq-a}, 
for $A\subseteq S$,  $M_A$ denotes the set of perfect matchings which include terminals adjacent to~$A$
(by assigning them spin~$1$) and exclude terminals adjacent to $S\setminus A$ (by assigning them spin~$0$).
Let $Z_A$ denote the sum of the weights of the perfect matchings in $M_A$.

The perfect matchings in $M_\emptyset$ contain all of the edges $(v_i,w_i)$
and none of the edges $(u_i,v_i)$ or $(w_i,x_i)$.
There are three perfect matchings of the clique that include the weight-$\lambda$ edge $(x_5,x_6)$ 
and twelve perfect matchings of the clique that do not include this edge,
so $Z_\emptyset=3\lambda+12$.
Similarly, the single perfect matching in $M_{S}$ contains all of the edges $(u_i,v_i)$
and $(w_i,x_i)$ (which have weight~$\mu$) and none of the edges $(v_i,w_i)$. The edge $(x_5,x_6)$
is present, so $Z_S=\lambda\mu^4$.
Finally, the perfect matchings in  
$M_{\{u_1,u_2\}}$ contain 
the two weight-$\mu$ edges $(w_1,x_1)$ and $(w_2,x_2)$ but not the
two weight-$\mu$ edges $(w_3,x_3)$ and $(w_4,x_4)$.
There are three matchings of the 4-clique containing $x_3,x_4,x_5,x_6$, one of which has weight~$\lambda$,
so $Z_{\{u_1,u_2\}} = (\lambda+2)\mu^2$ and the same is true for $Z_A$ for any other
size-two set $A\subseteq S$. 
Now let 
$$z(\lambda)=\frac
{Z_\emptyset Z_S}
{Z_{\{u_1,u_2\}} Z_{\{u_1,u_2\}}}.$$
Note that 
$z(\lambda) ={\lambda(3\lambda+12)}/{(\lambda+2)^2}$,
and that the range of~$z(\lambda)$ includes the interval $[0,3)$.
There are now three cases.

If $3C_2^2>C_0C_4>0$, we can choose~$\lambda$ so that $z(\lambda)=C_0C_4/C_2^2$.
Now choose~$\mu$ to obtain $(Z_\emptyset,Z_{\{u_1,u_2\}},Z_S)\propto(C_0,C_2,C_4)$.
In order to get the constant multiple correct, $G$ can be supplemented with an additional edge.

If $3C_2^2>C_0C_4=0$ (so $C_4=0$, since $C_0$ and~$C_2$ are positive),
we can simulate $\lambda=0$ by deleting the edge $(x_5,x_6)$.  We have
$Z_\emptyset=12$ and $Z_{\{u_1,u_2\}}=2\mu^2\!$ and, as before, we can
choose~$\mu$ so that $(Z_\emptyset,Z_{\{u_1,u_2\}})\propto(C_0,C_2)$
and add an edge to~$G$ for the required constant multiple.

Finally, if $3C_2^2=C_0C_4>0$, we must
achieve $z(\lambda)=3$.  This can be done by effectively setting $\lambda=\infty$ by removing the 
vertices $x_5$ and~$x_6$ and their incident edges.
\end{proof}

\begin{theorem}
\label{thm:M-SDP}
    $\ofclone{\calM} \subset \SDP$.
\end{theorem}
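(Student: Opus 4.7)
The plan is to first establish the inclusion $\calM\subseteq\SDP$ (which gives $\ofclone{\calM}\subseteq\SDP$ because $\SDP$ is itself an $\omega$-clone), and then exhibit a specific function in $\SDP\setminus\ofclone{\calM}$. For the inclusion, consider any $f\in\calM$ of arity $k\geq 1$ and let $G$ be a match-circuit implementing $\fhat$. Since edge weights are positive, every value of~$G$ is a non-negative sum of products of positive numbers, so $\fhat\geq 0$ pointwise. The parity argument given in the discussion following Definition~\ref{defn:clonelist} forces $\fhat(\bx)=0$ for all $\bx$ whose Hamming weight has parity different from some fixed parity $p$ (determined by the number of internal vertices of~$G$). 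If $p$ is even then $\fhat$~vanishes on odd-weight inputs, so $f\in\SD$ by Lemma~\ref{lem:oddSD}, and together with non-negativity this places $f\in\SD\cap\calP\cap\calPN=\SDP$ via Theorem~\ref{thm:SDP}. If $p$ is odd then $\fhat(\bzero)=0$, but $\fhat(\bzero)=2^{-k}\sum_\bw f(\bw)$ with $f\geq 0$, so $f\equiv 0\in\SDP$. The nullary functions in~$\calM$ trivially belong to $\SDP$ since $\calB_0$ is contained in each of $\SD$, $\calP$ and $\calPN$ (Theorems~\ref{thm:SDclone}, \ref{thm:P-clone} and~\ref{thm:PN-clone}).

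For strictness, my witness would be the symmetric arity-$4$ function $f=[2,0,2,0,2]$, i.e.\ $f(\bx)=1+(-1)^{|\bx|}$. A direct Fourier computation gives $\fhat(\bzero)=\fhat(\bone)=1$ and $\fhat(\bx)=0$ for every other $\bx$, so $f\in\SDP$. However,
\[
\fhat(0011)\,\fhat(1100)+\fhat(0101)\,\fhat(1010)+\fhat(0110)\,\fhat(1001)=0<1=\fhat(\bzero)\,\fhat(\bone),
\]
which violates~\eqref{eq:matchineq}, so Lemma~\ref{lem:matchineq-a} yields $f\notin\calM$.

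To upgrade this to $f\notin\ofclone{\calM}$, I would use that $\calM$ is already a functional clone (Theorem~\ref{thm:M-clone}), so by Lemma~2.2 of~\cite{LSM} $\ofclone{\calM}$ coincides with the closure of~$\calM$ under limits. If $f\in\ofclone{\calM}$ there would exist $4$-ary functions $f_\epsilon\in\fclone{S}\subseteq\calM$ (for some finite $S\subseteq\calM$) with $\|f-f_\epsilon\|_\infty\to 0$, hence $\fhat_\epsilon\to\fhat$ pointwise by Lemma~\ref{lem:fops}\pref{op-lim}. Each $\fhat_\epsilon$ satisfies the non-strict inequality~\eqref{eq:matchineq}, and non-strict inequalities pass to limits, contradicting the computed Fourier values of~$f$.

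The only mildly delicate points are the parity case split in the inclusion direction (in particular, observing that the odd-parity branch collapses~$f$ to the constant zero), and noting that because $\fclone{\calM}=\calM$ we may take the approximating sequence to lie in $\calM$ itself, so Lemma~\ref{lem:matchineq-a} applies directly to each $f_\epsilon$; the rest is bookkeeping.
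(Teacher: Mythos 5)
Your proof is correct and follows essentially the same route as the paper: the inclusion $\calM\subseteq\SDP$ via non-negativity of match-circuit outputs plus the parity condition and Lemma~\ref{lem:oddSD}, and strictness via a $4$-ary witness in $\SDP$ violating inequality~\eqref{eq:matchineq}, lifted to $\ofclone{\calM}$ by passing the non-strict inequality to the limit of approximants (which lie in $\calM$ since $\fclone{\calM}=\calM$). The only difference is the witness — the paper uses $\hIsing{4}{1/4}$, whose Fourier values give $27/1024$ versus $33/1024$ and hence need a small $\epsilon$-computation, whereas your $[2,0,2,0,2]$ gives the cleaner gap $0<1$ in~\eqref{eq:matchineq}.
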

\begin{proof}
    Let $G$ be a match-circuit with terminals $y_1, \dots, y_k$, where
    $y_i = (u_i,v_i)$  for each $i\in[k]$, which implements the function
    $\fhat(\by)$.  For any assignment $\ba$ to~$\by$,
    $\fhat(\ba)$~is the total weight of the perfect matchings of the
    graph $G-\{v_i\mid a_i=1\}$.  Since a graph with an odd number of
    vertices has no perfect matchings, $\fhat(\ba)=0$ whenever $\wt{\ba}$ is odd
    or $\fhat(\ba)=0$ whenever $\wt{\ba}$ is even. 
    (This is the
    so-called ``parity condition'' of match-circuits; see,
    e.g.,~\cite{CG2014:matchgates}.) 

    We first show that $\calM\subseteq\SDP$.
    Suppose that $f\in\calM$.  
We will show that $f \in \SDP$.
Since Theorems~\ref{thm:SDclone} and~\ref{thm:P-clone}
guarantee that $\calB_0 \in \SDP$, we can assume
without loss of generality that the arity, $k$, of~$f$ is positive.
    Since $\fhat$~is implemented by a
    match-circuit, $\fhat(\bx)\geq 0$ for all~$\bx$, so
$f\in \calP$.
If $f$~is the constant arity-$k$ zero function, then
    $f\in\SDP$ trivially,
    so assume that
    $f(\ba)>0$ for at least one~$\ba\in\Bools^k\!$.  This implies that
    $\fhat(\bzero) = 2^{-k}\sum_{\bx} f(\bx) > 0$ so, by the parity
    condition, $\fhat(\bx) = 0$ whenever $\wt{\bx}$~is odd.  By
    Lemma~\ref{lem:oddSD}, $f\in\SD$, and we have established that
    $f\in\SD\cap \calP$.  But $\SDP = \SD \cap \calP$ by
    Theorem~\ref{thm:SDP}, so $f\in \SDP$.

    To show that $\calM\subset\SDP$, consider the function
    $g=\hIsing{4}{1/4} = [1,\tfrac14, \tfrac14, \tfrac14, 1]$, which
    is in~$\SD$.  By Lemma~\ref{lem:ft-hIsing},
    $\ghat(0000) = \tfrac{11}{32}$, $\ghat(\bx) = \tfrac{3}{32}$ when
    $\wt{\bx}\in\{2,4\}$ and $\ghat(\bx) = 0$, otherwise, so
    $g\in\calP\cap\calPN=\SDP$.  However, $g\notin\calM$ by
    Lemma~\ref{lem:matchineq-a}.

    It remains to ``lift'' the result to $\omega$-clones.  We have
    $\ofclone{\calM}\subseteq \ofclone{\SDP} = \SDP$, so it is enough
    to show that $g=\hIsing{4}{1/4}\notin\ofclone{\calM}$.  Suppose,
    towards a contradiction, that $g\in\ofclone{\calM}$.  By the
    definition of $\omega$-clones, for every $\epsilon>0$, there is a
    function $f\in\fclone{\calM}$ such that $\|f-g\|_\infty < \epsilon/32$.  
 Since $\calM$ is a functional clone by Theorem~\ref{thm:M-clone}, 
 $f\in \calM$.   
    Then
    Lemma~\ref{lem:fops}\pref{op-lim} implies that $\|\fhat-\ghat\|_\infty <
    \epsilon/32$, also.  Thus, for all~$\bx\in\Bools^4\!$,
    \begin{equation}
    \label{eq:fgbound}
        \ghat(\bx) - \epsilon/32 < \fhat(\bx)
            < \ghat(\bx) + \epsilon/32\,.
    \end{equation}
    Since $f\in\calM$, \eqref{eq:matchineq} must hold.  Plugging
    \eqref{eq:fgbound} and the values of~$\ghat$
    into~\eqref{eq:matchineq} gives     \begin{equation*}
        3\big(\tfrac{3}{32} + \tfrac{\epsilon}{32}\big)^2
            > \big(\tfrac{11}{32} - \tfrac{\epsilon}{32}\big)
                  \big(\tfrac{3}{32} - \tfrac{\epsilon}{32}\big)\,,
    \end{equation*} 
but this only holds
     for sufficiently large
    positive values of~$\epsilon$, contradicting the assumption that
    $g\in\ofclone{\calM}$.
\end{proof}

Lemmas~\ref{lem:matchineq-a} and~\ref{lem:matchineq-b} also allow
us to separate $\Hferro$ from~$\ofclone{\calM}$.

\begin{lemma}
\label{lem:MHferro-incomp}
    $\ofclone{\calM}$ and $\Hferro$ are incomparable under~$\subseteq$.
\end{lemma}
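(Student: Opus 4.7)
The plan is to exhibit explicit separating functions for the two directions of incomparability, reusing witnesses and Fourier-coefficient computations already carried out earlier in the paper. For $\Hferro \not\subseteq \ofclone{\calM}$, the natural candidate is the hypergraph Ising function $g = \hIsing{4}{1/4}$. By Definition~\ref{defn:clonelist}, $g \in \FerroHyperIsing \subseteq \Hferro$. The proof of Theorem~\ref{thm:M-SDP} already shows that $g \notin \ofclone{\calM}$: it computes $\ghat$ via Lemma~\ref{lem:ft-hIsing}, assumes $g \in \ofclone{\calM}$ for contradiction, obtains an $\epsilon$-approximation $f \in \calM$ (using Theorem~\ref{thm:M-clone}), and derives a contradiction with the necessary inequality of Lemma~\ref{lem:matchineq-a} once $\epsilon$ is taken sufficiently small. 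I would simply cite that subargument.

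For $\ofclone{\calM} \not\subseteq \Hferro$, I would take the symmetric, self-dual, arity-$4$ function $f = [13, 4, 1, 4, 13]$, which was used in the proof of Theorem~\ref{thm:Hferro-SDP}. That proof already observed that the $2$-pinning $g(x, y) = f(x, y, 0, 0)$ satisfies $g(0,0)\,g(1,1) = 13 < 16 = g(0,1)\,g(1,0)$, so $f$ is not log-supermodular by Lemma~\ref{lem:topkis}, and that every function in $\Hferro$ is log-supermodular; hence $f \notin \Hferro$. To show that $f \in \calM$, I would invoke the sufficient condition of Lemma~\ref{lem:matchineq-b}: the Fourier transform $\fhat = [4, 0, \tfrac32, 0, 0]$ (also computed in the proof of Theorem~\ref{thm:Hferro-SDP}) satisfies $3\,\fhat(0011)^2 = \tfrac{27}{4} \geq 0 = \fhat(0000)\,\fhat(1111)$, so $f \in \calM \subseteq \ofclone{\calM}$.

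There is no substantial obstacle here: both directions reduce to matching specific arity-$4$ witnesses against criteria already established. The one asymmetry worth noting is that membership in $\calM$ is certified via the \emph{sufficient} condition of Lemma~\ref{lem:matchineq-b} (available because the witness is symmetric and self-dual), whereas non-membership in $\ofclone{\calM}$ is certified via the \emph{necessary} condition of Lemma~\ref{lem:matchineq-a} together with the perturbation argument of Theorem~\ref{thm:M-SDP} that lifts the statement from $\calM$ to its limit closure.
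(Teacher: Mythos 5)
Your proof is correct and takes essentially the same approach as the paper: the paper also separates the clones with $f=[13,4,1,4,13]$ (via Lemmas~\ref{lem:topkis} and~\ref{lem:matchineq-b}) and a hypergraph Ising function violating the inequality of Lemma~\ref{lem:matchineq-a}. The only difference is that the paper uses $\hIsing{4}{1/2}$ where you reuse $\hIsing{4}{1/4}$ from Theorem~\ref{thm:M-SDP}; both work, and your explicit appeal to the perturbation argument lifting non-membership from $\calM$ to $\ofclone{\calM}$ is, if anything, slightly more careful than the paper's one-line citation.
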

\begin{proof}
    Let $f = [13,4,1,4,13]$.  We saw in the proof of
    Theorem~\ref{thm:Hferro-SDP} that $f\notin \Hferro$ and that
    $\fhat = [4,0,\tfrac32,0,0]$.  However, $f\in\ofclone{\calM}$ by
    Lemma~\ref{lem:matchineq-b}.

    Now, let $g = \hIsing{4}{1/2}\in\Hferro$.  By
    Lemma~\ref{lem:ft-hIsing}, $\ghat=[\tfrac{9}{16}, 0,
    \tfrac{1}{16}, 0, \tfrac{1}{16}]$ and $g\notin \ofclone{\calM}$ by
    Lemma~\ref{lem:matchineq-a}, since $3(\tfrac{1}{16})^2 <
    \tfrac{9}{16}\cdot\tfrac{1}{16}$.
\end{proof}

\section{The lattice $\boldsymbol{\calL'}$}
\label{sec:mainthm}

In this section, we prove Theorem~\ref{thm:main}.

 {\renewcommand{\thetheorem}{\ref{thm:main}}
\begin{theorem} \statethmmain
\end{theorem}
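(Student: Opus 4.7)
The plan is to assemble this theorem from results proved elsewhere in the paper; the four enumerated claims (i)--(iv) are direct citations, and the sublattice assertion itself is essentially bookkeeping. Specifically, (i) is Theorem~\ref{thm:SD-Ising}, (ii) is Corollary~\ref{cor:Iferro-E}, (iii) combines Lemmas~\ref{lem:SD-maximal} and~\ref{lem:P-max} with Corollary~\ref{cor:PN-max}, and (iv) is Theorem~\ref{thm:SDP-maximal}. What remains is to check every edge of Figure~\ref{fig:lattice} as a (strict or dashed) containment and to verify closure of $\calL'$ under meet and join.

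For the containments, I would first note that $\SDP\subseteq\SD,\calP,\calPN$ follows from Theorem~\ref{thm:SDP}, which in fact gives $\SDP=\SD\cap\calP\cap\calPN$. The containment $\ofclone{\ofclone{\calM}\cup\Hferro}\subseteq\SDP$ comes from combining $\ofclone{\calM}\subseteq\SDP$ (Theorem~\ref{thm:M-SDP}) with $\Hferro\subseteq\SDP$ (Theorem~\ref{thm:Hferro-SDP}). The inclusions in the bottom half of the diagram are immediate from the definitions of meet and join, with the single exception $\Iferro\subseteq\ofclone{\calM}\cap\Hferro$, which is Theorem~\ref{thm:Iferro-MHferro}. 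For strictness of every solid edge, pairwise incomparability of $\SD$, $\calP$, $\calPN$ (Lemma~\ref{lem:SDPN-incomp}) gives strict inclusion of $\SDP$ into each and of each into $\calB$; the strict inclusions $\ofclone{\calM}\subsetneq\SDP$ and $\Hferro\subsetneq\SDP$ are Theorems~\ref{thm:M-SDP} and~\ref{thm:Hferro-SDP}; and the strictness around $\ofclone{\calM}\cap\Hferro$ and $\ofclone{\ofclone{\calM}\cup\Hferro}$ follows from incomparability of $\ofclone{\calM}$ and $\Hferro$ (Lemma~\ref{lem:MHferro-incomp}).

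To show that $\calL'$ is a sublattice of $\calL_\omega$, I would verify that every meet and join of two elements of $\calL'$, taken in $\calL_\omega$, again lies in $\calL'$ and equals the value depicted in Figure~\ref{fig:lattice}. Most of the $\binom{10}{2}$ pairs are comparable, in which case meet and join are just the smaller and larger element; the only incomparable pairs in $\calL'$ are the three among $\{\SD,\calP,\calPN\}$ and the single pair $\{\ofclone{\calM},\Hferro\}$. For the former, Theorem~\ref{thm:SDP} gives meet $\SDP$, and the joins all equal $\calB$ by Corollary~\ref{cor:PjoinPN} and by maximality of $\SD$ in $\calB$ combined with Lemma~\ref{lem:SDPN-incomp}. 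For the latter, the meet and join are $\ofclone{\calM}\cap\Hferro$ and $\ofclone{\ofclone{\calM}\cup\Hferro}$ by definition.

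There is no substantive obstacle: the main task is organizational, in that the many required ingredients have been distributed throughout the paper. The one point requiring any care is the observation that closure of $\calL'$ under meet and join need only be checked on the four incomparable pairs noted above, and the two dashed edges are left deliberately unresolved in accordance with the statement.
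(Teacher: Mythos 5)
Your proposal is correct and follows essentially the same route as the paper: the same assembly of Theorems~\ref{thm:SD-Ising}, \ref{thm:SDP}, \ref{thm:M-SDP}, \ref{thm:Hferro-SDP}, \ref{thm:Iferro-MHferro}, \ref{thm:SDP-maximal}, Corollaries~\ref{cor:Iferro-E} and~\ref{cor:PjoinPN}, and Lemmas~\ref{lem:SD-maximal}, \ref{lem:P-max}, \ref{lem:SDPN-incomp}, \ref{lem:MHferro-incomp}, with meets and joins needing attention only on the four incomparable pairs. The only point the paper makes that you leave implicit is the preliminary check that each listed set is in fact an $\omega$-clone (via Theorems~\ref{thm:P-clone}, \ref{thm:PN-clone} and~\ref{thm:SDclone} for $\calP$, $\calPN$ and $\SD$), but this is covered by the results you already cite.
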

\addtocounter{theorem}{-1}
}
\begin{proof} 
    First, we check that the vertices of~$\calL'$ are, indeed, $\omega$-clones.
    $\calB$ is trivially an $\omega$-clone. $\ofclone{\calM}$, $\Hferro$,
    $\ofclone{\ofclone{\calM}\cup\Hferro}$ and $\Iferro$, are $\omega$-clones by
    definition. Hence, so is the intersection $\ofclone{\calM}\cap \Hferro$.
    $\calP$, $\calPN$ and~$\SD$ are $\omega$-clones by Theorems
    \ref{thm:P-clone}, \ref{thm:PN-clone} and~\ref{thm:SDclone}, so their
    intersection $\SDP$ is also an $\omega$-clone.

    Next, we check the lattice structure. 
    
    We start with the strict inclusions (indicated by the solid lines in
    Figure~\ref{fig:lattice}).  
It is easy to see, using the definitions, that $\SD$, $\calP$ and $\calPN$ are strict subsets of~$\calB$.    
    $\SD$, $\calP$ and~$\calPN$ are pairwise-incomparable
    under~$\subseteq$ by Lemma~\ref{lem:SDPN-incomp}. $\SDP$ is a subset of
    $\SD$, $\calP$ and~$\calPN$ by definition; it is a strict subset because
    $\SD$, $\calP$ and~$\calPN$ are distinct. $\ofclone{\calM}$ and~$\Hferro$
    are $\subseteq$-incomparable by Lemma~\ref{lem:MHferro-incomp}.
    Consequently, $\ofclone{\calM}\cap\Hferro$ is a strict subset of both
    $\ofclone{\calM}$ and $\Hferro$ and $\ofclone{\ofclone{\calM}\cup\Hferro}$
    is a strict superset of both $\ofclone{\calM}$ and $\Hferro$. For the
    remaining two inclusions (which we do not know to be strict, as indicated by
    the dotted lines in Figure~\ref{fig:lattice}), $\Iferro\subseteq
    \ofclone{\calM}\cap\Hferro$ by Theorem~\ref{thm:Iferro-MHferro}. Also,
    $\ofclone{\calM} \subset \SDP$ by Theorem~\ref{thm:M-SDP} and $\Hferro
    \subset \SDP$ by Theorem~\ref{thm:Hferro-SDP}. Hence
    $\ofclone{\ofclone{\calM}\cup\SDP}\subseteq\SDP$ since $\SDP$ is an
    $\omega$-clone.

    Note that since the meet of any two clones is defined as their
    intersection the meets of any two $\omega$-clones from $\calL'$ are indeed
    as shown in Figure~\ref{fig:lattice} (this uses Theorem~\ref{thm:SDP}). We now show that also the joins of any
    two $\omega$-clones from $\calL'$ are as shown in Figure~\ref{fig:lattice}.
    Lemma~\ref{lem:SD-maximal} implies $\SD\vee\calP=\SD\vee\calPN=\calB$. By
    Corollary~\ref{cor:PjoinPN}, $\calP\vee\calPN=\calB$.
    $\ofclone{\calM}\vee\Hferro=\ofclone{\ofclone{\calM}\cup\Hferro}$ by
    definition.

    $\SD = \Ianti$ by Theorem~\ref{thm:SD-Ising}.  $\Iferro = \ofclone{\calE}$ by Corollary~\ref{cor:Iferro-E}.  

    Maximality of $\SD$, $\calP$ and~$\calPN$ in~$\calB$ is by
    Lemma~\ref{lem:SD-maximal}, Lemma~\ref{lem:P-max} and
    Corollary~\ref{cor:PN-max}, respectively.
    Maximality of $\SDP$ in~$\SD$ is by Theorem~\ref{thm:SDP-maximal}.
\end{proof}

\section{Clones of monotone functions}
\label{sec:monotone}

For $\bx,\by\in\{0,1\}^n$, $\bx\le\by$ denotes the fact that
$x_i\le y_i$ for all $i\in[n]$.
For any function $f(x_1,\ldots,x_n) \in \calB_n$,
let $\sim_f$ denote the equivalence relation on~$[n]$ given by
$i\sim_f j$ if and only if for every~$\bx\in\{0,1\}^n$, $f(\bx)=0$
whenever $x_i\ne x_j$.
Let $V_1,\ldots,V_\ell$ be the equivalence classes of $\sim_f$
and let $\tilde{f}$ be the function in $\calB_\ell$
defined as follows. For any $\bx \in \{0,1\}^\ell$,
construct $\by \in \{0,1\}^n$ as follows. For all $i\in [n]$, if $i\in V_j$, then set $y_i=x_j$.
Then $\tilde{f}(\bx) = f(\by)$.

\begin{lemma}\label{lem:tilde-function}
For any $f\in\calB$, 
$\tilde f\in\fclone f$  and $f\in\fclone{\tilde f}$.
\end{lemma}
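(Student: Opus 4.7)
The plan is to write down two explicit pps-formulas, one expressing $\tilde f$ in terms of $f$ and the other expressing $f$ in terms of $\tilde f$, using only the equality function $\EQ$ as additional resource (which lies in every functional clone by definition). For each equivalence class $V_j$ of $\sim_f$, fix a representative $r_j \in V_j$.

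For the first direction, $\tilde f \in \fclone{f}$, I would use the formula
$$\tilde f(x_1,\ldots,x_\ell) \;=\; \sum_{y_1,\ldots,y_n} f(y_1,\ldots,y_n)\prod_{j=1}^{\ell}\prod_{i\in V_j}\EQ(y_i,x_j).$$
The product of $\EQ$-factors forces $y_i = x_j$ whenever $i\in V_j$, so the only surviving term of the sum is the one in which $\by$ is the expansion of $\bx$ according to the equivalence classes; and on that $\by$ we have $f(\by) = \tilde f(\bx)$ directly from the definition of $\tilde f$. This formula is manifestly built from $f$ and $\EQ$ by summation, product, permutation and introduction of fictitious arguments.

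For the reverse direction, $f \in \fclone{\tilde f}$, I would exploit the fact that, by definition of $\sim_f$, $f(\by)$ vanishes whenever $\by$ is non-constant on some class $V_j$, while on every $\by$ that \emph{is} constant on each class $V_j$ we have $f(\by) = \tilde f(y_{r_1},\ldots,y_{r_\ell})$. These two facts are simultaneously captured by
$$f(y_1,\ldots,y_n) \;=\; \tilde f(y_{r_1},\ldots,y_{r_\ell})\prod_{j=1}^{\ell}\prod_{i\in V_j\setminus\{r_j\}}\EQ(y_i,y_{r_j}).$$
To realise the right-hand side inside $\fclone{\tilde f}$, I would first obtain the $n$-ary function $(y_1,\ldots,y_n)\mapsto \tilde f(y_{r_1},\ldots,y_{r_\ell})$ from $\tilde f$ by introducing $n-\ell$ fictitious arguments and then permuting, and then multiply it by the equality factors (each itself obtainable from the binary $\EQ$ by a permutation and addition of fictitious arguments).

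There is no real obstacle here: correctness of both formulas follows immediately from the definition of $\sim_f$ and of $\tilde f$, and the only operations invoked (product, summation, permutation of arguments, introduction of fictitious arguments, together with the availability of $\EQ$) are precisely those under which functional clones are closed.
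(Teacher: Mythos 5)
Your proof is correct and takes essentially the same approach as the paper, which constructs $f$ from $\tilde f$ by adding fictitious arguments and $\EQ$ factors, and obtains $\tilde f$ from $f$ by summing out the non-representative variables (your version of the first direction, summing over all of $\by$ while tying it to fresh variables via $\EQ$, is only a cosmetic variation of this). Both of your explicit formulas check out, including the vanishing of both sides when $\by$ is non-constant on some class $V_j$.
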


\begin{proof}
The function $f$ is constructed from $\tilde{f}$ by introducing fictitious arguments and adding $\EQ$ factors.
The function $\tilde{f}$ is constructed from $f$ by summing out variables.
\end{proof}

\begin{definition}
(Definition of monotone and block-monotone functions.)

\begin{itemize}
\item
Given a function $f\in \calB_n$ and an index $i\in [n]$,
the
argument $x_i$ is said to be \emph{fictitious} in~$f$ if,  for all $\bx$ and $\bx'$ that differ only at position~$i$,
$f(\bx)=f(\bx')$.
\item
For any non-negative integer~$n$ and any
$\alpha\ge0$, the function $f(x_1,\ldots,x_n)\in\calB_n$ is said to be \emph{$\alpha$-monotone}
if, for every argument $x_i$ that is not fictitious in~$f$,  and every $\bx$ with $x_i=0$,
$\alpha f(\bx)\le f(x_1,\ldots,x_{i-1},\overline{x_i},x_{i+1},\ldots,x_n)$.
\item A function $f\in\calB$ is said to be \emph{block-$\alpha$-monotone} if $\tilde f$
is $\alpha$-monotone. 
\item  
$1$-monotone functions are called \emph{monotone} functions. 
Block-$1$-monotone functions are called \emph{block-monotone} functions.
\item  
The set of all block-monotone functions is denoted by $\Mon$, and the set of all 
block-$\alpha$-monotone functions is denoted by $\Mon_\alpha$.
\end{itemize}
\end{definition}

Note that if $\alpha\ge\beta$ then $\alpha
f(\bx)\le f(\by)$ implies $\beta f(\bx)\le f(\by)$. Thus, 
every $\alpha$-monotone function is $\beta$-monotone and $\Mon_\alpha\subseteq\Mon_\beta$.
We next show that, for any $\alpha\geq 1$, $\Mon_\alpha$ is an $\omega$-clone.
Since $\Mon_1=\Mon$, this implies that $\Mon$ is an $\omega$-clone.

\begin{theorem}\label{thm:monotone-clone} 
For any $\alpha \geq 1$, $\ofclone{\Mon_\alpha} = \Mon_\alpha$ and $\calB_0 \subseteq \Mon_\alpha$.
\end{theorem}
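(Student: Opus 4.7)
The plan is to verify that $\Mon_\alpha$ contains $\EQ$ and $\calB_0$ and is closed under each of the five $\omega$-clone operations, tracking how each operation transforms the block structure $\sim_f$ and the reduced function $\tilde f$ (since $f\in\Mon_\alpha$ iff $\tilde f$ is $\alpha$-monotone). The easy cases are quick: nullary functions have no arguments, so the condition is vacuous and $\calB_0\subseteq\Mon_\alpha$; for $\EQ$, the two arguments form a single block, so $\widetilde{\EQ}=[1,1]$ has only a fictitious argument. Permuting arguments merely permutes the blocks and the arguments of $\tilde f$. Introducing a fictitious argument $z$ appends (assuming $f\not\equiv 0$) a new singleton block $\{k+1\}$ to $\sim_f$ and a fictitious argument to $\tilde f$, preserving $\alpha$-monotonicity; the case $f\equiv 0$ is trivial.

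For summation $g(\bx)=\sum_z f(\bx,z)$, non-negativity of $f$ gives that $\sim_f$ and $\sim_g$ coincide on $[k]$. If $z$ lies in its own block of $\sim_f$, then $\tilde g$ is obtained from $\tilde f$ by summing out the corresponding variable; this preserves $\alpha$-monotonicity because $\alpha a_0\leq b_0$ and $\alpha a_1\leq b_1$ imply $\alpha(a_0+a_1)\leq b_0+b_1$. If instead $z$ shares a block with some $j\leq k$, then the block constraint forces $f(\bx,z)=0$ unless $z=x_j$, so $g(\bx)=f(\bx,x_j)$ and therefore $\tilde g=\tilde f$.

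For product, let $h=fg$. Since $h\geq 0$ and $h(\bx)=0$ iff $f(\bx)=0$ or $g(\bx)=0$, we have $\sim_h\supseteq\sim_f$ and $\sim_h\supseteq\sim_g$, so every block $V$ of $h$ is a union of $s$ blocks $A_1,\ldots,A_s$ of $f$ and of $t$ blocks $B_1,\ldots,B_t$ of $g$. For block-valid (for $h$) configurations $\bx,\bx'$ differing only on $V$ (from $\bzero$ to $\bone$), both $\bx$ and $\bx'$ are block-valid for $f$ and for $g$, and flipping $V$ amounts to flipping each $A_i$ in sequence (staying block-valid throughout). Applying $\alpha$-monotonicity of $\tilde f$ and of $\tilde g$ yields $\alpha^{s'}f(\bx)\leq f(\bx')$ and $\alpha^{t'}g(\bx)\leq g(\bx')$, where $s',t'$ count the non-fictitious sub-blocks among the $A_i$ and $B_i$; multiplying gives $\alpha^{s'+t'}h(\bx)\leq h(\bx')$. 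If $V$ is non-fictitious in $\tilde h$, then flipping $V$ must change $h$ somewhere, forcing $s'+t'\geq 1$, and since $\alpha\geq 1$ we conclude $\alpha h(\bx)\leq h(\bx')$.

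The main obstacle is closure under limits, because the block structure $\sim_{f_\epsilon}$ can differ from $\sim_f$. The crucial observation is that, for any $f_\epsilon\in\Mon_\alpha$ with $\|f-f_\epsilon\|_\infty$ sufficiently small, $\sim_{f_\epsilon}\subseteq\sim_f$ as relations: if $p\not\sim_f q$ then some $\bx$ with $x_p\neq x_q$ has $f(\bx)>0$, so $f_\epsilon(\bx)>0$ for small $\epsilon$, forcing $p\not\sim_{f_\epsilon} q$. Hence every block of $f_\epsilon$ is contained in a unique block of $f$, and any configuration block-valid for $f$ is also block-valid for $f_\epsilon$. For a non-fictitious block $V$ of $\tilde f$ and block-valid $\bx,\bx'$ flipping $V$, a witness pair $\bx^*,\bx'^*$ to the non-fictitiousness of $V$ in $\tilde f$ satisfies $f_\epsilon(\bx^*)\neq f_\epsilon(\bx'^*)$ for small $\epsilon$, which guarantees that at least one of the sub-blocks of $V$ is non-fictitious in $\widetilde{f_\epsilon}$. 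Arguing exactly as in the product case then gives $\alpha f_\epsilon(\bx)\leq f_\epsilon(\bx')$, and taking $\epsilon\to 0$ yields $\alpha f(\bx)\leq f(\bx')$, as required. Combined with the fact that the earlier closure properties imply $\fclone{S}\subseteq\Mon_\alpha$ for any finite $S\subseteq\Mon_\alpha$, this completes the proof.
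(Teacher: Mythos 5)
Your proof is correct and follows essentially the same strategy as the paper's: verify that $\EQ$ and the nullary functions are block-$\alpha$-monotone, then check closure under each operation by tracking how the equivalence relation $\sim_f$ and the reduced function $\tilde f$ transform. The only notable divergence is in the limit step, where you use the eventual refinement $\sim_{f_\epsilon}\subseteq\sim_f$ together with a witness pair to certify a non-fictitious sub-block (rather than the paper's extraction of a subsequence with constant block structure); this is a valid, and if anything slightly more explicit, way to secure the factor of~$\alpha$.
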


\begin{proof} 
Fix any $\alpha\geq 1$.
We will show that $\EQ \in \Mon_\alpha$ and that it is closed under the usual operations.
\begin{itemize}
\item  $\widetilde{\EQ}$ is the unary constant function $\widetilde{\EQ}(x)=1$.
Its only argument is fictitious. Thus, it is in $\Mon_\alpha$.

\item For closure under permuting arguments, 
suppose that $f\in \Mon_\alpha$ and that
$g$ is formed from~$f$
by permuting arguments.
Then $\tilde{g}$ is formed from $\tilde{f}$  by permuting arguments.
Since $\tilde{f}$ is $\alpha$-monotone, so is $\tilde{g}$, so
$g\in \Mon_\alpha$.

\item For closure under introducing fictitious arguments, let
$f\in\Mon_\alpha$ and define $h(\bx y) = f(\bx)$.  Then
the argument $y$ is in its own equivalence class in $\sim_h$
so $\tilde{h}(\bx y) = \tilde{f}(x)$.
Since $\tilde{f}$ is $\alpha$-monotone, and $y$ is fictitious, $\tilde{h}$ is $\alpha$-monotone.        

\item For closure under summation, let $f\in\Mon_\alpha$ and define
$h(\bx) = f(\bx0) + f(\bx1)$.  
Let $n+1$ be the arity of~$f$.
There are two possibilities. 
\begin{itemize}
\item
First, suppose that
 $x_{n+1}$ is equivalent to some other argument under $\sim_f$
(for convenience, assume that it is equivalent to $x_n$).
Then $f(x_1,\ldots,x_n,x_n) = h(x_1,\ldots,x_n)$
so $\tilde{f} = \tilde{h}$, and $\tilde{h}$ is $\alpha$-monotone because $\tilde{f}$ is.
\item
Otherwise, let $V_1,\ldots,V_{\ell+1}$ be the equivalence classes of $\sim_f$, where $V_{\ell+1}$ contains only
$x_{n+1}$.
Then $V_1,\ldots,V_\ell$ are the equivalence classes of $\sim_h$.
We claim that if the argument corresponding to $V_i$ is not fictitious in $\tilde{h}$ then it
is not fictitious in $\tilde{f}$. Then, since $\tilde{f}$ is $\alpha$-monotone,
changing the value of this argument increases the value of the function by a factor of $\alpha$, both when $x_{n+1}=0$
and when $x_{n+1}=1$. Thus, changing the value of the argument also inceases the value of $\tilde{h}$ by a factor of~$\alpha$,
and $\tilde{h}$ is $\alpha$-monotone. 
 \end{itemize}

\item For closure under products, let $f,g\in\Mon_\alpha$ and consider
$h(\bx)=f(\bx)\,g(\bx)$.   
Let $n$ be the arity of $h$, $f$ and $g$.
Suppose that $V_1,\ldots,V_\ell$ are the equivalence classes of $\sim_{f}$.
Consider an equivalence class $V_i$
and a string $\bx \in \{0,1\}^n$ which sets all arguments in $V_i$ to~$0$.
Let $\bx'$ be the string constructed from~$\bx$ by changing the value of the arguments in $V_i$ to~$1$.
Since $\tilde{f}$ is $\alpha$-monotone, there are two possibilities:
\begin{enumerate}
\item If the argument of $\tilde{f}$ corresponding to $V_i$ is fictitious then  $f(\bx')=f(\bx)$.
\item If the argument of $\tilde{f}$ corresponding to $V_i$ is not fictitious then  $f(\bx') \geq \alpha f(\bx)$.
\end{enumerate}
A similar comment applies to~$g$.
Now  let the equivalence classes of $\sim_h$ be $V'_1,\ldots, V'_{\ell'}$.
Consider some equivalence class $V'_i$ --- this is a union of $\sim_f$ classes and a union of $\sim_g$ classes.
Suppose that the argument corresponding to $V'_i$ is not fictitious in $\tilde{h}$.
We want to argue that there is at least one of the $\sim_f$ and $\sim_g$ classes corresponding to $V'_i$
that is not fictitious. To see this, suppose for contradiction that they are all fictitious.
Start with a string $\bx$ in which all of the arguments in $V'_i$ are  the same.
First consider changing, one-by-one all of the 
values of arguments in the $\sim_f$ classes corresponding to $V'_i$.
Since they are fictitious, this does not change the value of~$f$.
Similarly, changing-one-by-one all of the 
values in the $\sim_g$ classes corresponding to $V'_i$ does not change the value of~$g$.
So if $\bx'$ is the string derived from $\bx$ by  changing the spin of $V'_i$,
then $f(\bx')=f(\bx)$ and $g(\bx')=g(\bx)$. But this is a contradiction, since the argument corresponding to $V'_i$
is not fictitious in $\tilde{h}$.
Now suppose that $\bx$ takes value~$0$ on $V'_i$.
Changing the value of some $\sim_f$ or $\sim_g$ class inside $V'_i$ inceases the value of the function
by a factor of~$\alpha$. Changing each other $\sim_f$ or $\sim_g$ class inside $V'_i$
either leaves the value alone, or inceases it by another factor of~$\alpha$.
Hence, $\tilde{h}$ is $\alpha$-monotone. 

\item For closure under limits, let $f\in\calB_n$ and suppose that,
        for all integers $i>0$, there is some $g_i\in\Mon_\alpha$ such that
        $\|f-g_i\|_\infty < 2^{-i}$.  We must show that $f\in\Mon_\alpha$.

    There must be some equivalence
    relation~${\sim_g}$ on~$[n]$ such that ${\sim_{g_i}}={\sim_g}$ for
    infinitely many~$i$.  In fact, we may assume that
    ${\sim_{g_i}}={\sim_g}$ for all~$i$: if not, let $g'_1, g'_2,
    \dots$ be the subsequence of functions whose equivalence relation
    is~${\sim_g}$, note that $\|f-g'_i\|_\infty < 2^{-i}$ for all~$i$
    and use the sequence $g'_1, g'_2, \dots$ in place of $g_1, g_2,
    \dots$.

    Now, every equivalence class of~${\sim_f}$ is a union of
    equivalence classes of~${\sim_g}$. To see this suppose that
    $r\sim_g s$. For all~$i$ and all $\bx\in\Bools^n$ with
    $x_r\neq x_s$, we have $g_i(\bx)=0$. Therefore, $|f(\bx)|<2^{-i}$
    for all~$i$, so $f(\bx)=0$ and $r\sim_f s$.\footnote{Note that we
      do not necessarily have ${\sim_f}={\sim_g}$. For example, the
      $2$-monotone symmetric binary function
    $f(x,y)=[0,0,1]$ has just one equivalence class, but it is the limit
    of the $2$-monotone functions $f_i(x,y)=[0,2^{-i},1]$ for $i\geq 1$,
    and each of these functions has two equivalence classes.}

    Now, consider some argument~$x_j$ that is not fictitious in~$f$.  Let
    $\bx\in\{0,1\}^n$ be a tuple such that $x_j=0$ and $x_r=x_s$
    whenever $r\sim_f s$. Let $\by$ be the tuple with $y_s=x_s$ for all
    $s\not\sim_f j$ and $y_s=1$ for $s\sim_f j$.  We must show that
    $f(\by)\geq \alpha f(\bx)$.  This is trivial when $f(\bx)=0$ so we
    consider the case that $f(\bx) = \lambda > 0$.  Then, for all
    large enough~$i$, $g_i(\bx) > \lambda - 2^{-i} > 0$ so, by
    block-$\alpha$-monotonicity of~$g_i$,
    $g_i(\by) > \alpha(\lambda-2^{-i})$. So, for all large enough~$i$,
    $g(\by) > \alpha(\lambda-2^{-i}) - 2^{-i}$, so
    $g(\by)\geq\alpha\lambda$, as required.
    \end{itemize}
The proof that $\calB_0 \subseteq \Mon_\alpha$ is  straightforward  since a function $f\in \calB_0$
has no arguments, fictitious or otherwise.
\end{proof}

\section{Cardinality of the set of clones}
\label{sec:cardinality}

In this section we determine the cardinality of the lattices of functional and $\omega$-clones,
proving Theorem~\ref{thm:cardinality}.
  {\renewcommand{\thetheorem}{\ref{thm:cardinality}}
\begin{theorem} \statethmcardinality
\end{theorem}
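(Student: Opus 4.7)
The upper bound is immediate: each~$\calB_k$ has cardinality $|\NonNegReals|^{2^k}=2^{\aleph_0}$, so $|\calB|=2^{\aleph_0}$, and therefore the number of $\omega$-clones (or functional clones), being a collection of subsets of~$\calB$, is at most $2^{2^{\aleph_0}}=\beth_2$.

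For the matching lower bound, the plan is to produce an ``independent'' family $F=\{u_r:r\in R\}\subseteq\calB_1$ of cardinality $2^{\aleph_0}$, meaning that $u_r\notin\ofclone{\{u_s:s\in T\}}$ for every finite $T\subseteq R$ and every $r\in R\setminus T$. Because every element of $\ofclone{\calG}$ is witnessed by a pps-formula together with a finite $S_f$ in the limit definition, and thus depends on only finitely many elements of~$\calG$, such an independent family of size~$\kappa$ will yield $2^\kappa$ pairwise distinct $\omega$-clones $\ofclone{\{u_r:r\in S\}}$ indexed by $S\subseteq R$: for $r\in S_1\setminus S_2$, any finite witness $T\subseteq S_2$ satisfies $T\subseteq R\setminus\{r\}$, so $u_r\notin\ofclone{T}$ by independence. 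The same family produces $2^\kappa$ distinct functional clones since $\fclone{\cdot}\subseteq\ofclone{\cdot}$. I take $R\subseteq(2,3)$ to be any set of cardinality $2^{\aleph_0}$ that is algebraically independent over~$\mathbb{Q}$; since the real algebraic numbers are countable, a maximal algebraically independent subset of $(2,3)$ has cardinality $2^{\aleph_0}$. For each $r\in R$, set $u_r=[1,r]\in\calB_1$.

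The heart of the argument is a discrete-structure observation about the values of unary functions in $\fclone{T}$. Fix a finite $T=\{r_1,\dots,r_n\}\subseteq R$ and write $\mathbf{r}=(r_1,\dots,r_n)$. Every unary $h\in\fclone{\{u_{r_1},\dots,u_{r_n}\}}$ is representable as a pps-formula, so $h(x)=\sum_{\mathbf{y}}\prod_j g_j(x,\mathbf{y})$ where each~$g_j$ is $\EQ$ or $u_{r_i}$ (possibly with fictitious or permuted arguments). Evaluated at $x\in\{0,1\}$, each individual term is a product of values drawn from $\{0,1,r_1,\dots,r_n\}$, hence is either~$0$ or a monomial $\mathbf{r}^I:=\prod_i r_i^{I_i}$ with $I\in\nats^n$; summing over~$\mathbf{y}$ therefore gives $h(x)=P(\mathbf{r})$ for some $P\in\nats[X_1,\dots,X_n]$. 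Let $M_T=\{P(\mathbf{r}):P\in\nats[X_1,\dots,X_n]\}$. Since the $r_i$'s are algebraically independent and each lies in~$(2,3)$, a short case analysis on the constant term of~$P$ shows that $M_T\cap(0,2)=\{1\}$ and $M_T\cap(2,3)=T$: every non-constant monomial $\mathbf{r}^I$ evaluates to at least $\min_i r_i>2$, so the only $P\in\nats[X_1,\dots,X_n]$ with $P(\mathbf{r})\in(0,2)$ is the constant~$1$, and the only $P$'s with $P(\mathbf{r})\in(2,3)$ are the individual variables~$X_i$.

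To close the argument, suppose for contradiction that $r\in R\setminus T$ and $u_r\in\ofclone{\{u_{r_1},\dots,u_{r_n}\}}$. By the definition of $\omega$-clone, for every $\epsilon>0$ there is a unary $f_\epsilon\in\fclone{\{u_{r_1},\dots,u_{r_n}\}}$ of the form $f_\epsilon=[P_\epsilon(\mathbf{r}),Q_\epsilon(\mathbf{r})]$ with $|1-P_\epsilon(\mathbf{r})|<\epsilon$ and $|r-Q_\epsilon(\mathbf{r})|<\epsilon$. For $\epsilon<1$, the first inequality forces $P_\epsilon(\mathbf{r})\in M_T\cap(0,2)=\{1\}$; so $Q_\epsilon(\mathbf{r})\in M_T$ tends to~$r$ as $\epsilon\to 0$, but $M_T\cap(2,3)=T$ is a finite set with no accumulation point in~$(2,3)$, forcing $r\in T$. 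This contradiction establishes independence, whence $|\calL_f|,|\calL_\omega|\geq 2^{|R|}=\beth_2$. The main obstacle is the case analysis describing~$M_T$, which relies crucially on the algebraic independence of~$R$ combined with the fact that the monoid operations available in a clone can only produce $\nats$-polynomial combinations of the generating values.
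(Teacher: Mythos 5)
Your proof is correct, and it follows the paper's high-level strategy --- build a continuum-sized family that is ``independent'' in the sense that $\ofclone{\calG}\cap\calF=\calG$ for every $\calG\subseteq\calF$, reduce to finite generating sets via the finiteness of pps-formulas and of the witness set $S_f$ in the limit definition, and conclude $|\calL_f|,|\calL_\omega|\ge 2^{|\calF|}=\beth_2$ --- but your independent family and key lemma are genuinely different from the paper's. The paper uses binary functions $f_\alpha$ with $f_\alpha(0,0)=1$, $f_\alpha(0,1)=f_\alpha(1,0)=2$, $f_\alpha(1,1)=2\alpha$ for $\alpha>3$; these are $2$-monotone, and its Lemma~\ref{lem:alpha-generation} shows by a case analysis on the pps-representation that any binary $f\in\fclone{\calG}$ with $\tilde f=f$ is either a scalar multiple of a generator or satisfies $f(0,1)\ge\min\{4,(2+\beta)/2\}\,f(0,0)$ or $f(1,0)\ge\min\{4,(2+\beta)/2\}\,f(0,0)$, which pushes it outside $\calF$ in the limit. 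You instead take unary functions $u_r=[1,r]$ with $r$ ranging over a continuum subset of $(2,3)$, and your key lemma is that the values attainable by unary functions in $\fclone{\{u_{r_1},\dots,u_{r_n}\}}$ lie in the $\nats$-semiring generated by $r_1,\dots,r_n$, whose intersection with $(2,3)$ is exactly $\{r_1,\dots,r_n\}$ (every nontrivial product of generators exceeds $4$, every nontrivial sum exceeds $3$, and constants are integers). This is arguably more elementary: it dispenses with the monotonicity machinery and works entirely with unary functions, at the cost of not localising the $\beth_2$ clones inside $\Mon_2$ as the paper's construction does. Two minor remarks: the algebraic independence of $R$ is never actually used --- your case analysis for $M_T$ needs only $R\subseteq(2,3)$, so that hypothesis can be dropped --- and your finite-witness reduction is exactly the paper's (a limit of $\fclone{\calG}$ is a limit of $\fclone{\calG'}$ for some finite $\calG'\subseteq\calG$, via Lemma~\ref{lem:transitive}).
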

\addtocounter{theorem}{-1}
}

Since $|\calB|=\beth_1$, we have $|\calL_f|,|\calL_\omega|\le\beth_2$. Therefore, we focus on
proving the inverse inequality. As every $\omega$-clone is also a functional clone, it 
suffices to prove that $|\calL_\omega|\ge\beth_2$. We construct a set of functions,
$\calF\subseteq\calB_2$ with $|\calF|=\beth_1$ that has the following property:  For any $\calG\subseteq\calF$, 
$\ofclone\calG\cap\calF=\calG$. This immediately implies that, for any 
$\calG_1,\calG_2\subseteq\calF$ 
with $\calG_1\ne\calG_2$
we have $\ofclone{\calG_1}\ne\ofclone{\calG_2}$. Therefore,
$|\calL_\omega|\ge 2^{|\calF|}=\beth_2$.

For any real $\alpha>2$, let $f_\alpha$ denote the binary function given by $f_\alpha(0,0)=1$,
$f_\alpha(0,1)=f_\alpha(1,0)=2$ and $f_\alpha(1,1)=2\alpha$. Let $\calF$ denote the set 
$\{f_\alpha\mid \alpha>3\}$. Note that $f_\alpha$ is 2-monotone for any $\alpha>2$. 
Therefore $\ofclone \calF\subseteq\Mon_2$.

 \begin{lemma}\label{lem:alpha-generation}
Let $\calG\subseteq \calF$ be a finite set and let $\beta=\min\{\alpha\mid f_\alpha\in\calG\}$.
If $f\in\fclone \calG$ is a binary  function such that $\tilde f=f$, then either 
$\gamma f\in \calG$ for some constant $\gamma$ or 
$f(0,1)\ge\min\left\{4,\frac{2+\beta}2\right\}f(0,0)$, or 
$f(1,0)\ge\min\left\{4,\frac{2+\beta}2\right\}f(0,0)$.
\end{lemma}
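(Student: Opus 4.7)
The plan is to expand $f$ as a pps-formula
\[ f(x_1, x_2) = \sum_{y_1, \ldots, y_m} \prod_j \phi_j(\cdots) \]
over $\calG \cup \{\EQ\}$. First I would collapse the $\EQ$-constraints, partitioning the variables into equivalence classes $C_1, \ldots, C_k$; the hypothesis $\tilde{f} = f$ forces $x_1 \in C_1$ and $x_2 \in C_2$ to lie in distinct classes, since $C_1 = C_2$ would make $f(x_1, x_2) = 0$ whenever $x_1 \neq x_2$, giving $\tilde{f}$ of arity~$1$. After this reduction the formula becomes
\[ f(x_1, x_2) = c \sum_{z_3, \ldots, z_k \in \{0,1\}} \prod_j f_{\alpha_j}(z_{i_j}, z_{l_j}) \]
for some constant $c > 0$, where each $\alpha_j \geq \beta$ and the $f_{\alpha_j}$-constraints may be viewed as edges of a multigraph $H$ (with self-loops and parallel edges allowed) on $\{C_1, \ldots, C_k\}$. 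Since every $f_{\alpha_j}$ is permissive, all four values $f(a,b)$ are strictly positive.

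The key observation is that flipping $z_2$ from $0$ to $1$ in any term of the sum scales the weight by a product of factors, one for each edge of $H$ incident to $C_2$: a self-loop at $C_2$ contributes $2\alpha_j \geq 2\beta$; an edge to $C_1$ contributes $2$ (since $z_1 = 0$); and an edge to $C_i$ with $i \geq 3$ contributes $2$ when $z_i = 0$ and $\alpha_j \geq \beta$ when $z_i = 1$. Consequently, $f(0,1)/f(0,0)$ is a weighted average of factors each at least $2^{\deg_H(C_2)}$, so $\deg_H(C_2) \geq 2$ already gives $f(0,1) \geq 4 f(0,0)$; symmetrically, $\deg_H(C_1) \geq 2$ yields $f(1,0) \geq 4 f(0,0)$.

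It remains to handle the case in which both $C_1$ and $C_2$ have degree exactly~$1$ in $H$. If the unique edges at $C_1$ and $C_2$ are the same edge $\{C_1, C_2\}$, of parameter $\alpha_j$, then the subgraph on $\{C_3, \ldots, C_k\}$ is disconnected from $\{C_1, C_2\}$; summing over its configurations produces a positive constant $\gamma^{-1}$, whence $\gamma f = f_{\alpha_j} \in \calG$ as required. Otherwise the edge at $C_2$ goes to some $C_i$ with $i \geq 3$; writing $W_b = \sum_{z_3, \ldots, z_k : z_i = b} \prod_{j' \neq j} f_{\alpha_{j'}}(\cdots)$ and $t = W_1/W_0$, one computes
\[ \frac{f(0,1)}{f(0,0)} = \frac{2 + 2\alpha_j t}{1 + 2t}, \]
which is increasing in $t$ and equals $(2 + \alpha_j)/2 \geq (2 + \beta)/2$ at $t = 1/2$. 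A symmetric formula $(2 + 2\alpha_{j'} t')/(1 + 2t')$ holds for $f(1,0)/f(0,0)$, with $t'$ defined from the edge at $C_1$. The main obstacle will be ruling out the case in which $t < 1/2$ and $t' < 1/2$ simultaneously, so that both ratios fall below $\min\{4, (2+\beta)/2\}$; I expect to exploit the fact that each $f_{\alpha_j}$ is $2$-monotone (so the residual partition functions $W_0, W_1, W_0', W_1'$ also inherit a $2$-monotonicity-type inequality) and to argue that small values of both $t$ and $t'$ force the pendant edges at $C_1$ and $C_2$ to coincide, reducing to the single-edge subcase and hence to $\gamma f \in \calG$.
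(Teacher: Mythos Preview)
Your setup and case analysis match the paper's approach closely: reduce the pps-formula to a multigraph~$H$ on the equivalence classes, handle $\deg_H(C_i)\ge 2$ by $2$-monotonicity, and isolate the pendant case $\deg_H(C_1)=\deg_H(C_2)=1$. The paper does the same, organising the factors into groups $T_{xy},T_x,T_y,T_0$ rather than using graph language.

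The gap is in your final case. You correctly derive
\[
\frac{f(0,1)}{f(0,0)}=\frac{2+2\alpha_j t}{1+2t},\qquad t=\frac{W_1}{W_0},
\]
and note that this is at least $(2+\alpha_j)/2$ once $t\ge\tfrac12$. But the ``main obstacle'' you identify, the possibility that $t<\tfrac12$, never actually arises, and the speculative plan to ``force the pendant edges to coincide'' is neither needed nor likely to work. The point you are missing is that plain $1$-monotonicity of every $f_{\alpha_{j'}}$ gives $W_0\le W_1$ directly: for each fixed assignment to $z_3,\ldots,z_k$ with $z_i=0$, flipping $z_i$ to~$1$ can only increase every factor $f_{\alpha_{j'}}(\,\cdot\,,\,\cdot\,)$ appearing in the product (the other argument of each factor is held fixed, and $f_{\alpha_{j'}}$ is monotone in each coordinate). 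Hence $t\ge 1>\tfrac12$ always, and you are done. This is exactly the inequality the paper invokes at the end of its argument, phrased there as
\[
f_\alpha(0,0)\,T_y(0,0,\bu')\,T_0(0,\bu')\ \le\ f_\alpha(0,1)\,T_y(0,1,\bu')\,T_0(1,\bu').
\]
So you had the right tool in hand (you mention $2$-monotonicity) but applied it in the wrong place: use it on the hidden variable~$z_i$ rather than on some hypothetical interaction between the two pendant edges.
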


\begin{proof} 
Suppose~$f$ is a binary function in~$\fclone{\calG}$. As noted in the introduction to the paper, $f$ can be expressed
as 
\begin{equation}
\label{eq:expr1}
f(x,y)\;\ = \sum_{x_1,\ldots,x_k} \prod_{j=1}^t g'_j(x,y,x_1,\ldots,x_k),
\end{equation}
where $t$ and $k$ 
are non-negative integers and each function $g'_j$ is a $(k+2)$-ary function in~$\calA(\calG)$.
Recall that $\calA(\calG)$ is the  
closure of $\calG \cup \{\EQ\}$ under 
the introduction of fictitious arguments and
permuting arguments. Thus, every function $g'_j$ in~\eqref{eq:expr1}
is constructed from a binary function $h \in \calG \cup \{\EQ\}$ by introducing fictitious arguments and
permuting arguments. If $h =\EQ$ than $g'_j$ can be removed from the expression on the right-hand-side of~\eqref{eq:expr1}
without changing the function $f(x,y)$
by instead 
allowing re-use of variables.
(If $h$ forces $x_i$ and $x_j$ to be equal, then we can just remove all instances of $x_j$ and replace them with $x_i$
and we can also remove
 $x_j$ from the sum.)
Also, we can remove the fictitious arguments in the $g'_j$ functions, replacing each $g'_j$ with 
the corresponding binary function~$g_j\in \calG$.
Suppose that $f = \tilde f$ (so the variables~$x$ and~$y$ are in different $\sim_f$ classes and removing the $h=\EQ$
functions from~\eqref{eq:expr1} does not remove either~$x$ or~$y$).
Then, by these transformations, 
\eqref{eq:expr1} shows that
there are non-negative integers $s$ and $m$ so that
\begin{equation}
\label{eq:expr2} 
f(x,y)\;\  = \sum_{u_1,\ldots,u_m} \prod_{j=1}^s g_j(x_{j,1},x_{j,2}),
\end{equation}
where, for all $j\in [s]$, $g_j\in \calG$ and
$x_{j,1}$ and~$x_{j,2}$ are in $\{x,y,u_1,\ldots,u_m\}$
(though  $x_{j,1}$ and~$x_{j,2}$ may not necessarily be distinct).

 Without loss of generality we 
assume that there are $0\le p,q,r\le s$ such that functions $g_{j}$ for 
$0<j\le p$ involve both $x$ and $y$; functions $g_{j}$ for 
$p<j\le p+q$ involve $x$ but not $y$; functions $g_{j}$ for $p+q<j\le p+q+r$ 
involve $y$ but not $x$; and the remaining functions do not involve $x$ or $y$. Note that
since none of $x,y$ is fictitious, $p+q>0$ and $p+r>0$. For $x,y\in\Bools$ and 
$\bu\in\Bools^m$, let also
\begin{alignat*}{2}
    T_{xy}(x,y)\  &=\ \,\prod_{j=1}^p\  g_{j}(x_{j,1},x_{j,2})\,, \quad\quad&
    T_x(x,\bu)\;\  &= \prod_{j=p+1}^{p+q} g_{j}(x_{j,1},x_{j,2})\,,\\
    T_y(y,\bu)\  &=\!\!\! \prod_{j=p+q+1}^{p+q+r} g_{j}(x_{j,1},x_{j,2})\,,&
    T_0(\bu)\;\  &= \!\!\!\!\! \prod_{j=p+q+r+1}^s g_{j}(x_{j,1},x_{j,2})\,.
\end{alignat*}
Thus, 
\begin{equation*}
    f(x,y) \ \;= \!\! \sum_{\bu\in\Bools^m}\!\!
                        T_{xy}(x,y)\,T_x(x,\bu)\,T_y(y,\bu)\,T_0(\bu)\,.
\end{equation*}

If $p=1$ and $q=r=0$ then $\gamma f\in \calG$ for 
$1/\gamma =\sum_{\bu\in\Bools^m} T_0(\bu)$. If $p+q>1$ we have for any 
$y\in\Bools$ and $\bu\in\Bools^m$
\begin{equation*}
    4T_{xy}(0,y)\,T_x(0,\bu)\,T_y(y,\bu)\,T_0(\bu)\le
        T_{xy}(1,y)\,T_x(1,\bu)\,T_y(y,\bu)\,T_0(\bu),
\end{equation*}
because every $g_{j}\in \calG$ is 
$2$-monotone. This implies that $4f(0,y)\le f(1,y)$ for all~$y$ so, in particular, $4f(0,0)\leq f(1,0)$. Similarly, $4f(0,0)\leq f(0,1)$ if $p+r>1$. Therefore, 
the only remaining case is $p=0$ and $q=r=1$. 

Define $\alpha$ so that $g_{1}$ is the function $f_\alpha(x,u_1)$, where $f_\alpha\in \calG$. Then
{\allowdisplaybreaks
\begin{align*}
    f(1,0) &= \sum_{\bu\in\Bools^m}f_\alpha(1,u_1)\,T_y(0,\bu)\,T_0(\bu)\\
        &= \sum_{\bu'\in\Bools^{m-1}}f_\alpha(1,0)\,T_y(0,0,\bu')\,T_0(0,\bu')
               +\sum_{\bu'\in\Bools^{m-1}}f_\alpha(1,1)\,T_y(0,1,\bu')\,T_0(1,\bu')\\
        &= 2\sum_{\bu'\in\Bools^{m-1}}f_\alpha(0,0)\,T_y(0,0,\bu')\,T_0(0,\bu')
               +\alpha\sum_{\bu'\in\Bools^{m-1}}
                          f_\alpha(0,1)\,T_y(0,1,\bu')\,T_0(1,\bu')\\
        &= 2f(0,0)+(\alpha-2)\sum_{\bu'\in\Bools^{m-1}}
                                 f_\alpha(0,1)\,T_y(0,1,\bu')\,T_0(1,\bu')\\
        &\ge 2f(0,0)+(\alpha-2)\sum_{\bu'\in\Bools^{m-1}}\frac12
               \big(f_\alpha(0,0)\,T_y(0,0,\bu')\,T_0(0,\bu')
                   + f_\alpha(0,1)\,T_y(0,1,\bu')\,T_0(1,\bu')\big)\\
        &= \frac{2+\alpha}2 f(0,0)\,.
\end{align*}}
The inequality here holds because all the functions involved are monotone and therefore
\begin{equation*}
    f_\alpha(0,0)\,T_y(0,0,\bu')\,T_0(0,\bu')
        \le f_\alpha(0,1)\,T_y(0,1,\bu')\,T_0(1,\bu')\,.\end{equation*}
The result follows by the choice of~$\beta$.
\end{proof}

  We can now prove Theorem~\ref{thm:cardinality}.

\begin{proof}
As we observed at the beginning of the section, to prove that $|\calL_\omega|\ge\beth_2$, it suffices to show that for 
any $\calG_1,\calG_2\subseteq \calF$ with $\calG_1 \neq \calG_2$,
$\ofclone{\calG_1}\ne\ofclone{\calG_2}$. Suppose 
$f_\alpha\in \calG_1\setminus \calG_2$. We show that $f_\alpha\not\in \ofclone{\calG_2}$.
The function $f_\alpha$ is symmetric and binary and it satisfies
$f_\alpha = \widetilde{f_\alpha}$ so take
 any binary symmetric function $f\in\ofclone{\calG_2}$ with $\tilde f=f$. 
 We will show that $f\neq f_\alpha$.
 
By the definition of $\omega$-clone,
there is a finite
set $\calG''\subseteq \fclone{\calG_2}$ such that $f=\lim_{n\to\infty} h_n$ where 
each $h_n$ is a binary function in~$\fclone{\calG''}$. 
Each of the finitely many functions in $\calG''$ can be 
written as a finite sum of a product of (finitely many) functions in $\calA(\calG_2)$.
Let $\calG'$ be the finite set of functions in $\calG_2$ which correspond to the relevant functions in $\calA(\calG_2)$.
Then clearly each $h_n \in \fclone{\calG'}$.

Since $\tilde f = f$ there are at most finitely many $n$ such that $\widetilde{h_n} \neq h_n$ --- so
we will remove these from the sequence of functions $\{h_n\}$  
and of course it is still true that $f=\lim_{n\to\infty} h_n$.
 
Let $\beta=\min\{\mu\mid f_\mu\in \calG'\}$. By Lemma~\ref{lem:alpha-generation},
either there is a $\gamma_n$ such that $\gamma_n h_n \in \calG'$ or $h_n(0,1)\ge\min\left\{4,\frac{2+\beta}2\right\}h_n(0,0)$, or 
$h_n(1,0)\ge\min\left\{4,\frac{2+\beta}2\right\}h_n(0,0)$.

If there are infinitely many
$h_n$ such that $\gamma_n h_n\in \calG'$ for some constants $\gamma_n$ then, since 
$\calG'$ is finite, infinitely many of  the $\gamma_n h_n$ functions are equal. 
Since $f(0,0)=1$, $\lim_{n\to\infty}\gamma_n=1$ for such functions. Therefore, each of
$\gamma_n h_n$ is equal to $f$. Thus $f\in \calG'$ in this case. Clearly, $f\neq f_\alpha$ since $f_\alpha \not \in \calG_2$ so $f_\alpha \not\in \calG'$.

If there are finitely many $h_n$ with $\gamma_nh_n \in \calG'$ then
we can remove these from the sequence of functions $\{h_n\}$  
and  as before it is still true that $f=\lim_{n\to\infty} h_n$. From now on, we therefore
assume that none of the functions~$h_n$ belong   to $\calG'$.
Suppose that $h_n(1,0)\ge\min\left\{4,\frac{2+\beta}2\right\}h_n(0,0)$ for infinitely many 
$h_n$. Then $f(1,0)\ge\min\left\{4,\frac{2+\beta}2\right\}f(0,0)$, and $f\not\in \calF$
so clearly $f\neq f_\alpha$.
The case when there are infinitely many $h_n$ with 
$h_n(1,0)\ge\min\left\{4,\frac{2+\beta}2\right\}h_n(0,0)$ is similar.
\end{proof}

\section{Ternary functions}
\label{sec:ternary}

In this section, we prove Theorem~\ref{thm:ternary}, which we restate here for convenience.

Recall that $\calS_3=\{\calB_3, [\SD]_3, [\calP]_3, [\calPN]_3, [\SDP]_3,$ $[\ofclone{M}]_3, [\Hferro]_3, [\Iferro]_3\}$.  

{\renewcommand{\thetheorem}{\ref{thm:ternary}}
\begin{theorem} \statethmTernary
\end{theorem}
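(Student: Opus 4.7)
All inclusions shown in Figure~\ref{fig:lattice3} are immediate consequences of Theorem~\ref{thm:main}, so the task reduces to (i) proving the two collapses and (ii) exhibiting strictness of every remaining inclusion.

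For the first collapse, $[\SDP]_3=[\ofclone{\calM}]_3$, the direction $[\ofclone{\calM}]_3\subseteq[\SDP]_3$ is Theorem~\ref{thm:M-SDP}. For the reverse, given $f\in[\SDP]_3$, Lemma~\ref{lem:oddSD} and Theorem~\ref{thm:SDP} show that $\fhat$ is non-negative and supported on the four even-weight inputs. I would then exhibit a match-circuit consisting of three terminals $u_i$ attached to internal vertices $v_i$ ($i=1,2,3$), together with a fourth internal vertex $v_4$, where $v_1,\ldots,v_4$ form a weighted $K_4$ with edge weights $w_{ij}$. A direct enumeration of perfect matchings yields $\fhat(0,1,1)=w_{14}$, $\fhat(1,0,1)=w_{24}$, $\fhat(1,1,0)=w_{34}$ and $\fhat(\bzero)=w_{12}w_{34}+w_{13}w_{24}+w_{14}w_{23}$, while all odd-weight coefficients vanish automatically by the parity condition of match-circuits. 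Setting the $w_{i4}$ to the pairwise targets and scaling $w_{12},w_{13},w_{23}$ realises any $\fhat(\bzero)\geq 0$; Fourier coefficients that are exactly zero are obtained as limits of small positive weights, which is legal because $\ofclone{\calM}$ is closed under limits.

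For the second collapse, $[\Hferro]_3=[\Iferro]_3$, the inclusion $[\Iferro]_3\subseteq[\Hferro]_3$ is immediate. For the reverse, I would use $\Hferro\subseteq\SDP\cap\LSM$ (from Theorem~\ref{thm:Hferro-SDP} and its proof) to describe $[\Hferro]_3$ explicitly: writing a self-dual ternary $f$ in terms of $a=f(\bzero)$ and $b_i=f(e_i)$ for $i=1,2,3$, the constraints reduce to $a,b_i\geq 0$, $a+b_i\geq b_j+b_k$ (non-negativity of the even-weight Fourier coefficients) and $a b_i\geq b_j b_k$ (log-supermodularity) for every permutation $(i,j,k)$ of $(1,2,3)$. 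The task is then to realise every such quadruple in $\Iferro$. Basic building blocks include the auxiliary-variable construction $\sum_{w\in\Bools}\prod_{i=1}^3\hIsing{2}{\mu_i}(x_i,w)$, whose symmetric specialisation $\mu_1=\mu_2=\mu_3=\mu$ equals $(1+\mu^3)\hIsing{3}{\mu/(1-\mu+\mu^2)}$ and thereby realises every $\hIsing{3}{\lambda}$ with $\lambda\in[0,1]$, together with pair interactions $\hIsing{2}{\mu_{ij}}(x_i,x_j)$ and nullary scaling. Combining these into a pps-formula and matching the resulting parameter family to the admissible region above is the main technical obstacle of the theorem, since no single universal gadget suffices.

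The remaining strict inclusions are witnessed by small explicit ternary functions. In symmetric-function notation, the not-all-equal function $[0,1,1,0]$ is self-dual and a direct Fourier calculation yields $\fhat(0,1,1)=-\tfrac14$, so it lies in $[\SD]_3\setminus([\calP]_3\cup[\calPN]_3)$; in particular it witnesses $\calB_3\supsetneq[\calP]_3,[\calPN]_3$. The symmetric function $[2,1,1,1]$ has $\fhat=[\tfrac98,\tfrac18,\tfrac18,\tfrac18]$ and is not self-dual, so lies in $[\calP]_3\setminus([\SD]_3\cup[\calPN]_3)$ (and also shows $\calB_3\supsetneq[\SD]_3$); its bitwise complement $[1,1,1,2]$ analogously witnesses $[\calPN]_3\setminus([\SD]_3\cup[\calP]_3)$. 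Together these three examples give pairwise incomparability of $[\SD]_3,[\calP]_3,[\calPN]_3$, strictness of each containment of $[\SDP]_3=[\SD]_3\cap[\calP]_3\cap[\calPN]_3$, and strictness of each containment in $\calB_3$. Finally, for $[\Hferro]_3\subsetneq[\SDP]_3$, take the self-dual (non-symmetric) ternary function $f$ with $f(000)=f(111)=2$, $f(100)=f(011)=0$, and $f\equiv 1$ on the remaining six inputs; a short Fourier computation gives $\fhat(\bzero)=1$, $\fhat(0,1,1)=0$, $\fhat(1,0,1)=\fhat(1,1,0)=\tfrac12$, so $f\in[\SDP]_3$, while $f(011)f(000)=0<1=f(001)f(010)$ shows that $f$ is not log-supermodular, so $f\notin\Hferro$ since $\Hferro\subseteq\LSM$.
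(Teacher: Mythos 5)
Your overall decomposition---reduce to the two collapses plus explicit strictness witnesses---matches the paper's, and your witnesses all check out: the ternary symmetric functions $[0,1,1,0]$, $[2,1,1,1]$, $[1,1,1,2]$ (the paper instead pads the binary witnesses of Lemma~\ref{lem:SDPN-incomp} with a fictitious argument) and your non-log-supermodular member of $[\SDP]_3$ (the paper uses $f(0,0,0)=6$, $f(0,0,1)=4$, $f(0,1,0)=f(1,0,0)=5$) are all valid. The first collapse is also essentially right: your $K_4$ gadget computes exactly the four even-weight Fourier coefficients you list, and differs only cosmetically from the paper's circuit (a weighted triangle on three internal vertices reached by pendant paths from the terminals, plus an isolated edge carrying the overall constant). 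Two small repairs are needed there. Handling vanishing coefficients ``as limits of small positive weights'' is delicate: the paper's definition of a limit requires all approximants to lie in $\fclone{S_f}$ for a \emph{single finite} $S_f\subseteq\calM$, and your continuum of circuits with weights tending to $0$ does not obviously satisfy this. Also your single gadget degenerates when $b=c=d=0$ (constant $f$). Both issues disappear if you simply delete the zero-weight edges (and use three disjoint pendant paths plus an isolated edge for the constant case), as the paper does; this puts $\fhat$ in $\calM$ exactly, with no limit needed.

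The genuine gap is the second collapse, $[\Hferro]_3\subseteq[\Iferro]_3$. You correctly reduce to showing $[\SDP\cap\LSM]_3\subseteq[\Iferro]_3$ and correctly identify the admissible region (non-negative even-weight Fourier coefficients plus the log-supermodularity inequalities), but you then declare the realisation step ``the main technical obstacle'' and assert that no single universal gadget suffices. That assertion is false, and without the construction the theorem is not proved. Writing $f(\bzero)=f(\bone)=1$ after scaling, $a=f(0,0,1)$, $b=f(0,1,0)$, $c=f(1,0,0)$, the permissive case is handled by the single product
\begin{equation*}
    g(x,y,z)=\hIsing{2}{\lambda_1}(x,y)\,\hIsing{2}{\lambda_2}(x,z)\,\hIsing{2}{\lambda_3}(y,z),
    \qquad \lambda_1=\sqrt{bc/a},\ \lambda_2=\sqrt{ac/b},\ \lambda_3=\sqrt{ab/c}.
\end{equation*}
One checks directly that $g=f$, and the three log-supermodularity inequalities $a\geq bc$, $b\geq ac$, $c\geq ab$ are \emph{exactly} the conditions $\lambda_i\leq 1$, i.e., ferromagneticity of the three pairwise interactions. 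No auxiliary summation variable, no hypergraph Ising function and no nullary weighting beyond the initial scaling are required; in particular your proposed building block $\sum_w\prod_i\hIsing{2}{\mu_i}(x_i,w)$ is a detour. The non-permissive cases are degenerate: the inequalities force at most one of $a,b,c$ to be non-zero, and the surviving functions are products of $\EQ$ with a single $\hIsing{2}{a}$ (with $a\leq 1$ again forced by log-supermodularity), or are excluded by membership in $\SDP$.
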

\addtocounter{theorem}{-1}
}

\begin{proof}
The two collapses are proved in Theorem~\ref{thm:collapse1} and Theorem~\ref{thm:collapse2}, respectively.

Trivially, $[\SD]_3$, $[\calP]_3$, and $[\calPN]_3$ are strict subsets of $[\calB]_3$.
We now show that these three sets are distinct. Recall the binary
functions $f,g,h$ from the proof of Lemma~\ref{lem:SDPN-incomp} with
$f\in\SD\setminus (\calP\cup\calPN)$,
$g\in\calP\setminus(\SD\cup\calPN)$ and $h\in\calPN\setminus(\SD\cup\calP)$.
Define $f'(x,y,z)=f(x,y)$,
$g'(x,y,z)=g(x,y)$, and $h'(x,y,z)=h(x,y)$. From the proof
Lemma~\ref{lem:SDPN-incomp}, $\fhat = [\tfrac12, 0, -\tfrac12]$, $\ghat =
[1,\tfrac12,0]$ and $\hhat = [1,-\tfrac12,0]$. By
Lemma~\ref{lem:fops}\pref{op-fict}, $\widehat{f'}(x,y,0)=\fhat(x,y)$ and
$\widehat{f'}(x,y,1)=0$, and similarly for $\widehat{g'}$ and
$\widehat{h'}$.
It is now easy to verify that
$f'\in[\SD]_3\setminus ([\calP]_3\cup[\calPN]_3)$,
$g'\in[\calP]_3\setminus([\SD]_3\cup[\calPN]_3)$ and $h'\in[\calPN]_3\setminus([\SD]_3\cup[\calP]_3)$.

Since $[\SD]_3$, $[\calP]_3$, $[\calPN]_3$ are distinct, we have
$[\SDP]_3\subset[\SD]_3, [\calP]_3, [\calPN]_3$.

Finally, we show that $[\Hferro]_3\subset[\SDP]_3$. 
The non-strict inclusion $[\Hferro]_3\subseteq [\SDP]_3$ comes from the fact that $\Hferro\subseteq \SD$
(from Theorem~\ref{thm:main}).
To get the strict inclusion, we will  exhibit a ternary function that is in $[\SDP]_3$, but
is not in $[\LSM]_3$ (so is not in $[\Hferro]_3$ since $\Hferro \subseteq \LSM$, as discussed in the proof of
Theorem~\ref{thm:Hferro-SDP}, and hence $[\Hferro]_3\subseteq[\LSM]_3$).

Consider
the $3$-ary self-dual function $f$ defined by $f(0,0,0) = 6$, $f(0,0,1) = 4$,
and $f(0,1,0) = f(1,0,0)=5$. It can be verified that $f\in\calP$ and thus
$f\in[\SDP]_3$. We use Lemma~\ref{lem:topkis} to show that $f\not\in\LSM$: for
$2$-pinning $g(x,y)=f(x,y,1)$ we have $g(0,0)\,g(1,1)=4\cdot 6=24<25=5\cdot 5=g(0,1)\,g(1,0)$.
\end{proof}

\begin{figure}
    \begin{center}
    \begin{tikzpicture}[scale=1,node distance = 1.5cm]
    \tikzstyle{vertex}=[fill=black, draw=black, circle, inner sep=2pt]
    \tikzstyle{dist}  =[fill=white, draw=black, circle, inner sep=2pt]

        \node[dist]   (u1) at (0,3.3) [label=180:$u_1$] {};
        \node[dist]   (u2) at (0,2) [label=180:$u_2$] {};
        \node[dist]   (u3) at (0,0.7) [label=180:$u_3$] {};
        \node[vertex] (v1) at (1.5,3.3)                 {};
        \node[vertex] (v2) at (1.5,2)                 {};
        \node[vertex] (v3) at (1.5,0.7)                 {};
        \node[vertex] (w1) at (4.5,3.3)                 {};
        \node[vertex] (w2) at (3,2)                   {}; 
        \node[vertex] (w3) at (4.5,0.7)                 {};

        \draw (u1)-- (v1) -- (w1);
        \draw (u2)-- (v2) -- (w2);
        \draw (u3)-- (v3) -- (w3);
        
        \draw (w1)-- node [below ] {$\ \ d/a$} (w2);
        \draw (w2)-- node [above ] {$\ \ b/a$} (w3);
        \draw (w3)-- node [right] {$c/a$} (w1);
 
        \node[vertex] (a) at (6.5,3)                 {};
        \node[vertex] (b) at (6.5,1)                 {};
        \draw (a)-- node [right] {$a$} (b);

    \end{tikzpicture}
    \end{center}
    \caption{The match-circuit used in the proof of
      Theorem~\ref{thm:collapse1}.
      Every edge has weight~$1$ unless
      otherwise indicated.  }
    \label{fig:gadget}
\end{figure}
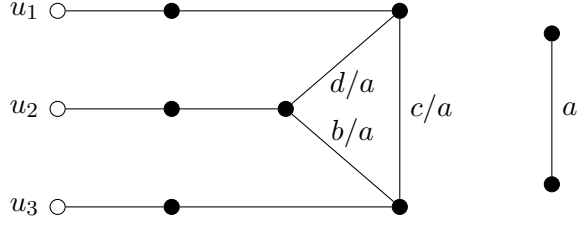

\begin{theorem}\label{thm:collapse1}
    $[\SDP]_3 = [\ofclone{\calM}]_3$.
\end{theorem}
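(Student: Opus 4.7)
The inclusion $[\ofclone{\calM}]_3 \subseteq [\SDP]_3$ is immediate from Theorem~\ref{thm:M-SDP}, which already establishes $\ofclone{\calM}\subseteq\SDP$. My plan for the reverse inclusion is to show that every $f \in [\SDP]_3$ has its Fourier transform $\fhat$ directly implemented by the match-circuit depicted in Figure~\ref{fig:gadget}, so that $f \in \calM \subseteq \ofclone{\calM}$. The constant zero function is handled separately via $\calB_0 \subseteq \calM$ (Theorem~\ref{thm:M-clone}).

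The first step is to characterise $\fhat$ for $f \in [\SDP]_3$. By Lemma~\ref{lem:oddSD}, self-duality forces $\fhat(\bx)=0$ whenever $\wt{\bx}$ is odd, while $f \in \calP$ guarantees that the remaining coefficients $\fhat(000), \fhat(011), \fhat(101), \fhat(110)$ are nonnegative. The coefficient $\fhat(000)=\tfrac{1}{8}\sum_{\bx}f(\bx)$ is strictly positive whenever $f$ is not identically zero. Setting $a=\fhat(000)$, $b=\fhat(011)$, $c=\fhat(101)$, $d=\fhat(110)$, these four nonnegative parameters (with $a>0$) completely determine $\fhat$.

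Next I would verify that the match-circuit of Figure~\ref{fig:gadget} implements $\fhat$, with the three triangle edges on $\{w_1,w_2,w_3\}$ assigned weights $d/a$, $b/a$, $c/a$ and the isolated edge assigned weight $a$. A case analysis over all eight assignments $\by\in\Bools^3$ is required. For $\by=\bzero$, each $v_i$ is forced to match $w_i$, no triangle edge can be used, and only the isolated edge contributes, giving total weight $a$. For a Hamming-weight-$2$ assignment, two of the $v_i$'s match their terminals, the remaining $v_j$ matches $w_j$, and the two remaining $w$'s must be joined by the unique triangle edge between them; together with the isolated edge this yields weights $b$, $c$, or $d$ as appropriate. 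For odd-weight $\by$, an odd number of the $w_i$'s remain to be matched but no perfect matching exists on the restricted triangle, so the weight is $0$. Thus the function implemented matches $\fhat$ on all eight inputs.

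The only remaining subtlety is that match-circuits require strictly positive edge weights, whereas some of $b, c, d$ may vanish. This is easily handled: if, say, $b=0$, simply omit the corresponding triangle edge, and the case analysis still yields the correct value $0$ for the assignment that would otherwise have selected it. The main obstacle is the circuit verification itself, which is routine but must be done carefully; the underlying insight is that a ternary function in $\SDP$ has only four independent nonnegative Fourier parameters, matched exactly by the four tunable positive edge weights of the gadget in Figure~\ref{fig:gadget}.
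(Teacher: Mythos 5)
Your proposal is correct and follows essentially the same route as the paper's proof: the forward inclusion via Theorem~\ref{thm:M-SDP}, and the reverse inclusion by realising $\fhat$ (with $a=\fhat(000)>0$, $b,c,d$ the even-weight coefficients) by the match-circuit of Figure~\ref{fig:gadget}, deleting triangle edges when the corresponding coefficient vanishes. The only cosmetic difference is your treatment of the all-zero function (via $\calB_0\subseteq\calM$ and closure under fictitious arguments, rather than the paper's explicit disjoint-triangle circuit), which is equally valid.
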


\begin{proof} 
By Theorem~\ref{thm:main}, $\ofclone{\calM}\subseteq \SDP$ and thus
$[\ofclone{\calM}]_3\subseteq [\SDP]_3$. It remains to show the other inclusion,
$[\SDP]_3\subseteq[\ofclone{\calM}]_3$. 
Consider a $3$-ary function $f(x,y,z)\in \SDP$.  If $f$ is the constant zero
function then $f\in[\ofclone{\calM}]_3$ 
since  $\fhat$ is also the constant zero function, and it can be implemented by
a match-circuit with three terminals and a disjoint triangle.

If $f$ is not the constant zero function, then $\fhat(0,0,0)>0$.
There are values $a>0$ and 
$b,c,d\geq 0$ such that $\fhat(0,0,0)=a$,
$\fhat(0,1,1)=b$,
$\fhat(1,0,1)=c$,
$\fhat(1,1,0)=d$ and (by Lemma~\ref{lem:oddSD}) $\fhat(\bx)=0$ when
$\wt{\bx}$~is odd.  It is easily verified that, for $b,c,d>0$, $\fhat$~is implemented
by the match-circuit shown in
Figure~\ref{fig:gadget}.
Definition~\ref{def:MC} does not allow zero-weight edges but we can
implement~$\fhat$ in cases where some of $b$, $c$ and~$d$ are zero by
deleting the corresponding edge or edges from the match-circuit in Figure~\ref{fig:gadget}.
Hence, $f\in [\calM]_3$, so it is in  $[\ofclone{\calM}]_3$.
\end{proof}

\begin{theorem}\label{thm:collapse2}
  $[\Hferro]_3 = [\Iferro]_3$.
\end{theorem}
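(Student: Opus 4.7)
The easy direction $[\Iferro]_3 \subseteq [\Hferro]_3$ follows immediately from $\FerroIsing \subseteq \FerroHyperIsing$, which yields $\Iferro \subseteq \Hferro$. For the reverse inclusion, my plan is to show that any ternary $f \in \Hferro$ can be written as $c \cdot g$ where $c \in \calB_0$ and $\ghat$ is implemented by a three-terminal even-circuit, so that $f \in \calE \subseteq \ofclone{\calE} = \Iferro$ by Corollary~\ref{cor:Iferro-E}.

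The first step is to record the structural constraints on a ternary $f \in \Hferro$. By Theorem~\ref{thm:Hferro-SDP}, $f \in \SDP$, so $f$ is self-dual with $\fhat \geq 0$. Moreover, every $\hIsing{k}{\lambda}$ with $\lambda \leq 1$ is easily verified to be log-supermodular, nullary functions are trivially log-supermodular, and $\LSM$ is an $\omega$-clone by \cite[Lemma~4.2]{LSM}, so $\Hferro \subseteq \LSM$ and $f$ is log-supermodular. Writing $A = f(000)$, $B = f(100)$, $C = f(010)$ and $D = f(001)$ (with the remaining four values determined by self-duality), the log-supermodularity inequalities for $f$ reduce, after applying self-duality, to $AD \geq BC$, $AC \geq BD$ and $AB \geq CD$.

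The second step is the construction, by analogy with the match-circuit from the proof of Theorem~\ref{thm:collapse1}. Let $G$ be a three-terminal even-circuit with external vertices $u_1, u_2, u_3$, internal vertices $v_1, v_2, v_3$ forming a triangle, terminal edges $(u_i, v_i)$ of weight $1$, and triangle edges $(v_i, v_j)$ of weight $T_{ij} \in (0,1]$. The even-subgraph constraint at each $v_i$ forces $y_1 + y_2 + y_3 \equiv 0 \pmod 2$, and a direct enumeration of the two permissible internal configurations for each allowed $\by$ shows that $G$ implements the function $h$ given by $h(000) = 1 + T_{12} T_{13} T_{23}$, $h(011) = T_{23} + T_{12} T_{13}$, $h(101) = T_{13} + T_{12} T_{23}$, $h(110) = T_{12} + T_{13} T_{23}$, with $h(\by) = 0$ on odd-weight $\by$. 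Setting $\ghat = h$ and inverting the Fourier transform yields the clean factorisations $g(000) = (1+T_{12})(1+T_{13})(1+T_{23})$, $g(100) = (1-T_{12})(1-T_{13})(1+T_{23})$, $g(010) = (1-T_{12})(1+T_{13})(1-T_{23})$ and $g(001) = (1+T_{12})(1-T_{13})(1-T_{23})$.

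The final step is to solve for the parameters. The key identity $g(000)\,g(100)/(g(010)\,g(001)) = ((1+T_{23})/(1-T_{23}))^2$ (and its two symmetric counterparts) shows that if $A, B, C, D > 0$ and the log-supermodularity inequalities are all strict, then $T_{23} = (\sqrt{AB/CD} - 1)/(\sqrt{AB/CD} + 1)$ and the analogous expressions for $T_{12}, T_{13}$ give unique values in $(0,1)$, and a suitable $c \in \calB_0$ realises $f = c \cdot g \in \calE$. The main technical obstacle will be the boundary cases: (i) some of $B, C, D$ vanish, which by log-supermodularity forces at least two of them to vanish and can be matched by sending the appropriate $T_{ij}$ to $1$ (still admissible); (ii) a log-supermodularity inequality holds with equality, which requires some $T_{ij} = 0$ and can be handled either by dropping the corresponding triangle edge or by perturbation and the closure of $\Iferro$ under limits; and (iii) $A = 0$, which forces $f = 0 \in \calB_0 \subseteq \Iferro$. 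A short case-by-case verification completes the argument, and I expect this bookkeeping, rather than any novel idea, to be the only genuine difficulty.
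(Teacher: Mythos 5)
Your proposal is correct and is, at bottom, the paper's own construction: the paper realises a permissive ternary $f\in\SDP\cap\LSM$ (normalised so $f(\bzero)=1$) directly as the product $\hIsing{2}{\lambda_1}(x,y)\,\hIsing{2}{\lambda_2}(x,z)\,\hIsing{2}{\lambda_3}(y,z)$ with $\lambda_1=\sqrt{bc/a}$, etc., which is exactly the Fourier-dual description of your triangle even-circuit (your $(1-T_{ij})/(1+T_{ij})$ are the paper's $\lambda_i$), and the same three log-supermodularity inequalities guarantee $\lambda_i\le 1$. The only substantive difference is presentational: writing the product of Ising functions directly absorbs your boundary case (ii) painlessly, since equality in a log-supermodularity inequality just gives $\lambda_i=1$ and $\hIsing{2}{1}\in\FerroIsing$, whereas the even-circuit route forces you to delete edges or take limits because even-circuit weights must lie in $(0,1]$; your cases (i) and (iii) correspond to the paper's treatment of non-permissive $f$ and are sketched correctly.
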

\begin{proof}
By Theorem~\ref{thm:main}, $\Iferro\subseteq\Hferro$ and thus
$[\Iferro]_3\subseteq[\Hferro]_3$. It remains to show the other inclusion,
$[\Hferro]_3\subseteq[\Iferro]_3$. In fact, we prove something stronger, namely
that $[\SDP\cap\LSM]_3\subseteq[\Iferro]_3$. By the proof of
Theorem~\ref{thm:Hferro-SDP}, $\Hferro\subseteq\SDP\cap\LSM$ so
the required inclusion follows.

An arbitrary $3$-ary function $f\in\SDP$ is given by $f(0,0,0)=f(1,1,1)=\lambda$,
$f(0,0,1)=f(1,1,0)=a$, $f(0,1,0)=f(1,0,1)=b$, and $f(1,0,0)=f(0,1,1)=c$. 

For $f$ to be in $\LSM$, by Lemma~\ref{lem:topkis} (and in particular the fact
that the necessary condition of Lemma~\ref{lem:topkis} holds even for
non-permissive functions, as discussed in
Section~\ref{sec:self-dual}), the following functions must also be in $\LSM$:
$f_1,\dots,f_6\in\LSM$ where $f_1(x,y)=f(0,x,y)$, $f_2(x,y)=f(x,0,y)$, 
$f_3(x,y)=f(x,y,0)$, $f_4(x,y)=f(1,x,y)$, $f_5(x,y)=f(x,1,y)$ and
$f_6(x,y)=f(x,y,1)$. This gives \begin{equation}\label{ineq3} \lambda c\geq ab\,,
\quad \lambda b\geq ac \quad\mbox{and}\quad \lambda a \geq bc\,. \end{equation}

Assume that $f$ is permissive. Without loss of generality (by scaling since
$\calB_0\subseteq\Iferro$), let $\lambda=1$. Let $g(x,y,z)=
\hIsing{2}{\lambda_1}(x,y)\,\hIsing{2}{\lambda_2}(x,z)\,\hIsing{2}{\lambda_3}(y,z)$,
where $\lambda_1=\sqrt{ bc/a}$, $\lambda_2 = \sqrt{ ac/b}$ and $\lambda_3=\sqrt{
ab/c}$. By~(\ref{ineq3}), $\lambda_1,\lambda_2,\lambda_3\leq 1$, and hence
$g\in[\Iferro]_3$. We now verify that $f=g$. By the definition of $g$,
$g(0,0,0)=g(1,1,1)=1$, $g(0,0,1)=g(1,1,0)=1\cdot \lambda_2\cdot \lambda_3 = a$,
$g(0,1,0)=g(1,0,1)=\lambda_1\cdot 1\cdot \lambda_3 = b$, and
$g(1,0,0)=g(0,1,1)=\lambda_1\cdot\lambda_2\cdot 1=c$. 

It remains to deal with non-permissive $f$. If $\lambda=0$ then~(\ref{ineq3})
implies that at most one of $a$, $b$, and $c$ is non-zero. If all three are zero
then $f$ is the constant zero function and thus trivially in $[\Iferro]_3$ since
$\calB_0\subseteq\Iferro$. Otherwise, let $a>0$ and $b=c=0$; the other two cases
are symmetric. Since $\fhat(1,1,0)=-a/4<0$, $f\not\in\SDP$, a contradiction. If
$\lambda>0$ then, by scaling, let $\lambda=1$. The inequalities~\eqref{ineq3}~imply that at most one of $a$, $b$, and $c$ is non-zero. If all
three are zero then $f(x,y,z)=\EQ(x,y)\,\EQ(y,z)$. Otherwise, let $a>0$ and
$b=c=0$; the other two cases are symmetric. In this case $f(0,0,0)=f(1,1,1)=1$,
$f(0,0,1)=f(1,1,0)=a$ and $f(x,y,z)=0$ otherwise. Thus
$f(x,y,z)=\EQ(x,y)\,\hIsing{2}{a}(y,z)$.  Now, because $f\in\LSM$, we
must have $f(0,0,0)\,f(1,1,1) = 1\geq f(0,0,1)\,f(1,1,0)=a^2$, so
$a\leq 1$ and $f\in[\Iferro]_3$. 
\end{proof}

\bibliographystyle{plain}
\bibliography{\jobname}

\appendix

\section{Fourier transforms}
\label{app:Fourier}

In this appendix, we prove Lemmas~\ref{lem:fops}--\ref{lem:ft-parity}.

 {\renewcommand{\thetheorem}{\ref{lem:fops}}
\begin{lemma} \statelemfops
\end{lemma}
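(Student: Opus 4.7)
The plan is to prove each of the six parts by a direct unfolding of the definition
\[ \fhat(\bx) = \frac{1}{2^k} \sum_{\bw \in \Bools^k} (-1)^{\wt{\bw \land \bx}} f(\bw), \]
combined with a small re-indexing of the sum (or, for parts \pref{op-fict} and \pref{op-sum}, a factorisation of the sum over the introduced/summed coordinate). None of the parts is individually deep; the whole lemma is essentially a library of bookkeeping identities. I would group them as follows.

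For part \pref{op-perm}, I would start from $\widehat{f^{\pi}}(\bx) = 2^{-k}\sum_{\bw}(-1)^{\wt{\bw\land\bx}} f(\pi(\bw))$, where $\pi$ acts on a tuple coordinate-wise according to the convention $f^\pi(x_1,\dots,x_k)=f(x_{\pi(1)},\dots,x_{\pi(k)})$, and re-index the sum by $\bw' = \pi(\bw)$. Since coordinate permutation is a bijection on $\Bools^k$, commutes with bitwise $\land$, and preserves Hamming weight, this yields $\widehat{f^\pi}(\bx)=\fhat(\pi(\bx))$. For part \pref{op-fict}, I would split the defining sum over the new last coordinate $y$: since $h(\bw y)=f(\bw)$, the inner sum reduces to $\sum_{y\in\Bools}(-1)^{yz}$, which equals $2$ when $z=0$ and $0$ when $z=1$, giving both conclusions. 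For part \pref{op-sum}, I would observe directly that $2\fhat(\bx 0) = 2^{-k}\sum_\bw (-1)^{\wt{\bw\land\bx}}\bigl(f(\bw 0)+f(\bw 1)\bigr)$, which is exactly $\hhat(\bx)$.

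For part \pref{op-comp}, I would substitute $\bw\mapsto\overline{\bw}$ in the defining sum and use the identity $\wt{\overline{\bw}\land\bx}=\wt{\bx}-\wt{\bw\land\bx}$, so that $(-1)^{\wt{\overline{\bw}\land\bx}}=(-1)^{\wt{\bx}}(-1)^{\wt{\bw\land\bx}}$; pulling the sign factor outside gives the claim. For part \pref{op-lim}, linearity of the Fourier transform and the triangle inequality give $|\ghat(\bx)-\fhat(\bx)|\le 2^{-k}\sum_\bw|g(\bw)-f(\bw)|\le\epsilon$ for every $\bx$. Finally, part \pref{op-nullary} is immediate: when $k=0$ there is a unique argument (the empty tuple), the normalising factor is $1$, and the single sign factor is $(-1)^0=1$, so $\fhat=f$.

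I do not expect any real obstacle. The only step that requires a little care is \pref{op-perm}, because the convention that $\pi$ acts on indices rather than on values must be applied consistently; once that convention is fixed, invariance of $\wt{\cdot\land\cdot}$ under simultaneous coordinate permutation makes the identity transparent.
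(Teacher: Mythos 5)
Your proposal is correct and takes essentially the same approach as the paper: each part is a routine unfolding of the definition with a re-indexing or factorisation of the sum (the paper proves (vi) indirectly via part (iii) using an auxiliary unary function, whereas you argue it directly, but this is immaterial). One small slip to fix in (iii): the displayed identity should read $2\fhat(\bx 0) = 2^{-(k-1)}\sum_{\bw\in\Bools^{k-1}}(-1)^{\wt{\bw\land\bx}}\bigl(f(\bw 0)+f(\bw 1)\bigr) = \hhat(\bx)$, since $h$ has arity $k-1$ and so its transform carries the factor $2^{-(k-1)}$, not $2^{-k}$; the conclusion is unaffected.
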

\addtocounter{theorem}{-1}
}

\begin{proof} 
\begin{enumerate}[(i)]
 
\item In the following, the third equality is because permuting a
    Boolean vector doesn't change its Hamming weight and the fourth
    equality is reordering the terms of the sum.
    \begin{align*}
        \widehat{f^{\pi}}(\bx)
            &= \frac1{2^k}\sum_{\bw\in \Bools^k}
                   (-1)^{\wt{\bx\wedge\bw}}f^{\pi}(\bw) \\
            &= \frac1{2^k}\sum_{\bw\in \Bools^k}
                   (-1)^{\wt{\bx\wedge\bw}}f(\pi(\bw)) \\
            &= \frac1{2^k}\sum_{\bw\in \Bools^k}
                   (-1)^{\wt{\pi(\bx)\wedge\pi(\bw)}}f(\pi(\bw)) \\
            &= \frac1{2^k}\sum_{\bw\in \Bools^k}
                   (-1)^{\wt{\pi(\bx)\wedge\bw}}f(\bw) \\
            &= \fhat(\pi(\bx))\,.
    \end{align*}

\item For any $z\in\Bools$,
    \begin{align*}
        \hhat(\bx z)
            &=\frac1{2^{k+1}} \sum_{\bw\in\Bools^k}
                  (-1)^{\wt{\bx z\wedge \bw0}} f(\bw) +
              \frac1{2^{k+1}} \sum_{\bw\in\Bools^k}
                  (-1)^{\wt{\bx z\wedge \bw1}} f(\bw) \\
            &=\frac1{2^{k+1}} \sum_{\bw\in\Bools^k}
                  (-1)^{\wt{\bx\wedge\bw}}\; f(\bw)\;
                  \big((-1)^{\wt{z\wedge 0}} + (-1)^{\wt{z\wedge 1}}\big) \\
            &= \tfrac12\fhat(\bx)\,\big(1 + (-1)^{\wt{z}}\big) \\
            &=\begin{cases}
                  \ \fhat(\bx) &\text{if }z=0 \\
                  \ 0            &\text{if }z=1\,.
              \end{cases}
    \end{align*}

\item For any $\bx$,
    \begin{align*}
        \hhat(\bx)
            &= \frac{1}{2^{k-1}}\sum_{\bw\in \Bools^{k-1}}
                   (-1)^{\wt{\bx\wedge\bw}}
                   \big(f(\bw0) + f(\bw1)\big) \\
            &= \frac{1}{2^{k-1}}\sum_{\bw\in \Bools^{k-1}}
                   (-1)^{\wt{\bx0\wedge\bw0}}f(\bw0) +
               \frac{1}{2^{k-1}}\sum_{\bw\in \Bools^{k-1}}
                   (-1)^{\wt{\bx0\wedge\bw1}}f(\bw1) \\
            &= \frac{1}{2^{k-1}}\sum_{\bw\in \Bools^k}
                   (-1)^{\wt{\bx0\wedge\bw}}f(\bw) \\
            &= 2\fhat(\bx0)\,.
    \end{align*}

\item Noting that $(-1)^{a-b} = (-1)^{a+b}$, we have
    \begin{align*}
        \hhat(\bx)
            &= \frac1{2^k}\sum_{\bw\in\Bools^k}
                   (-1)^{\wt{\bx\wedge\bw}} f(\overline{\bw}) \\
            &= \frac1{2^k}\sum_{\bw\in\Bools^k}
                   (-1)^{\wt{\bx\wedge\overline{\bw}}} f(\bw) \\
            &= \frac1{2^k}\sum_{\bw\in\Bools^k}
                   (-1)^{\wt{\bx}-\wt{\bx\wedge\bw}} f(\bw) \\
            &= \frac1{2^k}(-1)^{\wt{\bx}}\sum_{\bw\in\Bools^k}
                   (-1)^{\wt{\bx\wedge\bw}} f(\bw) \\
            &= (-1)^{\wt{\bx}}\fhat(\bx)\,.
    \end{align*}

\item
    \begin{equation*}
        \|\ghat - \fhat\|_\infty
            = \bigg\| \frac{1}{2^k}\sum_{\bw\in\Bools^k}
                (-1)^{\wt{\bx\wedge\bw}} \big(g(\bw) - f(\bw)\big)\bigg\|_\infty
            \!<\ \, \frac1{2^k}2^k\epsilon = \epsilon\,.\qedhere
    \end{equation*}
    
    \item Suppose $k=0$ so $f()=c$.
    Consider the unary function $h$ defined by $h(0)=h(1)=c/2$.
    Then $f() = h(0)+h(1)$ so by item~\ref{op-sum},
    $\fhat() = 2 \hhat(0) = c$. 
    
\end{enumerate}
\end{proof}

{\renewcommand{\thetheorem}{\ref{lem:ft-hIsing}}
\begin{lemma} \statelemfthIsing
\end{lemma}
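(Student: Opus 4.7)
The plan is to decompose $\hIsing{k}{\lambda}$ as a linear combination of two simpler functions whose Fourier transforms are straightforward to compute, and then use linearity of the Fourier transform.

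Specifically, let $\mathbf{1}_k$ denote the constant-$1$ function of arity~$k$, and let $\EQ_k$ denote the $k$-ary equality function, which takes value~$1$ on $\bzero$ and $\bone$ and $0$ elsewhere. Then, directly from Definition~\ref{def:HyperIsing},
\begin{equation*}
\hIsing{k}{\lambda} \;=\; \lambda \cdot \mathbf{1}_k \;+\; (1-\lambda) \cdot \EQ_k.
\end{equation*}
Since the Fourier transform (Definition~\ref{def:ft}) is linear in~$f$, it suffices to compute $\widehat{\mathbf{1}_k}$ and $\widehat{\EQ_k}$ and then add the two contributions with the appropriate coefficients.

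For the constant function, $\widehat{\mathbf{1}_k}(\bx) = 2^{-k}\sum_{\bw\in\Bools^k}(-1)^{\wt{\bw\wedge\bx}}$. When $\bx=\bzero$ every term equals~$1$, giving $\widehat{\mathbf{1}_k}(\bzero)=1$; when $\bx\neq\bzero$, the sum factors as a product of $\wt{\bx}$ copies of $(1+(-1))=0$ (over coordinates where $x_i=1$), so $\widehat{\mathbf{1}_k}(\bx)=0$. For equality, only $\bw=\bzero$ and $\bw=\bone$ contribute, yielding
\begin{equation*}
\widehat{\EQ_k}(\bx) \;=\; \frac{1 + (-1)^{\wt{\bx}}}{2^k},
\end{equation*}
which equals $1/2^{k-1}$ when $\wt{\bx}$ is even and $0$ when $\wt{\bx}$ is odd.

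Combining, if $\bx=\bzero$ then $\hIsinghat{k}{\lambda}(\bzero) = \lambda\cdot 1 + (1-\lambda)\cdot 2^{-(k-1)} = \lambda + (1-\lambda)/2^{k-1}$; if $\wt{\bx}$ is even and positive, only the $\EQ_k$ term survives, giving $(1-\lambda)/2^{k-1}$; and if $\wt{\bx}$ is odd, both terms vanish. This matches all three cases in the lemma statement. I expect no real obstacle: the computation is routine once the decomposition is in hand, and the only point requiring any care is separating the $\bx=\bzero$ case from the even positive-weight case (since $\widehat{\mathbf{1}_k}$ contributes only at $\bzero$).
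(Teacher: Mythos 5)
Your proof is correct and follows essentially the same route as the paper: the paper's computation also splits $\hIsing{k}{\lambda}$ into $\lambda$ times the constant-one function plus $(1-\lambda)$ times the indicator of $\{\bzero,\bone\}$ inside the defining sum, which is exactly your decomposition combined with linearity of the transform. The only (shared, harmless) caveat is the degenerate case $k=0$, where $\bzero=\bone$ and the two contributing terms coincide; the lemma is only ever used for $k\geq 2$.
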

\addtocounter{theorem}{-1}
}

\begin{proof} 
    We have
    \begin{align*}
        2^k\hIsinghat{k}{\lambda}(\bx)
            &= \sum_{\bw\in\Bools^k} (-1)^{\wt{\bx\land\bw}}
                   \hIsing{k}{\lambda}(\bw) \\
            &= \big( 1-\lambda\big)\big(1 + (-1)^{\wt{\bx}}\big)
               + \lambda\sum_{\bw\in\Bools^k} (-1)^{\wt{\bx\land\bw}} \\
            &= \big(1-\lambda\big)\big(1 + (-1)^{\wt{\bx}}\big)
               + \lambda \, 2^{k-\wt{\bx}}
                  \sum_{\bu\in\Bools^{\wt{\bx}}} (-1)^{\wt{\bu}}\,,
    \end{align*}
    where we adopt the convention that $\Bools^0$ contains exactly one
    tuple, which has Hamming weight~$0$.  This means the sum evaluates
    to~$1$ if $\bx = \bzero$ and to zero, otherwise.
\end{proof}

{\renewcommand{\thetheorem}{\ref{lem:ft-parity}}
\begin{lemma} \statelemftparity
\end{lemma}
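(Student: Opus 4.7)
The plan is to proceed by direct computation using a decomposition of $\parityk$ into a constant plus a scaled ``full parity'' function, then to apply the standard orthogonality identity $\sum_{\bw \in \Bools^k} (-1)^{\wt{\by \land \bw}} = 2^k \cdot [\by = \bzero]$.

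First I would observe the identity
\begin{equation*}
    \parityk(\bw) = \tfrac{1+\lambda}{2} + \tfrac{1-\lambda}{2}(-1)^{\wt{\bw}},
\end{equation*}
which is easily verified by checking the two cases $\wt{\bw}$ even and $\wt{\bw}$ odd. Substituting into the definition of the Fourier transform gives
\begin{equation*}
    \parityhatk(\bx) = \tfrac{1+\lambda}{2} \cdot \frac{1}{2^k} \sum_{\bw\in\Bools^k} (-1)^{\wt{\bx\land\bw}} + \tfrac{1-\lambda}{2} \cdot \frac{1}{2^k} \sum_{\bw\in\Bools^k} (-1)^{\wt{\bx\land\bw}+\wt{\bw}}.
\end{equation*}

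The next step is a coordinate-wise check that $\wt{\bx\land\bw}+\wt{\bw} \equiv \wt{\overline{\bx}\land\bw} \pmod 2$: at coordinate~$i$ the contribution to the left side is $x_iw_i + w_i = w_i(1+x_i)$, which modulo~$2$ equals $w_i\,\overline{x_i}$. Hence the second sum can be rewritten as $\sum_{\bw} (-1)^{\wt{\overline{\bx}\land\bw}}$. Applying the standard orthogonality identity, the first sum is $2^k$ if $\bx=\bzero$ (and $0$ otherwise), while the second sum is $2^k$ if $\overline{\bx}=\bzero$, i.e., if $\bx=\bone$ (and $0$ otherwise). Reading off the three cases yields the claimed values of $\parityhatk(\bzero)$, $\parityhatk(\bone)$, and $\parityhatk(\bx) = 0$ for all other $\bx$.

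There is essentially no obstacle here — the lemma is a direct calculation. The only thing to be careful about is the modulo-$2$ manipulation of Hamming weights in the exponent of $(-1)$, which I would verify coordinate-wise to avoid any slip with signs.
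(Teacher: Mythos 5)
Your proof is correct. It is, however, a different argument from the one in the paper. The paper handles the values at $\bzero$ and $\bone$ as "straightforward from the definition" and then, for $\bx\notin\{\bzero,\bone\}$, splits the Fourier sum according to the parity of $\wt{\bw}$ and shows each half vanishes via a sign-reversing involution: picking coordinates with $x_1=0$, $x_2=1$, the map $\bw\mapsto(\overline{w_1},\overline{w_2},w_3,\dots,w_n)$ preserves the parity of $\bw$ but flips the sign $(-1)^{\wt{\bw\wedge\bx}}$, so the terms cancel in pairs. You instead write $\parityk$ explicitly as the linear combination $\tfrac{1+\lambda}{2}+\tfrac{1-\lambda}{2}(-1)^{\wt{\bw}}$ of the trivial character and the full parity character, and then evaluate the transform by the orthogonality identity $\sum_{\bw}(-1)^{\wt{\by\land\bw}}=2^k[\by=\bzero]$ (together with the coordinate-wise congruence $\wt{\bx\land\bw}+\wt{\bw}\equiv\wt{\overline{\bx}\land\bw}\pmod 2$, which you verify correctly). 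Your route is essentially "read off the Fourier expansion and invoke orthogonality"; it has the advantage of treating all three cases, including $\bx=\bzero$ and $\bx=\bone$, in one uniform computation, whereas the paper's pairing argument is more hands-on and only addresses the vanishing case explicitly. Both are complete; the only (shared, and harmless) caveat is the degenerate case $k=0$, where $\bzero=\bone$ and the statement should be read as requiring $k\geq 1$.
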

\addtocounter{theorem}{-1}
}
\begin{proof}
    The first two equalities are straightforward from the definition.
    Let $x\in\Bools^n\setminus\{\bzero,\bone\}$.
    \begin{align*}
        \parityhatk(\bx)
            &= \frac{1}{2^n} \sum_{\bw\in\Bools^n} (-1)^{\wt{\bw\wedge\bx}}\,
                   \parityk(\bx)\\
            &= \frac{1}{2^n}\sum_{\wt{\bw}\text{ even}} (-1)^{\wt{\bw\wedge\bx}} +
               \frac{\lambda}{2^n}\sum_{\wt{\bw}\text{ odd}} (-1)^{\wt{\bw\wedge\bx}}.
    \end{align*}
    Suppose without loss of generality that $x_1=0$ and $x_2=1$. Then,
    for every $\bw\in\Bools^n$, the tuples $\bw=(w_1,\dots,w_n)$ and
    $\bw'=(\overline{w_1},\overline{w_2},w_3, \dots, w_n)$ have the
    same parity, but $\wt{\bw\wedge\bx}\ne
    \wt{\bw'\wedge\bx}$. Therefore
    \begin{equation*}
        \sum_{\wt{\bw}\text{ even}} (-1)^{\wt{\bw\wedge\bx}}
            \ \ = \sum_{\wt{\bw}\text{ odd}} (-1)^{\wt{\bw\wedge\bx}}=0\,.\qedhere
    \end{equation*}
\end{proof}

 \end{document}